\newcommand{\cut}[1]{}
\def\sabc{\signature_{\mathrm{rl}}}
\def\stk{\signature_{2\mathrm{fd}}}
\def\stfd{\signature_{2\mathrm{r}}}
\def\str{\signature_{\mathrm{tr}}}
\def\rtk{R_{2\mathrm{fd}}}
\newcommand{\attr}{{\mathit{attr}}}
\newcommand{\val}{{\mathtt{val}}}
\definecolor{LightCyan}{rgb}{0.88,1,1}
\definecolor{Gray}{gray}{0.9}
\newcommand{\sudeepa}[1]{{{\tt \color{blue} Sudeepa: [{#1}]}}}
\newcommand{\red}[1]{{\textbf {\tt {\color{red} [{#1}]}}}}
\newcommand{\dom}{{\mathsf{Val}}}
\newcommand{\expl}{{\phi}}
\newcommand{\true}{{\tt true}}
\newcommand{\false}{{\tt false}}
\newcommand{\proj}[1]{{\Pi}}
\newcommand{\sel}[1]{{\sigma}}
\newcommand{\cutfull}[1]{}
\newcommand{\commentresolved}[1]{}
\newcommand{\ie}{{i.e.}} 
\newtheorem{theorem}{Theorem}[section]          	
\newaliascnt{lemma}{theorem}				
\newtheorem{lemma}[lemma]{Lemma}              	
\newaliascnt{conjecture}{theorem}			
\newaliascnt{remark}{theorem}				
\newaliascnt{corollary}{theorem}			
\newtheorem{corollary}[corollary]{Corollary}      
\newaliascnt{definition}{theorem}			
\newaliascnt{proposition}{theorem}			
\newtheorem{proposition}[proposition]{Proposition}  
\newaliascnt{example}{theorem}			
\newaliascnt{observation}{theorem}			
\newcommand{\toprule}{\hrule height.8pt depth0pt \kern2pt} 
\newcommand{\midrule}{\kern2pt\hrule\kern2pt} 
\newcommand{\bottomrule}{\kern2pt\hrule\relax}
\newcommand{\algcaption}[2][]{%
  \refstepcounter{algorithm}%
  \@ifmtarg{#1}
    {\addcontentsline{loa}{figure}{\protect\numberline{\thealgorithm}{\ignorespaces #2}}}
    {\addcontentsline{loa}{figure}{\protect\numberline{\thealgorithm}{\ignorespaces #1}}}%
  \toprule
  \textbf{\fname@algorithm~\thealgorithm}\ #2\par 
  \midrule
}
\def\e#1{\emph{#1}}
\newcommand{\eat}[1]{}
\newcommand{\set}[1]{\{#1\}}
\def\eqdef{\stackrel{\textsf{\tiny def}}{=}}
\newcommand{\algname}[1]{{\sf #1}}
\def\myrulewidth{3.20in}
\def\therule{\rule{\myrulewidth}{0.2pt}}
\newenvironment{insidecode}[3]
{
\begin{tabular}{p{\myrulewidth}}
\multicolumn{1}{c}{\rule{0mm}{3mm}{\bf #3} $\algname{#1}(\mbox{#2})$\vspace{-0.6em}}\\
\therule\vskip-0.8em\therule
\vspace{0em}
\begin{algorithmic}[1]}
{\end{algorithmic}
\vskip-0.3em\therule
\end{tabular}}
\def\consts{\mathsf{Const}}
\newcommand{\depset}{\mathrm{\Delta}}
\newcommand{\tup}[1]{\mathbf{#1}}
\def\phi{\varphi}
\def\signature{\mathcal{S}}
\newenvironment{repeatresult}[2]
{\vskip0.5em\par\textsc{#1} #2.\em}
{\vskip1em}
\newenvironment{repproposition}[1]{\begin{repeatresult}{Proposition}{#1}}{\end{repeatresult}}
\newenvironment{reptheorem}[1]{\begin{repeatresult}{Theorem}{#1}}{\end{repeatresult}}
\newtheorem{examplethm}[theorem]{Example}
\newenvironment{example}{\begin{examplethm}\em}
{\qed\end{examplethm}}
\newtheorem{commentthm}[theorem]{Comment}
\newenvironment{comment}{\begin{commentthm}\em}{\qed\end{commentthm}}
\newenvironment{citedtheorem}[1]
{\begin{theorem}\hskip-0.2em\e{\cite{#1}}\,\,}
{\end{theorem}}
\newenvironment{citedlemma}[1]
{\begin{lemma}\hskip-0.2em\e{\cite{#1}}\,\,}
{\end{lemma}}
\def\true{\mathbf{true}}
\def\false{\mathbf{false}}
\def\closure{\mathit{cl}}
\def\partitle#1{\vskip0.3em\par\noindent\textbf{#1.}\,\,}
\def\val#1{\texttt{#1}}
\def\rel#1{\textsc{#1}}
\def\att#1{\textsf{#1}}
\newenvironment{citemize}
{\begin{compactitem}}
{\end{compactitem}}
\newcommand{\mathsc}[1]{{\normalfont\textsc{#1}}}
\def\fdparam#1#2{\langle #1,#2\rangle}
\def\maxvrepfd#1#2{\mathsc{MinVRep}\fdparam{#1}{#2}}
\DeclareMathOperator*{\argmax}{argmax}
\def\mc{\mathit{mlc}}
\def\ids{\mathit{ids}}
\def\ra{\rightarrow}
\def\dists{\mathit{dist}_{\mathsf{sub}}}
\def\distu{\mathit{dist}_{\mathsf{upd}}}
\def\mComment#1{\Comment{\textit{#1}}}
\newenvironment{subroutine}
{\begin{algorithm}\floatname{algorithm}{Subroutine}}
{\end{algorithm}}
\newcounter{subroutine}
\newenvironment{msubroutine}[2]
{
\stepcounter{subroutine}
\addtocounter{algorithm}{-1}
\renewcommand{\thealgorithm}{\arabic{subroutine}} 
\floatname{algorithm}{Subroutine}
\algcaption{#1}\label{#2}
\begin{algorithmic}[1]
}
{
\end{algorithmic}
\bottomrule
}
\newenvironment{malgorithm}[2]
{
\algcaption{#1}\label{#2}
\begin{algorithmic}[1]
}
{
\end{algorithmic}
\bottomrule
}
\def\asn{\mathrel{{:}{=}}}
\def\cost{\mathrm{cost}}
\begin{document}
\title{Computing Optimal Repairs for Functional Dependencies} %
\numberofauthors{3}
\author{
\alignauthor Ester Livshits\\
\affaddr{Technion}\\
\affaddr{Haifa 32000, Israel}
\alignauthor Benny Kimelfeld \\
\affaddr{Technion}\\
\affaddr{Haifa 32000, Israel}
\alignauthor Sudeepa Roy \\
\affaddr{Duke University}\\
\affaddr{Durham, NC 27708, USA}
}

\maketitle
\pagestyle{plain}
\pagenumbering{arabic}

\begin{abstract}
We investigate the complexity of computing an optimal repair of an
inconsistent database, in the case where integrity constraints are
Functional Dependencies (FDs). We focus on two types of repairs: an
optimal subset repair (optimal S-repair) that is obtained by a minimum
number of tuple deletions, and an optimal update repair (optimal
U-repair) that is obtained by a minimum number of value (cell)
updates. For computing an optimal S-repair, we present a
polynomial-time algorithm that succeeds on certain sets of FDs and
fails on others. We prove the following about the algorithm. When it
succeeds, it can also incorporate weighted tuples and duplicate
tuples. When it fails, the problem is NP-hard, and in fact,
APX-complete (hence, cannot be approximated better than some
constant). Thus, we establish a dichotomy in the complexity of
computing an optimal S-repair. We present general analysis techniques
for the complexity of computing an optimal U-repair, some based on
the dichotomy for S-repairs.  We also draw a connection to a
past dichotomy in the complexity of finding a ``most probable
database'' that satisfies a set of FDs with a single attribute on the
left hand side; the case of general FDs was left open, and we show how
our dichotomy provides the missing generalization and thereby settles
the open problem.

\end{abstract}

\section{Introduction}\label{sec:introduction}

Database inconsistency arises in a variety of scenarios and for
different reasons. For instance, data may be collected from imprecise
sources (social encyclopedias/networks, sensors attached to
appliances, cameras, etc.) via imprecise procedures (natural-language
processing, signal processing, image analysis, etc.). Inconsistency
may arise when integrating databases of different organizations with
conflicting information, or even consistent information in
conflicting formats.  Arenas et al.~\cite{DBLP:conf/pods/ArenasBC99}
introduced a principled approach to managing inconsistency via the
notions of \e{repairs} and \e{consistent query answering}. An
\e{inconsistent database} is a database $D$ that violates integrity
constraints, a \emph{repair} is a consistent database $D'$ 
obtained from $D$ by a minimal sequence of operations, and the
\e{consistent answers} to a query are the answers given in every
repair $D'$.

Instantiations of the repair framework differ in their definitions of
\e{integrity constraints}, \e{operations}, and
\e{minimality}~\cite{DBLP:conf/icdt/AfratiK09}. Common types of
constraints are denial
constraints~\cite{DBLP:journals/jiis/GaasterlandGM92} that include the
classic functional dependencies (FDs), and inclusion
dependencies~\cite{DBLP:journals/jcss/CasanovaFP84} that include the
referential (foreign-key) constraints.  An operation can be a
\e{deletion} of a tuple, an \e{insertion} of a tuple, and an
\e{update} of an attribute (cell) value. Minimality can be either
\e{local}---no strict subset of the operations achieves consistency,
or \e{global}---no smaller (or cheaper) subset achieves consistency.
For example, if only tuple deletions are allowed, then a \e{subset
  repair}~\cite{DBLP:journals/iandc/ChomickiM05} corresponds to a
local minimum (restoring any deleted tuple causes inconsistency) and a
\e{cardinality repair}~\cite{DBLP:conf/icdt/LopatenkoB07} corresponds
to a global minimum (consistency cannot be gained by fewer tuple
deletions).  The \e{cost} of operations may differ between tuples;
this can represent different levels of trust that we have in the
tuples~\cite{DBLP:conf/icdt/LopatenkoB07,DBLP:conf/icdt/KolahiL09}.

In this paper, we focus on global minima under FDs via tuple deletions
and value updates. Each tuple is associated with a weight that
determines the cost of its deletion or a change of a single value. We
study the complexity of computing a minimum repair in two settings:
\e{(a)} only tuple deletions are allowed, that is, we seek a
(weighted) cardinality repair, \e{and (b)} only value updates are
allowed, that is, we seek what Kolahi and
Lakshmanan~\cite{DBLP:conf/icdt/KolahiL09} refer to as an ``optimum
V-repair.'' We refer to the two challenges as computing an optimal
subset repair (optimal S-repair) and computing an optimal update
repair (optimal U-repair).

The importance of computing an optimal repair arises in the challenge
of \e{data cleaning}~\cite{DBLP:series/synthesis/2012Fan}---eliminate
errors and dirt (manifested as inconsistencies) from the
data\-base. Specifically, our motivation is twofold. The obvious
motivation is in fully automated cleaning, where an optimal repair is
the best candidate, assuming the system is aware of only the
constraints and tuple weights. The second motivation comes from the
more realistic practice of iterative, human-in-the-loop
cleaning~\cite{DBLP:journals/pvldb/BergmanMNT15,DBLP:conf/icde/AssadiMN17,
  DBLP:conf/sigmod/DallachiesaEEEIOT13,DBLP:journals/pvldb/GeertsMPS13}. There,
the cost of the optimal repair can serve as an educated estimate for
the extent to which the database is dirty and, consequently, the
amount of effort needed for completion of cleaning.

As our integrity constraints are FDs, it suffices to consider a
database with a single relation, which we call here a \e{table}. In a
general database, our results can be applied to each relation
individually. A table $T$ conforms to a relational schema
$R(A_1,\dots,A_k)$ where each $A_i$ is an attribute. Integrity is
determined by a set $\depset$ of FDs.  Our complexity analysis focuses
primarily on \e{data complexity}, where $R(A_1,\dots,A_k)$ and
$\depset$ are considered fixed and only $T$ is considered
input. Hence, we have infinitely many optimization problems, one for
each combination of $R(A_1,\dots,A_k)$ and $\depset$.  Table records
have identifiers, as we wish to be able to determine easily which
cells are updated in a repair. Consequently, we allow duplicate tuples
(with distinct identifiers).

We begin with the problem of computing an optimal S-repair. The
problem is known to be computationally hard for denial
constraints~\cite{DBLP:conf/icdt/LopatenkoB07}. As we discuss later,
complexity results can be inferred from prior work~\cite{GVSBUDA14}
for FDs with a single attribute on the left hand side (lhs for
short). For general FDs, we present the algorithm
$\algname{OptSRepair}$ (Algorithm~\ref{alg:osr}). The algorithm seeks
opportunities for simplifying the problem by eliminating attributes
and FDs, until no FDs are left (and then the problem is trivial). For
example, if all FDs share an attribute $A$ on the left hand side, then
we can partition the table according to $A$ and solve the problem
separately on each partition; but now, we can ignore $A$. We refer to
this simplification as ``common lhs.'' Two additional simplifications
are the ``consensus'' and ``lhs marriage.'' Importantly, the algorithm
terminates in polynomial time, even under \e{combined} complexity.


However, $\algname{OptSRepair}$ may fail by reaching a nonempty set of
FDs where no simplification can be applied.  We prove two properties
of the algorithm. The first is \e{soundness}---if the algorithm
succeeds, then it returns an optimal S-repair. More interesting is the
property of \e{completeness}---if the algorithm fails, then the
problem is NP-hard. In fact, in this case the problem is APX-complete,
that is, for some $\epsilon>0$ it is NP-hard to find a consistent
subset with a cost lower than $(1+\epsilon)$ times the minimum, but
\e{some} $(1+\epsilon')$ is achievable in polynomial time.  More so,
the problem remains APX-complete if we assume that the table does not
contain duplicates, and all tuples have a unit weight (in which case
we say that $T$ is \e{unweighted}). Consequently, we establish the
following dichotomy in complexity for the space of combinations of
schemas $R(A_1,\dots,A_k)$ and FD sets $\depset$.  
\vskip0.3em
\begin{citemize}
\item If we can eliminate all FDs in $\depset$ with the three
  simplifications, then an optimal S-repair can be computed in
  polynomial time using
  $\algname{OptSRepair}$.
  \vskip0.4em 
\item Otherwise, the problem is APX-complete, even for
  unweighted
  tables without duplicates.
\end{citemize}
\vskip0.4em

We then continue to the problem of computing an optimal U-repair. Here
we do not establish a full dichotomy, but we make a substantial
progress. We have found that proving hardness results for updates is
far more subtle than for deletions. We identify conditions where the
complexity of computing an optimal U-repair and that of computing an
optimal S-repair coincide. One such condition is the common lhs (i.e.,
all FDs share a left-hand-side attribute). Hence, in this case, our
dichotomy provides the precise test of tractability.  We also show
decomposition techniques that extend the opportunities of using the
dichotomy. As an example, consider
$\rel{Purchase}(\att{product},\att{price},\att{buyer},\att{email},\att{address})$
and
$\depset_0=\set{\att{product}\ra
  \att{price}\,,\,\att{buyer}\ra\att{email}}$.
We can decompose this problem into
$\depset_1=\set{\att{product}\ra \att{price}}$ and
$\depset_2=\set{\att{buyer}\ra\att{email}}$, and consider each
$\depset_i$, for $i=1,2$, independently. The complexity of each
$\depset_i$ is the same in both variants of optimal repairs, and so,
polynomial time.  Yet, these results do not cover all sets of FDs. For
example, let
$\depset_3=\set{\att{email}\ra \att{buyer}\,,\,\att{buyer}
  \ra\att{address}}$.
Kolahi and Lakshmanan~\cite{DBLP:conf/icdt/KolahiL09} proved that
under $\depset_3$, computing an optimal U-repair is NP-hard. Our
dichotomy shows that it is also NP-hard (and also APX-complete) to
compute an S-repair under $\depset_3$. Yet, this FD set does not fall
in our coincidence cases.

The above defined $\depset_0$ is an example where an optimal U-repair can be
computed in polynomial time, but computing an optimal S-repair is
APX-complete. We also show an example in the reverse direction, namely
$\depset_4=\set{\att{buyer}\ra \att{email}\,,\,\att{email}\ra\att{buyer}\,,\, \att{buyer}\ra\att{address}}$.
This FD set falls in the positive side of our dichotomy for optimal
S-repairs, but computing an optimal U-repair is APX-complete.  The
proof of APX-hardness is inspired by, but considerably more involved
than, the hardness proof of Kolahi and
Lakshmanan~\cite{DBLP:conf/icdt/KolahiL09} for $\depset_3$.

Finally, we consider approximate repairing. For the case of an optimal
S-repair, the problem easily reduces to that of \e{weighted vertex
  cover}, and hence, we get a polynomial-time 2-approximation due to
Bar-Yehuda and Even~\cite{DBLP:journals/jal/Bar-YehudaE81}. To
approximate optimal U-repairs, we show an efficient reduction to
S-repairs, where the loss in approximation is linear in the number of
attributes.  Hence, we obtain a constant-ratio approximation, where
the constant has a linear dependence on  the number of
attributes. Kolahi and Lakshmanan~\cite{DBLP:conf/icdt/KolahiL09} also
gave an approximation for optimal U-repairs, but their worst-case
approximation can be quadratic in the number of attributes. We show an
infinite sequence of FD sets where this gap is actually realized. On
the other hand, we also show an infinite sequence where our
approximation is linear in the number of attributes, but theirs
remains constant. Hence, in general, the two approximations are
incomparable, and we can combine the two by running both
approximations and taking the best.

Stepping outside the framework of repairs, a different approach to
data cleaning is
\e{probabilistic}~\cite{DBLP:journals/pvldb/RekatsinasCIR17,DBLP:conf/icde/AndritsosFM06,GVSBUDA14}. The
idea is to define a probability space over possible clean databases,
where the probability of a database is determined by the extent to
which it satisfies the integrity constraints. The goal is to find a
\e{most probable database} that, in turn, serves as the clean outcome.
As an instantiation, Gribkoff, Van den Broeck, and
Suciu~\cite{GVSBUDA14} identify probabilistic cleaning as the ``Most
Probable Database'' problem (MPD): given a tuple-independent
probabilistic
database~\cite{DBLP:conf/vldb/DalviS04,Suciu:2011:PD:2031527} and a
set of FDs, find the most probable database among those satisfying the
FDs (or, put differently, condition the probability space on the
FDs). They show a dichotomy for unary FDs (i.e., FDs with a single
attribute on the left hand side). The case of general (not necessarily
unary) FDs has been left open. It turns out that there are reductions
from MPD to computing an optimal S-repair and vice
versa. Consequently, we are able to generalize their dichotomy to all
FDs, and hence, fully settle the open problem.

\eat{
The rest of the paper is organized as follows. In
Section~\ref{sec:prelim} we give the basic definitions and problem
statements. We study the problem of computing an optimal S-repair in
Section~\ref{sec:subset-repairs}, where we also discuss the connection
to MPD. In Section~\ref{sec:update-repairs} we study the problem of
computing an optimal U-repair. We conclude and discuss future
directions in Section~\ref{sec:conclusions}.  For lack of space, most
of the proofs are in the appendix.
}
\section{Preliminaries}\label{sec:prelim}

We first present some basic terminology and notation that we use
throughout the paper.

\subsection{Schemas and Tables}
An instance of our data model is a single table where each tuple is
associated with an \e{identifier} and a \e{weight} that states how
costly it is to change or delete the tuple. Such a table corresponds
to a \e{relation schema} that we denote by $R(A_1,\dots,A_k)$, where
$R$ is the \e{relation name} and $A_1$, \dots, $A_k$ are distinct
\e{attributes}. We say that $R(A_1,\dots,A_k)$ is \e{$k$-ary} since it
has $k$ attributes.  When there is no risk of confusion, we may refer
to $R(A_1,\dots,A_k)$ by simply $R$. 

We use capital letters from the beginning of the English alphabet
(e.g., $A$, $B$, $C$), possibly with subscripts and/or superscripts,
to denote individual attributes, and capital letters from the end of
the English alphabet (e.g., $X$, $Y$, $Z$), possibly with subscripts
and/or superscripts, to denote sets of attributes. We follow the
convention of writing sets of attributes without curly braces and
without commas (e.g., $ABC$). 

We assume a countably infinite domain $\dom$ of attribute values.  By
a \e{tuple} we mean a sequence of values in $\dom$.  A \e{table} $T$
over $R(A_1,\dots,A_k)$ has a collection $\ids(T)$ of (tuple)
identifiers and it maps every identifier $i$ to a tuple in $\dom^k$
and a positive weight; we denote this tuple by $T[i]$ and this weight
by $w_T(i)$. For $i\in\ids(T)$ we refer to $T[i]$ as a tuple \e{of}
$T$. We denote by $T[*]$ the set of all tuples of $T$.  We say that
$T$ is:
\begin{itemize}
\item \e{duplicate free} if distinct tuples disagree on at least one
  attribute, that is, $T[i]\neq T[j]$ whenever $i\neq j$;
\item \e{unweighted} if all tuple weights are equal, that is,
  $w_T(i)=w_T(j)$ for all identifiers $i$ and $j$.
\end{itemize}
We use $|T|$ to denote the number of tuple identifiers of $T$, that
is, $|T|\eqdef|\ids(T)|$. Let $\tup t=(a_1,\dots,a_k)$ be a tuple of
$T$. We use $\tup t.A_j$ to refer to the value $a_j$. If $X=A_{i_1},\dots,
A_{i_\ell}$ is a sequence of attributes in $\set{A_1,\dots,A_k}$, then
$\tup t[X]$ denotes the tuple $(\tup t.A_{i_1}.\dots,\tup t.A_{i_1})$.

\begin{example}\label{eg:running}
  Our running example is around the tables of
  Figure~\ref{fig:example}.  The figure shows tables over the schema
  $\rel{Office}(\att{facility},\att{room},\att{floor},\att{city})$,
  describing the location of offices in an organization. For example,
  the tuple $T[1]$ corresponds to an office in room 322, in the third
  floor of the headquarters (HQ) building, located in Paris.  The
  meaning of the yellow background color will be clarified later.  The
  identifier of each tuple is shown on the leftmost (gray shaded)
  column, and its weight on the rightmost column (also gray shaded).
  Note that table $S_2$ is duplicate free and unweighted, table $S_1$
  is duplicate free but not unweighted, and table $U_2$ is neither
  duplicate free nor unweighted.
\end{example}

\subsection{Functional Dependencies (FDs)}
Let $R(A_1,\dots,A_k)$ be a schema. As usual, an FD (over $R$) is an
expression of the form $X\rightarrow Y$ where $X$ and $Y$ are
sequences of attributes of $R$. We refer to $X$ as the \e{left-hand
  side}, or \e{lhs} for short, and to $Y$ as the \e{right-hand side},
of \e{rhs} for short. A table $T$ \e{satisfies}  $X\ra Y$ if
every two tuples that agree on $X$ also agree on $Y$; that is, for all
$\tup t,\tup s\in T[*]$, if $\tup t[X]=\tup s[X]$ then
$\tup t[Y]=\tup s[Y]$. We say that $T$ \e{satisfies} a set $\depset$
of FDs if $T$ satisfies each FD in $\depset$; otherwise, 
$T$ \emph{violates} $\depset$.

{
\def\mlin{\cline{2-5}}
\def\yellowval#1{\multicolumn{1}{>{\columncolor{yellow}}c|}{\val{#1}}}

\def\officebegin{
\begin{tabular}{>{\columncolor{lightgray}}c|c|c|c|c|>{\columncolor{lightgray}}c} 
\mlin
\multicolumn{1}{c|}{\textit{id}} & $\att{facility}$ & $\att{room}$ & $\att{floor}$ & $\att{city}$ & \multicolumn{1}{c}{$w$}\\\hline\hline
}
\def\officeend{
\mlin
\end{tabular}
}

\begin{figure}[t!]
\small
\renewcommand{\arraystretch}{1.1}
\centering
\subfigure[Table $T$\label{fig:T}]{
\officebegin
1 & \val{HQ} &  \val{322} & \val{3} & \val{Paris} & 2\\
2 & \val{HQ} &  \val{322} & \val{30} & \val{Madrid} & 1\\
3 & \val{HQ} &  \val{122} & \val{1} & \val{Madrid} & 1\\
4 & \val{Lab1} &  \val{B35} & \val{3} & \val{London} & 2\\
\officeend
}
\subfigure[Consistent subset $S_1$\label{fig:S1}]{
\officebegin
2 & \val{HQ} &  \val{322} & \val{30} & \val{Madrid} & 1\\
3 & \val{HQ} &  \val{122} & \val{1} & \val{Madrid} & 1\\
4 & \val{Lab1} &  \val{B35} & \val{3} & \val{London} & 2\\\mlin
\end{tabular}
}
\subfigure[Consistent subset $S_2$\label{fig:S2}]{
\officebegin
1 & \val{HQ} &  \val{322} & \val{3} & \val{Paris} & 2\\
4 & \val{Lab1} &  \val{B35} & \val{3} & \val{London} & 2\\
\officeend
}
\subfigure[Consistent subset $S_3$\label{fig:S3}]{
\officebegin
3 & \val{HQ} &  \val{122} & \val{1} & \val{Madrid} & 1\\
4 & \val{Lab1} &  \val{B35} & \val{3} & \val{London} & 2\\
\officeend
}
\subfigure[Consistent update $U_1$\label{fig:U1}]{
\officebegin
1 & \yellowval{F01} &  \val{322} & \val{3} & \val{Paris} & 2\\
2 & \val{HQ} &  \val{322} & \val{30} & \val{Madrid} & 1\\
3 & \val{HQ} &  \val{122} & \val{1} & \val{Madrid} & 1\\
4 & \val{Lab1} &  \val{B35} & \val{3} & \val{London} & 2\\
\officeend
}
\subfigure[Consistent update $U_2$\label{fig:U2}]{
\officebegin
1 & \val{HQ} &  \val{322} & \val{3} & \val{Paris} & 2\\
2 & \val{HQ} &  \val{322} & \yellowval{3} & \yellowval{Paris} & 1\\
3 & \val{HQ} &  \val{122} & \val{1} & \yellowval{Paris} & 1\\
4 & \val{Lab1} &  \val{B35} & \val{3} & \val{London} & 2\\
\officeend
}
\subfigure[Consistent update $U_3$\label{fig:U3}]{
\officebegin
1 & \val{HQ} &  \val{322} & \yellowval{30} & \yellowval{Madrid} & 2\\
2 & \val{HQ} &  \val{322} & \val{30} & \val{Madrid} & 1\\
3 & \val{HQ} &  \val{122} & \val{1} & \val{Madrid} & 1\\
4 & \val{Lab1} &  \val{B35} & \val{3} & \val{London} & 2\\
\officeend
}
\caption{\label{fig:example} For
  $\rel{Office}(\att{facility},\att{room},\att{floor},\att{city})$ and
  FDs $\att{facility}\ra\att{city}$ and
  $\att{facility room}\ra\att{floor}$, a table $T$, consistent subsets
  $S_1$, $S_2$ and $S_3$, and consistent updates $U_1$, $U_2$ and
  $U_3$.  Changed values are marked in yellow.  }
\end{figure}
}

An FD $X \rightarrow Y$ is \emph{entailed by} $\depset$, denoted 
$\depset \models X\rightarrow Y$, if every table $T$ that satisfies
$\Delta$ also satisfies the FD $X \rightarrow Y$. The \emph{closure} of
$\depset$, denoted $\closure(\depset)$, is the set of all FDs over $R$
that are entailed by $\depset$. The \e{closure} of an attribute set
$X$ (\e{w.r.t.~$\depset$}), denoted $\closure_{\depset}(X)$, is the set of all attributes $A$
such that the FD $X\rightarrow A$ is entailed by $\depset$. Two sets
$\depset_1$ and $\depset_2$ of FDs are \e{equivalent} if they have the
same closure (or in other words, each FD in $\depset_1$ is entailed by
$\depset_2$ and vice versa, or put differently, every table that
satisfies one also satisfies the other). An FD $X \rightarrow Y$ is
\emph{trivial} if $Y\subseteq X$; otherwise, it is \emph{nontrivial}.
Note that a trivial FD belongs to the closure of every set of FDs
(including the empty one). We say that $\depset$ is \e{trivial} if
$\depset$ does not contain any nontrivial FDs (e.g., it is empty);
otherwise, $\depset$ is \e{nontrivial}.

Next, we give some non-standard notation that we need for this paper.
A \e{common lhs} of an FD set $\depset$ is an attribute $A$ such that
$A\in X$ for all FDs $X\ra Y$ in $\depset$.  An FD set $\depset$ is a
\emph{chain} if for every two FDs $X_1\rightarrow Y_1$ and
$X_2 \rightarrow Y_2$ it is the case that $X_1 \subseteq X_2$ or
$X_2 \subseteq X_1$. Livshits and
Kimelfeld~\cite{DBLP:conf/pods/LivshitsK17} proved that the class of
chain FD sets consists of precisely the FD sets in which the \e{subset
  repairs}, which we define in
Section~\ref{sec:preliminaries:repairs}, can be counted in polynomial
time (assuming $\mbox{P}\neq\mbox{\#P}$). The chain FD sets will arise in this work as
well.

\begin{example}\label{eg:running-FD}
  In our running example (Figure~\ref{fig:example}) the set $\depset$
  consists of the following FDs:
\begin{itemize}
\item $\att{facility}\ra\att{city}$: a facility belongs to a
  single city.  
\item $\att{facility room}\ra\att{floor}$: a room in a
  facility does not go beyond one floor.
\end{itemize}
Note that the FDs allow for the same room number to occur in different
facilities (possibly on different floors, in different cities).  The
attribute $\att{facility}$ is a common lhs. Moreover, $\depset$ is a
chain FD set, since
$\set{\att{facility}}\subseteq\set{\att{facility},\att{room}}$.  Table
$T$ (Figure~\ref{fig:T}) violates $\depset$, and the other tables
(Figures~\ref{fig:S1}--\ref{fig:U3}) satisfy $\depset$.
\end{example}

An FD $X\ra Y$ might be such that $X$ is empty, and then we denote it
by $\emptyset\ra Y$ and call it a \e{consensus} FD.  Satisfying the
consensus FD $\emptyset \rightarrow Y$ means that all tuples agree on
$Y$, or in other words, the column that corresponds to each attribute
in $Y$ consists of copies of the same value. For example,
$\emptyset\ra\att{city}$ means that all tuples have the same city.  A
\e{consensus attribute} (\e{of $\depset$}) is an attribute in
$\closure_{\depset}(\emptyset)$, that is, an attribute $A$ such that
$\emptyset\ra A$ is implied by $\depset$. We say that $\depset$ is
\e{consensus free} if it has no consensus attributes.

\cut{ \sudeepa{adding these two defns:\\}\red{adding fixed attribute
    -- please check the definitions}.  Given a set of FDs $\Delta$
  over $R(A_1, \cdots, A_k)$, if $\Delta$ contains a FD of the form
  $\emptyset \rightarrow Z$ for $Z \subseteq \{A_1, \cdots, A_k\}$,
  then the attributes in $X$ are called \emph{fixed attributes}, since
  for all tuples in any table $T$ of $R$, the values of attributes in
  $X$ have to be the same.  The standard \emph{closure} $\emptyset^+$
  of $\emptyset$ gives the set of all fixed attributes in $R$.  For a
  set of FDs $\Delta$ over $R(A_1, \cdots, A_k)$ that \emph{does not
    contain} any FD of the form $\emptyset \rightarrow Z$ for
  $Z \subseteq \{A_1, \cdots, A_k\}$, we define \emph{attribute cover}
  of $\Delta$ as a subset $C \subseteq \{A_1, \cdots, A_k\}$ such that
  for every FD $X \rightarrow Y$ in $\Delta$,
  $X \cap C \neq \emptyset$. The \emph{minimum attribute cover} of
  $\Delta$ is a cover $C^*$ of minimum size, \ie, for all cover $C$ of
  $\Delta$, $|C^*| \leq |C|$. We denote the size of minimum attribute
  cover of $\Delta$ by $\mc(\Delta)$. In Example~\ref{eg:running-FD},
  $\{facility\}$ is the minimum attribute cover with size
  $\mc(\Delta) = 1$. Clearly, for a set of FDs $\Delta$ over
  $R(A_1, \cdots, A_k)$, $\mc(\Delta) \leq k$.
\par
}

\subsection{Repairs}\label{sec:preliminaries:repairs}
Let $R(A_1,\dots,A_k)$ be a schema, and let $T$ be a table. A
\e{subset} of $T$ is a table $S$ that is obtained from $T$ by
eliminating tuples. More formally, table $S$ is a subset of $T$ if
$\ids(S)\subseteq\ids(T)$ and for all $i\in\ids(S)$ we have
$S[i]=T[i]$ and $w_S(i)=w_T(i)$. If $S$ is a subset of $T$, then the
\e{distance} from $S$ to $T$, denoted $\dists(S,T)$, is the weighted
sum of the tuples missing from $S$; that is,
\[\dists(S,T)\eqdef\sum_{i\in\ids(T)\setminus\ids(S)}\hspace{-2.2em}w_T(i)\,.\]
A \e{value update} of $T$ (or just \e{update} of $T$ for short) is a
table $U$ that is obtained from $T$ by changing attribute values. More
formally, a table $U$ is an update of $T$ if $\ids(U)=\ids(T)$ and for
all $i\in\ids(U)$ we have $w_U(i)=w_T(i)$.  We adopt the definition of
Kolahi and Lakshmanan~\cite{DBLP:conf/icdt/KolahiL09} for the distance
from $U$ to $T$. Specifically, if $\tup u$ and $\tup t$ are tuples of
tables over $R$, then the \e{Hamming distance} $H(\tup u,\tup t)$ is
the number of attributes in which $\tup u$ and $\tup t$ disagree, that
is, $H(\tup u,\tup t)=|\set{j\mid \tup u.A_j\neq\tup t.A_j}|$.  If $U$
is an update of $T$ then the \e{distance} from $U$ to $T$, denoted
$\dists(U,T)$, is the weighted Hamming distance between $U$ and $T$
(where every changed value counts as the weight of the tuple); that
is,
\[\distu(U,T)\eqdef\sum_{i\in\ids(T)}w_T(i)\cdot H(T[i],U[i])\,.\]

Let $R(A_1,\dots,A_k)$ be a schema, let $T$ be table, and let
$\depset$ be a set of FDs. A \e{consistent subset} (\e{of $T$
  w.r.t.~$\depset$}) is a subset $S$ of $T$ such that
$S\models\depset$, and a \e{consistent update} (\e{of $T$
  w.r.t.~$\depset$}) is an update $U$ of $T$ such that
$U\models\depset$.  A \e{subset repair}, or just \e{S-repair} for
short, is a consistent subset that is not strictly contained in any
other consistent subset. An \e{update repair}, or just \e{U-repair}
for short, is a consistent update that becomes inconsistent if any set
of updated values is restored to the original values in $T$.  An
\e{optimal subset repair} of $T$, or just \e{optimal S-repair} for
short, is a consistent subset $S$ of $T$ such that $\dists(S,T)$ is
minimal among all consistent subsets of $T$.  Similarly, an \e{optimal
  update repair} of $T$, or just \e{optimal U-repair} for short, is a
consistent update $U$ of $T$ such that $\distu(U,T)$ is minimal among
all consistent updates of $T$.  When there is risk of ambiguity, we
may stress that the optimal S-repair (or U-repair) is 
\e{of} $T$ and 
\e{under
  $\depset$} or \e{under $R$ and $\depset$}.

Every (S- or U-) optimal repair is a repair, but not necessarily vice
versa. Clearly, a consistent subset
(respectively, update) can be transformed into a (not necessarily optimal) S-repair
(respectively, U-repair), with no increase of distance, in polynomial
time. In fact, we do not really need the concept of a repair per se,
and the definition is given mainly for compatibility with the
literature (e.g.,~\cite{DBLP:conf/icdt/AfratiK09}). Therefore, unless
explicitly stated otherwise, we do not distinguish between an S-repair
and a consistent subset, and between a U-repair and a consistent
update.

We also define \e{approximations} of optimal repairs in the obvious
ways, as follows. For a number $\alpha\geq 1$, an $\alpha$-optimal
S-repair is an S-repair $S$ of $T$ such that
$\dists(S,T)\leq\alpha\dists(S',T)$ for all S-repairs $S'$ of $T$, and
an $\alpha$-optimal U-repair is a U-repair $U$ of $T$ such that
$\distu(U,T)\leq\alpha\distu(U',T)$ for all U-repairs $U'$ of
$T$. In particular, an optimal S-repair (resp., optimal U-repair) is
the same as a $1$-optimal S-repair (resp., $1$-optimal U-repair).

\begin{example}
  In our running example (Figure~\ref{fig:example}), tables $S_1$,
  $S_2$ and $S_3$ are consistent subsets, and $U_1$, $U_2$ and $U_3$
  are consistent updates. For clarity, we marked with yellow shading
  the values that were changed for constructing each $U_i$.  We have
  $\dists(S_1,T)=2$ since the missing tuple (tuple $1$) has the weight
  $2$. We also have $\dists(S_2,T)=2$ and $\dists(S_3,T)=3$. The
  reader can verify that $S_1$ and $S_2$ are optimal
  S-repairs. However, $S_3$ is not an optimal S-repair since its
  distance to $T$ is greater than the minimum. Nevertheless, $S_3$ is
  an $1.5$-optimal S-repair (since $3/2=1.5$). Similarly, we have
  $\distu(U_1,T)=2$, $\distu(U_2,T)=3$, and $\distu(U_3,T)=4$ (since
  $U_3$ is obtained by changing two values from a tuple of weight
  $2$).
\end{example}

It should be noted that the values of an update $U$ of a table $T$ are
not necessarily taken from the \e{active domain} (i.e., values that
occur in $T$). An example is the value $\val{F01}$ of table $U_1$ in
Figure~\ref{fig:U1}. This has implications on the complexity of
computing optimal U-repairs. We discuss a restriction on the allowed
update values in Section~\ref{sec:conclusions}.

\subsection{Complexity}
We adopt the conventional measure of \e{data complexity}, where the
schema $R(A_1,\dots,A_k)$ and dependency set $\depset$ are assumed to
be \e{fixed}, and only the table $T$ is considered \e{input}. In
particular, a ``polynomial'' running time may have an exponential
dependency on $k$, as in $O(|T|^k)$, Hence, each combination of
$R(A_1,\dots,A_k)$ and $\depset$ defines a distinct problem of finding
an optimal repair (of the relevant type), and different combinations
may feature different computational complexities.

For the complexity of approximation, we use the following terminology.
In an optimization problem $P$, each input $x$ has a space of
solutions $y$, each associated with a cost $\cost(x,y)$. Given $x$,
the goal is to compute a solution $y$ with a minimum cost. For
$\alpha\geq 1$, an \e{$\alpha$-approximation} for $P$ is an algorithm
that, for input $x$, produces an \e{$\alpha$-optimal} solution $y$,
which means that $\cost(x,y)\leq \alpha\cdot\cost(x,y')$ for all
solutions $y'$. The complexity class \e{APX} consists of all
optimization problems that have a polynomial-time constant-factor
approximation.  A polynomial-time reduction $f$ from an optimization
problem $Q$ to an optimization problem $P$ is a \e{strict reduction}
if for all $\alpha\geq 1$, any $\alpha$-optimal solution for $f(x)$
can be transformed in polynomial time into an $\alpha$-optimal
solution for $x$~\cite{DBLP:journals/jcss/Krentel88}; it is a \e{PTAS}
(Polynomial-Time Approximation Scheme)
reduction if for all $\alpha>1$ there exists $\beta_\alpha>1$ such
that any $\beta_\alpha$-optimal solution for $f(x)$ can be transformed
in polynomial time into an $\alpha$-optimal solution for $x$.  A
strict reduction is also a PTAS reduction, but not necessarily vice
versa. A problem $P$ is \e{APX-hard} if there is a PTAS reduction to
$P$ from every problem in APX; it is \e{APX-complete} if, in addition,
it is in APX.  If $P$ is APX-hard, then there is a constant
$\alpha_P>1$ such that $P$ cannot be approximated better than
$\alpha_P$, or else P$=$NP.

\def\prob{\mathrm{Pr}}
\def\OSR{OptSRepair\xspace}
\def\top{_{\mathrm{max}}}
\def\marriageplus{\depset_{A\leftrightarrow B\ra C}}
\def\expl#1{\mbox{(#1)}\Rrightarrow}

\section{Computing an Optimal S-Repair}\label{sec:subset-repairs}
In this section, we study the problem of computing an optimal
S-repair. We begin with some conventions.

\paragraph*{Assumptions and Notation}
Throughout this section we assume that every FD has a single attribute
on its right-hand side, that is, it has the form $X\rightarrow
A$. Clearly, this is not a limiting assumption, since replacing $X\ra
YZ$ with $X\ra Y$ and $X\ra Z$  preserves equivalence.

Let $\depset$ be a set of FDs.  If $X$ is a set of attributes, then we
denote by $\depset-X$ the set $\depset'$ of FDs that is obtained from
$\depset$ by removing each attribute of $X$ from every lhs and rhs of
every FD in $\depset$. Hence, no attribute in $X$ occurs in
$\depset-X$. If $A$ is an attribute, then we may write $\depset-A$
instead of $\depset-\set{A}$.

An \e{lhs marriage} of an FD set $\depset$ is a pair $(X_1,X_2)$ of
distinct lhs of FDs in $\depset$ with the following properties.
\begin{itemize}
\item $\closure_{\depset}(X_1)=\closure_{\depset}(X_2)$
\item The lhs of every FD in $\depset$ contains either $X_1$ or $X_2$ (or both).
\end{itemize}

\begin{example}\label{example:common-and-marriage}
  A simple example of an FD set with an lhs marriage is the following
  FD set.
\begin{equation}\label{eq:marriagefd}
\marriageplus\eqdef\set{A\ra B\,,\,B\ra A\,,\,B\ra C} 
\end{equation}
As another example, consider the following FD set.
\begin{align*}
\depset_1\eqdef\{&\att{ssn}\ra\att{first}\,,\,\att{ssn}\ra\att{last}\,,\,\att{first last}\ra\att{ssn}\,,\,
\att{ssn}\ra\\&\att{address}\,,\,\att{ssn office}\ra\att{phone}\,,\,\att{ssn office}\ra\att{fax} \}
\end{align*}
In $\depset_1$ the pair
$(\set{\att{ssn}},\set{\att{first},\att{last}})$ is an lhs marriage.
\end{example}

{
\def\algsep{\vskip1.5em}
\begin{figure}[t]
\begin{malgorithm}{$\algname{\OSR}(\depset,T)$}{alg:osr}
\If{$\depset$ is trivial} \mComment{successful termination}
\State \textbf{return} $T$
\EndIf
\State remove trivial FDs from $\depset$ 
\If{$\depset$ has a common lhs}
\State \textbf{return} $\algname{CommonLHSRep}(\depset,T)$\label{line:suc:common}
\EndIf
\If{$\depset$ has a consensus FD}
\State \textbf{return} $\algname{ConsensusRep}(\depset,T)$\label{line:suc:consensus}
\EndIf
\If{$\depset$ has an lhs marriage}
\State \textbf{return} $\algname{MarriageRep}(\depset,T)$\label{line:suc:marriage}
\EndIf
\State \textbf{fail} \mComment{cannot find a minimum repair}
\end{malgorithm}
\algsep
\begin{msubroutine}{$\algname{CommonLHSRep}(\depset,T)$}{alg:commonlhs}
\State $A\asn$ a common lhs of $\depset$
\State \Return $\cup_{(a)\in\pi_AT[*]}\algname{\OSR}(\sigma_{A=a}T,\depset-A)$
\end{msubroutine}
\algsep
\begin{msubroutine}{$\algname{ConsensusRep}(\depset,T)$}{alg:consensus}
\State select a consensus FD $\emptyset\ra A$ in $\depset$
\ForAll{$a\in\pi_AT[*]$}
\State $S_{a}\asn\algname{\OSR}(\sigma_{A=a}T,\depset-A)$
\EndFor
\State $a\top\asn\argmax_{(a)\in\pi_AT[*]}w_T(S_{a})$
\State \Return $S_{a\top}$
\end{msubroutine}
\algsep
\begin{msubroutine}{$\algname{MarriageRep}(\depset,T)$}{alg:marriage}
\State select an lhs marriage $(X_1,X_2)$ of $\depset$
\ForAll{$(\tup a_1,\tup a_2)\in\pi_{X_1X_2}T[*]$}
\State $S_{\tup a_1,\tup a_2}\asn\algname{\OSR}(\sigma_{X_1=\tup a_1,X_2=\tup a_2}T,\depset-X_1X_2)$
\State $w(\tup a_1,\tup a_2)\asn w_T(S_{\tup a_1,\tup a_2})$
\EndFor 
\State $V_i\asn\pi_{X_i}T[*]$ for $i=1,2$
\State $E\asn\set{(\tup a_1,\tup a_2)\mid (\tup a_1,\tup a_2)\in\pi_{X_1X_2}T[*]}$
\State $G\asn$ weighted bipartite graph $(V_1,V_2,E,w)$
\State $E\top\asn$ a maximum matching of $G$
\State \Return $\cup_{(\tup a_1,\tup a_2)\in E\top}S_{\tup a_1,\tup a_2}$
\end{msubroutine}
\end{figure}
}

\begin{figure}
\begin{malgorithm}{$\algname{OSRSucceeds}(\depset)$}{alg:succeeds}
\While{$\depset$ is nontrivial}
\State remove trivial FDs from $\depset$ 
\If{$\depset$ has a common lhs $A$}
\State $\depset\asn\depset-A$
\ElsIf{$\depset$ has a consensus FD $\emptyset\ra A$}
\State $\depset\asn\depset-X$
\ElsIf{$\depset$ has an lhs marriage $(X_1,X_2)$}
\State $\depset\asn\depset-X_1X_2$
\Else 
\State \Return $\false$
\EndIf
\EndWhile
\State \Return $\true$
\end{malgorithm}
\end{figure}

Finally, if $S$ is a subset of a table $T$, then we denote by $w_T(S)$
the sum of weights of the tuples of $S$, that is,
\[w_T(S)\eqdef \sum_{i\in\ids(S)}w_T(i)\]

\subsection{Algorithm} 
We now describe an algorithm for finding an optimal S-repair. The
algorithm terminates in polynomial time, even under combined
complexity, yet it may \e{fail}. If it succeeds, then the result is
guaranteed to be an optimal S-repair. We later discuss the situations
in which the algorithm fails. The algorithm, $\algname{\OSR}$, is
shown as Algorithm~\ref{alg:osr}. The input is a set $\depset$ of FDs
and a table $T$, both over the same relation schema (that we do not
need to refer to explicitly). In the remainder of this section, we fix
$\depset$ and $T$, and describe the execution of $\algname{\OSR}$ on
$\depset$ and $T$.

The algorithm handles four cases. The first is where $\depset$ is
trivial. Then, $T$ is itself an optimal S-repair. The second case is
where $\depset$ has a common lhs $A$. Then, the algorithm groups the
tuples by $A$, finds an optimal S-repair for each group (via a
recursive call to $\algname{OptSRepair}$), this time by ignoring $A$
(i.e., removing $A$ from the FDs of $\depset$), and returning the
union of the optimal S-repairs. The precise description is in the
subroutine $\algname{CommonLHSRep}$ (Subroutine~\ref{alg:commonlhs}).
The third case is where $\depset$ has a consensus FD $\emptyset\ra
A$. Similarly to the second case, the algorithm groups the tuples by
$A$ and finds an optimal S-repair for each group. This time, however,
the algorithm returns the optimal S-repair with the maximal weight.
The precise description is in the subroutine $\algname{ConsensusRep}$
(Subroutine~\ref{alg:consensus}).

The fourth (last) case is the most involved. This is the case where
$\depset$ has an lhs marriage $(X_1,X_2)$. In this case the problem is
reduced to finding a maximum weighted matching of a bipartite graph.
The graph, which we denote by $G=(V_1,V_2,E,w)$, consists of two
disjoint node sets $V_1$ and $V_2$, an edge set $E$ that connects
nodes from $V_1$ to nodes from $V_2$, and a weight function $w$ that
assigns a weight $w(v_1,v_2)$ to each edge $(v_1,v_2)$. For $i=1,2$,
the node set $V_i$ is the set of tuples in the projection of $T$ to
$X_i$.\footnote{In principle, it may be the case that the same tuple
  occurs in both $V_1$ and $V_2$, since the tuple is in both
  projections.  Nevertheless, we still treat the two occurrences of
  the tuple as distinct nodes, and so effectively assume that $V_1$
  and $V_2$ are disjoint.} To determine the weight $w(v_1,v_2)$, we
select from $T$ the subset $T_{v_1,v_2}$ that consists of the tuples
that agree with $v_1$ and $v_2$ on $X_1$ and $X_2$, respectively. We
then find an optimal S-repair for $T_{v_1,v_2}$, after we remove from
$\depset$ every attribute in either $X_1$ or $X_2$. Then, the weight
$w(v_1,v_2)$ is the weight of this optimal S-repair. Next, we find a
maximum matching $E\top$ of $G$. Note that $E\top$ is a subset of $E$
such that no node appears more than once. The returned result is then
the disjoint union of the optimal S-repair of $T_{v_1,v_2}$ over all $(v_1,v_2)$ in
$E\top$.  The precise description is in the subroutine
$\algname{MarriageRep}$ (Subroutine~\ref{alg:marriage}).

The following theorem states the correctness and efficiency of
$\algname{\OSR}$.

\def\thmosr{ Let $\depset$ and $T$ be a set of FDs and a table,
  respectively, over a relation schema $R(A_1,\dots,A_k)$.  If
  $\algname{\OSR}(\depset,T)$ succeeds, then it returns an optimal
  S-repair. Moreover, $\algname{\OSR}(\depset,T)$ terminates in
  polynomial time in $k$, $|\depset|$, and $|T|$.  }
\begin{theorem}\label{thm:osr}
\thmosr
\end{theorem}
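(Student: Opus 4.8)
The plan is to establish the two assertions separately: (1) if $\algname{\OSR}(\depset,T)$ does not reach the \textbf{fail} line, then its output is an optimal S-repair; and (2) it always halts in polynomial time. For (1) I would argue by induction on the number of attributes occurring in $\depset$. This is a sound induction parameter because each of the three non-trivial cases recurses on an FD set with \emph{strictly fewer} attributes: $\algname{CommonLHSRep}$ and $\algname{ConsensusRep}$ recurse on $\depset-A$ (where $A$ occurs in $\depset$), and $\algname{MarriageRep}$ recurses on $\depset-W$ with $W=X_1\cup X_2\ne\emptyset$ (the lhs $X_1$ is nonempty, since an empty lhs would be a consensus FD handled in the earlier case). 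The base case is $\depset$ trivial, where $T$ itself satisfies $\depset$ and is therefore the maximum-weight optimal S-repair that the algorithm returns. A small auxiliary fact I would isolate first: if all tuples of a table agree on an attribute set $Z$, then a subset satisfies an FD $X\ra Y$ of $\depset$ iff it satisfies $(X\setminus Z)\ra(Y\setminus Z)$, because two tuples agree on $X$ (resp.\ $Y$) iff they agree on $X\setminus Z$ (resp.\ $Y\setminus Z$); in particular, on such a table, consistency w.r.t.\ $\depset$ and consistency w.r.t.\ $\depset-Z$ coincide.

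\emph{Common lhs and consensus cases.} If $A$ is a common lhs, then two tuples disagreeing on $A$ can never jointly violate an FD of $\depset$, and inside each block $\sigma_{A=a}T$ the auxiliary fact (with $Z=\{A\}$) applies; hence $S$ is a consistent subset of $T$ iff $S\cap\sigma_{A=a}T$ is a consistent subset of $\sigma_{A=a}T$ w.r.t.\ $\depset-A$ for every $a$. Since the blocks are disjoint, weights are additive, so the union of the per-block optimal S-repairs—returned by the recursive calls, optimal by the induction hypothesis—is an optimal S-repair of $T$, which is exactly what $\algname{CommonLHSRep}$ outputs. If $\emptyset\ra A$ is a consensus FD, then every consistent subset agrees on $A$ and is therefore contained in a single block $\sigma_{A=a}T$, inside which consistency w.r.t.\ $\depset$ equals consistency w.r.t.\ $\depset-A$ as above; hence an optimal S-repair of $T$ is the heaviest per-block optimal S-repair, which is what $\algname{ConsensusRep}$ returns.

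\emph{The lhs-marriage case (the crux).} Because $\closure_{\depset}(X_1)=\closure_{\depset}(X_2)$ and $X_i\subseteq\closure_{\depset}(X_i)$, the FDs $X_1\ra X_2$ and $X_2\ra X_1$ are entailed by $\depset$. Thus in any consistent subset the $X_1$-value and the $X_2$-value of a tuple determine each other, so the projection of the subset onto $X_1X_2$ is a matching between $X_1$-values and $X_2$-values. Writing $W=X_1\cup X_2$ and $T_{\tup a_1,\tup a_2}=\sigma_{X_1=\tup a_1,X_2=\tup a_2}T$, the characterization I would prove is: a subset $S$ of $T$ is consistent w.r.t.\ $\depset$ iff (i) the pairs $(\tup t[X_1],\tup t[X_2])$ over $\tup t\in S$ form a matching $M$, and (ii) for every such pair, $S\cap T_{\tup a_1,\tup a_2}$ is a consistent subset of $T_{\tup a_1,\tup a_2}$ w.r.t.\ $\depset-W$. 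The forward direction uses the entailments above for (i) and the auxiliary fact for (ii); for the backward direction, if two tuples of $S$ agree on the lhs $X$ of some FD $X\ra Y\in\depset$, then since $X$ contains $X_1$ or $X_2$, (i) forces them to agree on all of $W$, so they lie in a common block, where (ii) and the auxiliary fact give agreement on $Y$. Given the characterization, a consistent subset is the same data as a matching $M$ together with a consistent subset of each block indexed by $M$; the weight is additive over the edges of $M$ and, for fixed $M$, is maximized by taking the optimal S-repair of each block (the recursive calls, optimal by induction), yielding weight $\sum_{e\in M}w(e)$ with $w(e)$ the weight of that block's optimal S-repair. Maximizing over $M$ is precisely a maximum-weight bipartite matching—what $\algname{MarriageRep}$ computes—and the union $\bigcup_{e\in E\top}S_e$ it returns is consistent (its $X_1X_2$-projection is the matching $E\top$, each $S_e\models\depset-W$) and attains this maximum weight, hence is an optimal S-repair.

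\emph{Running time, and the main obstacle.} For (2), every recursive call is on an FD set with at least one fewer attribute, so the recursion tree has depth at most $k$; moreover the tables passed to the calls at any fixed depth are pairwise-disjoint subtables of $T$ (each step partitions the current table by $A$-value, $A$-value, or $(X_1,X_2)$-value, respectively), so each depth holds at most $|T|$ nodes and the whole tree has $O(k|T|)$ nodes. The work at a node apart from recursion—deleting trivial FDs; testing for a common lhs, a consensus FD, or an lhs marriage via attribute-closure computations over the $O(|\depset|^2)$ candidate lhs pairs; forming the projections and blocks; building the weighted bipartite graph; and running a maximum-weight bipartite matching—is polynomial in $k$, $|\depset|$, and $|T|$; summing over $O(k|T|)$ nodes keeps the total polynomial in all three, i.e., polynomial even under combined complexity. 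The main obstacle is the lhs-marriage case: establishing both directions of the matching/blocks characterization—in particular that any two tuples witnessing a potential FD violation must collapse into the same $(X_1,X_2)$-block—and then verifying that an optimal choice of block sub-repairs reduces the global optimization exactly to maximum-weight bipartite matching.
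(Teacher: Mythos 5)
Your proposal is correct and follows essentially the same route as the paper's proof: an induction over the simplification steps (you induct on the number of attributes in $\depset$, the paper on the number of simplifications, which is equivalent here), with the same per-case correctness arguments—per-block decomposition for common lhs, heaviest block for consensus, and the matching-plus-blocks characterization reducing the marriage case to maximum-weight bipartite matching—and the same partition-based polynomial-time analysis. No gaps; your explicit auxiliary fact and the two-way characterization in the marriage case are just a cleaner packaging of what the paper's Lemmas on the three simplifications prove.
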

What about the cases where $\algname{\OSR}(\depset,T)$ fails? We
discuss it in the next section.

\paragraph*{Approximation}
An easy observation is that the computation of an optimal subset is
easily reducible to the \e{weighted vertex-cover} problem---given a
graph $G$ where nodes are assigned nonnegative weights, find a vertex
cover (i.e., a set $C$ of nodes that intersects with all edges) with a
minimal sum of weights. Indeed, given a table $T$, we construct the
graph $G$ that has $\ids(T)$ as the set of nodes, and an edge between
every $i$ and $j$ such that $T[i]$ and $T[j]$ contradict one or more
FDs in $\depset$. Given a vertex cover $C$ for $G$, we obtain a
consistent subset $S$ by deleting from $T$ every tuple with an
identifier in $C$. Clearly, this reduction is strict. As weighted
vertex cover is 2-approximable in polynomial
time~\cite{DBLP:journals/jal/Bar-YehudaE81}, we conclude the same for
optimal subset repairing.

\begin{proposition}\label{prop:subset-approx}
  For all FD sets $\depset$, a 2-optimal S-repair can be computed in
  polynomial time.
\end{proposition}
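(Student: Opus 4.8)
The plan is to turn the informal reduction to weighted vertex cover sketched just above into a precise, \emph{strict} reduction, and then invoke a known approximation algorithm. First I would fix a table $T$ and an FD set $\depset$ over $R(A_1,\dots,A_k)$, and construct the \emph{conflict graph} $G=(\ids(T),E)$: the vertices are the tuple identifiers of $T$, and $\set{i,j}\in E$ whenever the tuples $T[i]$ and $T[j]$ jointly violate some FD in $\depset$, i.e.\ there is an FD $X\ra A$ in $\depset$ such that $T[i]$ and $T[j]$ agree on $X$ but disagree on $A$. Each vertex $i$ carries the weight $w_T(i)$. Building $G$ takes polynomial time even under combined complexity, since it amounts to scanning all pairs of identifiers against all FDs.

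The key step is a cost-preserving correspondence between the vertex covers of $G$ and the consistent subsets of $T$. If $C\subseteq\ids(T)$ is a vertex cover of $G$, let $S_C$ be the subset of $T$ obtained by deleting exactly the tuples whose identifiers lie in $C$; then $S_C\models\depset$, because any two surviving tuples form a non-edge of $G$ and hence violate no FD. Conversely, if $S$ is a consistent subset of $T$, then $C_S\eqdef\ids(T)\setminus\ids(S)$ is a vertex cover, since every edge of $G$ witnesses a violation that $S$ cannot contain and so must be hit by $C_S$. In both directions the cost is preserved exactly: $\dists(S_C,T)=\sum_{i\in C}w_T(i)$ and $\dists(S,T)=\sum_{i\in C_S}w_T(i)$. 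Consequently the minimum-cost consistent subset and the minimum-weight vertex cover have the same value, and more strongly, any $\alpha$-optimal vertex cover $C$ yields in polynomial time an $\alpha$-optimal consistent subset $S_C$; if a genuine S-repair is required, one greedily adds tuples back to $S_C$ until maximality, which only decreases the distance (see Section~\ref{sec:preliminaries:repairs}). Hence the reduction is strict.

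Finally I would apply the classical result of Bar-Yehuda and Even~\cite{DBLP:journals/jal/Bar-YehudaE81} that weighted vertex cover admits a polynomial-time $2$-approximation: running it on $G$ produces a cover $C$ of weight at most twice the optimum, so $S_C$ is a consistent subset with $\dists(S_C,T)\le 2\,\dists(S',T)$ for every consistent subset --- in particular every S-repair --- $S'$ of $T$; that is, $S_C$ (made maximal) is a $2$-optimal S-repair. The argument is short and presents no genuine difficulty; the single point worth checking is that FD violations are truly \emph{pairwise}, so that consistency of a subset is exactly the condition that the deleted identifiers form a vertex cover of an ordinary graph rather than a hypergraph --- which is immediate from the two-tuple characterization of FD violation used above.
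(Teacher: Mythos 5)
Your proposal is correct and follows essentially the same route as the paper: the conflict graph on tuple identifiers, the exact cost-preserving correspondence between vertex covers and deleted-tuple sets (making the reduction strict), and then the Bar-Yehuda--Even $2$-approximation for weighted vertex cover. You merely spell out the details that the paper leaves as an "easy observation," including the pairwise nature of FD violations.
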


While Proposition~\ref{prop:subset-approx} is straightforward, it is
of practical importance as it limits the severity of the lower bounds
we establish in the next section. Moreover, we will later show that
the proposition has implications on the problem of approximating an
optimal U-repair.

\subsection{Dichotomy}
The reader can observe that the success or failure of
$\algname{\OSR}(\depset,T)$ depends only on $\depset$, and not on
$T$. The algorithm $\algname{OSRSucceeds}(\depset)$, depicted as
Algorithm~\ref{alg:succeeds}, tests whether $\depset$ is such that
$\algname{\OSR}$ succeeds by simulating the cases and corresponding
changes to $\depset$. The next theorem shows that, under conventional
complexity assumptions, $\algname{\OSR}$ covers \e{all} sets $\depset$
such that an optimal S-repair can be found in polynomial time. Hence,
we establish a dichotomy in the complexity of computing an optimal
S-repair.

\def\thmdichotomy{
Let $\depset$ be a set of FDs.
\begin{itemize}
\item If $\algname{OSRSucceeds}(\depset)$ returns true, then an optimal
  S-repair can be computed in polynomial time by executing
  $\algname{\OSR}(\depset,T)$ on the input $T$.
\item If $\algname{OSRSucceeds}(\depset)$ returns false, then
  computing an optimal S-repair is APX-complete, and remains
  APX-complete on unweighted, duplicate-free tables.
\end{itemize}
Moreover, the execution of $\algname{OSRSucceeds}(\depset)$ terminates in polynomial
time in $|\depset|$.  }
\begin{theorem}\label{thm:dichotomy}
\thmdichotomy
\end{theorem}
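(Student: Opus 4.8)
The plan is to establish the two bullets separately; the concluding sentence on the running time of $\algname{OSRSucceeds}$ is immediate, since every iteration of its \textbf{while} loop that does not return removes at least one attribute from $\depset$ (after first clearing trivial FDs), so the loop runs $O(|\depset|)$ times and each iteration does only polynomial work — the only non-obvious test, for an lhs marriage, amounts to computing and comparing the closures of the $O(|\depset|)$ left-hand sides. The first bullet is essentially a reformulation of Theorem~\ref{thm:osr}: $\algname{OSRSucceeds}$ is a faithful abstraction of $\algname{\OSR}$, because in each of $\algname{CommonLHSRep}$, $\algname{ConsensusRep}$, and $\algname{MarriageRep}$ the FD set passed to the recursive calls is $\depset-A$, $\depset-A$, and $\depset-X_1X_2$ respectively, hence independent of $T$. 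Running $\algname{\OSR}(\depset,T)$ while making, at each step, the same choice of common lhs / consensus FD / lhs marriage as $\algname{OSRSucceeds}(\depset)$ does, we reach only trivial FD sets precisely when $\algname{OSRSucceeds}$ returns true; in that case $\algname{\OSR}$ succeeds and, by Theorem~\ref{thm:osr}, returns an optimal S-repair in polynomial time. (One can additionally show, by a local-confluence argument on the three rewrite rules for $\depset$, that eventual success does not depend on which simplification is chosen, so \emph{every} execution of $\algname{\OSR}$ succeeds; this strengthening is not needed for the stated claim.)

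For the second bullet, membership in APX is uniform: by Proposition~\ref{prop:subset-approx} a $2$-optimal S-repair is computable in polynomial time for \emph{every} $\depset$, in particular on unweighted, duplicate-free tables, so only APX-hardness remains. I would prove it in two stages. Stage one reduces the general case to a \emph{terminal} FD set. When $\algname{OSRSucceeds}(\depset)$ returns false, its run ends at a nontrivial FD set $\depset^\ast$ with no common lhs, no consensus FD, and no lhs marriage, obtained from $\depset$ by finitely many common-lhs, consensus, and marriage steps. Each such step admits a trivial reverse embedding: given a table over the smaller schema, reinstate each removed attribute as a column carrying a single fixed value. Because every left-hand side of $\depset$ contained the removed attribute (common-lhs and marriage cases), or the removed attribute was a consensus attribute, the conflict graph of the padded table under $\depset$ is isomorphic to the conflict graph of the given table under the smaller FD set; thus the embedding is a strict reduction~\cite{DBLP:journals/jcss/Krentel88} that preserves ``unweighted'' and ``duplicate-free'', and the composition of these embeddings is a strict reduction from the optimal-S-repair problem for $\depset^\ast$ to the one for $\depset$. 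Hence it suffices to prove APX-hardness for an arbitrary terminal $\depset^\ast$, already on unweighted, duplicate-free tables.

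Stage two is the core hardness lemma: every nontrivial FD set $\depset^\ast$ with no common lhs, no consensus FD, and no lhs marriage has an APX-hard optimal-S-repair problem, even on unweighted, duplicate-free tables. The plan here is (i) to extract from the structure of a terminal $\depset^\ast$ (put into single-rhs form) a constant-size ``hard witness'', namely a pair of nontrivial FDs whose left-hand sides cannot be merged by any of the three simplifications and therefore fall into one of a few canonical shapes, \eg\ $\{A\ra B,\,B\ra C\}$ or $\{A\ra C,\,B\ra C\}$ (the FD set $\depset_3$ of Kolahi and Lakshmanan~\cite{DBLP:conf/icdt/KolahiL09} is of the first shape); and (ii) for each shape, to give a strict — hence PTAS — reduction from an APX-hard problem, for which an APX-hard variant of minimum vertex cover (\eg\ on bounded-degree graphs) is the natural source: vertices become tuples, edges become conflicts forced on the witness attributes, and every attribute outside the witness is filled with pairwise-distinct dummy values so that no other FD of $\depset^\ast$ is violated, no unintended conflict appears, and the table is unweighted and duplicate free. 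Tracing the objective through the construction shows the reduction is approximation-preserving, which upgrades NP-hardness to APX-hardness, and composing with stage one finishes the theorem.

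I expect step (ii), and its dependence on step (i), to be the main obstacle. The tension is twofold. First, the edge gadgets must realize a family of conflict graphs rich enough to carry a constant inapproximability gap; a naive ``one value per attribute'' encoding yields, for some shapes, conflict graphs whose minimum vertex cover is polynomial-time solvable, so those shapes require auxiliary tuples, or several values per attribute, and a reduction from a more structured APX-hard problem. Second, the dummy values used for the non-witness attributes must be chosen so that \emph{every} remaining FD of $\depset^\ast$ — including those whose left-hand side partially overlaps the witness attributes — stays satisfied, without sacrificing duplicate-freeness. Establishing, in step (i), that such a controllable witness always exists inside a terminal FD set is the other delicate point, and is where a careful case analysis on the inclusions among the left-hand sides and their closures is required.
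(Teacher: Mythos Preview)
Your overall architecture matches the paper's: the positive side via Theorem~\ref{thm:osr}, APX membership via Proposition~\ref{prop:subset-approx}, and stage one --- inverting each simplification by padding the removed attributes with a single constant --- is exactly the paper's Lemma~\ref{lemma:fact-wise-simplifications} (yielding a fact-wise, hence strict, reduction). So far you are on the same track.

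The gap is in stage two, which you rightly flag as the crux but whose content you underestimate in two concrete ways. First, the canonical shapes are not just the two unary ones you name. The paper needs \emph{four} base schemas over $R(A,B,C)$ (Table~\ref{table:special-schemas}): besides $\sabc$ and $\stfd$, also $\stk=\{AB\ra C,\,C\ra B\}$ and $\str=\{AB\ra C,\,AC\ra B,\,BC\ra A\}$. The last two arise precisely when the two local minima have closures that intersect each other's left-hand sides, blocking any embedding of a unary schema. For $\str$ the conflict relation is ``agree on two of three coordinates'', and no vertex-cover gadget realizes it; the paper instead reduces from \emph{edge-disjoint triangle packing} in bounded-degree tripartite graphs (MECT-B), and must first argue that the \emph{complement} of MECT-B is APX-hard on the specific instance family of Amini et al., so that minimizing deletions (not maximizing the kept set) inherits a gap. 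For $\stk$ the source is MAX-non-mixed-SAT. Your caveat that ``some shapes require a more structured APX-hard problem'' is accurate, but identifying those problems is exactly the work.

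Second, filling non-witness attributes with ``pairwise-distinct dummy values'' does not work: distinct values on attributes in $\closure_\depset(X_i)\setminus X_i$ create unintended conflicts, while a single constant there kills conflicts you need. The paper's fact-wise reductions (Lemmas~\ref{lemma:disjoint-reduction}--\ref{lemma:last-reduction}) partition the target attributes by their position relative to $X_1$, $X_2$, $\closure_\depset(X_1)$, $\closure_\depset(X_2)$ and assign each region a specific combination of the source values $a,b,c$ (e.g., $\langle a,c\rangle$ on $\closure_\depset(X_1)\setminus X_1$). Making these encodings both injective and consistency-preserving is what drives the five-way case split of Figure~\ref{fig:classes}; in class~(4) one must argue that a \emph{third} local minimum exists (else an lhs marriage or common lhs would apply), and it is this triple that receives the reduction from $\str$. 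So the structural step~(i) you call ``delicate'' is not merely a tidy case analysis --- getting it right determines which of the four hard schemas feeds each class, and that is where the proof lives.
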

Recall that a problem in APX has a constant factor approximation and,
under the assumption that P$\neq$NP, an APX-hard problem cannot be
approximated better than some constant factor (that may depend on the
problem itself). 

\begin{example}\label{example:dichotomy}
  We now illustrate the application of Theorem~\ref{thm:dichotomy} to
  several FD sets. Consider first the FD set $\depset$ of our running
  example. The execution of $\algname{OSRSucceeds}(\depset)$ transforms
  $\depset$ as follows.
  \begin{align*}
   &\set{\att{facility}\ra\att{city}\,,\,\att{facility
    room}\ra\att{floor}} \\
   \expl{common lhs} &\set{\emptyset\ra\att{city}\,,\,\att{room}\ra\att{floor}}\\
   \expl{consensus}  &\set{\att{room}\ra\att{floor}}\\
   \expl{common lhs} &\set{\emptyset\ra\att{floor}}\\
   \expl{consensus}  &\set{} 
        \end{align*}
        Hence, $\algname{OSRSucceeds}(\depset)$ is true, and hence, an
        optimal S-repair can be found in polynomial time.
   
        Next, consider the FD set $\marriageplus$ from
        Example~\ref{example:common-and-marriage}. 
        $\algname{OSRSucceeds}(\marriageplus)$ executes as follows.
  \begin{align*}
                        &\set{A\ra B,B\ra A,B\ra C}\\
    \expl{lhs marriage} &\set{\emptyset\ra C} \\
    \expl{consensus} & \set{}
        \end{align*}
        Hence, this is again an example of an FD set on the tractable
        side of the dichotomy.   

        As the last positive example we consider the FD set
        $\depset_1$ of Example~\ref{example:common-and-marriage}.
{\small
    \begin{align*}
       &\{\att{ssn}\ra\att{first}\,,\,\att{ssn}\ra\att{last}\,,\,\att{first last}\ra\att{ssn}\,,\,
       \att{ssn}\ra\att{address}\,,\,
       \\&\att{ssn office}\ra\att{phone}\,,\,\att{ssn office}\ra\att{fax}\}\\
       &\expl{lhs marriage}
       \{\emptyset\ra\att{address}\,,\,\att{office}\ra\att{phone}\,,\,\att{office}\ra\att{fax} \}\\
       &\expl{consensus}\{\att{office}\ra\att{phone}\,,\,\att{office}\ra\att{fax} \}\\
       &\expl{common lhs}\{\emptyset\ra\att{phone}\,,\,\emptyset\ra\att{fax} \}\\
       &\expl{consensus} \set{}
 \end{align*}
}
 On the other hand, for $\depset=\set{A\ra B,B\ra C}$, none of the
 conditions of $\algname{OSRSucceeds}(\depset)$ is true, and therefore,
 the algorithm returns false. It thus follows from
 Theorem~\ref{thm:dichotomy} that computing an optimal S-repair is
 APX-complete (even if all tuple weights are the same and there are no
 duplicate tuples). The same applies to $\depset=\set{A\ra B,C\ra D}$.
\end{example}

As another example, the following corollary of
Theorem~\ref{thm:dichotomy} generalizes the tractability of our
running example to general chain FD sets.
\begin{corollary}\label{cor:subset-chain-tractable}
  If $\depset$ is a chain FD set, then an optimal S-repair can be
  computed in polynomial time.
\end{corollary}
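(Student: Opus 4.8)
The plan is to show that $\algname{OSRSucceeds}(\depset)$ returns true whenever $\depset$ is a chain FD set, and then invoke the first bullet of Theorem~\ref{thm:dichotomy}. So the whole task reduces to a termination/invariant argument for the loop in Algorithm~\ref{alg:succeeds}: I want to prove that (i) the class of chain FD sets is closed under the two relevant simplification steps that actually get triggered, and (ii) a nontrivial chain FD set always triggers at least one of the three simplifications, so the algorithm never reaches the \textbf{return} $\false$ line.

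First I would set up the invariant: \emph{at the top of each iteration of the while-loop, $\depset$ is a chain FD set.} For the base case this is the hypothesis. For the inductive step I need to check each branch. If $\depset$ has a common lhs $A$ and we pass to $\depset - A$: removing a fixed attribute $A$ from every lhs and rhs preserves the chain property, since if $X_1 \subseteq X_2$ then $X_1 \setminus \{A\} \subseteq X_2 \setminus \{A\}$; similarly for the consensus branch $\depset - X$ where $X$ is the rhs of a consensus FD (again we delete a fixed set of attributes from all FDs, and $X_1 \subseteq X_2 \Rightarrow X_1 \setminus X \subseteq X_2 \setminus X$). The key observation is that the lhs-marriage branch will in fact never be taken for a chain FD set, but even if one worried about it, it does not matter because we will show a common lhs or consensus FD is always available first.

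The heart of the argument is step (ii): \emph{every nontrivial chain FD set has either a common lhs or a consensus FD.} Here I would take a minimal nontrivial FD $X \ra A$ in $\depset$, i.e., one whose lhs $X$ is $\subseteq$-minimal among lhs's of nontrivial FDs (such an $X$ exists and is comparable to every other lhs by the chain property, so it is the \emph{smallest} lhs). Two cases. If $X = \emptyset$, then $\emptyset \ra A$ is a consensus FD and we are done. If $X \neq \emptyset$, pick any attribute $A_0 \in X$; I claim $A_0$ is a common lhs. Indeed, for any FD $X' \ra A'$ in $\depset$ we have $X \subseteq X'$ or $X' \subseteq X$ by the chain property; if $X' \ra A'$ is nontrivial then by minimality of $X$ we cannot have $X' \subsetneq X$, so $X \subseteq X'$ and hence $A_0 \in X'$; if $X' \ra A'$ is trivial it has already been removed by the "remove trivial FDs" line at the start of the iteration, so it is not present. (One should be slightly careful: after removing trivial FDs the set may have become trivial, in which case the loop exits with $\true$ — that is fine. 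Otherwise a nontrivial FD remains and the argument applies.) Thus one of the two \texttt{ElsIf}-guards fires and the algorithm makes progress.

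Finally, termination: each simplification strictly decreases the number of attributes appearing in $\depset$ (common lhs removes $A$, which occurred in at least one lhs; consensus removes the nonempty rhs $X$ of the consensus FD), or it removes all remaining nontrivial FDs; since the attribute count is a nonnegative integer, the loop halts after finitely many — in fact $O(k)$ — iterations and returns $\true$. Combining (i), (ii), and termination, $\algname{OSRSucceeds}(\depset) = \true$ for every chain $\depset$, and Theorem~\ref{thm:dichotomy} yields the polynomial-time algorithm. The only mildly delicate point is the bookkeeping around trivial FDs and the empty-lhs case; the genuinely substantive step is the claim that a $\subseteq$-minimal nontrivial lhs is either empty or supplies a common lhs, which is where the chain hypothesis is used in an essential way.
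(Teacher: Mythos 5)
Your proposal is correct and follows essentially the same route as the paper: the paper's proof simply asserts that $\algname{OSRSucceeds}$ reduces a chain FD set to emptiness by repeatedly applying the consensus and common-lhs simplifications, and your argument (a $\subseteq$-minimal nontrivial lhs is either empty, giving a consensus FD, or yields a common lhs, plus closure of the chain property under attribute removal) is exactly the verification the paper leaves to the reader, matching the observation made in its appendix. No gaps.
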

\begin{proof}
  The reader can easily verify that when $\depset$ is a chain FD
  set, $\algname{OSRSucceeds}(\depset)$ will reduce it to emptiness by
  repeatedly removing consensus attributes and common-lhs, as done in
  our running example.
\end{proof}

\subsection{Proof of Theorem~\ref{thm:dichotomy}}

\def\sabc{\depset_{A\rightarrow B\rightarrow C}}
\def\stfd{\depset_{A\rightarrow C\leftarrow B}}
\def\stk{\depset_{AB\rightarrow C\rightarrow B}}
\def\str{\depset_{AB\leftrightarrow AC\leftrightarrow BC}}

In this section we discuss the proof of Theorem~\ref{thm:dichotomy}.
(The full proof is in the
\hyperref[app:proofs-subset-repair]{Appendix}.)  The positive side is
a direct consequence of Theorem~\ref{thm:osr}.  For the negative side,
membership in APX is due to Proposition~\ref{prop:subset-approx}.  The
proof of hardness is based on the concept of a \e{fact-wise
  reduction}~\cite{DBLP:conf/pods/Kimelfeld12}, as previously done for
proving dichotomies on sets of
FDs~\cite{DBLP:conf/icdt/KimelfeldLP17,DBLP:conf/pods/Kimelfeld12,
  DBLP:conf/pods/FaginKK15,DBLP:conf/pods/LivshitsK17}.  In our setup,
a fact-wise reduction is defined as follows.  Let $R$ and $R'$ be two
relation schemas.  A \e{tuple mapping} from $R$ to $R'$ is a function
$\mu$ that maps tuples over $R$ to tuples over $R'$. We extend $\mu$
to map tables $T$ over $R$ to tables over $R'$ by defining $\mu(T)$ to
be $\set{\mu(t)\mid t\in T}$.  Let $\depset$ and $\depset'$ be sets of
FDs over $R$ and $R'$, respectively. A \e{fact-wise reduction} from
$(R,\depset)$ to $(R',\depset)$ is a tuple mapping $\Pi$ from $R$ to
$R'$ with the following properties:
\e{(a)} $\Pi$ is injective, that is, for all tuples $\tup t_1$ and
$\tup t_2$ over $R$, if $\Pi(\tup t_1) = \Pi(\tup t_2)$ then $\tup t_1
= \tup t_2$; \e{(b)} $\Pi$ preserves consistency and inconsistency;
that is, $\Pi(T)$ satisfies $\depset'$ if and only if $T$ satisfies
$\depset$; \e{and (c)} 
$\Pi$ is computable in polynomial time.
The following lemma is straightforward.
\begin{lemma}\label{lemma:fact-wise-apx}
  Let $R$ and $R'$ be relation schemas and $\depset$ and
  $\depset'$ FD sets over $R$ and $R'$, respectively.  If there is
  a fact-wise reduction from $(R,\depset)$ to $(R',\depset')$, then
  there is a strict reduction from the problem of computing an optimal
  S-repair under $R$ and $\depset$ to that of computing an optimal
  S-repair under $R'$ and $\depset'$.
\end{lemma}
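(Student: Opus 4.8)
The plan is to let the reduction $f$ between the two optimization problems be the same map that the fact-wise reduction $\Pi$ already gives us: on input a table $T$ over $R$, define $f(T)$ to be the table over $R'$ with $\ids(f(T))=\ids(T)$, with $f(T)[i]=\Pi(T[i])$ for each $i\in\ids(T)$, and with $w_{f(T)}(i)=w_T(i)$. (This is just the natural reading of ``$\mu(T)=\set{\mu(t)\mid t\in T}$'' once identifiers and weights are tracked.) Since $\Pi$ is computable in polynomial time, so is $f$, giving property~(c) of a polynomial-time reduction; writing $\mathrm{opt}(T)$ for the minimum of $\dists(S,T)$ over consistent subsets $S$ of $T$, what remains is to exhibit the strictness property.

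The core step I would carry out first is to show that $\Pi$ induces a cost-preserving bijection between the consistent subsets of $T$ and those of $f(T)$. A subset of $f(T)$ is determined by a set $I\subseteq\ids(T)$ of surviving identifiers; let $T|_I$ denote the subset of $T$ with identifier set $I$, so that the corresponding subset of $f(T)$ is exactly $\Pi(T|_I)=f(T)|_I$. Because a fact-wise reduction preserves consistency and inconsistency for \emph{every} table, I invoke this on the subtable $T|_I$ to get $T|_I\models\depset$ iff $\Pi(T|_I)\models\depset'$; hence $S\mapsto\Pi(S)$ restricts to a bijection from the consistent subsets of $T$ onto the consistent subsets of $f(T)$. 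Since identifiers and weights are carried over verbatim, the multiset of deleted identifiers is identical on both sides, so $\dists(\Pi(S),f(T))=\dists(S,T)$; in particular $\mathrm{opt}(f(T))=\mathrm{opt}(T)$.

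Strictness then follows immediately. Given any $\alpha$-optimal S-repair $S'$ of $f(T)$ — i.e.\ a consistent subset with $\dists(S',f(T))\le\alpha\cdot\mathrm{opt}(f(T))$ — I read off its identifier set $I$ and output $S:=T|_I$. By the bijection, $S$ is a consistent subset of $T$ with $\dists(S,T)=\dists(S',f(T))\le\alpha\cdot\mathrm{opt}(f(T))=\alpha\cdot\mathrm{opt}(T)$, so $S$ is $\alpha$-optimal for $T$, and the transformation $S'\mapsto S$ is trivially polynomial. If one wants to stay literally with S-repairs rather than arbitrary consistent subsets, I would add the remark that injectivity of $\Pi$ together with consistency preservation makes the bijection respect maximality under identifier inclusion, so it actually sends S-repairs to S-repairs; alternatively, $S'$ can first be converted to an S-repair of $f(T)$ without increasing its distance, as noted in Section~\ref{sec:preliminaries:repairs}.

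I do not expect a genuine obstacle — the lemma is straightforward — so the only thing to handle with care is the bookkeeping: reading $\Pi(T)$ as preserving identifiers and weights, applying consistency preservation to subtables and not merely to $T$ itself, and observing that duplicate tuples in $T$ (distinct identifiers mapping to equal tuples, possibly collapsed further by $\Pi$) are harmless because the identifier set, not the tuple content, carries the subset structure. Injectivity of $\Pi$ is not needed for the main argument and enters only in the optional maximality remark above (and, separately, to keep $f(T)$ duplicate-free when $T$ is).
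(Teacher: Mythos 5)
Your proof is correct and is essentially the intended argument: the paper states this lemma without proof (calling it straightforward), and your identifier-preserving reading of $\Pi$ on tables, the cost-preserving bijection between consistent subsets obtained by applying consistency preservation to subtables, and the resulting transfer of $\alpha$-optimal solutions constitute exactly that routine argument. Your side remarks on S-repairs versus consistent subsets and on duplicates are consistent with the paper's stated conventions.
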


In the remainder of this section, we describe the way we use
Lemma~\ref{lemma:fact-wise-apx}. Our proof consists of four steps.

{
\begin{table}
\caption{FD sets over $R(A,B,C)$ used in the proof of hardness of Theorem~\ref{thm:dichotomy}.\label{table:special-schemas}}
\def\arraystretch{1.4}
\vskip0.5em
\centering
\begin{tabular}{l|l}
\textbf{Name} & \textbf{FDs}\\\hline
$\sabc$ & $A\rightarrow B$, $B\rightarrow C$\\
$\stfd$ & $A\rightarrow C$, $B\rightarrow C$\\
$\stk$ & $AB\rightarrow C$, $C\rightarrow B$\\
$\str$ & $AB\rightarrow C$, $AC\rightarrow B$, $BC\rightarrow A$\\
\hline
\end{tabular}
\end{table}
}

\begin{enumerate}
\item We first prove APX-hardness for each of the FD sets in
  Table~\ref{table:special-schemas} over $R(A,B,C)$. For 
  $\sabc$ and $\stfd$ we adapt reductions by Gribkoff et
  al.~\cite{GVSBUDA14} in a work that we discuss in
  Section~\ref{sec:mpd}. For $\stk$ we show a reduction from
  MAX-non-mixed-SAT~\cite{DBLP:journals/jacm/Hastad01}. Most intricate
  is the proof for $\str$, where we devise a nontrivial adaptation of
  a reduction by Amini et al.~\cite{DBLP:journals/tcs/AminiPS09} to
  triangle packing in graphs of bounded degree.
\item Next, we prove that whenever $\algname{OSRSucceeds}$ simplifies
  $\depset$ into $\depset'$, there is a fact-wise reduction from
  $(R,\depset')$ to $(R,\depset)$, where $R$ is the underlying
  relation schema.
\item Then, we consider an FD set $\depset$ that cannot be further
  simplified (that is, $\depset$ does not have a common lhs, a
  consensus FD, or an lhs marriage). We show that $\depset$ can be
  classified into one of five certain classes of FD sets (that we
  discuss next).
\item Finally, we prove that for each FD set $\depset$ in one of the
  five classes there exists a fact-wise reduction from one of the four
  schemas of Table~\ref{table:special-schemas}.
\end{enumerate}

\def\xobm{\widehat{X_1}}
\def\xtbm{\widehat{X_2}}


%

The most challenging part of the proof is identifying the classes of
FD sets in Step~3 in such a way that we are able to build the
fact-wise reductions in Step~4. We first identify that if an FD set
$\depset$ cannot be simplified, then there are at least two distinct
local minima $X_1\rightarrow Y_1$ and $X_2\rightarrow Y_2$ in
$\depset$. By a \e{local minimum} we mean an FD with a set-minimal
lhs, that is, an FD $X\rightarrow Y$ such that no FD $Z\rightarrow W$
in $\depset$ satisfies that $Z$ is a strict subset of $X$. We pick any
two local minima from $\depset$. Then, we divide the FD sets into five
classes based on the relationships between $X_1$, $X_2$,
$\closure_\depset(X_1)\setminus X_1$, which we denote by $\xobm$, and
$\closure_\depset(X_2)\setminus X_2$, which we denote by $\xtbm$. The
classes are illustrated in Figure~\ref{fig:classes}.

Each line in Figure~\ref{fig:classes} represents one of the sets
$X_1$, $X_2$, $\xobm$ or $\xtbm$. If two lines do not overlap, it
means that we assume that the corresponding two sets are disjoint. For
example, the sets $\xobm$ and $\xtbm$ in class $(1)$ have an empty
intersection. Overlapping lines represent sets that have a nonempty
intersection, an example being the sets $\xobm$ and $\xtbm$ in class
$(2)$. If two dashed lines overlap, it means that we do not assume
anything about their intersection. As an example, the sets $X_1$ and
$X_2$ can have an empty or a nonempty intersection in each one of the
classes. Finally, if a line covers another line, it means that the set
corresponding to the first line contains the set corresponding to the
second line. For instance, the set $\xtbm$ in class $(4)$ contains the
set $X_1\setminus X_2$, while in class $(5)$ it holds that
$(X_1\setminus X_2)\not\subseteq \xtbm$. We remark that
Figure~\ref{fig:classes} well covers the important cases that we need
to analyze, but it misses a few cases. (As previously said, full
details are in the \hyperref[app:proofs-subset-repair]{Appendix}.)

\captionsetup[subfigure]{position=left}
\begin{figure}
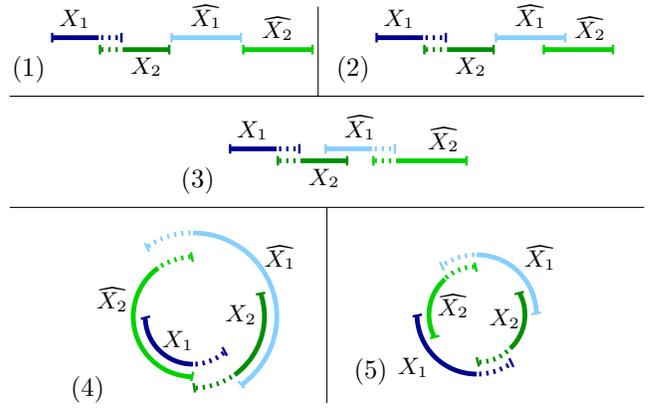

\centering
\def\xo{\small $X_1$}
\def\xt{\small $X_2$}
\def\xob{\small $\widehat{X_1}$}
\def\xtb{\small $\widehat{X_2}$}
\renewcommand{\thesubfigure}{(\arabic{subfigure})}
\parbox{1.6in}{(1)\input{case1.pspdftex}\vskip0.5em}\vrule\,\,
\parbox{1.6in}{(2)\input{case2.pspdftex}\vskip0.5em}  
\hrule\vskip0.8em
(3) \input{case3.pspdftex}\vskip0.5em
\hrule
\parbox{1.2in}{\vskip0.8em(4)\hskip0.8em{\input{case4.pspdftex}}}\quad\vrule\quad
\parbox{1.2in}{\vskip0.8em(5)\hskip0.8em{\input{case5.pspdftex}}}\vskip1em
\caption{\label{fig:classes} Classes of FD sets that cannot be
  simplified.}
\end{figure}

\begin{example}
For each one of the five classes of FD sets from Figure~\ref{fig:classes} we will now give an example of an FD set that belongs to this class.
\partitle{Class 1}
 $\depset_1=\set{A\rightarrow B, C\rightarrow D}$. In this case $X_1=\set{A}$, $X_2=\set{C}$, $\xobm=\set{B}$ and $\xtbm=\set{D}$. Thus, $\xobm\cap X_2=\emptyset$, $\xtbm\cap X_1=\emptyset$ and $\xobm\cap\xtbm=\emptyset$ and indeed the only overlapping lines in $(1)$ are the dashed lines corresponding to $X_1$ and $X_2$.
\partitle{Class 2}
$\depset_2=\set{A\rightarrow CD, B\rightarrow CE}$. It holds that $X_1=\set{A}$, $X_2=\set{B}$, $\xobm=\set{C,D}$ and $\xtbm=\set{C,E}$. Hence, $\xobm\cap X_2=\emptyset$ and $\xtbm\cap X_1=\emptyset$, but $\xobm\cap \xtbm\neq\emptyset$, and the difference from $(1)$ is that the lines corresponding to $\xobm$ and $\xtbm$ in $(2)$ overlap.
\partitle{Class 3}
$\depset_3=\set{A\rightarrow BC, B\rightarrow D}$. Here, it holds that $X_1=\set{A}$, $X_2=\set{B}$, $\xobm=\set{B,C,D}$ and $\xtbm=\set{D}$. Thus, $\xobm\cap X_2\neq\emptyset$, but $\xtbm\cap X_1=\emptyset$. The difference from $(2)$ is that now the lines corresponding to $X_2$ and $\xobm$ overlap and we do not assume anything about the intersection between $\xobm$ and $\xtbm$.
\partitle{Class 4}
$\depset_4=\set{AB\rightarrow C, AC\rightarrow B, BC\rightarrow A}$. In this case we have three local minima. We pick two of them: $AB\rightarrow C$ and $AC\rightarrow B$. Now, $X_1=\set{AB}$, $X_2=\set{AC}$, $\xobm=\set{C}$ and $\xtbm=\set{B}$. Thus, $\xobm\cap X_2\neq\emptyset$ and $\xtbm\cap X_1\neq\emptyset$. The difference from $(3)$ is that now the lines corresponding to $X_1$ and $\xtbm$ overlap. Moreover, the line corresponding to $\xobm$ covers the entire line corresponding to $X_2\setminus X_1$ and the line corresponding to $\xtbm$ covers the entire line corresponding to $X_1\setminus X_2$. This means that we assume that $(X_1\setminus X_2)\subseteq\xtbm$ and $(X_2\setminus X_1)\subseteq\xobm$.
\partitle{Class 5}
$\depset_5=\set{AB\rightarrow C, C\rightarrow AD}$. Here, $X_1=\set{A,B}$, $X_2=\set{C}$, $\xobm=\set{C,D}$ and $\xtbm=\set{A,D}$, therefore $\xobm\cap X_2\neq\emptyset$ and $\xtbm\cap X_1\neq\emptyset$. The difference from $(4)$ is that now we assume that $(X_1\setminus X_2)\not\subseteq \xtbm$.
\end{example}

\eat{ This
implies that if the problem is APX-hard after the simplification is
applied, then it is also APX-hard for the original schema. We then
prove Theorem~\ref{thm:dichotomy} by induction on the number of
simplifications that will be applied to $\depset$ by
\algname{OSRSucceeds}.}

\subsection{Most Probable Database}\label{sec:mpd}

In this section, we draw a connection to the \e{Most Probable Database
  problem} (MPD)~\cite{GVSBUDA14}. A table in our setting can be
viewed as a relation of a \e{tuple-independent
  database}~\cite{DBLP:conf/vldb/DalviS04} if each weight is in the
interval $[0,1]$. In that case, we view the weight as the probability
of the corresponding tuple, and we call the table a \e{probabilistic
  table}. Such a table $T$ represents a probability space over the
subsets of $T$, where a subset is selected by considering each tuple
$T[i]$ independently and \e{selecting} it with the probability
$w_T(i)$, or equivalently, deleting it with the probability
$1-w_T(i)$. Hence, the probability of a subset $S$, denoted
$\prob_T(S)$, is given by:
\begin{equation}\label{eq:prob}
\prob_T(S)\eqdef\left(\prod_{i\in\ids(S)}\hspace{-0.7em}w_T(i)\right)\times 
\left(\prod_{i\in\ids(T)\setminus\ids(S)}\hspace{-2em}(1-w_T(i))\right)
\end{equation}
Given a constraint $\varphi$ over the schema of $T$, MPD for $\varphi$
is the problem of computing a subset $S$ that satisfies
$\varphi$, and has the maximal probability among all such
subsets. Here, we consider the case where $\varphi$ is a set $\depset$
of FDs. Hence MPD for $\depset$ is the problem of computing
\[\argmax_{\substack{S\subseteq T\,,\, S\models\depset}}\prob_T(S)\,.\]

Gribkoff, Van den Broeck, and Suciu~\cite{GVSBUDA14} proved the
following dichotomy for \e{unary} FDs, which are FDs of the form $A\ra
X$ having a single attribute on their lhs.
\begin{citedtheorem}{GVSBUDA14}\label{thm:buda-dichotomy}
  Let $\depset$ be a set of unary FDs over a relational schema. MPD
  for $\depset$ is either solvable in polynomial time or NP-hard.
\end{citedtheorem}
The question of whether such a dichotomy holds for \e{general} (not
necessarily \e{unary}) FDs has been left open. The following corollary
of Theorem~\ref{thm:dichotomy} fully resolves this question.

\begin{theorem}
  Let $\depset$ be a set of FDs over a relational schema. If
  $\algname{OSRSucceeds}(\depset)$ is true, then MPD for $\depset$ is
  solvable in polynomial time; otherwise, it is NP-hard.
\end{theorem}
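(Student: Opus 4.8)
The plan is to transfer the S-repair dichotomy (Theorems~\ref{thm:osr} and~\ref{thm:dichotomy}) to MPD by giving, for every fixed $\depset$, a polynomial-time reduction \emph{from} MPD for $\depset$ \emph{to} computing an optimal S-repair under $\depset$, together with a polynomial-time reduction in the opposite direction. Neither reduction changes the relation schema or the FD set, and the success of $\algname{\OSR}$ depends only on $\depset$. Hence, if $\algname{OSRSucceeds}(\depset)$ is true, then an optimal S-repair is computable in polynomial time (Theorem~\ref{thm:osr}), and so is MPD; and if $\algname{OSRSucceeds}(\depset)$ is false, then computing an optimal S-repair is NP-hard even on unweighted, duplicate-free tables (Theorem~\ref{thm:dichotomy}), and so is MPD.

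\textbf{From MPD to optimal S-repair.} Given a probabilistic table $T$ with tuple weights in $[0,1]$, I would first reduce to the ``confident'' tuples: for any consistent subset $S$ and any $i\in\ids(S)$ with $w_T(i)\le 1/2$, the subset $S\setminus\{i\}$ is consistent and satisfies $\prob_T(S\setminus\{i\})\ge\prob_T(S)$, so some most probable database contains no tuple of probability at most $1/2$. Discarding those tuples and handling the routine case of probability-$1$ tuples, we are left with a subtable $T'$ all of whose tuples have $w_T(i)\in(1/2,1)$, and for subsets $S$ of $T'$ one has $\log\prob_T(S)=c_T+\sum_{i\in\ids(S)}\log\frac{w_T(i)}{1-w_T(i)}$ with $c_T$ depending only on $T$. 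Therefore, viewing $T'$ as a weighted table with the positive tuple weights $w'_i\eqdef\log\frac{w_T(i)}{1-w_T(i)}$, maximizing $\prob_T(S)$ over consistent $S\subseteq T'$ is precisely minimizing $\dists(S,T')$, that is, computing an optimal S-repair of $T'$. The reduction thus outputs $T'$; since $\algname{OSRSucceeds}(\depset)$ holds (independently of the input table), $\algname{\OSR}(\depset,T')$ succeeds, runs in polynomial time (Theorem~\ref{thm:osr}), and returns an optimal S-repair of $T'$, which is a most probable database of $T$.

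\textbf{From optimal S-repair to MPD, and the main obstacle.} For the hard direction I would map an unweighted, duplicate-free table $T$ to the probabilistic table $\widehat T$ over the same schema that assigns probability $2/3$ to every tuple. Then $\prob_{\widehat T}(S)=(1/3)^{|T|}\cdot 2^{|\ids(S)|}$ is strictly increasing in $|\ids(S)|$, so a most probable database of $\widehat T$ is exactly a maximum-cardinality consistent subset of $T$, that is, an optimal (unweighted) S-repair of $T$; hence a polynomial-time algorithm for MPD under $\depset$ would compute an optimal S-repair of $T$, which is NP-hard when $\algname{OSRSucceeds}(\depset)$ is false (Theorem~\ref{thm:dichotomy}), so MPD for $\depset$ is NP-hard. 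The combinatorial content of both reductions is light; the one genuinely delicate point is that the weights $w'_i$ in the tractable direction are irrational. I would resolve this by noting that $\algname{\OSR}$ and its subroutines touch weights only through additions and comparisons of sums of weights -- in the $\argmax$ of $\algname{ConsensusRep}$ and in the maximum-weight matching of $\algname{MarriageRep}$ -- and that each such operation on the $w'_i$ mirrors an exact operation on the rational odds $r_i\eqdef w_T(i)/(1-w_T(i))>1$: adding weights becomes multiplying the $r_i$, and comparing sums of weights becomes comparing products of the $r_i$. Every product arising this way is a product of at most $|T|$ of the $r_i$, so all of these rationals have polynomially bounded bit-length, and $\algname{\OSR}$ can be executed exactly and in polynomial time over the ordered group $(\mathbb{Q}_{>0},\cdot,<)$. (Equivalently, one may re-derive the recursion of $\algname{\OSR}$ directly for the multiplicative MPD objective: the trivial and common-lhs cases are immediate, the consensus case keeps the group whose optimal inner subset maximizes the product of odds, and the lhs-marriage case becomes a maximum-weight bipartite matching whose edge weights are those products of odds.)
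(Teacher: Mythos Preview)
Your proposal is correct and follows essentially the same approach as the paper: both directions use the same ideas (drop tuples with probability at most $1/2$, convert to log-odds weights for the tractable side; assign a fixed probability greater than $1/2$ for the hardness side). Your discussion of the irrational-weight issue is in fact more careful than the paper's, which dispatches it in a footnote with the same observation that additions and comparisons in $\algname{\OSR}$ can be replaced by multiplications and divisions of the rational odds.
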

\begin{proof}
  We first show a reduction from MPD to the problem of computing an
  optimal S-repair. Let $T$ be an input for MPD.  By a \e{certain
    tuple} we refer to a tuple identifier $i\in\ids(T)$ such that
  $w_T(i)=1$. We assume that the set of certain tuples satisfies $\depset$
  collectively, since otherwise the probability of any consistent
  subset is zero (and we can select, e.g., the empty subset as a most
  likely solution). We can then replace each probability $1$ with a
  probability that is smaller than, yet close enough to $1$, so that
  every consistent subset that excludes a certain fact is less likely
  than any subset that includes all certain facts. In addition, as
  observed by Gribkoff et al.~\cite{GVSBUDA14}, tuples with
  probability at most $0.5$ can be eliminated, since we can always
  remove them from any (consistent) subset without reducing the
  probability. Hence, we assume that $0.5<w_T(i)<1$ for all
  $i\in\ids(T)$. From~\eqref{eq:prob} we conclude the following.
\begin{align*}
\prob_T(S)&=\left(\prod_{i\in\ids(S)}\frac{w_T(i)}{1-w_T(i)}\right)\times 
\left(\prod_{i\in\ids(T)}(1-w_T(i))\right)\\
&\propto \left(\prod_{i\in\ids(S)}\frac{w_T(i)}{1-w_T(i)}\right)
\end{align*}
The reason for the proportionality ($\propto$) is that all consistent
subsets share the same right factor of the first product. Hence, we
construct a table $T'$ that is the same as $T$, except that
$w_{T'}(i)=\log(w_T(i)/(1-w_T(i)))$ for all $i\in\ids(T')$, and then a
most likely database of $T$ is the same\footnote{We do
  not need to make an assumption of infinite precision to work with
  logarithms, since the algorithms we use for computing an optimal
  S-repair can replace addition and subtraction with multiplication
  and division, respectively.} as an optimal S-repair of $T'$.

For the ``otherwise'' part we show a reduction from the problem of
computing an optimal S-repair of an \e{unweighted} table to MPD. The
reduction is straightforward: given $T$, we set
the weight $w_T(i)$ of each tuple to $0.9$ (or any fixed number
greater than $0.5$). From~\eqref{eq:prob} it follows that a consistent
subset is most probable if and only if it has a maximal number of
tuples.
\end{proof}

\begin{comment}
  When considering unary FDs, there is a disagreement between our
  tractability condition (Algorithm~\ref{alg:succeeds}) and that of
  Gribkoff et al.~\cite{GVSBUDA14}. In particular, the FD set
  $\marriageplus$ defined in~\eqref{eq:marriagefd} is classified as
  polynomial time in our dichotomy while NP-hard by Gribkoff et
  al.~\cite{GVSBUDA14}. This is due to a gap in their proof of
  hardness.\footnote{This has been established in a private
    communication with the authors of~\cite{GVSBUDA14}.}
\end{comment}

\section{Computing an Optimal U-Repair}\label{sec:update-repairs}

In this section, we focus on the problem of finding an optimal
U-repair and an approximation thereof. We devise general tools for
analyzing the complexity of this problem, compare it to the problem of
finding an optimal S-repair (discussed in the previous section), and
identify sufficient conditions for efficient reductions between the
two problems. Yet, unlike S-repairs, the existence of a full dichotomy
for computing an optimal U-repair remains an open problem.

\paragraph*{Notation}
Let $\depset$ be a set of FDs.  An \emph{lhs cover} of $\depset$ is a
set $C$ of attributes that hits every lhs, that is, $X\cap
C\neq\emptyset$ for every $X \ra Y$ in $\depset$.  We denote the
minimum cardinality of an lhs cover of $\depset$ by
$\mc(\depset)$. For instance, if $\depset$ is nonempty and has a
common lhs (e.g., Figure~\ref{fig:example}), then $\mc(\Delta) = 1$.
For a set $\depset$ of FDs, $\attr(\depset)$ denotes the set of
attributes that appear in $\depset$ (i.e., the union of lhs and
rhs over all the FDs in $\depset$).  Two FD sets $\depset_1$ and
$\depset_2$ (over the same schema) are \emph{attribute disjoint} if
$\attr(\depset_1)$ and $\attr(\depset_2)$ are  disjoint.
For example, $\set{A \ra BC, C \ra D}$ and $\set{E \ra FG}$ are
attribute disjoint.

\subsection{Reductions between FD Sets}
In this section we show two reductions between sets of FDs. The
following theorem implies that to determine the complexity of the
union of two attribute-disjoint FD sets, it suffices to look at each set
separately.

\def\thmdisjoint{
  Suppose that $\depset=\depset_1\cup\depset_2$ where $\depset_1$ and
  $\depset_2$ are attribute disjoint. The following are equivalent for
  all $\alpha\geq 1$.
  \begin{enumerate}
  \item An $\alpha$-optimal U-repair can be computed in polynomial
    time under $\depset$.
  \item An $\alpha$-optimal U-repair can be computed in polynomial
    time under \e{each of} $\depset_1$ and $\depset_2$.
\end{enumerate}
}
\begin{theorem}\label{thm:disjoint}
\thmdisjoint
\end{theorem}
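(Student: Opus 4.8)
The plan is to prove the two implications separately, with the forward direction $(1)\Rightarrow(2)$ being the easy one and the backward direction $(2)\Rightarrow(1)$ requiring a "divide-and-conquer on attributes" argument. For $(1)\Rightarrow(2)$, I would observe that an instance over $\depset_1$ is already an instance over $\depset=\depset_1\cup\depset_2$ that happens to be consistent with respect to $\depset_2$ on every value (e.g.\ pad the $\attr(\depset_2)$-columns with a single constant, or leave them as-is); since $\depset_2$ is attribute-disjoint from $\depset_1$, it imposes no constraint that interacts with an optimal update for $\depset_1$, so an $\alpha$-optimal U-repair under $\depset$ restricts to an $\alpha$-optimal U-repair under $\depset_1$ (and symmetrically for $\depset_2$). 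The only mild care needed is to make this a genuine polynomial-time reduction on tables over the schema of $\depset$, which is routine.

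For the substantive direction $(2)\Rightarrow(1)$, the key structural fact is that because $\depset_1$ and $\depset_2$ involve disjoint attribute sets $\attr(\depset_1)$ and $\attr(\depset_2)$ (and attributes outside both are entirely unconstrained), the cost $\distu(U,T)$ of an update decomposes additively: writing $U_1$ for the restriction of $U$ to the columns in $\attr(\depset_1)$ and $U_2$ for the restriction to $\attr(\depset_2)$ (and leaving the remaining columns untouched, since changing them only increases cost and is never forced), we have $\distu(U,T)=\distu(U_1,T_1)+\distu(U_2,T_2)$ where $T_i$ is the projection of $T$ onto $\attr(\depset_i)$ — crucially keeping the \emph{same tuple identifiers and weights}. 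Moreover $U\models\depset$ iff $U_1\models\depset_1$ and $U_2\models\depset_2$, since no FD mixes the two attribute groups. Hence an optimal U-repair of $T$ under $\depset$ is obtained by independently taking optimal U-repairs of $T_1$ under $\depset_1$ and of $T_2$ under $\depset_2$ and combining them columnwise. The algorithm for $(1)$ is then: run the given $\alpha$-approximation on $T_1$ and on $T_2$ separately, stitch the results together, and leave all other columns unchanged; the resulting update has cost $\distu(U_1,T_1)+\distu(U_2,T_2)\le \alpha\cdot\mathrm{OPT}_1+\alpha\cdot\mathrm{OPT}_2=\alpha\cdot(\mathrm{OPT}_1+\mathrm{OPT}_2)=\alpha\cdot\mathrm{OPT}$, where the last equality is the additive decomposition of the optimum. (As usual one converts the resulting consistent update to an actual U-repair with no increase in cost, as noted in the preliminaries.)

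The main obstacle — and it is a mild one — is bookkeeping around tuple identities, weights, and the columns that appear in neither $\depset_1$ nor $\depset_2$. Concretely: (i) the Hamming-distance cost is per-cell and tuple-weight-scaled, so the decomposition $\distu(U,T)=\distu(U_1,T_1)+\distu(U_2,T_2)$ really does require carrying identifiers and weights through the projection unchanged, which one must state carefully; (ii) one must argue that in any optimal U-repair the "free" columns are never modified, which follows because they participate in no FD and modifying a cell strictly raises the weighted Hamming distance; and (iii) the two projections $T_1$, $T_2$ are legitimate table instances over sub-schemas, so the $\alpha$-approximation hypothesis applies to them directly. None of these is deep, but they are exactly the places where the proof must be precise; I expect the write-up to spend most of its length making the additive-cost decomposition and the "iff" on satisfaction explicit, and then the approximation-ratio calculation is one line.
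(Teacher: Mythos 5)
Your proposal follows the paper's own proof essentially step for step: the additive decomposition of the cost and of the optimum over attribute-disjoint FD sets (with the observation that attributes outside $\attr(\depset)$ are never touched by an optimal repair), composing the two $\alpha$-optimal repairs columnwise for $(2)\Rightarrow(1)$, and for $(1)\Rightarrow(2)$ neutralizing $\depset_2$ by setting all columns outside $\attr(\depset_1)$ to a single constant before running the $\depset$-algorithm. One caution on the forward direction: your parenthetical alternative of leaving those columns ``as-is'' would not work, because if $T$ violates $\depset_2$ the optimum under $\depset$ equals $\mathrm{OPT}_1+\mathrm{OPT}_2$, so the restriction of an $\alpha$-optimal $\depset$-repair is only guaranteed to be within $\alpha$ of that sum rather than of $\mathrm{OPT}_1$; the constant-padding variant, which is exactly the paper's construction, is the correct one.
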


If we have a polynomial-time algorithm to compute an $\alpha$-optimal
solution for $\depset$, it can be used to obtain an $\alpha$-optimal
solution for $\depset_1$ by setting all attributes not in
$\attr(\depset_1)$ to 0, and running the algorithm (similarly for
$\depset_2$). In the reverse direction, it can be shown that simply
composing $\alpha$-optimal solutions of $\depset_1, \depset_2$ gives
an $\alpha$-optimal solution for $\depset$. 
\begin{example}\label{example:disjoint-u}
Consider the following set of FDs.
\[\depset\eqdef\set{\att{item}\ra\att{cost}\,,\,\att{buyer}\ra\att{address}}\]
We will later show that if $\depset$ consists of a single FD, then an
optimal U-repair can be computed in polynomial time. Hence, we can
compute an optimal U-repair under
$\depset_1=\set{\att{item}\ra\att{cost}}$ and under
$\depset_2=\set{\att{buyer}\ra\att{phone}}$. Then,
Theorem~\ref{thm:disjoint} implies that an optimal U-repair can
be computed in polynomial time under $\depset$ as well.

Now consider the following set of FDs.
\[\depset'\eqdef\set{\att{item}\ra\att{cost} \,,\, \att{buyer}\ra\att{address} \,,\,\att{address}\ra\att{state}}\]
Kolahi and Lakshmanan~\cite{DBLP:conf/icdt/KolahiL09} proved that it
is NP-hard to compute an optimal U-repair for $\set{A\ra B,B\ra C}$,
by reduction from the problem of finding a minimum vertex cover of a
graph $G$. Their reduction is, in fact, a PTAS reduction if we use
vertex cover in a graph of a bounded
degree~\cite{DBLP:journals/tcs/AlimontiK00}. Hence, computing an
optimal U-repair is APX-hard for this set of
FDs. Theorem~\ref{thm:disjoint} then implies that it is also
APX-hard for $\depset'$.
\end{example}

Next, we discuss the problem in the presence of consensus FDs. The
following theorem states that such FDs do not change the complexity of
the problem. Recall that, for a set $\depset$ of FDs and a
set $X$ of attributes, the set $\depset-X$ denotes the set of FDs that
is obtained from $\depset$ by removing each attribute of $X$ from the
lhs and rhs of every FD. Also recall that
$\closure_{\depset}(\emptyset)$ is the set of all consensus
attributes.

\def\thmuconsensus{
  Let $\depset$ be a set of FDs.  There is a strict reduction from
  computing an optimal U-repair for $\depset$ to computing an optimal U-repair for
  $\depset-\closure_{\depset}(\emptyset)$, and vice versa.
}
\begin{theorem}\label{thm:u-consensus} 
\thmuconsensus
\end{theorem}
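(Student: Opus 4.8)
The plan is to exhibit strict reductions in both directions between computing an optimal U-repair for $\depset$ and for $\depset' \eqdef \depset - \closure_\depset(\emptyset)$. Write $Z \eqdef \closure_\depset(\emptyset)$ for the set of consensus attributes. The key structural fact I would establish first is that in \emph{any} consistent update $U$ of \emph{any} table $T$, all tuples must agree on every attribute of $Z$ (since $\emptyset \ra A$ is entailed by $\depset$ for each $A \in Z$), and conversely that, once we fix a common value for the $Z$-columns, the remaining consistency requirement on the non-$Z$ columns is exactly satisfaction of $\depset'$. This is the standard closure argument; it lets us decompose the cost of a U-repair into the cost incurred on the $Z$-columns plus the cost incurred on the rest.

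\medskip
\noindent\textbf{From $\depset'$ to $\depset$ (the easy direction).}
Given a table $T$ over the schema of $\depset$, build $T'$ over the schema of $\depset'$ by simply deleting the $Z$-columns (keeping identifiers and weights). From an $\alpha$-optimal U-repair $U'$ of $T'$ I reconstruct a U-repair $U$ of $T$ by re-attaching the $Z$-columns, filling each with a single constant value $c_A$ for $A \in Z$ chosen to maximize agreement with $T$ — that is, $c_A$ is a most frequent value in column $A$ of $T$ under the tuple weights (a weighted mode). The cost of $U$ is $\distu(U',T') + K$, where $K = \sum_{A\in Z}\big(w_T(\ids(T)) - \max_{v}\sum_{i:\,T[i].A=v} w_T(i)\big)$ is a \emph{fixed} quantity determined by $T$ alone, independent of which repair we choose, since every consistent update of $T$ pays at least $K$ on the $Z$-columns and the weighted-mode choice pays exactly $K$. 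Hence $\distu(U,T)=\distu(U',T')+K$ and, because $K$ is the same additive offset for the optimum, any $\alpha$-optimal $U'$ yields an $\alpha$-optimal $U$; the reduction is strict.

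\medskip
\noindent\textbf{From $\depset$ to $\depset'$ (the other direction).}
Here I would go the other way: reduce $\depset-Z$ to $\depset$ by \emph{adding} the consensus columns. Given a table $T'$ over the schema of $\depset'$, form $T$ over the schema of $\depset$ by appending, for each $A \in Z$, a new column in which every tuple gets the \emph{same} value $a_0$ (so the $Z$-columns of $T$ are already consensus-consistent and cost nothing to leave alone). Any consistent update $U$ of $T$ agrees on each $Z$-column; changing that common value away from $a_0$ only adds cost and never helps consistency of the non-$Z$ part, so w.l.o.g.\ an optimal (or $\alpha$-optimal, after a cost-non-increasing normalization) U-repair of $T$ keeps the $Z$-columns equal to $a_0$, and its restriction to the non-$Z$ columns is a U-repair of $T'$ under $\depset'$ of the same cost. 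Conversely a U-repair of $T'$ extends to one of $T$ with identical cost. Thus $\distu$-values correspond exactly, the optima coincide, and an $\alpha$-optimal U-repair of $T$ maps in polynomial time to an $\alpha$-optimal U-repair of $T'$; again the reduction is strict. (Composing the two gives strict reductions both ways between $\depset$ and $\depset-Z$, as claimed.)

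\medskip
\noindent\textbf{Main obstacle.}
The routine parts — deleting/appending columns, polynomial-time computability — are immediate. The one point requiring care is the normalization step in the $\depset \to \depset'$ direction: I must argue that from an arbitrary $\alpha$-optimal U-repair $U$ of $T$ one can, in polynomial time and without increasing cost, obtain one whose $Z$-columns are constantly $a_0$. This follows because $U$ already has each $Z$-column constant (by entailment of $\emptyset\ra A$); if that constant differs from $a_0$, rewriting it to $a_0$ changes only $Z$-column cells, and since in $T$ those cells were \emph{all} $a_0$, the rewrite strictly decreases $\distu$, contradicting nothing and certainly preserving $\alpha$-optimality (it can only improve the ratio, and the offset is zero on both sides). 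The analogous subtlety in the first direction — that the weighted-mode fill is simultaneously optimal for the $Z$-columns regardless of the rest — is exactly the decomposition fact noted at the outset, so no real difficulty remains there.
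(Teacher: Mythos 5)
Your argument is correct, and it reaches the theorem by a somewhat different route than the paper. The paper derives the result as a consequence of its general decomposition theorem for attribute-disjoint FD sets (Theorem~\ref{thm:disjoint}): it observes that $\depset$ is equivalent to $\set{\emptyset\ra\closure_{\depset}(\emptyset)}\cup(\depset-\closure_{\depset}(\emptyset))$, that these two parts are attribute disjoint, and that the pure-consensus part is solvable exactly in polynomial time by the weighted-majority choice (Proposition~\ref{prop:FD-fixed-single} and Corollary~\ref{cor:FD-fixed-multiple}); the two strict reductions then fall out of the constructions inside the proof of Theorem~\ref{thm:disjoint}. You instead give a self-contained pair of constructions: project out the consensus columns and re-attach them with the weighted mode, observing that the consensus-column cost is a fixed additive offset $K\ge 0$ paid by every consistent update, so $\alpha$-optimality is preserved (indeed $\distu(U,T)\le\alpha\,\mathrm{OPT}'+K\le\alpha(\mathrm{OPT}'+K)$); and in the other direction, pad with constant consensus columns so the offset is zero and normalize any repair to leave them untouched. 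The underlying mathematics is the same (column-wise additivity of the weighted Hamming cost, plus exact solvability of the consensus part, which is precisely what makes the reductions strict rather than merely PTAS), but your version makes the strict reductions explicit without invoking Theorem~\ref{thm:disjoint}, whereas the paper's version buys brevity by reusing that machinery. Two small points to tidy up: the paper defines $\depset-X$ over the \emph{same} schema $R$ (attributes are removed from the FDs, not from the relation), so your deletion/appending of columns needs the one-line remark that attributes occurring in no FD are never updated in any optimal (or normalized) U-repair, making the reduced-schema and full-schema formulations strictly inter-reducible; and your two direction headings are swapped relative to the usual convention (the construction you label ``from $\depset'$ to $\depset$'' is the reduction from the problem for $\depset$ to the problem for $\depset'$), though both required reductions are in fact present. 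Finally, the ``standard closure argument'' you rely on—$U\models\depset$ iff all tuples agree on $\closure_{\depset}(\emptyset)$ and the remaining columns satisfy $\depset-\closure_{\depset}(\emptyset)$—is true in both directions but deserves the short verification (using $\depset\models(X\setminus Z)\ra X$ for $Z=\closure_{\depset}(\emptyset)$) in a full write-up.
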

The proof (given in the \hyperref[sec:thm:disjoint:proof]{Appendix})
  uses Theorem~\ref{thm:disjoint} and a special treatment of the case
  where all FDs are consensus.
As an example of applying Theorem~\ref{thm:u-consensus}, if $\depset$
consists of \e{only} consensus FDs, then an optimal U-repair can be
computed in polynomial time, since
$\depset-\closure_{\depset}(\emptyset)$ is empty. As another example,
if $\depset$ is the set $\set{\emptyset\ra D,AD\ra B,B\ra CD}$ then
$\depset-\closure_{\depset}(\emptyset)=\set{A\ra B,B\ra C}$ and,
according to Theorem~\ref{thm:u-consensus}, computing an optimal
U-repair is APX-hard, since this problem is hard for
$\set{A\ra B,B\ra C}$ due to Kolahi and
Lakshmanan~\cite{DBLP:conf/icdt/KolahiL09}, as explained in
Example~\ref{example:disjoint-u}.

\subsection{Reductions to/from Subset Repairing}
In this section we establish several results that enable us to infer
complexity results for the problem of computing an optimal U-repair from
that of computing an optimal S-repair via polynomial-time reductions.
These results are based on the following proposition, which shows that
we can transform a consistent update into a consistent subset (with no
extra cost) and, in the absence of consensus FDs, a consistent subset
into a consistent update (with some extra cost). We give the proof
here, as it shows the actual constructions.

\begin{proposition}\label{prop:U-S-transform}
  Let $\depset$ be a set of FDs and $T$ a table.  The following
  can be done in polynomial time.
  \begin{enumerate}
\item Given a consistent update $U$, construct a consistent subset
    $S$ such that  $\dists(S,T) \leq \distu(U,T)$.
  \item Given a consistent subset $S$, and assuming that $\depset$ is
    consensus free, construct a consistent update $U$ such that
    $\distu(U,T) \leq \mc(\depset) \cdot \dists(S,T)$.
\end{enumerate}
\end{proposition}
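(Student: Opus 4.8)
The plan is to prove the two directions by explicit, clearly polynomial-time constructions.

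\textbf{Part (1).} The idea is to keep exactly the tuples of $T$ that $U$ left untouched. Formally, I would set $\ids(S)\eqdef\set{i\in\ids(T)\mid U[i]=T[i]}$ and let $S[i]\eqdef T[i]$, $w_S(i)\eqdef w_T(i)$ for $i\in\ids(S)$. Then $S$ is a subset of $T$, and on its identifiers it coincides with the sub-table of $U$ obtained by deleting tuples; since FDs are preserved under tuple deletion and $U\models\depset$, we get $S\models\depset$. For the cost bound, note that $\ids(T)\setminus\ids(S)=\set{i\mid H(T[i],U[i])\ge 1}$, hence $\dists(S,T)=\sum_{i:\,H(T[i],U[i])\ge 1}w_T(i)\le\sum_{i\in\ids(T)}w_T(i)\cdot H(T[i],U[i])=\distu(U,T)$, because each summand on the left is at most the corresponding summand on the right. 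This direction is routine.

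\textbf{Part (2).} Here the idea is to \emph{repair by isolation}: instead of deleting the tuples of $T$ that are not in $S$, rename them so that they cannot participate in any FD violation. Fix a minimum lhs cover $C$ of $\depset$, so $|C|=\mc(\depset)$ and $X\cap C\neq\emptyset$ for every FD $X\ra A$ in $\depset$; because $\depset$ is consensus free, no nontrivial FD has an empty lhs (an attribute in such an lhs would be a consensus attribute), so $C$ may be taken to consist of genuine attributes. Using the countably infinite domain $\dom$, pick for each $i\in\ids(T)\setminus\ids(S)$ a value $v_i$ that does not occur anywhere in $T$, with $v_i\neq v_j$ for $i\neq j$. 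Define $U$ to agree with $T$ on every $i\in\ids(S)$, and, for each $i\notin\ids(S)$, to agree with $T$ on all attributes outside $C$ and to carry the value $v_i$ on every attribute of $C$. To check $U\models\depset$: take an FD $X\ra A$ and $i,j\in\ids(T)$ with $U[i][X]=U[j][X]$. If $i,j\in\ids(S)$, then $U[i][A]=U[j][A]$ since $U$ restricted to $\ids(S)$ equals $S$ and $S\models\depset$. Otherwise, say $i\notin\ids(S)$ and pick $B\in X\cap C$; then $U[i][B]=v_i$, a value that appears in column $B$ of $U$ only in row $i$, so $U[i][X]=U[j][X]$ forces $j=i$. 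Thus $U$ is a consistent update, and since $H(T[i],U[i])=0$ for $i\in\ids(S)$ and $H(T[i],U[i])\le|C|=\mc(\depset)$ otherwise, we get $\distu(U,T)=\sum_{i\notin\ids(S)}w_T(i)\cdot H(T[i],U[i])\le\mc(\depset)\cdot\sum_{i\notin\ids(S)}w_T(i)=\mc(\depset)\cdot\dists(S,T)$. (If $\depset$ is trivial, take $U=T$.)

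The only place where anything subtle happens is the consensus-freeness hypothesis in part (2): it is exactly what guarantees that every lhs is nonempty, so a cover $C$ of real attributes exists and the fresh-value renaming kills all violations involving a former "deleted'' tuple. A consensus FD $\emptyset\ra A$ cannot be broken this way — there is no lhs attribute to perturb — which is why the statement (and the later linear-in-$\mc(\depset)$ approximation it feeds into) needs consensus FDs to have been stripped off first via Theorem~\ref{thm:u-consensus}. Everything else is bookkeeping, and both constructions are plainly computable in polynomial time.
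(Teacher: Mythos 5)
Your proof is correct and takes essentially the same route as the paper's: part (1) deletes every tuple that $U$ touches, and part (2) overwrites the attributes of a minimum lhs cover with fresh constants in every tuple outside $S$, with the same consistency check and cost accounting. (The only blemish is the parenthetical in part (2): a nontrivial FD with empty lhs produces consensus attributes on its \emph{right}-hand side, not on its lhs, but this slip does not affect the argument.)
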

\begin{proof}
  We construct $S$ from $U$ by excluding any $i \in \ids(T)$ such that
  $T[i]$ has at least one attribute updated in $U$ (i.e.,
  $H(T[i], U[i]) \geq 1$). We construct $U$ from $S$ as follows. Let $C$ be an lhs cover 
  of minimum cardinality $\mc(\depset)$.  The
  tuple of each $i \in\ids(S)$ is left intact, and for
  $i \in \ids(T) \setminus \ids(S)$ we update the value of $T[i].A$
  for each attribute $A \in C$ to a fresh constant from our infinite domain
  $\dom$. Since $C$ is an lhs cover and there are no consensus FDs,
  for all $X\ra Y$ in $\depset$ it holds that two distinct tuples in $U$ that
  agree on $X$ must correspond to intact tuples; hence, $U$ is
  consistent (as $S$ is consistent).
\end{proof}

As we discuss later in Section~\ref{sec:approx-u}, 
Proposition~\ref{prop:U-S-transform},
combined with Proposition~\ref{prop:subset-approx},
 reestablishes the
result of Kolahi and Lakshmanan~\cite{DBLP:conf/icdt/KolahiL09},
stating the computing a U-repair is in APX.  We also establish 
the following additional consequences of
Proposition~\ref{prop:U-S-transform}. The first is an immediate
corollary (that we refer to later on) about the relationship between
optimal repairs.
\begin{corollary}\label{cor:U-S-opt-val-compare}
  Let $\depset$ be a set of FDs, $T$ a table, $S^*$ an optimal
  S-repair of $T$, and $U^*$ an optimal U-repair of $T$. Then
  $\dists(S^*,T) \leq \distu(U^*,T)$. Moreover, if $\depset$ is
  consensus free, then $\distu(U^*,T) \leq \mc(\depset) \cdot
  \dists(S^*,T)$.
\end{corollary}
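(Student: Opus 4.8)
The plan is to derive both inequalities directly from Proposition~\ref{prop:U-S-transform}, using nothing beyond the optimality of $S^*$ and $U^*$; no new construction is required. For the first inequality, I would start from the optimal U-repair $U^*$, which is in particular a consistent update of $T$. Applying part~(1) of Proposition~\ref{prop:U-S-transform} to $U^*$ yields a consistent subset $S$ of $T$ with $\dists(S,T)\le\distu(U^*,T)$. Since $S^*$ is an optimal S-repair and $S$ is some consistent subset, the definition of optimality gives $\dists(S^*,T)\le\dists(S,T)$, and chaining the two bounds yields $\dists(S^*,T)\le\distu(U^*,T)$.

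For the second inequality I would argue symmetrically, now invoking part~(2) of Proposition~\ref{prop:U-S-transform}, whose hypothesis --- that $\depset$ is consensus free --- is exactly the extra assumption made in this half of the corollary. Starting from the optimal S-repair $S^*$, which is a consistent subset, part~(2) produces a consistent update $U$ of $T$ with $\distu(U,T)\le\mc(\depset)\cdot\dists(S^*,T)$. Because $U^*$ is an optimal U-repair and $U$ is some consistent update, we have $\distu(U^*,T)\le\distu(U,T)$, and combining gives $\distu(U^*,T)\le\mc(\depset)\cdot\dists(S^*,T)$.

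The one point worth stating carefully, rather than a genuine obstacle, is that the objects produced by Proposition~\ref{prop:U-S-transform} are only guaranteed to be a consistent subset and a consistent update, not necessarily an S-repair or U-repair in the restoration-minimal sense; but this is harmless, since the definitions of an optimal S-repair and an optimal U-repair already minimize $\dists$ (resp.\ $\distu$) over \emph{all} consistent subsets (resp.\ updates), and in any case a consistent subset or update can be turned into a repair without increasing its distance, as noted in the preliminaries. Hence no case analysis is needed: the entire content of the corollary follows from Proposition~\ref{prop:U-S-transform} together with the definition of optimality, and there is no hard step here.
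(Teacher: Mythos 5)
Your proof is correct and is exactly the argument the paper intends: it presents the corollary as an immediate consequence of Proposition~\ref{prop:U-S-transform}, obtained by applying part~(1) to $U^*$ and part~(2) to $S^*$ and then invoking optimality of $S^*$ and $U^*$, just as you do. Your remark about consistent subsets/updates versus repairs is consistent with the paper's convention of not distinguishing the two, so no gap remains.
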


The second consequence relates to FD sets $\depset$ with a common lhs,
that is, $\mc(\depset)=1$.

\begin{corollary}\label{cor:U-S-same-mc-1}
  Let $\depset$ be an FD set with a common lhs. There is a strict
  reduction from the problem of computing an optimal S-repair to that
  of computing an optimal U-repair, and vice versa.
\end{corollary}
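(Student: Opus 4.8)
The plan is to exhibit strict reductions in both directions via the identity map on tables, using Proposition~\ref{prop:U-S-transform} and Corollary~\ref{cor:U-S-opt-val-compare}; the only real preparatory work is to record that a common lhs makes $\mc(\depset)=1$ and forces $\depset$ to be consensus free, which is exactly what turns the $\mc(\depset)$ loss in Proposition~\ref{prop:U-S-transform}(2) into no loss at all. So first I would dispose of the trivial case ($\depset$ trivial, where $T$ is simultaneously an optimal S-repair and an optimal U-repair), and then, assuming $\depset$ nontrivial with common lhs $A$, observe that $\set{A}$ is an lhs cover, so $\mc(\depset)=1$. Next I would check that $\closure_{\depset}(\emptyset)=\emptyset$: if $\emptyset\ra B$ were entailed, a two-tuple table whose tuples disagree on $A$ (hence vacuously satisfies every FD of $\depset$, each of which has $A$ on its lhs) but disagree on $B$ would be a counterexample. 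Consequently Corollary~\ref{cor:U-S-opt-val-compare} collapses to the equality $\dists(S^*,T)=\distu(U^*,T)$ for every table $T$, where $S^*$ is an optimal S-repair of $T$ and $U^*$ an optimal U-repair of $T$.

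For the reduction from optimal S-repair to optimal U-repair (both under $R$ and $\depset$), I take $f(T)=T$. Given an $\alpha$-optimal U-repair $U$ of $f(T)$, apply Proposition~\ref{prop:U-S-transform}(1) to obtain, in polynomial time, a consistent subset $S$ with $\dists(S,T)\le\distu(U,T)\le\alpha\cdot\distu(U^*,T)=\alpha\cdot\dists(S^*,T)$, so $S$ is an $\alpha$-optimal S-repair; this is a strict reduction. For the reverse direction I again take $f(T)=T$. Given an $\alpha$-optimal S-repair $S$ of $f(T)$, apply Proposition~\ref{prop:U-S-transform}(2) — legitimate because $\depset$ is consensus free — to obtain a consistent update $U$ with $\distu(U,T)\le\mc(\depset)\cdot\dists(S,T)=\dists(S,T)\le\alpha\cdot\dists(S^*,T)\le\alpha\cdot\distu(U^*,T)$, so $U$ is an $\alpha$-optimal U-repair. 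In both cases, if one insists on producing an actual repair rather than a consistent subset/update, one converts with no increase of distance as noted in Section~\ref{sec:preliminaries:repairs}.

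I do not expect a genuine obstacle here: all the work sits in Proposition~\ref{prop:U-S-transform} and Corollary~\ref{cor:U-S-opt-val-compare}, and the corollary is essentially tailor-made for this argument. The one point that must be stated and verified carefully — because it is what distinguishes a strict reduction from a merely PTAS one — is that a common lhs rules out consensus attributes, so that the $\mc(\depset)$ factor appearing in Proposition~\ref{prop:U-S-transform}(2) and Corollary~\ref{cor:U-S-opt-val-compare} equals $1$. I would therefore present that observation explicitly before invoking the two cited results.
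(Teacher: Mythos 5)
Your proposal is correct and follows exactly the route the paper intends: the corollary is an immediate consequence of Proposition~\ref{prop:U-S-transform} (and Corollary~\ref{cor:U-S-opt-val-compare}) once one notes that a common lhs gives $\mc(\depset)=1$ and rules out consensus attributes, and the identity map on tables together with the two transformations of Proposition~\ref{prop:U-S-transform} yields strict reductions in both directions. Your explicit verification that $\closure_{\depset}(\emptyset)=\emptyset$ via a two-tuple table disagreeing on the common lhs is exactly the detail needed to apply part (2), so there is no gap.
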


For example, if $\depset$ consists of a single FD, then an optimal
U-repair can be computed in polynomial time. Additional examples
follow.
\begin{example}
  To illustrate the use of Corollary~\ref{cor:U-S-same-mc-1}, consider
  the FD set $\depset$ of our running example
  (Figure~\ref{fig:example}). Since $\depset$ has a common lhs, and we
  have established in Example~\ref{example:dichotomy} that an optimal
  S-repair for $\depset$ can be found in polynomial time (i.e.,
  $\depset$ passes the test of $\algname{OSRSucceeds}$), we get that an
  optimal U-repair can also be computed in polynomial time for
  $\depset$.
  
  As another illustration, consider the following FD set.
  \[\depset_1\eqdef\set{\att{id country}\ra\att{passport}\,,\,\att{id passport}\ra\att{country}}\]
  Again, as $\depset_1$ has a common lhs, and $\depset_1$ passes the
  test of $\algname{OSRSucceeds}$ (by applying common lhs followed by
  an lhs marriage), from Theorem~\ref{thm:dichotomy} we conclude that
  an optimal U-repair can be found in polynomial time.

  Finally, consider the following set of FDs.
  \[\depset_2\eqdef\set{\att{state city}\ra\att{zip}\,,\,\att{state  zip}\ra\att{country}}\]
  The reader can verify that $\depset_2$ fails 
  $\algname{OSRSucceeds}$, and therefore, from
  Theorem~\ref{thm:dichotomy} we conclude that computing an optimal
  U-repair is APX-complete. 
\end{example}

By combining Theorem~\ref{thm:u-consensus},
Corollary~\ref{cor:U-S-same-mc-1}, and
Corollary~\ref{cor:subset-chain-tractable}, we conclude the following.
\begin{corollary}\label{cor:update-chain-tractable}
  If $\depset$ is a chain FD set, then an optimal U-repair can be
  computed in polynomial time. 
\end{corollary}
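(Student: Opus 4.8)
The plan is essentially bookkeeping on three earlier results: strip the consensus FDs via Theorem~\ref{thm:u-consensus}, observe that the residual FD set has a common lhs, pass from U-repairs to S-repairs via Corollary~\ref{cor:U-S-same-mc-1}, and conclude with Corollary~\ref{cor:subset-chain-tractable}.

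Set $\depset':=\depset-\closure_\depset(\emptyset)$. By Theorem~\ref{thm:u-consensus} there is a strict reduction from computing an optimal U-repair for $\depset$ to computing one for $\depset'$; since a strict reduction instantiated at $\alpha=1$ turns an optimal solution for the reduced instance into an optimal solution for the original in polynomial time, it suffices to compute an optimal U-repair under $\depset'$ in polynomial time. I would record two elementary facts about $\depset'$: it is still a chain FD set, because deleting a fixed attribute set from every lhs preserves all inclusions $X_1\subseteq X_2$ among the left-hand sides; and it is consensus free, i.e.\ $\closure_{\depset'}(\emptyset)=\emptyset$, since once $\closure_\depset(\emptyset)$ has been removed no further consensus attribute can be derived (a one-line Armstrong-rules argument: an $\emptyset\ra A$ derivation over $\depset'$ lifts, by adding $\closure_\depset(\emptyset)$ to both sides of every step, to an $\closure_\depset(\emptyset)\ra A$ derivation over $\depset$, forcing $A\in\closure_\depset(\emptyset)$). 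After also discarding the trivial FDs of $\depset'$, which alters neither the consistent updates nor their distances, $\depset'$ is a consensus-free chain.

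Next I would argue that a nontrivial consensus-free chain FD set has a common lhs. If $\depset'$ is trivial there is nothing to prove ($T$ is its unique, hence optimal, U-repair). Otherwise, among the left-hand sides of $\depset'$ choose a $\subseteq$-minimal one $X_{\min}$; since $\depset'$ is a chain these sets are totally ordered, so every lhs of $\depset'$ contains $X_{\min}$. Consensus-freeness forces $X_{\min}\neq\emptyset$, for an FD $\emptyset\ra W$ with $W\neq\emptyset$ would place $W$ inside $\closure_{\depset'}(\emptyset)$. Pick any $A\in X_{\min}$; then $A$ lies on the lhs of every FD of $\depset'$, so $A$ is a common lhs and $\mc(\depset')=1$.

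Finally, Corollary~\ref{cor:U-S-same-mc-1} turns the common-lhs property into a strict reduction from computing an optimal U-repair for $\depset'$ to computing an optimal S-repair for $\depset'$, and Corollary~\ref{cor:subset-chain-tractable} says the latter is solvable in polynomial time since $\depset'$ is a chain. Composing the reductions gives the claimed polynomial-time algorithm for an optimal U-repair under $\depset$. There is no genuine obstacle here; the only parts that are not literal quotations of earlier statements are the two observations about $\depset'$ (chain-ness and consensus-freeness are preserved under attribute removal) and the minimal-lhs argument, each of which is a couple of lines.
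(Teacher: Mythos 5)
Your proposal is correct and follows essentially the same route as the paper: strip consensus attributes via Theorem~\ref{thm:u-consensus}, note the residual set is still a chain with a common lhs, then chain Corollary~\ref{cor:U-S-same-mc-1} with Corollary~\ref{cor:subset-chain-tractable}. The only difference is that you spell out the steps the paper merely asserts (chain-ness preserved under attribute removal, and the minimal-lhs/consensus-freeness argument for the common lhs), which is fine.
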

\begin{proof}
  If $\depset$ is a chain FD set, then so is
  $\depset-\closure_{\depset}(\emptyset)$.
Theorem~\ref{thm:u-consensus} states that computing an optimal
U-repair has the same complexity under the two FD sets.
   Moreover, if
  $\depset-\closure_{\depset}(\emptyset)$ is nonempty, then it has at least
  one common lhs. From Corollary~\ref{cor:U-S-same-mc-1} we conclude
  that the problem then strictly reduces to computing an S-repair,
  which, by Corollary~\ref{cor:subset-chain-tractable}, can be done in
  polynomial time.  
\end{proof}

Hence, for chain FD sets, an optimal repair can be computed for both
subset and update variants.

\subsection{Incomparability to S-Repairs}\label{sec:incomparable-u-s}

Corollaries~\ref{cor:U-S-same-mc-1}
and~\ref{cor:update-chain-tractable} state cases where computing an
optimal S-repair has the same complexity as computing an optimal
U-repair. A basic case (among others) that the corollaries do not
cover is in the following proposition, where again both variants have
the same (polynomial-time) complexity.

\def\propABBA{
  Under $\depset=\set{A\ra B,B\ra A}$, an
  optimal U-repair can be computed in polynomial time.
}

\begin{proposition}\label{prop:AB-BA}
\propABBA
\end{proposition}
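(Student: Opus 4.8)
The plan is to exploit the symmetric structure of $\depset=\set{A\ra B, B\ra A}$, which forces every consistent update to induce a \emph{bijection} between the $A$-values and $B$-values it uses. Observe first that attribute $C$ (and any attribute not appearing in $\depset$) is irrelevant: by Theorem~\ref{thm:disjoint} we may assume the schema is exactly $R(A,B)$. Now in any table $U$ satisfying $\depset$, the relation $\set{(\tup t.A,\tup t.B)\mid \tup t\in U[*]}$ is the graph of a partial bijection $f$ from the active domain to itself: $A\ra B$ says each $A$-value determines a unique $B$-value, and $B\ra A$ says the converse. So a consistent update is completely described by assigning to each tuple identifier $i$ a pair $(a_i,b_i)$ such that the map $a_i\mapsto b_i$ is well-defined and injective, i.e., identifiers that are sent to the same $a$-value must also be sent to the same $b$-value and vice versa. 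The cost of a tuple $i$ under such an assignment is $w_T(i)\cdot\big(\mathbb{1}[a_i\neq T[i].A]+\mathbb{1}[b_i\neq T[i].B]\big)$.

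The key structural claim is that there is an optimal U-repair in which the tuples are partitioned into groups, each group being assigned a single pair $(a,b)$, where $a$ (resp.\ $b$) is either a value already appearing in the $A$-column (resp.\ $B$-column) of that group in $T$, or a fresh value; and moreover the ``fresh'' option can be captured by treating each tuple as its own singleton group at cost at most $w_T(i)$ (change one of the two cells to a globally fresh constant, leaving the other intact). So the decision for each group is: among the tuples we decide to place in a common group sharing pair $(a,b)$, we pay, for each tuple $i$ in the group, $w_T(i)$ if $T[i].A\neq a$ plus $w_T(i)$ if $T[i].B\neq b$; choosing $a$ as the most-weighted $A$-value in the group and $b$ as the most-weighted $B$-value in the group is optimal within the group. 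I would then show that this reduces to a minimum-weight matching / assignment computation: build a weighted bipartite-style graph on the distinct $A$-values versus distinct $B$-values of $T$, where merging the $a$-class and the $b$-class costs the total weight of tuples that disagree on $A$ with $a$ or on $B$ with $b$, and each tuple may instead ``opt out'' at cost $w_T(i)$; an optimal solution corresponds to a minimum-weight matching in this graph, solvable in polynomial time.

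The main obstacle I anticipate is making precise the interaction between the two directions of the dependency when grouping: because the bijection constraint is global (two different groups may not share an $A$-value, and may not share a $B$-value), a naive per-value greedy choice need not be consistent. The clean way around this is the matching formulation: nodes on one side are the $A$-values of $T$ plus a special ``fresh'' node, nodes on the other side are the $B$-values of $T$ plus a ``fresh'' node; an edge $(a,b)$ is weighted by the minimum cost of realizing all tuples assigned to that $(a,b)$-pair (which is itself easily computed), and every tuple is charged to exactly one edge. One then argues (i) every consistent update yields such a matching of no greater cost, by canonicalizing each group's pair to its weight-optimal choice and splitting off tuples that are cheaper to send to fresh values, and (ii) every matching yields a consistent update of exactly that cost. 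Since maximum-weight (equivalently minimum-cost) bipartite matching is polynomial-time computable, this gives the desired algorithm. I would also double-check the edge cases where a tuple is best served by changing \emph{both} of its cells versus by joining an existing group, to be sure the reduction is lossless.
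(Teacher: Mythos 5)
Your proposal is correct in substance, but it takes a genuinely different route from the paper. The paper never formulates the U-repair problem directly: it proves $\distu(U^*,T)=\dists(S^*,T)$ by combining Corollary~\ref{cor:U-S-opt-val-compare} ($\dists(S^*,T)\leq\distu(U^*,T)$) with the observation that an \emph{optimal} S-repair $S^*$ can be turned into a consistent update of the same distance---every deleted tuple must agree with some kept tuple on $A$ or on $B$ (else it could be added back, contradicting optimality), so one cell change per deleted tuple suffices---and then invokes Theorem~\ref{thm:dichotomy}, whose algorithm for the lhs marriage $(\set{A},\set{B})$ is itself a maximum-weight bipartite matching (Subroutine~\ref{alg:marriage}). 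You instead argue directly on consistent updates via the partial-bijection structure and reduce to matching; the two routes end at the same computation, but yours is self-contained while the paper's is shorter and additionally yields the equality of the two optima, which is what the surrounding text actually uses. Two points to tighten in your write-up. First, your soundness direction is easier than you fear and needs no ``canonicalization'' of groups at all: given any consistent update $U$, let $M$ be the set of distinct $(A,B)$-pairs of the tuples left entirely unchanged; these pairs lie in the bijection induced by $U$, so $M$ is a matching between original $A$- and $B$-values, and $\distu(U,T)\geq\sum_i w_T(i)-w(M)$ where $w(M)$ is the total weight of tuples whose original pair lies in $M$. Conversely, from any matching $M$ you build an update of exactly that cost (tuples whose pair is in $M$ stay; a tuple whose $A$-value is matched to $b'$ gets $B:=b'$; a tuple with unmatched $A$-value $a$ keeps $a$ and gets a fresh $B$-value $f_a$, one per such $a$); this also disposes of your ``both cells change'' edge case. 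Second, modeling ``fresh'' as a single extra node on each side of the matching is not quite right, since several opted-out values each need their own fresh partner; handle opt-outs additively outside the matching objective (or add one fresh node per value), and note that dropping $C$ needs only the trivial fact that an optimal U-repair never touches attributes outside $\attr(\depset)$, not the full strength of Theorem~\ref{thm:disjoint}.
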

For $\depset = \set{A\ra B,B\ra A}$, although $\mc(\depset) = 2$, we
show that $\distu(U^*, T) = \dists(S^*, T)$ for an optimal U-repair
$U^*$ and an optimal S-repair $S^*$ for any table $T$ over $R$. Since
$\depset$ passes the test of $\algname{OSRSucceeds}$ (by applying lhs
marriage), from Theorem~\ref{thm:dichotomy} an optimal S-repair can be
computed in polynomial time, and therefore an optimal U-repair of
$\set{A\ra B,B\ra A}$ can also be computed in polynomial time.

\par
Do the two variants of optimal repairs feature the same complexity for
\e{every} set of FDs? Next, we answer this question in a negative way.

We have already seen an example of an FD set $\depset$ where an
optimal U-repair can be computed in polynomial time, but finding an
S-repair is APX-complete.  Indeed, Example~\ref{example:disjoint-u}
shows that $\set{A\ra B,C\ra D}$ is a tractable case for optimal
U-repairs; yet, it fails the test of $\algname{OSRSucceeds}$, and is
therefore hard for optimal S-repairs (Theorem~\ref{thm:dichotomy}).

Showing an example of the other direction is more involved. The FD set
in this example is $\marriageplus$ from
Example~\ref{example:common-and-marriage}. In
Example~\ref{example:dichotomy} we showed that $\marriageplus$ passes
the test of $\algname{OSRSucceeds}$, and therefore, an optimal S-repair
can be computed in polynomial time. This is not the case for an
optimal U-repair.

\def\thmmarriageuhard{
  For the relation schema $R(A,B,C)$ and the FD set $\marriageplus$,
  computing an optimal U-repair is APX-complete, even on unweighted,
  duplicate-free tables.
}
\begin{theorem}\label{thm:marriage-u-hard}
\thmmarriageuhard
\end{theorem}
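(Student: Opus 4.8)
The plan is to establish APX-hardness by a reduction from a known APX-hard problem, mirroring but substantially extending the structure of the Kolahi--Lakshmanan hardness proof for $\sabc = \set{A \ra B, B \ra C}$ (which is cited in Example~\ref{example:disjoint-u} as going through bounded-degree minimum vertex cover). Membership in APX is immediate: $\marriageplus$ is a fixed FD set, so Proposition~\ref{prop:U-S-transform} part (2) together with Proposition~\ref{prop:subset-approx} gives a constant-factor approximation (the constant being $\mc(\marriageplus) \cdot 2 = 2\cdot 2 = 4$ after first removing the consensus attribute $C$ via Theorem~\ref{thm:u-consensus}; note $\marriageplus - \closure(\emptyset)$ has $\mc = 1$, so in fact a $2$-approximation). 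So the whole content is the lower bound.

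First I would fix the source problem: minimum vertex cover on graphs of bounded degree, which is APX-hard~\cite{DBLP:journals/tcs/AlimontiK00}, and recall that for the PTAS-reduction machinery it suffices to produce a \emph{strict} or PTAS reduction in which the optimum cost of the constructed table instance is an affine function $c \cdot \mathrm{OPT}(G) + d$ of the vertex-cover optimum, with $c,d$ controlled. Given a bounded-degree graph $G=(V,E)$, I would build a table $T$ over $R(A,B,C)$ as follows. For each vertex $v$ introduce a ``representative'' value, and for each edge $e=\{u,v\}$ introduce a cluster of tuples that encodes a choice: either $u$ is put in the cover or $v$ is, where ``putting $u$ in the cover'' corresponds to paying one update in the part of $T$ attached to $u$. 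The delicate point compared to $\sabc$ is that $\marriageplus$ has the \emph{cycle} $A\ra B$, $B\ra A$ between $A$ and $B$, so the $A$- and $B$-columns are forced to be in bijective correspondence in any consistent update; one cannot simply ``redirect'' an $A$-value to a new group without also disturbing the $B$-value, and vice versa. The gadget must therefore make the cheapest way to repair an edge-conflict be to change exactly one cell (breaking either the $A\ra B$ direction or the $B\ra A$ direction) in a way that is charged to a vertex, while any alternative repair (e.g.\ touching $C$, or changing two cells) is strictly more expensive or globally inconsistent. Concretely I expect each vertex $v$ to get a pair of ``anchor'' tuples that pin its $A$- and $B$-values, and each edge $\{u,v\}$ to contribute a conflict tuple whose $A$-value matches $u$'s anchor and whose $B$-value matches $v$'s anchor, so the $A\ra B$ FD is violated unless one of the two anchors is ``released'' — i.e.\ unless a cell of one of the vertices is updated — and symmetrically for $B\ra A$; the $B\ra C$ FD forces the released value to still agree with a common $C$-constant, which is where the extra subtlety (and the need to go beyond the $\sabc$ construction) comes in, since releasing an $A$-value while keeping $B$-consistency requires care to avoid a cheaper cascade.

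The key steps, in order: (1) describe the vertex gadgets and edge gadgets precisely and verify that a vertex cover $C^* \subseteq V$ yields a consistent update of cost $c\cdot|C^*| + d$ (soundness of the forward direction); (2) prove the converse — from any consistent update $U$ of $T$ extract a vertex cover of $G$ of size at most $(\distu(U,T) - d)/c$ — which is the crux: I must argue that in an \emph{optimal} (or near-optimal) $U$, every edge gadget is ``repaired locally'', i.e.\ the updates can be normalized without increasing cost so that each edge's conflict is resolved by releasing an anchor of one of its two endpoints, thereby reading off a cover; (3) check that the reduction is a PTAS reduction, using boundedness of the degree to ensure $\mathrm{OPT}(G) = \Theta(|V|) = \Theta(\text{size of }T)$ so that the affine relationship between optima transfers approximation ratios up to constants; and (4) observe the construction uses no tuple weights (unit weights) and produces distinct tuples, giving the ``unweighted, duplicate-free'' strengthening. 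The main obstacle will be step (2), the normalization argument: because of the $A\leftrightarrow B$ cycle, an adversarial update could in principle spread changes across many tuples in a way that does not obviously correspond to a cover, and I will need a careful exchange/local-surgery argument — probably case analysis on, for each connected component of the ``change graph'' of $U$ restricted to a gadget, which of the three FDs' violations it is responsible for — to show that any such $U$ can be converted, without cost increase, into one in the canonical cover-induced form. This is precisely the place where the proof is ``considerably more involved than'' the $\sabc$ case, since there the acyclic chain lets one peel off changes greedily from the $A$-end, whereas here the cycle removes that handle.
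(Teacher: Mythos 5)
Your plan follows the same route as the paper's proof: a reduction from minimum vertex cover on bounded-degree graphs, with per-vertex and per-edge tuples over $R(A,B,C)$, an affine relation between the optimal update distance and the cover size, transfer of approximation ratios via the degree bound (the paper obtains exactly a $(1+2B)$-factor transfer using $\mathrm{OPT}\geq |E|/B$), and membership in APX exactly as you argue. For reference, the paper's concrete gadget is: for each edge $\set{u,v}$ the two tuples $(u,v,\val{0})$ and $(v,u,\val{0})$, and for each vertex $v$ the single tuple $(v,v,\val{1})$; it then shows that $G$ has a vertex cover of size $k$ if and only if the constructed table has a consistent update of distance $2|E|+k$ (a cover vertex pays one cell update, changing $(v,v,\val{1})$ to $(v,v,\val{0})$, and every edge pays two, redirecting both of its tuples into a covering vertex).

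The genuine gap is that the step you yourself identify as the crux is not carried out, and it is exactly where all the work lies. Your gadget is left indefinite (``a pair of anchor tuples'' per vertex, one conflict tuple per edge, stated with ``I expect''), so its cost accounting is unverified, and the converse direction --- extracting a cover of size $k$ from any consistent update of distance $2|E|+k$ --- is only announced as requiring ``a careful exchange/local-surgery argument.'' In the paper this is a standalone lemma occupying most of the proof: any consistent update that leaves some edge tuple $(v_1,v_2,\val{0})$ untouched can be transformed, \emph{without increasing the distance}, into one updating strictly more edge tuples; the proof is a lengthy case analysis showing that an untouched edge tuple forces at least four cell changes among $(v_1,v_1,\val{1})$, $(v_2,v_2,\val{1})$ and $(v_2,v_1,\val{0})$, and then rebuilding the update tuple by tuple while tracking when an extra budget unit becomes available. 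Only after this normalization (so that every repair has cost at least $2|E|$ and one may assume all $2|E|$ edge tuples are updated) does the counting argument yield a cover of size at most $k$, and it is precisely here that the $A\leftrightarrow B$ cycle makes the argument harder than for $\set{A\ra B, B\ra C}$. Without proving this exchange lemma (or an analogue for your variant gadget), the hard direction of the reduction --- and hence the APX-hardness claim --- is unestablished; as it stands the proposal is a correct outline of the paper's strategy rather than a proof.
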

The proof of hardness in Theorem~\ref{thm:marriage-u-hard} is inspired
by, yet different from, the reduction of Kolahi and
Lakshmanan~\cite{DBLP:conf/icdt/KolahiL09} for showing hardness for
$\set{A\ra B,B\ra C}$. The reduction is from the problem of finding a
minimum vertex cover of a graph $G(V,E)$. Every edge $\set{u,v}\in E$
gives rise to the tuples $(u,v,\val{0})$ and $(v,u,\val{0})$.  In
addition, each vertex $v\in V$ gives rise to the tuple
$(v,v,\val{1})$. The proof shows that there is a consistent update of
cost at most $2|E|+k$ (assuming a unit weight for each tuple) if and
only if $G$ has a vertex cover of size at most $k$. To establish a
PTAS reduction, we use the APX-hardness of vertex cover when $G$ is of
bounded degree~\cite{DBLP:journals/tcs/AlimontiK00}.  The challenging
(and interesting) part of the proof is in showing that a consistent
update of cost $2|E|+k$ can be transformed into a vertex cover of size
$k$; this part is considerably more involved than the corresponding
proof of Kolahi and Lakshmanan~\cite{DBLP:conf/icdt/KolahiL09}.

We conclude with the following corollary, stating the existence of FD
sets where the two variants of the problem feature different
complexities.

\begin{corollary}
  There exist FD sets $\depset_1$ and $\depset_2$ such that:
  \begin{enumerate}
  \item Under $\depset_1$ an optimal S-repair can be computed in
    polynomial time, but computing an optimal U-repair is APX-complete.
  \item Under $\depset_2$ an optimal U-repair can be computed in
    polynomial time, but computing an optimal S-repair is APX-complete.
\end{enumerate}
\end{corollary}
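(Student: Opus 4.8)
The plan is to exhibit explicit witnesses for the two clauses, drawing entirely on results already established. For clause~1 I would take $\depset_1\eqdef\marriageplus=\set{A\ra B,B\ra A,B\ra C}$. In Example~\ref{example:dichotomy} it is shown that $\algname{OSRSucceeds}(\marriageplus)$ returns true — one lhs marriage step followed by one consensus step reduces it to the trivial FD set — so Theorem~\ref{thm:dichotomy} gives a polynomial-time algorithm for an optimal S-repair under $\depset_1$. On the other hand, Theorem~\ref{thm:marriage-u-hard} states that computing an optimal U-repair under $\marriageplus$ is APX-complete, even on unweighted, duplicate-free tables. Hence $\depset_1$ witnesses clause~1.

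For clause~2 I would take $\depset_2\eqdef\set{A\ra B,C\ra D}$. Writing $\depset_2=\set{A\ra B}\cup\set{C\ra D}$, the two parts are attribute disjoint, and each consists of a single FD, for which an optimal U-repair is computable in polynomial time (as noted right after Corollary~\ref{cor:U-S-same-mc-1}); Theorem~\ref{thm:disjoint} then lifts this to $\depset_2$ itself — this is exactly the argument of Example~\ref{example:disjoint-u}. Conversely, Example~\ref{example:dichotomy} observes that $\algname{OSRSucceeds}(\set{A\ra B,C\ra D})$ returns false, so by Theorem~\ref{thm:dichotomy} computing an optimal S-repair under $\depset_2$ is APX-complete. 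Hence $\depset_2$ witnesses clause~2, and the corollary follows.

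There is essentially no obstacle here: all the technical weight sits in the two hardness results already stated — the APX-hardness of the U-repair problem for $\marriageplus$ (Theorem~\ref{thm:marriage-u-hard}) and the APX-hardness of the S-repair problem for $\set{A\ra B,C\ra D}$ (Theorem~\ref{thm:dichotomy}) — together with the two matching tractability facts. The only observation the corollary adds is that these two hard directions point opposite ways, so the complexities of the two repair variants genuinely diverge and neither reduces generically to the other.
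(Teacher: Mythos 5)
Your proposal is correct and matches the paper's own argument: the paper likewise takes $\depset_1=\marriageplus$ (tractable by Example~\ref{example:dichotomy} and Theorem~\ref{thm:dichotomy} for S-repairs, APX-complete for U-repairs by Theorem~\ref{thm:marriage-u-hard}) and $\depset_2=\set{A\ra B,C\ra D}$ (tractable for U-repairs via Example~\ref{example:disjoint-u} and Theorem~\ref{thm:disjoint}, APX-complete for S-repairs since it fails $\algname{OSRSucceeds}$). Nothing is missing.
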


\subsection{Approximation}\label{sec:approx-u}
In this section we discuss approximations for optimal U-repairs.  The
combination of Propositions~\ref{prop:subset-approx}
and~\ref{prop:U-S-transform} gives the following.
\begin{theorem}\label{thm:approx-u}
  Let $\depset$ be a set of FDs, and $\alpha=2\cdot\mc(\depset)$. An
  $\alpha$-optimal U-repair can be computed in polynomial time.
\end{theorem}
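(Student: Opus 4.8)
The plan is to produce the required U-repair by \emph{composing} two approximations that are already available from earlier in the paper: the polynomial-time $2$-approximation for optimal S-repairs of Proposition~\ref{prop:subset-approx}, and the polynomial-time transformation of a consistent subset into a consistent update from item~2 of Proposition~\ref{prop:U-S-transform}, which (when $\depset$ is consensus free) inflates the cost by a factor of at most $\mc(\depset)$. A factor-$2$ loss followed by a factor-$\mc(\depset)$ loss accounts for the target ratio $\alpha = 2\cdot\mc(\depset)$; the one point that needs care is that the cost of the update we build should be measured against an optimal \emph{U}-repair rather than an optimal S-repair, and this is supplied by Corollary~\ref{cor:U-S-opt-val-compare}, whose first inequality $\dists(S^*,T)\le\distu(U^*,T)$ holds for every $\depset$.

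In detail, I would first treat the case in which $\depset$ is consensus free (discarding trivial FDs beforehand). Fix an input table $T$ and let $S^*$, $U^*$ be an optimal S-repair and an optimal U-repair of $T$. First, apply Proposition~\ref{prop:subset-approx} to compute in polynomial time a $2$-optimal S-repair $S$, so $\dists(S,T)\le 2\,\dists(S^*,T)$. Next, run on $S$ the construction from the proof of item~2 of Proposition~\ref{prop:U-S-transform} to obtain, in polynomial time, a consistent update $U$ with $\distu(U,T)\le\mc(\depset)\cdot\dists(S,T)$. Finally, combine these with Corollary~\ref{cor:U-S-opt-val-compare}:
\[
\distu(U,T)\ \le\ \mc(\depset)\cdot\dists(S,T)\ \le\ 2\,\mc(\depset)\cdot\dists(S^*,T)\ \le\ 2\,\mc(\depset)\cdot\distu(U^*,T)\,,
\]
so $U$ is an $\alpha$-optimal consistent update; if a genuine U-repair is desired, turn $U$ into one without increasing its distance, as noted in Section~\ref{sec:prelim}. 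All steps run in polynomial time, which settles the consensus-free case.

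It then remains to dispense with the consensus-freeness assumption, and here there is essentially nothing to do. If $\depset$ (after removing trivial FDs) is not consensus free, then it contains an FD with empty left-hand side, so it admits no lhs cover at all; thus $\mc(\depset)$ is unbounded and the statement holds vacuously for any U-repair. If one prefers to stay within the finite regime, one can instead reduce to the consensus-free set $\depset-\closure_{\depset}(\emptyset)$ via the strict reduction of Theorem~\ref{thm:u-consensus}, run the argument above on that set, and pull the $\alpha$-optimal solution back through the reduction (noting that $\mc(\depset-\closure_{\depset}(\emptyset))\le\mc(\depset)$, since deleting consensus attributes never makes a nontrivial FD's left-hand side empty). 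I do not foresee a real obstacle in this proof: the substantive work is concentrated in Proposition~\ref{prop:U-S-transform} and Corollary~\ref{cor:U-S-opt-val-compare}, and what remains here is the straightforward arithmetic of chaining the two approximation factors.
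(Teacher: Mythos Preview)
Your proposal is correct and follows exactly the route the paper takes: the paper's entire proof is the one-line remark that the theorem follows from ``the combination of Propositions~\ref{prop:subset-approx} and~\ref{prop:U-S-transform},'' and your argument just spells this out, using Corollary~\ref{cor:U-S-opt-val-compare} (itself an immediate consequence of Proposition~\ref{prop:U-S-transform}) for the final inequality. Your discussion of the consensus case is a reasonable elaboration that the paper leaves implicit; since $\mc(\depset)$ is undefined whenever $\depset$ contains a consensus FD, the statement is indeed only substantive in the consensus-free case.
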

Note that the approximation ratio can be further improved by applying
Theorem~\ref{thm:disjoint}: if $\depset$ is the union of
attribute-disjoint FD sets $\depset_1$ and $\depset_2$, then an
$\alpha$-optimal U-repair can be computed (under $\depset$) where
\[\alpha=2\cdot\max\set{\mc(\depset_1),\mc(\depset_2)}\,.\]

\begin{sloppypar}
Kolahi and Lakshmanan~\cite{DBLP:conf/icdt/KolahiL09} have also given
a constant-factor approximation algorithm for U-repairs (assuming
$\depset$ is fixed). To compare their ratio with ours, we first
explain their ratio.
\end{sloppypar}

\def\mfs{\mathit{MFS}}
\def\mci{\mathit{MCI}}

Let $\depset$ be a set of FDs and assume (without loss of generality)
that the rhs of each FD consists of a single attribute.  Kolahi and
Lakshmanan~\cite{DBLP:conf/icdt/KolahiL09} define two measures on
$\depset$.  By $\mfs(\depset)$ we denote the maximum number of
attributes involved in the lhs of any FD in $\depset$.  An \e{implicant} of an
attribute $A$ is a set $X$ of attributes such that $X\ra A$ is in
the closure of $\depset$.  A \e{core implicant} of $A$ is a set $C$ of
attributes that hits every implicant of $A$ (\ie, $X\cap
C\neq\emptyset$ whenever $X \rightarrow A$ is in the closure of
$\depset$).  A \e{minimum core implicant} of $A$ is a core implicant
of $A$ with the smallest cardinality. By $\mci(\depset)$ we denote the
size of the largest minimum core implicant over all attributes $A$.

\begin{citedtheorem}{DBLP:conf/icdt/KolahiL09}\label{thm:KL-approx-U}
  For any set $\depset$ of FDs, an $\alpha$-optimal U-repair can be
  computed in polynomial time where $\alpha=(\mci(\depset) + 2) \cdot
  (2 \mfs(\depset) - 1)$.
\end{citedtheorem}
In both Theorems~\ref{thm:approx-u} and \ref{thm:KL-approx-U}, the
approximation ratios are constants under data complexity, but depend
on $\depset$.  It is still unknown whether there is a constant
$\alpha$ that applies to all FD sets $\depset$. Yet, it is known that
a constant-ratio approximation cannot be obtained in polynomial time
under \emph{combined complexity} (where $R$, $T$, and $\depset$ are
all given as input)~\cite{DBLP:conf/icdt/KolahiL09}.

\par
Although the proof of Theorem~\ref{thm:approx-u} is much simpler than
the non-trivial proof of Theorem~\ref{thm:KL-approx-U} given in
\cite{DBLP:conf/icdt/KolahiL09}, it can be noted that the
approximation ratios in these two theorems are not directly comparable.
If $k$ is the number of attributes, then the worst-case approximation
ratio in Theorem~\ref{thm:KL-approx-U} is quadratic in $k$, while the
worst-case approximation in Theorem~\ref{thm:approx-u} is linear in
$k$ (precisely, linear in $\min(k, |\depset|)$). Moreover, an easy
observation is that the ratio between the two approximation ratios can
be at most linear in $k$.  In the remainder of this section, we
illustrate the difference between the approximations with examples.
\par 
First, we show an infinite sequence of FD sets where the approximation ratio
of Theorem~\ref{thm:approx-u} is $\Theta(k)$ and that of Theorem~\ref{thm:KL-approx-U} is $\Theta(k^2)$.
For a natural
number $k\geq 1$, we define $\depset_k$ as follows.
\[\depset_k \eqdef \set{A_0\cdots A_k \ra B_0, B_0 \rightarrow C, B_1 \rightarrow A_0,\, \ldots, \, B_k \rightarrow A_0}\]
The
approximation ratio for $\depset_k$ given by Theorem~\ref{thm:approx-u} is 
$2(k+2)$. For the approximation ratio of
Theorem~\ref{thm:KL-approx-U}, we have $\mfs(\depset_k)=k+1$ (due to the FD
$A_0\cdots A_k \ra B$) and $\mci(\depset_k)=k$ (since the core
implicant of $A_0$ is $\set{B_1,\dots,B_k}$). Hence, the approximation
ratio of Theorem~\ref{thm:KL-approx-U} grows quadratically with $k$
(i.e., it is $\Theta(k^2)$).

On the other hand, following is a sequence of FD sets in which the
approximation ratio of Theorem~\ref{thm:approx-u} grows linearly with $k$,
while that of Theorem~\ref{thm:KL-approx-U} is a constant.
\[\depset'_k \eqdef \set{A_0A_1\ra B_0,\,A_1A_2\ra B_1,\,\dots,\,A_kA_{k+1}\ra B_k}\]
Here, the approximation ratio of Theorem~\ref{thm:approx-u} is $\Theta(k)$
(since $\mc(\depset'_k)$ is $\lceil (k+1)/2 \rceil$), but that of
Theorem~\ref{thm:KL-approx-U} is constant, since $\mfs(\depset'_k)=2$
and $\mci(\depset'_k)=1$. 

The following theorem shows that computing an optimal U-repair for
both $\depset_k$ and $\depset_k'$ is a hard problem, thus an
approximation is, indeed, needed.

\def\theoremuKLquad{
  Let $k\geq 1$ be fixed. Computing an optimal U-repair is APX-complete for:
  \begin{enumerate}
  \item $R(A_0, \dots, A_k,B_0,\dots,B_k, C)$ and $\depset_k$;
  \item $R(A_0,\dots,A_{k+1},B_0,\dots,B_k)$ and $\depset_k'$.
  \end{enumerate}
  }
  
\begin{theorem}\label{theorem:u-KL-quad}
\theoremuKLquad
\end{theorem}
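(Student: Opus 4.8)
The plan is the following. Membership in APX is immediate from Theorem~\ref{thm:approx-u}, since $\mc(\depset_k)$ and $\mc(\depset_k')$ are constants for fixed $k$; the work is in the APX-hardness. In both cases I would use a \emph{padding} reduction from a simpler FD set that is already known to be APX-hard: introduce the remaining attributes, but populate them in the source instance so that every extra FD is inert --- either trivially satisfied because its right-hand side is a column filled with one constant, or vacuous because some left-hand side column is filled with pairwise distinct values. Each such reduction will be \emph{strict}: the restriction of a consistent update of the padded table to the original columns will be a consistent update of the source table of no larger cost, and a consistent update of the source will extend, by leaving the padding untouched, to a consistent update of the padded table of the same cost; hence the optimal costs coincide and $\alpha$-optimality is preserved.

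\emph{Part (2).} I would first observe that $\depset_1'=\set{A_0A_1\ra B_0,\,A_1A_2\ra B_1}$ is already APX-complete for U-repairs, even on unweighted duplicate-free tables: it has the common lhs $A_1$, so by Corollary~\ref{cor:U-S-same-mc-1} its U-repair problem has the same complexity as its S-repair problem via a reduction that leaves the table unchanged, and $\algname{OSRSucceeds}(\depset_1')$ returns false --- deleting the common lhs $A_1$ leaves $\set{A_0\ra B_0,\,A_2\ra B_1}$, which has no common lhs, consensus FD, or lhs marriage --- so Theorem~\ref{thm:dichotomy} applies. To reduce $\depset_1'$ to $\depset_k'$, from a table $T$ over $R(A_0,A_1,A_2,B_0,B_1)$ I build $T'$ over $R(A_0,\dots,A_{k+1},B_0,\dots,B_k)$ by copying the shared columns and filling $A_3,\dots,A_{k+1}$ and $B_2,\dots,B_k$ with a single constant. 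For $i\ge 2$ the FD $A_iA_{i+1}\ra B_i$ then has a constant right-hand side and is satisfied in $T'$ and in any update leaving those columns constant; conversely, the restriction of \emph{any} consistent update of $T'$ to $A_0,A_1,A_2,B_0,B_1$ is automatically consistent under $\depset_1'$, since $A_0A_1\ra B_0$ and $A_1A_2\ra B_1$ are preserved under restriction, and its cost is no larger. Since $T'$ is unweighted/duplicate-free whenever $T$ is, this gives part~(2).

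\emph{Part (1).} I would reduce from computing an optimal U-repair under $\set{A\ra B,\,B\ra C}$, which is APX-hard even on unweighted duplicate-free tables by Kolahi and Lakshmanan~\cite{DBLP:conf/icdt/KolahiL09} together with the APX-hardness of vertex cover in bounded-degree graphs~\cite{DBLP:journals/tcs/AlimontiK00} (cf.\ Example~\ref{example:disjoint-u}). From a table $T$ over $R(A,B,C)$ I build $T'$ over $R(A_0,\dots,A_k,B_0,\dots,B_k,C)$ by setting $A_0\asn A$, $B_0\asn B$, keeping $C$, setting $A_1,\dots,A_k$ to a single constant, and filling each column $B_1,\dots,B_k$ with pairwise distinct values (e.g.\ tuple identifiers). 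In $T'$, two tuples agree on $A_0\cdots A_k$ exactly when they agree on $A_0$, so $A_0\cdots A_k\ra B_0$ behaves as $A_0\ra B_0$, and each $B_i\ra A_0$ with $i\ge 1$ is vacuous. The forward (padding) direction is immediate. For the backward direction I would normalize an optimal U-repair $U'$ of $T'$ in two steps: first, restore every altered $B_i$-value ($i\ge1$) to its original --- this only removes a tuple from a $B_i$-group, so it cannot create a violation and cannot raise the cost, and afterwards all $B_i\ra A_0$ are vacuous in $U'$; second, whenever $U'$ changes some $A_i$ ($i\ge1$) on a tuple $t$, restore those to the original constant and set $t.A_0$ to a fresh value, which isolates $t$ in its $A_0\cdots A_k$-group just as before, violates no FD (the only FD with $A_0$ on its right-hand side is now vacuous), and does not raise the cost. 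After normalization all tuples agree on $A_1,\dots,A_k$, so the restriction of $U'$ to $A_0,B_0,C$ is a consistent update of $T$ under $\set{A\ra B,\,B\ra C}$ of no larger cost. Since $T'$ is automatically duplicate-free (each $B_1$-value is distinct) and unweighted whenever $T$ is, this gives part~(1).

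I expect the normalization in part~(1) to be the main obstacle: because $A_0\cdots A_k\ra B_0$ has a wide left-hand side, an update of $T'$ could resolve a conflict ``cheaply'' by changing a single padding attribute $A_i$ ($i\ge1$), and one must argue that this can always be traded for an equal-cost change of $A_0$ alone --- which in turn requires first neutralizing the FDs $B_i\ra A_0$ by normalizing the $B_i$-columns, so that rewriting $A_0$ cannot propagate a new violation. By contrast, part~(2) is routine; the one point of care is that the hardness of $\depset_1'$ must be obtained through its \emph{S}-repair problem via the common-lhs equivalence. A direct padding into $\depset_1'$ from $\set{A\ra B,\,C\ra D}$ (whose U-repair problem is easy) would fail, precisely because the shared attribute $A_1$ of $\depset_1'$ lets a single cell change resolve two conflicts at once, so the optimal cost would not be preserved.
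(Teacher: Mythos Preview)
Your proposal is correct. Part~(2) is essentially identical to the paper's argument: the paper also first establishes APX-hardness for $\depset_1'$ via the common-lhs equivalence to S-repairs (Corollary~\ref{cor:U-S-same-mc-1} plus Theorem~\ref{thm:dichotomy}), and then pads with a constant in the extra columns.

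For Part~(1) you follow the same overall strategy as the paper---reduce from $\set{A\ra B,\,B\ra C}$ by padding, and then normalize an arbitrary consistent update so that only the three ``real'' columns are touched---but your encoding differs. The paper sets $A_1\asn A$ and fills \emph{all} remaining columns (including $A_0$ and all $B_i$, $i\ge 1$) with the constant~$0$; this makes the FDs $B_i\ra A_0$ trivially satisfied because the right-hand side is constant, and the normalization is a single step: any tuple that touched an extra column gets all extras reset to~$0$ and a fresh value in $A_1$. You instead put the source attribute in $A_0$ and make each $B_i$ column injective, so the $B_i\ra A_0$ FDs are vacuous rather than trivially satisfied; this forces your normalization into two stages (first restore the $B_i$'s so that rewriting $A_0$ cannot propagate a violation, then trade changes in $A_1,\dots,A_k$ for a fresh $A_0$). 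Both encodings yield a strict reduction; the paper's is slightly cleaner precisely because it avoids the interaction you flagged as the ``main obstacle''. One small wording point: your claim that restoring a $B_i$-value ``only removes a tuple from a $B_i$-group'' is not literally true for intermediate states (another tuple may have been updated \emph{to} that original value), but the conclusion is correct once \emph{all} altered $B_i$-values are restored, since the original values are pairwise distinct.
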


The proof for $\depset_k$ is by a reduction from computing an optimal
U-repair under $\set{A\ra B, B\ra C}$ (see
Example~\ref{example:disjoint-u}). For $\depset_k'$, we first show
that the problem is APX-hard for $k=1$. In this case, the FD set
contains two FDs $A_0A_1\rightarrow B_0$ and $A_1A_2\rightarrow B_1$,
thus $A_1$ is a common lhs, and Corollary~\ref{cor:U-S-same-mc-1},
combined with the fact that computing an optimal S-repair under
$\set{A\rightarrow B, C\rightarrow D}$ is APX-hard
(Theorem~\ref{thm:dichotomy}), imply that computing an optimal
U-repair is APX-hard as well. Then, we construct a reduction from
computing an optimal U-repair under $\depset'_k$ for $k=1$ to
computing an optimal U-repair under $\depset'_k$ for $k>1$.

Clearly, one can take the benefit of the approximations of both
Theorems~\ref{thm:approx-u} and~\ref{thm:KL-approx-U} by computing
U-repairs by both algorithms and selecting the one with the smaller
cost. As we showed, this combined approximation
outperforms each of its two components.

\balance
\section{Discussion and Future Work}\label{sec:conclusions}
We investigated the complexity of computing an optimal S-repair and an
optimal U-repair. For the former, we established a dichotomy over all
sets of FDs (and schemas). For the latter, we developed general
techniques for complexity analysis, showed concrete complexity
results, and explored the connection to the complexity of
S-repairs. We presented approximation results and, in the case of
U-repairs, compared to the approximation of Kolahi and
Lakshmanan~\cite{DBLP:conf/icdt/KolahiL09}. In the case of S-repairs,
we drew a direct connection to probabilistic database repairs, and
completed a dichotomy by Gribkoff et al.~\cite{GVSBUDA14} to the
entire space of FDs. Quite a few directions are left for future
investigation, and we conclude with a discussion of some
of these.


As our results are restricted to FDs, an obvious important direction is
to extend our study to other types of integrity constraints, such as
denial constraints~\cite{DBLP:journals/jiis/GaasterlandGM92},
conditional FDs~\cite{DBLP:conf/icde/BohannonFGJK07}, referential
constraints~\cite{Date:1981:RI:1286831.1286832}, and tuple-generating
dependencies~\cite{DBLP:journals/siamcomp/BeeriV84}. Moreover, the
repair operations we considered are either exclusively tuple deletions
or exclusively value updates. Hence, another clear direction is to
allow mixtures of deletions, insertions and updates, where the cost
depends on the operation type, the involved tuple, and the involved
attribute  (in the case of updates). 

Our understanding of the complexity of computing an optimal U-repair
is considerably more restricted than that of an optimal S-repair. We
would like to complete our complexity analysis for optimal U-repairs
into a full dichotomy.  More fundamentally, we would like to
incorporate restrictions on the allowed value updates. Our results are
heavily based on the ability to update \e{any cell} with \e{any value}
from an infinite domain. A natural restriction on the update repairs
is to allow revising only certain attributes, possibly using a finite
(small) space of possible new values. It is not clear how to
incorporate such a restriction in our results and proof techniques.

In the case of S-repairs, we are interested in incorporating
\e{preferences}, as in the framework of \e{prioritized repairing} by
Staworko et al.~\cite{DBLP:journals/amai/StaworkoCM12}. There,
priorities among tuples allow to eliminate subset repairs that are
inferior to others (where ``inferior'' has several possible
interpretations). It may be the case that priorities are rich enough
to clean the database
unambiguously~\cite{DBLP:conf/icdt/KimelfeldLP17}. A relevant question
is, then, what is the minimal number of tuples that we need to delete
in order to have an unambiguous repair? Alternatively, how many
preferences are needed for this cause?


\newpage
\bibliographystyle{abbrv}
\bibliography{minrepair}

\newpage
\onecolumn
\appendix\label{sec:appendix}
\section{Details from Section 3}\label{app:proofs-subset-repair}

In this section, we prove Theorem~\ref{thm:dichotomy}.

\begin{reptheorem}{\ref{thm:dichotomy}}
\thmdichotomy
\end{reptheorem}

\subsection{Tractability Side}~\label{sec:s-repair-tractability}

We start by proving the positive side of Theorem~\ref{thm:dichotomy}. Figure~\ref{fig:ptime-srepair} illustrates our proof. The idea is the following: we start with a table $T_0$ and an FD set $\depset_0$, for which we want to find an optimal S-repair. Then, we apply the simplifications (represented by the red arrows in the figure) until it is no longer possible. We can find an optimal S-repair in polynomial time if we have a trivial set of FDs at this point (which can also be empty). In order to prove that, we show for each one of the simplifications (common lhs, consensus FD and lhs marriage), that if the problem after the simplification can be solved in polynomial time, then we can use this solution to solve the original problem (before the simplification). The lemmas that appear next the black arrows in the figure contain our proofs.

\begin{lemma}\label{lemma:s1-ptime}
Let $T$ be a table, and let $\depset$ be a set of FDs, such that $\depset$ has a common lhs $A$. If for each $a\in\pi_AT[*]$, $\algname{\OSR}(\depset-A,\sigma_{A=a}T)$ returns an optimal S-repair of $\sigma_{A=a}T$ w.r.t. $\depset-A$, then $\algname{\OSR}(\depset,T)$ returns an optimal S-repair of $T$ w.r.t. $\depset$. 
\end{lemma}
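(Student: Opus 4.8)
The plan is to exploit the fact that, since $A$ belongs to the lhs of every FD in $\depset$, two tuples that disagree on $A$ can never jointly violate an FD of $\depset$; hence both consistency and the deletion distance decompose along the partition of $T$ into the blocks $\sigma_{A=a}T$ for $a\in\pi_AT[*]$, and an optimal S-repair of $T$ is exactly the disjoint union of optimal S-repairs of the individual blocks. Throughout I may assume that $\depset$ contains no trivial FDs (these are removed by $\algname{\OSR}$ before the common-lhs branch and never affect consistency); in particular, since each FD has a single attribute on its rhs while $A$ sits on every lhs, no FD has $A$ on its rhs, so $\depset-A$ is obtained from $\depset$ purely by deleting $A$ from every lhs.

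First I would record two decomposition claims, both immediate. (i) A subset $S$ of $T$ satisfies $\depset$ if and only if $\sigma_{A=a}S$ satisfies $\depset$ for every $a\in\pi_AT[*]$, and this in turn holds if and only if $\sigma_{A=a}S$ satisfies $\depset-A$ for every such $a$. Indeed, if two tuples of $S$ agree on the lhs of some FD then in particular they agree on $A$, hence they lie in a common block; and within a block all tuples share the same $A$-value, so removing $A$ from a lhs does not change which pairs of block tuples agree on it. (ii) For every subset $S$ of $T$,
\[
\dists(S,T)=\sum_{a\in\pi_AT[*]}\dists(\sigma_{A=a}S,\ \sigma_{A=a}T),
\]
since $\ids(T)\setminus\ids(S)$ is partitioned according to the $A$-value of the corresponding tuple, and weights are inherited from $T$.

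Then I would analyze the output. Write $S^\star=\algname{\OSR}(\depset,T)=\bigcup_{a\in\pi_AT[*]}S_a$, where $S_a=\algname{\OSR}(\depset-A,\sigma_{A=a}T)$; by the hypothesis of the lemma each $S_a$ is an optimal S-repair of $\sigma_{A=a}T$ w.r.t. $\depset-A$, hence w.r.t. $\depset$ by claim~(i). The blocks carry pairwise disjoint identifier sets, so $\sigma_{A=a}S^\star=S_a$ for every $a$; therefore $S^\star$ is a subset of $T$, and by claim~(i) it satisfies $\depset$. For optimality, let $S'$ be any consistent subset of $T$. By claim~(i), $\sigma_{A=a}S'$ is a consistent subset of $\sigma_{A=a}T$ w.r.t. $\depset-A$ for every $a$, so optimality of $S_a$ gives $\dists(S_a,\sigma_{A=a}T)\le\dists(\sigma_{A=a}S',\sigma_{A=a}T)$. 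Summing over $a$ and applying claim~(ii) to both $S^\star$ and $S'$ yields $\dists(S^\star,T)\le\dists(S',T)$, so $S^\star$ is an optimal S-repair of $T$ w.r.t. $\depset$.

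I do not expect a genuine obstacle here: the content is the decomposition, and the only steps requiring a line of care are the equivalence ``satisfies $\depset$ iff satisfies $\depset-A$ within a block'' and the harmlessness of cross-block pairs — both direct consequences of $A$ being a common lhs — together with the routine bookkeeping that blocks have disjoint identifiers, so that $\sigma_{A=a}(\cdot)$ commutes with the union defining $S^\star$. (The degenerate case in which $\depset$ is trivial is not reached in this branch, as $\algname{\OSR}$ returns $T$ at its first line, which is then trivially an optimal S-repair.)
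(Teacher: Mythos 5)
Your proof is correct and follows essentially the same route as the paper's: partition $T$ by the $A$-value, observe that any violating pair must agree on the common lhs $A$ and hence lies in one block where satisfaction of $\depset$ coincides with satisfaction of $\depset-A$, and conclude optimality blockwise. The only difference is stylistic — you argue optimality directly via additivity of $\dists$ over blocks, whereas the paper phrases the same decomposition as an exchange argument by contradiction.
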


\begin{proof}
Assume that for each $a\in\pi_AT[*]$, it holds that $\algname{\OSR}(\depset-A, \sigma_{A=a}T)$ returns an optimal S-repair of $\sigma_{A=a}T$ w.r.t. $\depset-A$. We contend that the algorithm $\algname{\OSR}(\depset,T)$ returns an optimal S-repair of T w.r.t. $\depset$. Since the condition of line~4 of $\algname{\OSR}$ is satisfied, the subroutine $\algname{CommonLHSRep}$ will be called. Thus, we have to prove that $\algname{CommonLHSRep1}(\depset,T)$ returns an optimal S-repair of $T$.

Let $J$ be the result of $\algname{CommonLHSRep}(\depset,T)$. We will start by proving that $J$ is consistent. Let us assume, by way of contradiction, that $J$ is not consistent. Thus, there are two tuples $\tup t_1$ and $\tup t_2$ in $J$ that violate an FD $Z\rightarrow W$ in $\depset$. Since $A\in Z$ (as $A$ is a common lhs attribute), the tuples $\tup t_1$ and $\tup t_2$ agree on the value of attribute $A$. Assume that $\tup t_1.A=\tup t_2.A=a$. By definition, there is an FD $(Z\setminus\set{A)}\rightarrow (W\setminus\set{A})$ in $\depset-A$. Clearly, the tuples $\tup t_1$ and $\tup t_2$ agree on all the attributes in $Z\setminus\set{A}$, and since they also agree on the attribute $A$, there exists an attribute $B\in (W\setminus\set{A})$ such that $\tup t_1.B\neq \tup t_2.B$. Thus, $\tup t_1$ and $\tup t_2$ violate an FD in $\depset-A$, which is a contradiction to the fact that $\algname{\OSR}(\depset-A, \sigma_{A=a}T)$ returns an optimal S-repair of $\sigma_{A=a}T$ that contains both $\tup t_1$ and $\tup t_2$.

Next, we will prove that $J$ is an optimal S-repair of $T$. Let us assume, by way of contradiction, that this is not the case. That is, there is another consistent subset $J'$ of $T$, such that the weight of the tuples in $T\setminus J'$ is lower than the weight of the tuples in $T\setminus J$. In this case, there exists at least one value $a'$ of attribute $A$, such that the weight of the tuples $\tup t\in (T\setminus J')$ for which it holds that $\tup t.A=a'$ is lower than the weight of such tuples in $T\setminus J$. Let $H=\set{\tup h_1,\dots,\tup h_r}$ be the set of tuples from $T$ for which it holds that $\tup h_j.A=a'$. Let $\set{\tup f_1,\dots,\tup f_n}$ be the set of tuples in $H\cap J$, and let $\set{\tup g_1,\dots,\tup g_m}$ be the set of tuples in $H\cap J'$. It holds that $w_T(H\setminus J) > w_T(H\setminus J')$. We claim that $\set{\tup g_1,\dots,\tup g_m}$ is a consistent subset of $\sigma_{A=a'}T$, which is a contradiction to the fact that $\set{\tup f_1,\dots,\tup f_n}$ is an optimal S-repair of $\sigma_{A=a'}T$. Let us assume, by way of contradiction, that $\set{\tup g_1,\dots,\tup g_m}$ is not a consistent subset of $\sigma_{A=a'}T$. Thus, there exist two tuples $\tup g_{j_1}$ and $\tup g_{j_2}$ in $\set{\tup g_1,\dots,\tup g_m}$ that violate an FD, $Z\rightarrow W$, in $\depset-A$. By definition, there is an FD $(Z\cup \set{A})\rightarrow (W\cup Y)$ in $\depset$, where $Y\subseteq \set{A}$, and since $\tup g_{j_1}$ and $\tup g_{j_2}$ agree on the value of attribute $A$, they clearly violate this FD, which is a contradiction to the fact that they both appear in $J'$ (which is a consistent subset of $T$).
\end{proof}

\begin{lemma}\label{lemma:s2-ptime}
Let $T$ be a table, and let $\depset$ be a set of FDs, such that $\depset$ contains a consensus FD $\emptyset\rightarrow X$. If for each $a\in\pi_XT[*]$, $\algname{\OSR}(\depset-X,\sigma_{X=a}T)$ returns an optimal S-repair of $\sigma_{X=a}T$ w.r.t. $\depset-X$, then $\algname{\OSR}(\depset,T)$ returns an optimal S-repair of $T$ w.r.t. $\depset$.
\end{lemma}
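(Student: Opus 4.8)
The plan is to reuse the scheme of the proof of Lemma~\ref{lemma:s1-ptime} almost verbatim, with the common-lhs case replaced by the consensus case. I would first note that since $\depset$ contains a consensus FD $\emptyset\ra X$, no attribute lies in every lhs, so $\depset$ has no common lhs; hence $\algname{\OSR}(\depset,T)$ reaches the consensus case and calls $\algname{ConsensusRep}(\depset,T)$ (Subroutine~\ref{alg:consensus}). That subroutine partitions $T$ by the value of $X$, recursively sets $S_a\asn\algname{\OSR}(\depset-X,\sigma_{X=a}T)$ for each $a\in\pi_XT[*]$ --- an optimal S-repair of $\sigma_{X=a}T$ w.r.t.\ $\depset-X$, by hypothesis --- and returns the $S_{a\top}$ of maximum weight $w_T(S_a)$. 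The claim then reduces to two facts: $S_{a\top}$ is a consistent subset of $T$ w.r.t.\ $\depset$, and $\dists(S_{a\top},T)$ is minimal over all consistent subsets of $T$.

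The technical heart is a single equivalence: a subset of $T$ all of whose tuples agree on every attribute of $X$ satisfies $\depset$ if and only if it satisfies $\depset-X$. I would prove both directions by the attribute-bookkeeping argument already used in Lemma~\ref{lemma:s1-ptime}: for an FD $Z\ra W\in\depset$ with image $(Z\setminus X)\ra(W\setminus X)\in\depset-X$, two tuples agreeing on $X$ agree on $Z$ iff they agree on $Z\setminus X$, and they then agree on $W$ iff they agree on $W\setminus X$ (they already agree on $W\cap X$); the consensus FD $\emptyset\ra X$ is satisfied trivially. Given this equivalence, consistency of $S_{a\top}$ is immediate: all its tuples carry the value $a\top$ on $X$, and $S_{a\top}\models\depset-X$, so $S_{a\top}\models\depset$.

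For optimality I would take an arbitrary consistent subset $S'$ of $T$ w.r.t.\ $\depset$. Since $S'\models\emptyset\ra X$, all its tuples share a single value $a'$ on $X$, so $S'\subseteq\sigma_{X=a'}T$; and by the equivalence above $S'\models\depset-X$, so $S'$ is a consistent subset of $\sigma_{X=a'}T$ w.r.t.\ $\depset-X$. Optimality of $S_{a'}$ then gives $w_T(S')\le w_T(S_{a'})\le w_T(S_{a\top})$, and converting weights to distances via $\dists(S,T)=w_T(T)-w_T(S)$ yields $\dists(S_{a\top},T)\le\dists(S',T)$. The argument is short; the only delicate point I expect is tracking which attributes of $X$ occur in the right-hand sides when passing between $\depset$ and $\depset-X$ --- exactly the step already dispatched in Lemma~\ref{lemma:s1-ptime}, which I would reuse. (Under the single-attribute-rhs convention of Section~\ref{sec:subset-repairs}, $X$ is a single attribute, so the notation matches $\algname{ConsensusRep}$ verbatim.)
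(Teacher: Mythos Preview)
Your proposal is correct and follows essentially the same approach as the paper's proof: both argue that $\algname{\OSR}$ reaches the consensus branch (since $\emptyset\ra X$ has empty lhs, ruling out a common lhs), then establish consistency and optimality of the returned $S_{a\top}$ via the attribute-bookkeeping correspondence between $\depset$ and $\depset-X$ on subsets that are constant on $X$. Your packaging of that correspondence as a single explicit equivalence is slightly cleaner than the paper's two separate by-contradiction arguments, but the underlying reasoning is identical.
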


\begin{proof}
Assume that for each $a\in\pi_XT[*]$, it holds that $\algname{\OSR}(\depset-X, \sigma_{X=a}T)$ returns an optimal S-repair of $\sigma_{X=a}T$ w.r.t. $\depset-X$. We contend that the algorithm $\algname{\OSR}(\depset,T)$ returns an optimal S-repair of $T$ w.r.t. $\depset$. Note that the condition of line~4 cannot be satisfied, since there is no attribute that appears on the left-hand side of $\emptyset\rightarrow A$. Since the condition of line~6 of $\algname{\OSR}$ is satisfied, the subroutine $\algname{ConsensusRep}$ will be called. Thus, we have to prove that $\algname{ConsensusRep}(\depset,T)$ returns an optimal S-repair of $T$.

Let $J$ be the result of $\algname{ConsensusRep}(\depset,T)$. We will start by proving that $J$ is consistent. Let us assume, by way of contradiction, that $J$ is not consistent. Thus, there are two tuples $\tup t_1$ and $\tup t_2$ in $J$ that violate an FD $Z\rightarrow W$ in $\depset$. Note that $\tup t_1$ and $\tup t_2$ agree on the value of the attributes in $X$ (since $\algname{ConsensusRep}(\depset,T)$ always returns a set of tuples that agree on the value of the attributes in $X$). Assume that $\tup t_1[X]=\tup t_2[X]=a$. Thus, it holds that there exists an attribute $A'\in W$, such that $\tup t_1.A'\neq \tup t_2.A'$. By definition, there is an FD $(Z\setminus X)\rightarrow (W\setminus X)$ in $\depset-X$. Clearly, the tuples $\tup t_1$ and $\tup t_2$ agree on all the attributes in $Z\setminus X$, but do not agree on the attribute $A'\in (W\setminus X)$. Thus, $\tup t_1$ and $\tup t_2$ violate an FD in $\depset-X$, which is a contradiction to the fact that $\algname{\OSR}(\depset-X, \sigma_{X=a}T)$ returns an optimal S-repair of $\sigma_{X=a}T$ that contains both $\tup t_1$ and $\tup t_2$. 

Next, we will prove that $J$ is an optimal S-repair of $T$. Let us assume, by way of contradiction, that this is not the case. That is, there is another consistent subset $J'$ of $T$, such that the weight of the tuples in $T\setminus J'$ is lower than the weight of the tuples in $T\setminus J$. Clearly, each consistent subset of $T$ only contains tuples that agree on the value of attribute $A$ (that is, tuples that belong to $\sigma_{X=a}T$ for some value $a$. The instance $J$ is an optimal S-repair of $\sigma_{X=a'}T$ for some value $a'$. If $J'\subseteq \sigma_{X=a'}T$, then we get a contradiction to the fact that $J$ is an optimal S-repair of $\sigma_{X=a'}T$. Thus, $J'\subseteq \sigma_{X=a''}T$ for some other value $a''$. In this case, the optimal S-repair of $\sigma_{X=a''}T$ has a higher weight than the optimal S-repair of $\sigma_{X=a'}T$, which is a contradiction to the fact that $\algname{ConsensusRep}$ returns an optimal S-repair of $\sigma_{X=a}T$ that has the highest weight among all such optimal S-repairs.
\end{proof}

\begin{figure}
\centering
\input{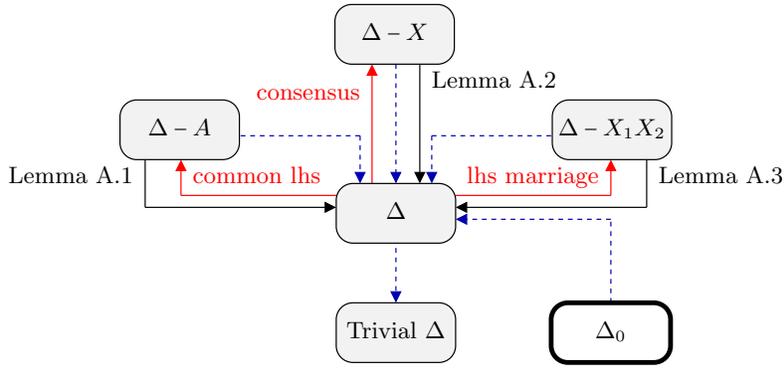}
\vskip0.5em
\caption{An illustration of our proof of the positive side of Theorem~\ref{thm:dichotomy}. We start with an FD set $\depset_0$ and apply simplifications to it, until we get a trivial set of FDs $\depset$. The red arrows represent simplifications and the dashed blue arrows represent renaming of a set of FDs. A black arrow from $\depset'$ to $\depset$ means that if we can also find an optimal S-repair for $\depset'$ in polynomial time, then we can find an optimal S-repair for $\depset$ in polynomial time. The proof is in the lemma that appears next to the corresponding black arrow.}\label{fig:ptime-srepair}
\end{figure}

\begin{lemma}\label{lemma:s3-ptime}
Let $T$ be a table, and let $\depset$ be a set of FDs, such that $\depset$ has an lhs marriage $(X_1,X_2)$ and does not have a common lhs. If it holds that $\algname{\OSR}(\depset-X_1X_2,\sigma_{X_1=a_1,X_2=a_2}T)$ returns an optimal S-repair of $\sigma_{X_1=a_1,X_2=a_2}T$ w.r.t. $\depset-X_1X_2$ for each pair $(a_1,a_2)\in\pi_{X_1X_2}T[*]$, then $\algname{\OSR}(\depset,T)$ returns an optimal S-repair of $T$ w.r.t. $\depset$.
\end{lemma}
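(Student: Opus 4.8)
The plan is to follow the same two-part template as the proofs of Lemmas~\ref{lemma:s1-ptime} and~\ref{lemma:s2-ptime}: first show that the table returned by the algorithm is consistent with $\depset$, and then show that it has maximum total weight among all consistent subsets (equivalently, minimum $\dists(\cdot,T)$). Since $\depset$ has no common lhs, $\algname{\OSR}(\depset,T)$ falls through the first two branches and invokes $\algname{MarriageRep}(\depset,T)$; write $J=\bigcup_{(\tup a_1,\tup a_2)\in E^\top}S_{\tup a_1,\tup a_2}$ for its output, where $G=(V_1,V_2,E,w)$ is the bipartite graph built by the subroutine and $E^\top$ is the matching returned (which must be read as a maximum-\emph{weight} matching, since $G$ is weighted). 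I will use exactly two properties of the lhs marriage $(X_1,X_2)$: \emph{(i)} the lhs $Z$ of every FD $Z\ra A$ in $\depset$ contains $X_1$ or $X_2$; and \emph{(ii)} $\closure_\depset(X_1)=\closure_\depset(X_2)$, so $X_1\ra X_2$ and $X_2\ra X_1$ are both entailed by $\depset$, hence hold in every consistent subset.

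For consistency, I would argue by contradiction: if $\tup t_1,\tup t_2\in J$ violate some $Z\ra A$ in $\depset$, then by \emph{(i)} they agree on $X_1$ (say), with common value $\tup a_1$. Each tuple lies in the unique block $S_{\tup a_1,\cdot}$ whose edge is incident to the node $\tup a_1\in V_1$; since $E^\top$ is a matching there is only one such edge, so both tuples sit in a single block $S_{\tup a_1,\tup a_2}$. That block is a consistent subset w.r.t.\ $\depset-X_1X_2$ all of whose tuples share the same $X_1X_2$-values, and this forces $\tup t_1.A=\tup t_2.A$: directly if $A\in X_1\cup X_2$, and otherwise via the FD $(Z\setminus X_1X_2)\ra A$ of $\depset-X_1X_2$ (using the degenerate case $Z\subseteq X_1X_2$, where this becomes a consensus FD, poses no problem) — a contradiction.

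For optimality, take any consistent subset $J'$ of $T$, partition it by its $X_1X_2$-values into blocks $J'_{\tup a_1,\tup a_2}$, and let $M$ be the set of pairs with a nonempty block; note $M\subseteq E$. The key step is that $M$ is a matching of $G$: two edges of $M$ sharing a $V_1$-node $\tup a_1$ would give tuples of $J'$ agreeing on $X_1$ but differing on $X_2$, contradicting $X_1\ra X_2$ by \emph{(ii)} (symmetrically for a shared $V_2$-node via $X_2\ra X_1$). Next, each $J'_{\tup a_1,\tup a_2}$ is a consistent subset of $\sigma_{X_1=\tup a_1,X_2=\tup a_2}T$ w.r.t.\ $\depset-X_1X_2$ (two of its tuples agreeing on $Z\setminus X_1X_2$ agree on all of $Z$, since they share the $X_1X_2$-values), so by the assumed optimality of $S_{\tup a_1,\tup a_2}$ its weight is at most $w(\tup a_1,\tup a_2)$. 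Summing and using that $E^\top$ has maximum weight and that the blocks over $E^\top$ are identifier-disjoint gives
\[
w_T(J')=\sum_{(\tup a_1,\tup a_2)\in M}w_T(J'_{\tup a_1,\tup a_2})\;\le\;\sum_{(\tup a_1,\tup a_2)\in M}w(\tup a_1,\tup a_2)\;\le\;\sum_{(\tup a_1,\tup a_2)\in E^\top}w(\tup a_1,\tup a_2)\;=\;w_T(J).
\]
Combined with consistency of $J$, this shows $J$ maximizes $w_T(\cdot)$, hence minimizes $\dists(\cdot,T)$, over all consistent subsets, so $J$ is an optimal S-repair.

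The main obstacle is setting up the two-way correspondence between consistent subsets of $T$ and matchings of $G$ carrying per-block optimal repairs: the "subset $\Rightarrow$ matching" direction is exactly where the equal-closure condition \emph{(ii)} is indispensable, while the "matching $\Rightarrow$ consistent subset" direction is where the "every lhs contains $X_1$ or $X_2$" condition \emph{(i)} is indispensable (it rules out any FD being violated \emph{across} two distinct selected blocks). The remaining work — bookkeeping of $\depset-X_1X_2$ under the single-attribute-rhs convention, the degenerate collapse of an lhs to $\emptyset$, and the fact that distinct blocks are identifier-disjoint — is routine; efficiency is immediate since $G$ has $O(|T|)$ nodes and edges, each weight is produced by one recursive call, and maximum-weight bipartite matching is polynomial.
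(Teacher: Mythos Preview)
Your proposal is correct and follows essentially the same approach as the paper's proof: both establish consistency of $J$ by reducing a hypothetical violation to a violation of $\depset-X_1X_2$ within a single block (using property~\emph{(i)} and the matching structure of $E^\top$), and both establish optimality by showing that any consistent subset $J'$ induces a matching of $G$ (using property~\emph{(ii)}) whose blocks are bounded by the recursive optima. Your presentation is slightly more streamlined---you use a direct inequality chain $w_T(J')\le w_T(J)$ rather than the paper's contradiction-with-improvement argument, and you handle the consistency case analysis more compactly---but the underlying ideas are identical.
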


\begin{proof}
Assume that for each $(a_1,a_2)\in\pi_{X_1X_2}T[*]$, it holds that $\algname{\OSR}(\depset-X_1X_2,\sigma_{X_1=a_1,X_2=a_2}T)$ returns an optimal S-repair of $\sigma_{X_1=a_1,X_2=a_2}T$ w.r.t. $\depset-X_1X_2$. We contend that the algorithm $\algname{\OSR}(\depset,T)$ returns an optimal S-repair of $T$ w.r.t. $\depset$. We assumed that $\depset$ does not have a common lhs, thus the condition of line~4 is not satisfied. The condition of line~6 cannot be satisfied as well, since neither $X_1\subseteq\emptyset$ nor $X_2\subseteq\emptyset$ (as we remove trivial FDs from $\depset$ in line~3). The condition of line~8 on the other hand is satisfied, thus the algorithm will call subroutine $\algname{MarriageRep}$ and return the result. Thus, we have to prove that $\algname{MarriageRep}(\depset,T)$ returns an optimal S-repair of $T$.

Let us denote by $J$ the result of $\algname{MarriageRep}(\depset,T)$. We will start by proving that $J$ is consistent. Let $\tup t_1$ and $\tup t_2$ be two tuples in $T$. Note that it cannot be the case that $\tup t_1[X_1]\neq \tup t_2[X_1]$ but $\tup t_1[X_2]= \tup t_2[X_2]$ (or vice versa), since in this case the matching that we found for the graph $G$ contains two edges $(a_1,a_2)$ and $(a_1',a_2)$, which is impossible. Moreover, if it holds that $\tup t_1[X_1]\neq \tup t_2[X_1]$ and $\tup t_1[X_2]\neq \tup t_2[X_2]$, then $\tup t_1$ and $\tup t_2$ do not agree on the left-hand side of any FD in $\depset$ (since we assumed that for each FD $Z\rightarrow W$ in $\depset$ it either holds that $X_1\subseteq Z$ or $X_2\subseteq Z$). Thus, $\set{\tup t_1,\tup t_2}$ satisfies all the FDs in $\depset$. Now, let us assume, by way of contradiction, that $J$ is not consistent. Thus, there are two tuples $\tup t_1$ and $\tup t_2$ in $J$ that violate an FD $Z\rightarrow W$ in $\depset$. That is, $\tup t_1$ and $\tup t_2$ agree on all the attributes in $Z$, but do not agree on at least one attribute $B\in W$. As mentioned above, the only possible case is that $\tup t_1[X_1]=\tup t_2[X_1]=a_1$ and $\tup t_1[X_2]=\tup t_2[X_2]=a_2$. In this case, $\tup t_1$ and $\tup t_2$ both belong to $\sigma_{X_1=a_1,X_2=a_2}T$, and they do not agree on an attribute $B\in (W\setminus(X_1\cup X_2))$. The FD $(Z\setminus(X_1\cup X_2)\rightarrow (W\setminus(X_1\cup X_2))$ belongs to $\depset-X_1X_2$, and clearly $\tup t_1$ and $\tup t_2$ also violate this FD, which is a contradiction to the fact that $J$ only contains an optimal S-repair of $\sigma_{X_1=a_1,X_2=a_2}T$ and does not contain any other tuples from $\sigma_{X_1=a_1,X_2=a_2}T$ (recall that we assumed that $\algname{\OSR}(\depset-X_1X_2,\sigma_{X_1=a_1,X_2=a_2}T)$ returns an optimal S-repair for each $(a_1,a_2)\in\pi_{X_1X_2}T[*]$).

Next, we will prove that $J$ is an optimal S-repair of $T$. Let us assume, by way of contradiction, that this is not the case. That is, there is another consistent subset $J'$ of $T$, such that the weight of the tuples in $T\setminus J'$ is lower than the weight of the tuples in $T\setminus J$, and $J'$ is an optimal S-repair of $T$. Note that the weight of the matching corresponding to $J$ is the total weight of the tuples in $J$ (since the weight of each edge $(a_1,a_2)$ is the weight of the optimal S-repair of $\sigma_{X_1=a_1,X_2=a_2}T$, and $J$ contains the optimal S-repair of each $\sigma_{X_1=a_1,X_2=a_2}T$, such that the edge $(a_1,a_2)$ belongs to the matching). Let $\tup t_1$ and $\tup t_2$ be two tuples in $J'$. Note that it cannot be the case that $\tup t_1[X_1]= \tup t_2[X_1]$ but $\tup t_1[X_2]\neq \tup t_2[X_2]$, since in this case, $\set{\tup t_1,\tup t_2}$ violates the FD $X_1\rightarrow cl_\depset(X_1)$ which is implied from $\depset$ and this is a contradiction to the fact that $J'$ satisfied $\depset$ (we recall that $cl_\depset(X_1)=cl_\depset(X_2)$, and since $X_2\subseteq cl_\depset(X_2)$, it holds that $X_2\subseteq cl_\depset(X_1)$). Similarly, it cannot be the case that $\tup t_1[X_1]\neq \tup t_2[X_1]$ but $\tup t_1[X_2]=\tup t_2[X_2]$. Hence, it either holds that $\tup t_1[X_1]= \tup t_2[X_1]$ and $\tup t_1[X_2]= \tup t_2[X_2]$ or $\tup t_1[X_1]\neq \tup t_2[X_1]$ and $\tup t_1[X_2]\neq \tup t_2[X_2]$. Therefore, $J'$ clearly corresponds to a matching of $G$ as well (the matching will contain an edge $(a_1,a_2)$ if there is a tuple $\tup t\in J'$, such that $\tup t[X_1]=a_1$ and $\tup t[X_2]=a_2$).

Next, we claim that for each edge $(a_1,a_2)$ that belongs to the above matching (the matching that corresponds to $J'$), the subinstance $J'$ contains an optimal S-repair of $\sigma_{X_1=a_1,X_2=a_2}T$ w.r.t. $\depset-X_1X_2$. Clearly, $J'$ cannot contain two tuples $\tup t_1$ and $\tup t_2$ from $\sigma_{X_1=a_1,X_2=a_2}T$ that violate an FD $Z\rightarrow W$ from $\depset-X_1X_2$ (otherwise, $\tup t_1$ and $\tup t_2$ will also violate the FD $(Z\cup Y_1)\rightarrow (W\cup Y_2)$ from $\depset$ (where $Y_1\subseteq (X_1\cup X_2)$ and $Y_2\subseteq (X_1\cup X_2)$), which is a contradiction to the fact that $J'$ is a consistent subset of $T$). Thus, $J'$ contains a consistent set of tuples from $\sigma_{X_1=a_1,X_2=a_2}T$. If this set of tuples is not an optimal S-repair of $\sigma_{X_1=a_1,X_2=a_2}T$, then we can replace this set of tuples with an optimal S-repair of $\sigma_{X_1=a_1,X_2=a_2}T$. This will not break the consistency of $J'$ since these tuples do not agree on the attributes in neither $X_1$ nor $X_2$ with any other tuple in $J'$, and each FD $Z\rightarrow W$ in $\depset$ is such that $X_1\subseteq Z$ or $X_2\subseteq Z$. The result will be a consistent subset of $T$ with a higher weight than $J'$ (that is, the weight of the removed tuples will be lower), which is a contradiction to the fact that $J'$ is an optimal S-repair of $T$. Therefore, for each edge $(a_1,a_2)$ that belongs to the above matching, $J'$ contains tuples with a total weight of $w_T(S_{a1,a2})$, where $S_{a1,a2}$ is an optimal S-repair of $\sigma_{X_1=a_1,X_2=a_2}T$, and the weight of this matching is the total weight of tuples in $J'$. In this case, we found a matching of $G$ with a higher weight than the matching corresponding to $J$, which is a contradiction to the fact that $J$ corresponds to the maximum weighted matching of $G$. 
\end{proof}

Next, we prove Theorem~\ref{thm:osr}.

\begin{reptheorem}{\ref{thm:osr}}
\thmosr
\end{reptheorem}

\begin{proof}
We will prove the theorem by induction on $n$, the number of simplifications that will be applied to $\depset$ by $\algname{\OSR}$. We start by proving the basis of the induction, that is $n=0$. In this case, $\algname{\OSR}$ will only succeed if $\depset=\emptyset$ or if $\depset$ is trivial. Clearly, in this case, $T$ is consistent w.r.t. $\depset$ and an optimal S-repair of $T$ is $T$ itself. And indeed, $\algname{\OSR}(\depset,T)$ will return $T$ in polynomial time.

For the inductive step, we need to prove that if the claim is true for all $n=1,\dots,k-1$, it is also true for $n=k$. In this case, $\algname{\OSR}(\depset,T)$ will start by applying some simplification to the schema. Clearly, the result is a set of FDs $\depset'$, such that $\algname{\OSR}(\depset',T')$ will apply $n-1$ simplifications to $\depset'$. One of the following holds:

\begin{itemize}
\item $\depset$ has a common lhs $A$. In this case, the condition of line~4
  is satisfied and the subroutine $\algname{CommonLHSRep}$ will be
  called. Note that $\algname{\OSR}(\depset,T)$ will succeed only if $\algname{\OSR}(\depset-A,\sigma_{A=a}T)$ succeeds for each $a\in\pi_AT[*]$. We know from the
  inductive assumption that if $\algname{\OSR}(\depset-A,\sigma_{A=a}T)$ succeeds, then it returns an optimal S-repair. Thus, Lemma~\ref{lemma:s1-ptime} implies that $\algname{\OSR}(\depset,T)$ returns an optimal S-repair of $T$ w.r.t. $\depset$.

\item $\depset$ has a consensus FD $\emptyset\rightarrow X$. In this case, the condition of line~4
  is not satisfied, but the condition of line~6 is satisfied and the subroutine $\algname{ConsensusRep}$ will be
  called. Again, $\algname{\OSR}(\depset,T)$ will succeed only if $\algname{\OSR}(\depset-X,\sigma_{X=a}T)$ succeeds for each $a\in\pi_XT[*]$. We know from the
  inductive assumption that if $\algname{\OSR}(\depset-X,\sigma_{X=a}T)$ succeeds, then it returns an optimal S-repair. Thus, Lemma~\ref{lemma:s2-ptime} implies that $\algname{\OSR}(\depset,T)$ returns an optimal S-repair of $T$ w.r.t. $\depset$.

\item $\depset$ does not have a common lhs, but has an lhs marriage. In this case, the conditions of line~4 and line~6
  are not satisfied, but the condition of line~8 is satisfied and the subroutine $\algname{MarriageRep}$ will be
  called. As in the previous cases, $\algname{\OSR}(\depset,T)$ will succeed only if $\algname{\OSR}(\depset-X_1X_2,\sigma_{X_1=a_1,X_2=a_2}T)$ succeeds for each $(a_1,a_2)\in\pi_{X_1X_2}T[*]$. We know from the
  inductive assumption that if $\algname{\OSR}(\depset-X_1X_2,\sigma_{X_1=a_1,X_2=a_2}T)$ succeeds, then it returns an optimal S-repair. Thus, Lemma~\ref{lemma:s3-ptime} implies that $\algname{\OSR}(\depset,T)$ returns an optimal S-repair of $T$ w.r.t.
\end{itemize}

It is only left to prove that $\algname{\OSR}$ terminates in polynomial time in $k$, $|\depset|$, and $|T|$. We will first explain how checking each one of the conditions can be done in polynomial time. Then, we will provide the recurrence relation for each one of the subroutines of the algorithm. The algorithm will first check if $\depset$ is trivial. This can be done in polynomial time as we can go over the FDs and for each one of them check if each attribute that appears on the rhs also appears on the lhs. Then, the algorithm checks if $\depset$ has a common lhs. This can also be done in polynomial time by going over the attributes in $A_1,\dots,A_k$ and checking for each one of them if the lhs of each FD in $\depset$ contains it. If this condition does not hold, then the algorithm will check if $\depset$ has a consensus FD. This can be done in polynomial time by going over the FDs in $\depset$ and checking for each one of them if the lhs is empty. Finally, if this condition does not hold as well, the algorithm will check if $\depset$ has an lhs marriage. To do that in polynomial time, we can go over the FDs in $\depset$ and for each one of them find the closure of its lhs w.r.t. $\depset$ (it is known that this can be done in polynomial time). Then, for each pair of FDs in $\depset$ that agree on the closure of their lhs, we have to check if one of their lhs is contained in the lhs of each other FD in $\depset$.

If the condition of line~4 holds, then the subroutine $\algname{CommonLHSRep}$ will be called. Since we have already found a common lhs $A$ (when we were checking if the condition holds), it is only left to divide $T$ into blocks of tuples that agree on the value of $A$ (which again can be done in polynomial time by going over the tuples in $T$ and for each one of them checking the value of attribute $A$) and make a recursive call to $\algname{\OSR}$ for each one of the blocks. The recurrence relation for this subroutine is:

\begin{equation}
F(k,|T|,|\depset|)=\sum_{a\in\pi_AT[*]} F(k-1,|\sigma_{A=a}T|,|\depset-A|)+poly(k,|T|,|\depset|)
\end{equation}

If the condition of line~6 holds, then the subroutine $\algname{ConsensusRep}$ will be called. This case is very similar to the previous one. Since we have already found a consensus FD $\emptyset\rightarrow X$, it is only left to divide $T$ into blocks of tuples that agree on the value of the attributes in $X$ and make a recursive call to $\algname{\OSR}$ for each one of the blocks. The recurrence relation for this subroutine is:

\begin{equation} 
F(k,|T|,|\depset|)=\sum_{a\in\pi_XT[*]} F(k-|X|,|\sigma_{X=a}T|,|\depset-X|)+poly(k,|T|,|\depset|)
\end{equation}

If the condition of line~8 holds, then the subroutine $\algname{MarriageRep}$ will be called. In this case, we go over each pair of values $(a_1,a_2)$ that appear in the attributes $(X_1,X_2)$ in some tuple of $T$ (note that there are at most $|T|$ such pairs). Then, we make a recursive call to $\algname{\OSR}$ for each one of these pairs and calculate the weight of the result $S_{a_1,a_2}$ (clearly, this can be done in polynomial time in the size of $S_{a_1,a_2}$). Finally, we build the graph $G$ and find a maximum weighted matching for the graph (which can be done in polynomial time with the Hungarian algorithm). The recurrence relation for this subroutine is:

\begin{equation} 
F(k,|T|,|\depset|)=\sum_{(a_1,a_2)\in\pi_{X_1X_2}T[*]} F(k-|X_1\cup X_2|,|\sigma_{X_1=a_1,X_2=a_2}T|,|\depset-X_1X_2|)+poly(k,|T|,|\depset|)
\end{equation}

Since finding an optimal S-repair for a trivial FD set can be done in polynomial time in the size of $T$, and since in each one of the three recurrence relations, the tables in the middle argument form a partition of $T$, a standard analysis of $F$ shows that it is bounded by a polynomial.
\end{proof}

\subsection{Hardness Side}

\begin{figure}
\centering
\input{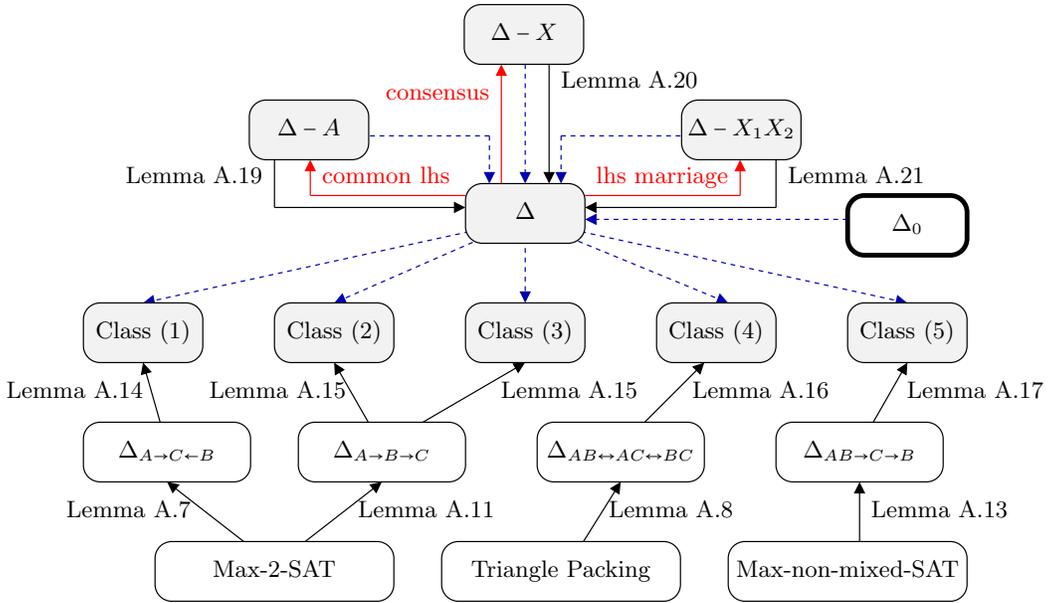}
\vskip0.5em
\caption{An illustration of our proof of the negative side of Theorem~\ref{thm:dichotomy}. We start with an FD set $\depset_0$ and apply simplifications to it, until we get a non trivial set of FDs $\depset$, that we classify into one of five classes. The red arrows represent simplifications and the dashed blue arrows represent renaming. A black arrow represents a reduction that we construct in the lemma that appears next to the arrow.}\label{fig:hardness-srepair}
\end{figure}

As mentioned above, our proof of hardness is based on the concept of a \e{fact-wise reduction}~\cite{DBLP:conf/pods/Kimelfeld12}. We first prove, for each one of the FD sets in Table~\ref{table:special-schemas} (over the schema $R(A,B,C)$), that computing an optimal S-repair is APX-complete. Then, we prove the existence of fact-wise reductions from these FD sets over the schema $R(A,B,C)$ to other sets of FDs over other schemas.

Figure~\ref{fig:hardness-srepair} illustrates our proof. The idea is the following: we start with a table $T_0$ and an FD set $\depset_0$, for which we want to find an optimal S-repair. Then, we apply the simplifications (represented by the red arrows in the figure) until it is no longer possible. We prove, for each one of the simplifications (common lhs, consensus FD and lhs marriage), that there is a fact-wise reduction from the problem after the simplification to the problem before the simplification. The problem is APX-complete if at this point we have a set of FDs that is not trivial. In this case, we pick two local minima from the FD set. We recall that an FD $X\rightarrow Y$ in $\depset$ is a local minimum if there is no FD $Z\rightarrow W$ in $\depset$ such that $Z\subset X$. These two local minima belong to one of five classes, that we discuss in Section~\ref{sec:subset-repairs} and later in this section.  Finally, we prove for each one of the classes, that there is a fact-wise reduction to this class from one of four FD sets. To prove APX-hardness for these four sets (over the relation schema $R(A,B,C)$) we construct reductions from problems that are known to be APX-hard. A black arrow in Figure~\ref{fig:hardness-srepair} represents a reduction that we construct in the lemma that appears next to the arrow.

\subsubsection{Hard Schemas}

We start by proving that computing an optimal S-repair for the FD sets in Table~\ref{table:special-schemas} is APX-complete. Proposition~\ref{prop:subset-approx} implies that the problem is in APX for each one of these sets. Thus, it is only left to show that the problem is also APX-hard. Gribkoff et al.~\cite{GVSBUDA14} prove that the MPD problem is NP-hard for both $\sabc$ and $\stfd$. Their hardness proof also holds for the problem of computing an optimal S-repair. More formally, the following hold.

\begin{citedlemma}{GVSBUDA14}\label{lemma:ac-bc-np-hard}
  Computing an optimal S-repair for $\stfd$ is NP-hard.
\end{citedlemma}

\begin{citedlemma}{GVSBUDA14}\label{lemma:ab-bc-np-hard}
  Computing an optimal S-repair for $\sabc$ is NP-hard.
\end{citedlemma}

They prove both of these result by showing a reduction from the MAX-$2$-SAT problem: given a $2$-CNF formula $\phi$, determine what is the maximum number of clauses in $\phi$ which can be simultaneously satisfied. In their reductions it holds that the maximum number of clauses that can be simultaneously satisfied is exactly the size of an optimal S-repair of the constructed table $T$ (which is unweighted and duplicate-free). The problem MAX-$2$-SAT is known to be APX-hard. However, this is not enough to prove APX-hardness for our problem, as we would like to approximate the minimum number of tuples to delete, rather than the size of the optimal S-repair. Thus, we will now strengthen their results by proving that the complement problem of MAX-$2$-SAT is APX-hard. Clearly, their reductions are strict reductions from the complement problem (as the number of clauses that are not satisfied is exactly the number of tuples that are deleted from the table), thus this will complete our proof of APX-completeness for $\sabc$ and $\stfd$.

\begin{lemma}\label{lemma:max-2-sat-complement}
The complement problem of MAX-$2$-SAT is APX-hard.
\end{lemma}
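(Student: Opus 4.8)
The plan is to exhibit a \emph{strict reduction} (in the sense of Section~\ref{sec:prelim}) from \textsc{Minimum Vertex Cover} --- which is APX-hard~\cite{DBLP:journals/tcs/AlimontiK00} --- to the complement of MAX-$2$-SAT, i.e., to the problem of finding an assignment that \emph{minimizes} the number of falsified clauses of a $2$-CNF formula. Since a strict reduction is a PTAS reduction and PTAS reductions compose, this makes the complement of MAX-$2$-SAT APX-hard; composed with the strict reductions of Gribkoff et al.~\cite{GVSBUDA14} from this complement problem (which map each unsatisfied clause to a deleted tuple) and with the membership in APX provided by Proposition~\ref{prop:subset-approx}, it completes the proof of APX-completeness of computing an optimal S-repair for $\sabc$ and $\stfd$.

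First I would fix the instance map. Given a graph $G=(V,E)$, let $\phi_G$ be the $2$-CNF formula with one clause $(x_u\vee x_v)$ per edge $\set{u,v}\in E$ and one unit clause $(\overline{x_v})$ per vertex $v\in V$. (If clauses of exactly two literals are required, each $(\overline{x_v})$ is split into $(\overline{x_v}\vee z_v)\wedge(\overline{x_v}\vee\overline{z_v})$ with a fresh variable $z_v$; at most one clause of such a pair is ever falsified, and only when $x_v$ is true, so the counts below are unchanged provided one always chooses the $z_v$'s optimally.) For an assignment $\tau$, write $S_\tau$ for the set of variables set to true and $k_\tau$ for the number of edges both of whose endpoints lie outside $S_\tau$; then $\tau$ falsifies exactly $k_\tau+|S_\tau|$ clauses of $\phi_G$.

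The single fact that does all the work is: for every assignment $\tau$, the set obtained from $S_\tau$ by adding one endpoint of each of the $k_\tau$ uncovered edges is a vertex cover of $G$, of size at most $|S_\tau|+k_\tau$. Consequently, (i) for every $\tau$ the number $k_\tau+|S_\tau|$ of clauses it falsifies is at least $\mathrm{OPT}_{\mathrm{VC}}(G)$, and (ii) the assignment that sets a minimum vertex cover to true and everything else to false falsifies exactly $\mathrm{OPT}_{\mathrm{VC}}(G)$ clauses; hence the optimum value of the complement of MAX-$2$-SAT on $\phi_G$ equals $\mathrm{OPT}_{\mathrm{VC}}(G)$. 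Now I take the solution-transformation $g$ that sends an assignment $\tau$ to the vertex cover consisting of $S_\tau$ together with one endpoint of each uncovered edge of $\tau$. If $\tau$ is an $\alpha$-optimal solution for $\phi_G$, then it falsifies at most $\alpha\cdot\mathrm{OPT}_{\mathrm{VC}}(G)$ clauses, so
\[
 |g(\tau)| \;\le\; k_\tau+|S_\tau| \;\le\; \alpha\cdot\mathrm{OPT}_{\mathrm{VC}}(G),
\]
i.e.\ $g(\tau)$ is an $\alpha$-optimal vertex cover of $G$. This is precisely the defining property of a strict reduction, so the reduction is complete.

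I do not expect a real obstacle in this lemma; the points requiring care are only bookkeeping: checking that the construction stays a legal $2$-CNF instance (handled by the splitting trick), verifying both (i) --- which uses the ``extend $S_\tau$ to a cover'' argument --- and (ii), and keeping the direction of the strict reduction straight. It is worth stressing where the genuine subtlety lies: the complement of MAX-$2$-SAT is \emph{not} itself in APX (a satisfiable formula has optimum $0$, so no multiplicative guarantee is possible), so the lemma must be read purely as an APX-\emph{hardness} statement, and it is exactly this hardness --- transported through the strict reductions of Gribkoff et al., which identify deleted tuples with unsatisfied clauses --- that upgrades their NP-hardness results for $\sabc$ and $\stfd$ to APX-completeness.
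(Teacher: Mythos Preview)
Your proof is correct but takes a genuinely different route from the paper. The paper shows a strict reduction from MAX-$2$-SAT \emph{itself} to its complement: using the standard fact that some assignment satisfies at least half the clauses, one gets $|OPT_{CP}|\le m/2\le |OPT_P|$, and then any $(1+\epsilon)$-optimal solution for the complement, complemented back, is a $(1+\epsilon)$-optimal solution for MAX-$2$-SAT. That is the whole argument; it never leaves the $2$-SAT world.

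Your approach instead gives a direct strict reduction from \textsc{Minimum Vertex Cover}: build $\phi_G$ with one edge-clause $(x_u\vee x_v)$ per edge and one (possibly split) negative unit clause per vertex, observe that any assignment falsifies exactly $k_\tau+|S_\tau|$ clauses, and extend $S_\tau$ by one endpoint per uncovered edge to a cover of at most that size. This is equally valid and arguably cleaner conceptually, since it bypasses the need to cite APX-hardness of MAX-$2$-SAT and makes the connection to the downstream S-repair problems (which are themselves vertex-cover-like) more transparent. The paper's argument, on the other hand, is shorter and reusable verbatim for the MAX-non-mixed-SAT complement (Lemma~\ref{lemma:max-non-mixed-sat-complement}), whereas your construction would need a separate gadget there. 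Your remark that the complement is not in APX (optimum may be $0$) and that the lemma is a pure hardness statement is a useful clarification that the paper leaves implicit.
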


\begin{proof}
It is known that there is always an assignment that satisfies at least half of the clauses in the formula. Thus, the solution to the complement problem contains at most half of the clauses. Let $\psi$ be a $2$-CNF formula. Let $m$ be the number of clauses in $\psi$, let $OPT_{P}$ be an optimal solution to MAX-$2$-SAT and let $OPT_{CP}$ be an optimal solution to the complement problem. Then, the following hold: \e{(a)} $|OPT_{P}|\ge \frac{1}{2}\cdot m$, \e{(b)} $|OPT_{CP}|\le \frac{1}{2}\cdot m$, and \e{(c)} $\frac{|OPT_{CP}|}{|OPT_{P}|}\le 1$. Now, let us assume, by way of contradiction, that the complement problem is not APX-hard. That is, for every $\epsilon>0$, there exists a $(1+\epsilon)$-optimal solution to that problem. Let $S_{CP}$ be such a solution. Then,
\begin{equation}
    \begin{aligned} 
        |S_{CP}|-|OPT_{CP}|\le \epsilon\cdot |OPT_{CP}|
    \end{aligned}
\end{equation}

We will now show that in this case, $S_{P}$ (which consists of the clauses that do not belong to $S_{CP}$, thus $|S_P|=m-|S_{CP}|$) is an $\epsilon$-optimal solution for MAX-$2$-SAT.

\begin{equation}
    \begin{aligned} 
        |S_{P}|-|OPT_{P}|=(m-|S_{CP}|)-(m-|OPT_{CP}|)=|OPT_{CP}|-|S_{CP}|\ge -\epsilon\cdot |OPT_{CP}|\ge -\epsilon\cdot |OPT_{P}|
    \end{aligned}
\end{equation}

It follows that for each $\epsilon>0$, we can get a $(1+\epsilon)$-optimal solution for MAX-$2$-SAT by finding a $(1+\epsilon)$-optimal solution to the complement problem and selecting all of the clauses that do not belong to this solution, which is a contradiction to the fact that MAX-$2$-SAT is APX-hard.
\end{proof}

We can now conclude that the following hold.

\begin{lemma}\label{lemma:ac-bc-hard}
  Computing an optimal S-repair for $\stfd$ is APX-complete.
\end{lemma}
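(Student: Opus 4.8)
The plan is to obtain this lemma essentially for free by assembling three ingredients already in place. Membership in APX is immediate: Proposition~\ref{prop:subset-approx} gives a polynomial-time $2$-approximation for computing an optimal S-repair under \emph{every} FD set, hence in particular under $\stfd$. So the only thing left to establish is APX-hardness.

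For APX-hardness I would go back to the NP-hardness reduction of Gribkoff et al.~\cite{GVSBUDA14} underlying Lemma~\ref{lemma:ac-bc-np-hard}. That reduction takes a $2$-CNF formula $\phi$ with $m$ clauses and produces an unweighted, duplicate-free table $T$ over $R(A,B,C)$ in which there is a tuple-for-clause correspondence, so that the maximum number of simultaneously satisfiable clauses equals the size of an optimal S-repair of $T$; equivalently, the minimum number of clauses left unsatisfied equals $\dists(S^*,T)$ for an optimal S-repair $S^*$. The key observation I would make explicit is that this correspondence is tight in the minimization direction: from any consistent subset $S$ of $T$ one can recover in polynomial time an assignment whose set of unsatisfied clauses has size exactly $\dists(S,T)$, and conversely. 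This is precisely what it means for the reduction to be a \emph{strict} reduction from the complement problem of MAX-$2$-SAT (viewed as the optimization problem of minimizing the number of unsatisfied clauses) to computing an optimal S-repair under $\stfd$.

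Then I would invoke Lemma~\ref{lemma:max-2-sat-complement}, which says the complement problem of MAX-$2$-SAT is APX-hard. Since a strict reduction is in particular a PTAS reduction, and PTAS reductions compose and preserve APX-hardness, computing an optimal S-repair under $\stfd$ is APX-hard; together with membership in APX this gives APX-completeness. I would also note that the table $T$ produced by the reduction is unweighted and duplicate-free, so the hardness already holds on that restricted class. The one step requiring genuine care — the ``hard part'' in an otherwise routine argument — is verifying that the Gribkoff et al.\ construction, originally stated for the maximization version (satisfiable clauses $\leftrightarrow$ repair size), transfers solutions faithfully under the \emph{complement} objectives, i.e., that deleted tuples match unsatisfied clauses exactly rather than up to an affine shift that could distort approximation ratios; this is exactly the subtlety that Lemma~\ref{lemma:max-2-sat-complement} was designed to neutralize. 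Everything else is bookkeeping.
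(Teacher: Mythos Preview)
Your proposal is correct and follows essentially the same approach as the paper: membership in APX via Proposition~\ref{prop:subset-approx}, and APX-hardness by observing that the Gribkoff et al.\ reduction is a strict reduction from the complement of MAX-$2$-SAT, which is APX-hard by Lemma~\ref{lemma:max-2-sat-complement}. The paper's proof is a one-line appeal to exactly these three ingredients.
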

\begin{proof}
This is straightforward based on Lemma~\ref{lemma:ac-bc-np-hard}, Lemma~\ref{lemma:max-2-sat-complement} and our observation that the reduction of Gribkoff et al.~\cite{GVSBUDA14} is a strict reduction from the complement problem of MAX-$2$-SAT.
\end{proof}

\begin{lemma}\label{lemma:ab-bc-hard}
  Computing an optimal S-repair for $\sabc$ is APX-complete.
\end{lemma}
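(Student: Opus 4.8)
The plan is to reuse, essentially verbatim, the template just applied to $\stfd$ in Lemma~\ref{lemma:ac-bc-hard}, now for $\sabc$. Membership in APX is immediate from Proposition~\ref{prop:subset-approx}, which supplies a polynomial-time $2$-approximation for the optimal S-repair of \emph{any} FD set. Hence the entire burden is to show APX-hardness of computing an optimal S-repair for $\sabc$.

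For hardness, I would start from Lemma~\ref{lemma:ab-bc-np-hard}: Gribkoff et al.\ showed that this problem is NP-hard via a reduction from MAX-$2$-SAT, and their construction produces an unweighted, duplicate-free table $T$ in which, writing $m$ for the number of clauses of the input $2$-CNF formula $\phi$, the maximum number of simultaneously satisfiable clauses of $\phi$ equals the size of an optimal S-repair of $T$. The crucial observation --- exactly the one already used for $\stfd$ --- is that this correspondence is two-sided: an assignment leaving $j$ clauses unsatisfied corresponds to a consistent subset obtained by deleting $j$ tuples, and conversely. Therefore the same reduction, now read as a map into the \emph{complement} of MAX-$2$-SAT (minimize the number of unsatisfied clauses), is a \emph{strict} reduction in the sense of Section~\ref{sec:prelim}: any $\alpha$-optimal consistent subset of $T$ translates back, in polynomial time, into an assignment whose number of unsatisfied clauses is at most $\alpha$ times the optimum.

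It then remains to chain this with Lemma~\ref{lemma:max-2-sat-complement}, which establishes that the complement problem of MAX-$2$-SAT is APX-hard. Since a strict reduction is in particular a PTAS reduction, composing the PTAS reduction witnessing APX-hardness of that complement problem with the strict reduction above gives a PTAS reduction from an arbitrary APX problem to the optimal-S-repair problem for $\sabc$; combined with membership in APX, this yields APX-completeness. I do not anticipate any genuine obstacle: the argument is a carbon copy of the $\stfd$ case, and the only point needing care --- that in the Gribkoff et al.\ gadget unsatisfied clauses correspond, weight-preservingly, to deleted tuples --- has already been verified in the proof of Lemma~\ref{lemma:ac-bc-hard} and carries over unchanged.
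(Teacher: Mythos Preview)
Your proposal is correct and matches the paper's proof essentially verbatim: the paper likewise cites Lemma~\ref{lemma:ab-bc-np-hard}, Lemma~\ref{lemma:max-2-sat-complement}, and the observation that the Gribkoff et al.\ reduction is a strict reduction from the complement of MAX-$2$-SAT, with APX membership coming from Proposition~\ref{prop:subset-approx}.
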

\begin{proof}
This is straightforward based on Lemma~\ref{lemma:ab-bc-np-hard}, Lemma~\ref{lemma:max-2-sat-complement} and our observation that the reduction of Gribkoff et al.~\cite{GVSBUDA14} is a strict reduction from the complement problem of MAX-$2$-SAT.
\end{proof}

\begin{figure}
\centering
\input{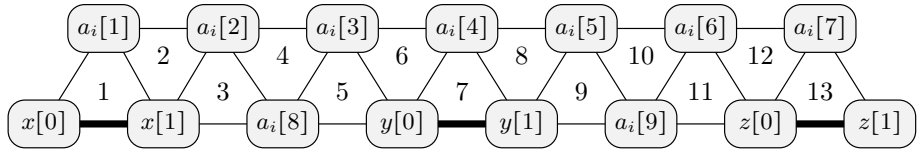}
\vskip0.5em
\caption{An illustration of the reduction used by Amini et al.~\cite{DBLP:journals/tcs/AminiPS09} to prove APX-hardness for the problem of finding the maximum number of edge disjoint triangles in a tripartite graph with a bounded degree.}\label{fig:triangles}
\end{figure}

Next, we prove that computing an optimal S-repair for $\str$ is APX-complete as well. To do that, we construct a reduction from the problem of finding the maximum number of edge-disjoint triangles in a tripartite graph with a bounded degree B. Amini et al.~\cite{DBLP:journals/tcs/AminiPS09} (who refer to this problem as MECT-B) proved that this problem is APX-complete. Again, in our reduction, it holds that the maximum number edge-disjoint triangles is exactly the size of an optimal S-repair of the constructed table $T$. Thus, our reduction is a strict reduction from the complement problem. Hence, we first prove that the complement problem of MECT-B is APX-hard for tripartite graphs that satisfy a specific property. We use the reduction of Amini et al.~\cite{DBLP:journals/tcs/AminiPS09} to prove that. They build a reduction from the problem of finding a maximum bounded covering by $3$-sets: given a collection of subsets of a given set that contain exactly three elements each, such that each element appears in at most B subsets, find the maximum number of disjoint subsets. In their reduction, they construct a tripartite graph, such that for each subset $S_i=(x,y,z)$, they add to the graph the structure from Figure~\ref{fig:triangles}. Note that the nodes $a_i[1]\dots a_i[9]$ are unique for this subset, while the nodes $x[0],x[1],y[0],y[1],z[0],z[1]$ will appear only once in the graph, even if they appear in more than one subset. Thus, we can build a set of edge-disjoint triangles for the constructed tripartite graph by selecting, for each subset, six out of the thirteen triangles (the even ones). This is true since the even triangles do not share an edge with any other triangle. We can now conclude, that in their reduction, they construct a tripartite graph with the following property: the maximum number of edge-disjoint triangles in the graph is at least $\frac{6}{13}$ of the total number of triangles. We denote a graph that satisfies this property by $\frac{6}{13}$-tripartite graph. Thus, we can conclude the following:

\begin{lemma}\label{lemma:6-13-apx-appendix}
The problem MECT-B for $\frac{6}{13}$-tripartite graphs is APX-hard.
\end{lemma}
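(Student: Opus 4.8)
The plan is to reuse the reduction of Amini et al.~\cite{DBLP:journals/tcs/AminiPS09} essentially verbatim and to observe that it already outputs only $\frac{6}{13}$-tripartite graphs. Recall that they reduce from \emph{maximum bounded covering by $3$-sets}, which they use as an APX-hard source problem, and that their reduction is approximation-preserving (this is exactly what makes it witness APX-hardness of MECT-B). If every instance produced by that reduction is a $\frac{6}{13}$-tripartite graph, then the very same reduction witnesses APX-hardness of MECT-B restricted to $\frac{6}{13}$-tripartite graphs, which is the claim.

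Concretely, I would first spell out the construction: given $m$ three-element subsets $S_1,\dots,S_m$ of a ground set in which each element occurs in at most $B$ subsets, the reduction builds a tripartite graph by attaching, for each $S_i=(x,y,z)$, the gadget of Figure~\ref{fig:triangles}. Each gadget contributes exactly $13$ triangles and uses nine \emph{private} vertices $a_i[1],\dots,a_i[9]$ that appear in no other gadget, together with the six \emph{shared} vertices $x[0],x[1],y[0],y[1],z[0],z[1]$ encoding the element occurrences; the maximum degree of the resulting graph is bounded in terms of $B$, so the output is a legitimate MECT-B instance.

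The core of the argument is a counting observation about the gadget. I would check that (i) the only triangles of the whole graph are the $13m$ gadget triangles --- a triangle cannot cross two gadgets, since that would require an edge among shared vertices, and no such edge is created --- and (ii) within each gadget the six ``even'' triangles are pairwise edge-disjoint and share no edge with any triangle of a neighbouring gadget. Granting (i) and (ii), choosing all six even triangles of every gadget yields a feasible packing of $6m$ edge-disjoint triangles, while the total number of triangles is $13m$; hence the optimum of the MECT-B instance is at least $\frac{6}{13}$ of its triangle count, i.e., the instance is a $\frac{6}{13}$-tripartite graph. Combining this with the approximation-preservation already established by Amini et al.\ gives the lemma.

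The main obstacle I expect is precisely the low-level verification of the gadget in points (i) and (ii): identifying which six of the thirteen triangles are the ``even'' ones, checking their mutual edge-disjointness, and --- most delicately --- ruling out any extra triangle or shared edge at the interface where several gadgets meet at a common shared vertex. Everything here is implicit in~\cite{DBLP:journals/tcs/AminiPS09}; the task is to make it explicit and to confirm it is consistent with their approximation-ratio bookkeeping, after which the lemma follows immediately.
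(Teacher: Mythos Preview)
Your approach matches the paper's: the lemma is established exactly by observing that Amini et al.'s reduction already lands inside the class of $\frac{6}{13}$-tripartite graphs, so their APX-hardness conclusion restricts to that class.

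One caveat on your point~(i): the justification ``that would require an edge among shared vertices, and no such edge is created'' is incorrect. The gadget \emph{does} create edges among shared vertices---triangles $1$, $7$, and $13$ contain the edges $\{x[0],x[1]\}$, $\{y[0],y[1]\}$, $\{z[0],z[1]\}$ respectively, and these edges are reused across every gadget in which the corresponding element occurs. The right argument for~(i) is different: every triangle must contain at least one private vertex $a_i[\cdot]$ (a triangle on shared vertices alone is impossible, since the only shared--shared edges are the disjoint pairs $\{e[0],e[1]\}$), and any private vertex $a_i[\cdot]$ has all its neighbours inside gadget~$i$, forcing the whole triangle into that gadget. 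With this correction your verification of~(i) and~(ii) goes through, and the $6m$-versus-$13m$ count gives the $\frac{6}{13}$ bound as you outlined.
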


We will now prove that the complement problem is APX-hard as well.

\begin{lemma}\label{lemma:MECT-B-complement}
The complement problem of MECT-B for $\frac{6}{13}$-tripartite graphs is APX-hard.
\end{lemma}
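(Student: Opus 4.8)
The plan is to adapt the argument of Lemma~\ref{lemma:max-2-sat-complement}. First I would pin down the complement problem: given a $\frac{6}{13}$-tripartite graph $G$ with a total of $t$ triangles, find a collection of edge-disjoint triangles of $G$ minimizing $t$ minus its size (equivalently, minimizing the number of triangles of $G$ left out of the collection). The key structural fact I would invoke is that, by the defining property of $\frac{6}{13}$-tripartite graphs, the maximum number $OPT_P$ of edge-disjoint triangles satisfies $OPT_P \ge \frac{6}{13}\,t$, so the complement optimum $OPT_{CP}=t-OPT_P$ obeys
\[
OPT_{CP} \;=\; t-OPT_P \;\le\; \frac{7}{13}\,t \;\le\; \frac{7}{6}\,OPT_P .
\]
Hence the two optima are within a fixed constant factor of each other, which is precisely what keeps the complement problem hard to approximate.

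Given this, I would argue by contradiction in the style of Lemma~\ref{lemma:max-2-sat-complement}: suppose the complement problem admits, for every $\epsilon>0$, a polynomial-time $(1+\epsilon)$-approximation. Running it on $G$ produces an edge-disjoint triangle collection $P$ with $t-|P|\le(1+\epsilon)\,OPT_{CP}$, and therefore
\[
|P| \;=\; t-(t-|P|) \;\ge\; OPT_P-\epsilon\,OPT_{CP} \;\ge\; \bigl(1-\tfrac{7}{6}\epsilon\bigr)OPT_P ,
\]
so $P$ is a $\bigl(1-\tfrac{7}{6}\epsilon\bigr)$-approximate solution for MECT-B on $G$ (a meaningful approximation once $\epsilon<\frac{6}{7}$). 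Since $\epsilon$ is arbitrary, for any target ratio $1+\delta$ one can pick $\epsilon$ small enough that $\frac{1}{1-\frac{7}{6}\epsilon}\le 1+\delta$, obtaining a polynomial-time $(1+\delta)$-approximation for MECT-B on $\frac{6}{13}$-tripartite graphs and contradicting Lemma~\ref{lemma:6-13-apx-appendix}. Stated positively, the identity map on instances is a PTAS reduction from MECT-B on $\frac{6}{13}$-tripartite graphs to its complement (one may take $\beta_{1+\delta}=1+\frac{6\delta}{7(1+\delta)}$), so the complement problem is APX-hard.

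I expect the only real subtlety to be the structural step: the whole argument rests on the $\frac{6}{13}$ lower bound on the maximum packing size, without which the complement problem need not be APX-hard at all --- compare minimum vertex cover, which is APX-hard, with maximum independent set, which is not even constant-factor approximable. Everything after that inequality is a routine, loss-controlled transfer of hardness, identical in spirit to the MAX-$2$-SAT case already handled.
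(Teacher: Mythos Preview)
Your argument is correct and essentially identical to the paper's proof: both establish $OPT_{CP}\le\frac{7}{6}\,OPT_P$ from the $\frac{6}{13}$ bound and then transfer a hypothetical $(1+\epsilon)$-approximation for the complement to a $(1-\frac{7}{6}\epsilon)$-approximation for MECT-B, contradicting Lemma~\ref{lemma:6-13-apx-appendix}. Your explicit choice of $\beta_{1+\delta}$ for the PTAS reduction is a small addition but not a substantive departure.
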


\begin{proof}
Let $g$ be a $\frac{6}{13}$-tripartite graph. Then, the solution to the complement problem contains at most $\frac{7}{13}$ of the total number of triangles. Let $m$ be the number of triangles in $g$, let $OPT_{P}$ be an optimal solution to MECT for $g$ and let $OPT_{CP}$ be an optimal solution to the complement problem. Thus, the following hold: \e{(a)} $|OPT_{P}|\ge \frac{6}{13}\cdot m$, \e{(b)} $|OPT_{CP}|\le \frac{7}{13}\cdot m$, and \e{(c)} $\frac{|OPT_{CP}|}{|OPT_{P}|}\le \frac{7}{6}$. Now, let us assume, by way of contradiction, that the complement problem is not APX-hard. That is, for every $\epsilon>0$, we can find a $(1+\epsilon)$-optimal solution to that problem. Let $S_{CP}$ be such a solution. Then,
\begin{equation}
    \begin{aligned} 
        |S_{CP}|-|OPT_{CP}\|le \epsilon\cdot |OPT_{CP}|
    \end{aligned}
\end{equation}

We will now show that in this case, $S_{P}$ (which consists of the triangles that do not belong to $S_{CP}$, thus $|S_P|=m-|S_{CP}|$) is a $\frac{7}{6}\epsilon$-optimal solution for MECT-B.

\begin{equation}
    \begin{aligned} 
        |S_{P}|-|OPT_{P}|=(m-|S_{CP}|)-(m-|OPT_{CP}|)=|OPT_{CP}|-|S_{CP}|\ge -\epsilon\cdot |OPT_{CP}|\ge -\frac{7}{6}\epsilon\cdot |OPT_{P}|
    \end{aligned}
\end{equation}

It follows that for each $\epsilon'>0$, we can get an $\epsilon'$-optimal solution for MECT by finding a $(1+\frac{6}{7}\epsilon')$-optimal solution to the complement problem, which is a contradiction to the fact that MECT-B is APX-hard for $g$, as implied by Lemma~\ref{lemma:6-13-apx-appendix}.
\end{proof}

Next, we introduce our reduction from MECT-B to the problem of computing an optimal S-repair for $\str$.

\begin{lemma}\label{lemma:abc-acb-bca-hard}
  Computing an optimal S-repair for $\str$ is APX-complete.
\end{lemma}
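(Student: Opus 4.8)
The plan is to split the statement, as in the earlier hard-schema lemmas (Lemmas~\ref{lemma:ac-bc-hard} and~\ref{lemma:ab-bc-hard}), into membership in APX and APX-hardness. Membership is immediate from Proposition~\ref{prop:subset-approx}, so the real work is a hardness reduction. I would reduce from the \e{complement} problem of MECT-B restricted to $\frac{6}{13}$-tripartite graphs, which Lemma~\ref{lemma:MECT-B-complement} has just shown to be APX-hard; the reduction I have in mind is a \e{strict} one, so that (being in particular a PTAS reduction) it transfers APX-hardness directly.

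Concretely, given a tripartite graph $G$ with parts $V_1, V_2, V_3$ (which I may assume to be pairwise disjoint subsets of $\dom$ and of edge set $E$), I would build a table $T$ over $R(A,B,C)$ whose tuples are exactly the triangles of $G$: for each triple $(u,v,w)$ with $u\in V_1$, $v\in V_2$, $w\in V_3$ and $\set{u,v},\set{v,w},\set{u,w}\subseteq E$, include the tuple with $A$-value $u$, $B$-value $v$, $C$-value $w$, all weights being $1$. Distinct triangles give distinct triples, so $T$ is duplicate-free and unweighted, and it is computable in polynomial time (there are at most $|V_1|\cdot|V_2|\cdot|V_3|$ candidate triples). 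The key structural fact to establish is that a subset $S$ of $T$ satisfies $\str$ if and only if the triangles occurring in $S$ are pairwise edge-disjoint: two distinct tuples violate one of $AB\to C$, $AC\to B$, $BC\to A$ exactly when they agree on two of the three attributes, and for triangle-tuples agreeing on two attributes is the same as the two triangles sharing the corresponding edge of $G$. This yields a bijective, cost-preserving correspondence between consistent subsets of $T$ and edge-disjoint triangle packings of $G$: a packing of $p$ triangles corresponds to a consistent subset $S$ with $\dists(S,T) = m - p$, where $m$ is the number of triangles of $G$. Hence an optimal S-repair of $T$ is a maximum edge-disjoint triangle packing, and for every $\alpha\ge 1$ an $\alpha$-optimal S-repair of $T$, by reading off the triangles it \e{deletes}, gives an $\alpha$-optimal solution to the complement of MECT-B on $G$ — so the map is strict, and APX-hardness follows, already for unweighted duplicate-free tables.

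The main obstacle is not this reduction itself, which is the ``obvious'' encoding of triangle packing by the fully symmetric FD set and is routine to verify; it is securing the correct source of hardness. Since an S-repair's cost counts \e{deleted} tuples while MECT-B \e{maximizes} kept triangles, one must reduce from the complement of MECT-B, and plain APX-hardness of MECT-B does not by itself imply APX-hardness of its complement. Bridging this gap is the delicate part, and it is exactly what Lemmas~\ref{lemma:6-13-apx-appendix} and~\ref{lemma:MECT-B-complement} handle: the observation that the APX-hard instances produced by Amini et al.\ are $\frac{6}{13}$-tripartite (a constant-fraction packing always exists) bounds the ratio between the optimal values of the two problems, so an approximation for the complement can be turned into one for MECT-B, in the same spirit as the MAX-2-SAT argument of Lemma~\ref{lemma:max-2-sat-complement}. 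I would therefore carry out the steps in the order: (1) invoke Proposition~\ref{prop:subset-approx} for APX membership; (2) define the tuple-per-triangle table $T$; (3) prove the consistent-subset/edge-disjoint-packing correspondence and that $\dists(S,T)=m-p$, hence that the reduction is strict; (4) conclude APX-hardness from Lemma~\ref{lemma:MECT-B-complement}, noting the table is unweighted and duplicate-free.
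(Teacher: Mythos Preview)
Your proposal is correct and matches the paper's proof essentially line for line: the paper builds exactly the tuple-per-triangle table over $R(A,B,C)$, proves the two-way correspondence between consistent subsets and edge-disjoint triangle packings via the three FDs, observes the reduction is strict from the complement of MECT-B, and invokes Lemma~\ref{lemma:MECT-B-complement} (with Proposition~\ref{prop:subset-approx} for APX membership). Your identification of the complement-problem subtlety and its resolution through the $\frac{6}{13}$-tripartite restriction is precisely the point the paper isolates in Lemmas~\ref{lemma:6-13-apx-appendix} and~\ref{lemma:MECT-B-complement}.
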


\begin{proof}
We construct a reduction from the problem of finding the maximum number of edge-disjoint triangles in a tripartite graph with a bounded degree. The input to this problem is a tripartite graph $g$ with a bounded degree B.
 The goal is to determine what is the maximum number of edge-disjoint triangles in $g$ (that is, no two triangles share an edge).
 We assume that $g$ contains three sets of nodes: $\set{a_1,\dots,a_n}$, $\set{b_1,\dots,b_l}$ and $\set{c_1,\dots,c_r}$. Given such an input, 
 we will construct the input 
 $T$ for our problem as follows.
 For each triangle in $g$ that consists of the nodes $a_i$, $b_j$, and $c_k$, $I$ will contain a tuple $(a_i,b_j,c_k)$.
 We will now prove that there are at least $m$ edge-disjoint triangles in $g$ if and only if there is a consistent subset of $T$ that contains at least $m$ tuples.
 
\paragraph*{The ``if'' direction}
there is a consistent subset $J$ of $T$ that contains at least $m$ tuples. The FD $AB\rightarrow C$ implies that a consistent subset cannot contain two tuples $(a_i,b_j,c_{k_1})$ and $(a_i,b_j,c_{k_2})$ such that $c_{k_1}\neq c_{k_2}$. Moreover, the FD $AC\rightarrow B$ implies that it cannot contain two tuples $(a_i,b_{j_1},c_k)$ and $(a_i,b_{j_2},c_k)$ such that $b_{j_1}\neq b_{j_2}$, and the FD $BC\rightarrow A$ implies that it cannot contain two tuples $(a_{i_1},b_j,c_k)$ and $(a_{i_2},b_j,c_k)$ such that $a_{i_1}\neq a_{i_2}$. Thus, the two triangles $(a_{i_1},b_{j_1},c_{k_1})$ and $(a_{i_2},b_{j_2},c_{k_2})$ in $g$ that correspond to two tuples $(a_{i_1},b_{j_1},c_{k_1})$ and $(a_{i_1},b_{j_1},c_{k_1})$ in $J$, will not share an edge (they can only share a single node). Hence, there are at least $m$ edge-disjoint triangles in $g$.

\paragraph*{The ``only if'' direction}
Assume that there are at least $m$ edge-disjoint triangles $\set{t_1,\dots,t_m}$ in $g$. We can build a consistent subset $J$ of $T$ as follows: for each triangle $(a_i,b_j,c_k)$  in $\set{t_1,\dots,t_m}$ we will add the tuple $(a_i,b_j,c_k)$ to $J$. Thus, $J$ will contain $m$ tuples. It is only left to show that $J$ is consistent. Let us assume, by way of contradiction, that $J$ is not consistent. That is, there are two tuples $(a_1,b_1,c_1)$ and $(a_2,b_2,c_2)$ in $J$ that violate an FD in $\str$. If the tuples violate the FD $AB\rightarrow C$, it holds that $a_1=a_2$ and $b_1=b_2$. Thus, the corresponding two triangles from $\set{t_1,\dots,t_m}$ share the edge $(a_1,b_1)$, which is a contradiction to the fact that $\set{t_1,\dots,t_m}$ is a set of edge-disjoint triangles. Similarly, if the tuples violate the FD $AC\rightarrow B$, then the corresponding two triangles share an edge $(a_1,c_1)$, and if they violate the FD $BC\rightarrow A$, the corresponding two triangles share an edge $(b_1,c_1)$. To conclude, there exists a consistent subset of $T$ that contains at least $m$ facts, and that concludes our proof.

Clearly, our reduction is a strict reduction from the complement problem of MECT-B (as the number of triangles that are not part of the set of edge-disjoint triangles is exactly the number of tuples that are deleted from $T$), thus Lemma~\ref{lemma:MECT-B-complement} implies that our problem is indeed APX-complete.
\end{proof}

Finally, we construct a reduction from the problem MAX-non-mixed-SAT to the problem of computing an optimal S-repair for $\stk$. Note that the following holds.

\begin{lemma}\label{lemma:max-non-mixed-sat-complement}
The complement problem of MAX-non-mixed-SAT is APX-hard.
\end{lemma}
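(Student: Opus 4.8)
The plan is to mirror, almost verbatim, the argument used for Lemma~\ref{lemma:max-2-sat-complement} and Lemma~\ref{lemma:MECT-B-complement}. Since MAX-non-mixed-SAT is itself APX-hard~\cite{DBLP:journals/jacm/Hastad01}, it suffices to show that a polynomial-time approximation scheme for its complement problem (minimizing the number of \emph{unsatisfied} clauses) would yield a polynomial-time approximation scheme for MAX-non-mixed-SAT, which is a contradiction. The only ingredient beyond bookkeeping is a linear lower bound on the optimum of MAX-non-mixed-SAT, which lets the additive error terms transfer between the two objectives.

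First I would establish that lower bound. Given a non-mixed CNF formula $\phi$ with $m$ clauses, the all-true assignment satisfies every clause all of whose literals are positive, while the all-false assignment satisfies every clause all of whose literals are negative; since each clause is of exactly one of these two types, one of the two assignments satisfies at least $m/2$ clauses. Hence, writing $OPT_P$ for an optimal solution of MAX-non-mixed-SAT on $\phi$ and $OPT_{CP}$ for an optimal solution of its complement (viewed, as in Lemma~\ref{lemma:max-2-sat-complement}, as a minimum-size set of unsatisfied clauses realized by some assignment), we get $|OPT_P|\ge \tfrac{1}{2}m$ and $|OPT_{CP}|\le \tfrac{1}{2}m$, and in particular $|OPT_{CP}|\le |OPT_P|$.

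Next I would run the standard transfer argument. Suppose, towards a contradiction, that the complement problem is not APX-hard, so for every $\epsilon>0$ there is a $(1+\epsilon)$-optimal solution $S_{CP}$, i.e.\ $|S_{CP}|-|OPT_{CP}|\le \epsilon\cdot|OPT_{CP}|$. Let $S_P$ be the set of clauses satisfied by the assignment underlying $S_{CP}$, so $|S_P|=m-|S_{CP}|$. Then
\[
|S_P|-|OPT_P| = (m-|S_{CP}|)-(m-|OPT_{CP}|) = |OPT_{CP}|-|S_{CP}| \ge -\epsilon\cdot|OPT_{CP}| \ge -\epsilon\cdot|OPT_P|,
\]
so $|S_P|\ge(1-\epsilon)|OPT_P|$, i.e.\ $S_P$ is a $(1+O(\epsilon))$-approximate solution for MAX-non-mixed-SAT. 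Letting $\epsilon$ range over all positive reals, this gives a polynomial-time approximation scheme for MAX-non-mixed-SAT, contradicting its APX-hardness.

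There is no real obstacle here: the argument is a routine complement/padding reduction identical in spirit to the two earlier lemmas. The only point requiring care is phrasing the contradiction correctly — APX-hardness is the statement that no approximation scheme exists unless $\mathrm{P}=\mathrm{NP}$, so one must assume a scheme for the complement problem and derive a scheme for MAX-non-mixed-SAT — and the only place the non-mixed structure is used is the $OPT_P\ge m/2$ bound that makes the error terms commensurable.
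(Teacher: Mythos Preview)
Your proposal is correct and follows essentially the same argument as the paper, which simply states that the proof is identical to that of Lemma~\ref{lemma:max-2-sat-complement}. The one place you add detail---the all-true/all-false argument to get $|OPT_P|\ge m/2$ for non-mixed formulas---is exactly the right justification for the half-satisfiability bound in this setting.
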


\begin{proof}
The proof is identical to the proof of Lemma~\ref{lemma:max-2-sat-complement}.
\end{proof}

Thus, if we construct a reduction that is a strict reduction from the complement problem, this will conclude our proof of APX-completeness for $\stk$.

\begin{lemma}\label{lemma:abc-cb-hard}
Computing an optimal S-repair for $\stk$ is APX-complete.
\end{lemma}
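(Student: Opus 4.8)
Membership in APX is immediate from Proposition~\ref{prop:subset-approx}, so the plan is to establish APX-hardness, following the pattern of Lemmas~\ref{lemma:ab-bc-hard} and~\ref{lemma:abc-acb-bca-hard}: I would give a ratio-preserving reduction from the complement of MAX-non-mixed-SAT, which is APX-hard by Lemma~\ref{lemma:max-non-mixed-sat-complement}. Recall that a clause of a non-mixed formula is \e{monotone} --- either a disjunction of positive literals or a disjunction of negated literals --- and we take the clause width to be bounded by a fixed constant $k$, as in the hard instances of~\cite{DBLP:journals/jacm/Hastad01}.

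The gadget I would use reads a truth assignment out of a consistent subset through the two FDs of $\stk=\set{AB\ra C,\ C\ra B}$. Given a non-mixed CNF formula $\phi$ with variables $x_1,\dots,x_n$ and clauses $c_1,\dots,c_m$, build a table $T$ over $R(A,B,C)$ as follows: give each clause $c_\ell$ a fresh value $u_\ell$ in attribute $A$; for a positive clause $c_\ell=(x_{i_1}\vee\cdots\vee x_{i_p})$ insert the tuples $(u_\ell,\val{1},x_{i_j})$ for $j=1,\dots,p$, and for a negative clause $c_\ell=(\neg x_{i_1}\vee\cdots\vee\neg x_{i_p})$ insert $(u_\ell,\val{0},x_{i_j})$; give every tuple weight $1$. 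The resulting table is duplicate-free (tuples of distinct clauses differ on $A$, tuples of the same clause differ on $C$). I would then record two structural facts: by $AB\ra C$, a consistent subset contains at most one tuple of each clause (the tuples of $c_\ell$ agree on $A$ and on $B$ but pairwise disagree on $C$); and by $C\ra B$, the tuples kept by a consistent subset $S$ induce a partial assignment $\sigma_S$ (a variable $x$ occurring as the $C$-value of a kept tuple is set to $\mathrm{true}$ or $\mathrm{false}$ according to whether the accompanying $B$-value is $\val{1}$ or $\val{0}$), and this is well defined precisely because $C\ra B$ forbids keeping a ``$x=\mathrm{true}$'' tuple from a positive clause together with a ``$x=\mathrm{false}$'' tuple from a negative clause.

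Next I would prove the correspondence that the maximum size of a consistent subset of $T$ equals the maximum number of clauses of $\phi$ that can be satisfied simultaneously. For one direction, from an assignment $\sigma$ satisfying a set $\mathcal{C}'$ of clauses pick, for each $c_\ell\in\mathcal{C}'$, a witness variable whose value makes the relevant literal true and keep the corresponding tuple; the resulting set is consistent and has size $|\mathcal{C}'|$. For the other direction, a consistent subset $S$ keeps exactly one tuple for each clause in a set $\mathcal{C}'$ with $|\mathcal{C}'|=|S|$, and $\sigma_S$ satisfies every clause of $\mathcal{C}'$ because the kept tuple is a witness for it. Consequently an optimal S-repair of $T$ deletes $N-\mathit{OPT}_{\mathrm{sat}}(\phi)$ tuples, where $N=|T|=\sum_\ell|c_\ell|$, while the complement problem on $\phi$ has optimum $m-\mathit{OPT}_{\mathrm{sat}}(\phi)$; since these differ only by the fixed term $N-m\le (k-1)m$, the assignment $\sigma_S$ extracted from an $\alpha$-optimal S-repair leaves at most $\alpha\bigl(N-\mathit{OPT}_{\mathrm{sat}}(\phi)\bigr)-(N-m)$ clauses unsatisfied, which, on the hard instances (whose optimum is linear in $m$), is within a factor tending to $1$ as $\alpha\to 1$. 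Combined with membership in APX, this yields APX-completeness for $\stk$.

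The step I expect to be the main obstacle is exactly this last piece of bookkeeping. A monotone clause of width $p$ genuinely seems to require a block of $p$ tuples --- over a ternary relation a single tuple cannot encode a $p$-ary disjunction, and $AB\ra C$ can only kill all-but-one tuple of a block --- so the map from deleted tuples to unsatisfied clauses carries the additive offset $N-m$, and making the reduction genuinely ratio-preserving (strict, up to the usual care) requires both the bounded clause width and the linear-in-$m$ lower bound on the complement optimum available on the hard instances of Lemma~\ref{lemma:max-non-mixed-sat-complement}; a secondary point, though an easy one, is that the two structural facts above hold for \e{every} consistent subset and not just for structured ones, since they follow directly from the two FDs with no normalization needed.
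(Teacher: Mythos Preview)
Your construction and the two structural facts match the paper's reduction exactly: one tuple per literal occurrence, $A$ holding the clause name, $B\in\set{\val{0},\val{1}}$ the sign of the clause, $C$ the variable; $AB\ra C$ forces at most one kept tuple per clause, $C\ra B$ makes the kept tuples define a consistent partial assignment, and the correspondence ``maximum consistent subset size equals maximum number of simultaneously satisfiable clauses'' is precisely what the paper proves, in both directions.

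Where you diverge is the point you flag as the main obstacle, and here you are in fact more careful than the paper. The paper simply asserts that the reduction is \emph{strict} from the complement of MAX-non-mixed-SAT, with the parenthetical justification ``the number of clauses that are not satisfied is exactly the number of tuples that are deleted from $T$.'' That identity holds only when $N=m$, i.e.\ single-literal clauses; otherwise the two quantities differ by exactly the $N-m$ you isolate. So the gap you worry about is present in the paper's own argument --- it is not addressed there. Your proposed patch (bounded clause width, complement optimum linear in $m$ ``on hard instances'') is the right idea, but as phrased it is not a uniform PTAS reduction: on nearly-satisfiable formulas the ratio $(N-m)/(m-\mathit{OPT}_{\mathrm{sat}})$ is unbounded, so no single $\beta_\alpha$ works across all inputs. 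The clean way to finish is direct gap preservation: with clause width fixed at $k$ (so $N\le km$), the inapproximability gap for MAX-non-mixed-SAT between thresholds $(1-\epsilon_1)m$ and $(1-\epsilon_2)m$ maps to S-repair optima of at most $(k-1+\epsilon_1)m$ versus at least $(k-1+\epsilon_2)m$, a fixed constant-factor gap, which already yields APX-hardness.
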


\begin{proof}
We construct a reduction from MAX-Non-Mixed-SAT to the problem of computing an optimal S-repair for $\stk$.  The input to the first problem is a formula $\psi$ with the free variables 
 $x_1,\ldots,x_n$, such that  $\psi$
 has the form $c_1 \wedge  \cdots \wedge c_m$ where each $c_j$ is a clause. Each clause is a conjunction of variables from one of the following sets: \e{(a)} $\{x_i : i=1,\ldots,n\}$ or \e{(b)} $\{\neg x_i : i=1,\ldots,n\}$ (that is, each clause either contains only positive variables or only negative variables).
 The goal is to determine what is the maximum number of clauses in the formula $\psi$ that can be simultaneously satisfied.
 Given such an input, 
 we will construct the input 
 $T$ for our problem as follows.
 For each $i=1,\ldots, n$ and $j=1,\ldots, m$, $T$ will contain the following tuples:
 \begin{itemize}
 \item
 $(c_j,\val{1},x_i)$, if $c_j$ contains only positive variables and $x_i$ appears in $c_j$.
 \item
$(c_j,\val{0},x_i)$, if $c_j$ contains only negative variables and $\neg x_i$ appears in $c_j$.
 \end{itemize}
 The weight of each tuple will be $1$ (that is, $T$ is an unweighted, duplicate-free table).
 We will now prove that there is an assignment that satisfies at least $k$ clauses in $\psi$ if and only if there is a consistent subset of the constructed table $T$ that contains at least $k$ tuples.
 
\paragraph*{The ``if'' direction}
Assume that there is a consistent subset $J$ of $T$ that contains at least $k$ tuples. The FD $AB\rightarrow C$ implies that no consistent subset of $T$ contains two tuples $(c_j,b_j,x_{i_1})$ and $(c_j,b_j,x_{i_2})$ such that $x_{i_1}\neq x_{i_2}$. Thus, each consistent subset contains at most one tuple $(c_j,b_j,x_i)$ for each $c_j$. We will now define an assignment $\tau$ as follows: $\tau(x_i)\eqdef b_j$ if there exists a tuple $(c_j,b_j,x_i)$ in $J$ for some $c_j$. Note that the FD $C\rightarrow B$ implies that no consistent subset contains two tuples $(c_{j_1},\val{1},x_i)$ and $(c_{j_2},\val{0},x_i)$, thus the assignment is well defined. Finally, as mentioned above, $J$ contains a tuple $(c_j,b_j,x_i)$ for $k$ clauses $c_j$ from $\psi$. If $x_i$ appears in $c_j$ without negation, it holds that $b_1=1$, thus $\tau(x_i)\eqdef 1$ and $c_j$ is satisfied. Similarly, if $x_i$ appears in $c_j$ with negation, it holds that $b_j=0$, thus $\tau(x_i)\eqdef 0$ and $c_j$ is satisfied. Thus, each one of these $k$ clauses is satisfied by $\tau$ and we conclude that there exists an assignment that satisfies at least $k$ clauses in $\psi$.

\paragraph*{The ``only if'' direction}
Assume that $\tau: \{x_1,\ldots,x_n\} \rightarrow \{0,1\}$ is an assignment that satisfies at least $k$ clauses in $\psi$.
 We claim that there exists a consistent subset of $T$ that contains at least $k$ tuples. Since $\tau$ satisfies at least $k$ clauses, for each one of these clauses $c_j$ there exists a variable $x_i\in c_j$, such that $\tau(x_i)=1$ if $x_i$ appears in $c_j$ without negation or $\tau(x_i)=0$ if it appears in $c_j$ with negation. Let us build a consistent subset $J$ as follows. For each $c_j$ that is satisfied by $\tau$ we will choose exactly one variable $x_i$ that satisfies the above and add the tuple $\rtk(c_j,b_j,x_i)$ (where $\tau(x_i)=b_j$) to $J$. Since there are at least $k$ satisfied clauses, $J$ will contain at least $k$ tuples, thus it is only left to prove that $J$ is consistent. Let us assume, by way of contradiction, that $J$ is not consistent. Since $J$ contains one tuple for each satisfied $c_j$, no two tuples violate the FD $AB\rightarrow C$. Thus, $J$ contains two tuples $(c_{j_1},\val{1},x_i)$ and $(c_{j_2},\val{0},x_i)$, but this is a contradiction to the fact that $\tau$ is an assignment (that is, it cannot be the case that $\tau(x_i)=1$ and $\tau(x_i)=0$ as well). 
 
Clearly, our reduction is a strict reduction from the complement problem of MAX-non-mixed-SAT (as the number of clauses that are not satisfied is exactly the number of tuples that are deleted from $T$), thus Lemma~\ref{lemma:abc-cb-hard} implies that our problem is indeed APX-complete.
\end{proof}

\subsubsection{Fact-Wise Reductions}

Let $R$ be a schema and let $\depset$ be a set of FDs over $R$. Recall that an FD set $\depset$ is a
chain if for every two FDs $X_1\rightarrow Y_1$ and
$X_2 \rightarrow Y_2$ it is the case that $X_1 \subseteq X_2$ or
$X_2 \subseteq X_1$. Note that as long as $\depset$ is a chain, it either contains a common lhs, or a consensus FD. Thus, if we reach a point where we cannot apply any simplifications to the problem, the set of FDs is not a chain. In the rest of this section, we will use the following definition: an FD $X\rightarrow Y$ is a \e{local minimum} of $\depset$ if there is no other FD $Z\rightarrow W$ in $\depset$ such that $Z\subset X$. If $\depset$ is not a chain set of FDs, then it contains at least two distinct local minima $X_1\rightarrow Y_1$ and $X_2\rightarrow Y_2$ (by distinct we mean that $X_1\neq X_2$). Thus, if no simplification can be applied to $\depset$, one of the following holds:
\begin{itemize}
\item $(\closure_{\depset}(X_1)\setminus X_1)\cap \closure_{\depset}(X_2)= \emptyset$ and $(\closure_{\depset}(X_2)\setminus X_2)\cap \closure_{\depset}(X_1) = \emptyset$.
\item $(\closure_{\depset}(X_1)\setminus X_1) \cap (\closure_{\depset}(X_2) \setminus X_2)\neq \emptyset$, $(\closure_{\depset}(X_1)\setminus X_1)\cap X_2 = \emptyset$ and $(\closure_{\depset}(X_2)\setminus X_2)\cap X_1 = \emptyset$. 
\item $(\closure_{\depset}(X_1)\setminus X_1)\cap X_2 \neq \emptyset$ and $(\closure_{\depset}(X_2) \setminus X_2)\cap X_1 = \emptyset$. 
\item $(\closure_{\depset}(X_1)\setminus X_1)\cap X_2 \neq \emptyset$ and $(\closure_{\depset}(X_2) \setminus X_2)\cap X_1 \neq \emptyset$ and also $(X_1\setminus X_2)\subseteq (\closure_{\depset}(X_2)\setminus X_2)$ and $(X_2\setminus X_1)\subseteq (\closure_{\depset}(X_1)\setminus X_1)$. In this case, $\depset$ contains at least one more local minimum. Otherwise, for every FD $Z\rightarrow W$ in $\depset$ it holds that either $X_1\subseteq Z$ or $X_2\subseteq Z$. If $X_1\cap X_2\neq\emptyset$, then $\depset$ contains a common lhs (an attribute from $X_1\cap X_2$). If $X_1\cap X_2=\emptyset$, then $\depset$ contains an lhs marriage. 
\item $(\closure_{\depset}(X_1)\setminus X_1)\cap X_2 \neq \emptyset$ and $(\closure_{\depset}(X_2) \setminus X_2)\cap X_1 \neq \emptyset$ and also $(X_2\setminus X_1)\not\subseteq (\closure_{\depset}(X_1)\setminus X_1)$. 
\end{itemize}
We will now prove that for each one of these cases there is a fact-wise reduction from one of the hard schemas we discusses above.

\begin{lemma}\label{lemma:disjoint-reduction}
Let $R$ be a schema and let $\depset$ be an FD set over $R$ that does not contain trivial FDs. Suppose that $\depset$ contains two distinct local minima $X_1\rightarrow Y_1$ and $X_2\rightarrow Y_2$, and the following hold:
\begin{itemize}
\item $(\closure_{\depset}(X_1)\setminus X_1)\cap \closure_{\depset}(X_2)= \emptyset$,
\item $(\closure_{\depset}(X_2)\setminus X_2)\cap \closure_{\depset}(X_1) = \emptyset$.
\end{itemize} 
Then, there is a fact-wise reduction from $(R(A,B,C),\stfd)$ to $(R,\depset)$.
\end{lemma}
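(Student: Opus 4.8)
The plan is to construct an explicit fact-wise reduction $\Pi$ from $(R(A,B,C),\stfd)$ to $(R,\depset)$ and verify the three required properties (injectivity, polynomial-time computability, and preservation of consistency and inconsistency). Abbreviate $\widehat{X_1}=\closure_{\depset}(X_1)\setminus X_1$ and $\widehat{X_2}=\closure_{\depset}(X_2)\setminus X_2$. Since $\depset$ contains no trivial FDs, $X_1\to Y_1$ and $X_2\to Y_2$ are nontrivial, so $\widehat{X_1}\neq\emptyset\neq\widehat{X_2}$; the first hypothesis gives $\widehat{X_1}\cap\closure_{\depset}(X_2)=\emptyset$, hence in particular $\widehat{X_1}\cap X_2=\emptyset$ and $\widehat{X_1}\cap\widehat{X_2}=\emptyset$, and symmetrically $\widehat{X_2}\cap X_1=\emptyset$; these yield $\closure_{\depset}(X_2)\cap X_1=X_1\cap X_2$, $\closure_{\depset}(X_1)\cap X_2=X_1\cap X_2$, and $\closure_{\depset}(X_1)\cap\closure_{\depset}(X_2)=X_1\cap X_2$. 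Since $X_1\to Y_1$ and $X_2\to Y_2$ are distinct local minima, neither $X_i$ contains the other, so $X_1\setminus X_2\neq\emptyset$ and $X_2\setminus X_1\neq\emptyset$. It follows that $X_1\cap X_2$, $X_1\setminus X_2$, $X_2\setminus X_1$, $\widehat{X_1}$, $\widehat{X_2}$, and $Z_0:=\attr(R)\setminus(\closure_{\depset}(X_1)\cup\closure_{\depset}(X_2))$ form a partition of $\attr(R)$.

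Fix injective encodings of ordered pairs and triples of domain values as single domain values (possible since $\dom$ is countably infinite), and fix some $0\in\dom$. Define $\Pi(a,b,c)$, for a tuple over $R(A,B,C)$, by: $0$ on every attribute of $X_1\cap X_2$; $a$ on every attribute of $X_1\setminus X_2$; $b$ on every attribute of $X_2\setminus X_1$; the pair $\langle a,c\rangle$ on every attribute of $\widehat{X_1}$; the pair $\langle b,c\rangle$ on every attribute of $\widehat{X_2}$; and the triple $\langle a,b,c\rangle$ on every attribute of $Z_0$. This is clearly polynomial-time computable, and injective, since $a$ is recovered from any attribute of $X_1\setminus X_2$, $b$ from $X_2\setminus X_1$, and $c$ from $\widehat{X_1}$.

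For the direction $T\not\models\stfd\Rightarrow\Pi(T)\not\models\depset$: two tuples of $T$ violating $A\to C$ have the form $(a,b,c)$ and $(a,b',c')$ with $c\neq c'$, so their $\Pi$-images agree on all of $X_1$ (constant on $X_1\cap X_2$, value $a$ on $X_1\setminus X_2$) but disagree on every attribute of $\widehat{X_1}$ (the encodings of $\langle a,c\rangle$ and $\langle a,c'\rangle$ differ); hence $\Pi(T)$ violates the FD $X_1\to\widehat{X_1}$, which is entailed by $\depset$, so $\Pi(T)\not\models\depset$. The case of a violation of $B\to C$ is symmetric, using $X_2$ and $\widehat{X_2}$.

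The converse, $\Pi(T)\not\models\depset\Rightarrow T\not\models\stfd$, is the crux and is where the closure-disjointness hypotheses do their work; I expect this to be the only delicate part. Suppose an FD $Z\to D$ of $\depset$ (with $D$ a single attribute, without loss of generality) is violated in $\Pi(T)$ by the images of $\tup t=(a,b,c)$ and $\tup s=(a',b',c')$ with $\tup t\neq\tup s$: they agree on $Z$ but disagree on $D$. Because the images disagree on every attribute of $Z_0$, we get $Z\cap Z_0=\emptyset$, so $Z\subseteq\closure_{\depset}(X_1)\cup\closure_{\depset}(X_2)$. Inspecting the encoding, ``agree on $Z$'' forces $a=a'$ whenever $Z$ meets $(X_1\setminus X_2)\cup\widehat{X_1}$, forces $b=b'$ whenever $Z$ meets $(X_2\setminus X_1)\cup\widehat{X_2}$, and forces $c=c'$ whenever $Z$ meets $\widehat{X_1}\cup\widehat{X_2}$. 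Now case on $D$. It cannot lie in $X_1\cap X_2$ (the images always agree there). If $D\in X_1\setminus X_2$ then $D\notin\closure_{\depset}(X_2)$ (else $D\in\closure_{\depset}(X_2)\cap X_1=X_1\cap X_2\subseteq X_2$), so $Z\not\subseteq\closure_{\depset}(X_2)$ (else $D\in\closure_{\depset}(Z)\subseteq\closure_{\depset}(X_2)$); together with $Z\subseteq\closure_{\depset}(X_1)\cup\closure_{\depset}(X_2)$ and the disjointness relations, $Z$ must meet $(X_1\setminus X_2)\cup\widehat{X_1}$, forcing $a=a'$ and contradicting the disagreement on $D$; symmetrically $D\notin X_2\setminus X_1$. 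If $D\in\widehat{X_1}$, then $D\in\closure_{\depset}(X_1)\setminus X_1$, and were $Z\subseteq\closure_{\depset}(X_2)$ we would get $D\in\closure_{\depset}(X_2)$, hence $D\in\closure_{\depset}(X_1)\cap\closure_{\depset}(X_2)=X_1\cap X_2\subseteq X_1$, impossible; so $Z$ meets $(X_1\setminus X_2)\cup\widehat{X_1}$, forcing $a=a'$, and since the images disagree on $D$ (values $\langle a,c\rangle$ and $\langle a,c'\rangle$) we obtain $c\neq c'$, so $\tup t$ and $\tup s$ violate $A\to C$ in $T$. The case $D\in\widehat{X_2}$ is symmetric, giving a violation of $B\to C$. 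Finally, if $D\in Z_0$ then $D$ lies in neither closure, so $Z$ is contained in neither, hence meets both $(X_1\setminus X_2)\cup\widehat{X_1}$ and $(X_2\setminus X_1)\cup\widehat{X_2}$, forcing $a=a'$ and $b=b'$; since $\tup t\neq\tup s$ this forces $c\neq c'$, again a violation of $A\to C$. In every case $T\not\models\stfd$, which establishes the reduction.
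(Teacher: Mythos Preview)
Your proof is correct and uses essentially the same fact-wise reduction as the paper's, with only cosmetic differences. The paper encodes the ``otherwise'' attributes with $\langle a,b\rangle$ rather than your $\langle a,b,c\rangle$ (both work), and for the direction $\Pi(T)\models\depset\Rightarrow T\models\stfd$ it argues directly by casing on which of $a,b,c$ agree between the two source tuples, whereas you argue the contrapositive by casing on the location of the right-hand-side attribute $D$ of the violated FD; the two organizations are interchangeable and rely on the same closure-disjointness facts you derived at the outset.
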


\begin{proof}
We define a fact-wise reduction $\Pi:(R(A,B,C),\stfd) \rightarrow (R,\depset)$, using the FDs $X_1\rightarrow Y_1$ and $X_2\rightarrow Y_2$ and the constant $\odot \in \consts$.
Let $\tup t= (a,b,c)$ be a tuple over $R(A,B,C)$ and let $\set{A_1,\dots,A_n}$ be the set of attributes in $R$.
We define $\Pi $ as follows:
\[
\Pi (\tup t).A_k \eqdef
\begin{cases}
\odot & \mbox{$A_k\in X_1\cap X_2$} \\ 
 a& \mbox{$A_k\in X_1 \setminus X_2$}\\ 
 b& \mbox{$A_k\in X_2 \setminus X_1$} \\
 \langle a,c\rangle& \mbox{$A_k\in \closure_{\depset}(X_1)\setminus X_1$} \\
 \langle b,c\rangle& \mbox{$A_k\in \closure_{\depset}(X_2)\setminus X_2$} \\
 \langle a,b\rangle& \mbox{otherwise}
\end{cases}
\]
It is left to show that $\Pi$ is a fact-wise reduction.
To do so, we prove that $\Pi$ is well defined, injective and preserves consistency and inconsistency.

\partitle{$\mathbf{\Pi}$ is well defined}
This is straightforward from the definition and the fact that $(\closure_{\depset}(X_1)\setminus X_1)\cap \closure_{\depset}(X_2)= \emptyset$ and $(\closure_{\depset}(X_2)\setminus X_2)\cap \closure_{\depset}(X_1) = \emptyset$.

\partitle{$\mathbf{\Pi}$ is injective}
Let $\tup t,\tup t'$ be two tuples, such that $\tup t=(a,b,c)$ and $\tup t' = (a',b',c')$.
Assume that $\Pi (\tup t) = \Pi (\tup t')$.  Let us denote
$\Pi (\tup t)=(x_1,\dots, x_n)$ and $\Pi (\tup t')=(x'_1,\dots, x'_n)$.
Note that $X_1 \setminus X_2$ and $X_2 \setminus X_1$ are not empty since $X_1\neq X_2$. Moreover, since both FDs are minimal, $X_1\not\subset X_2$ and $X_2\not\subset X_1$.
Therefore, there are $l$ and $p$ such that $\Pi (\tup t).A_l = a$, $\Pi (\tup t).A_p= b$. Furthermore, since $X_1\rightarrow Y_1$ and $X_2\rightarrow Y_2$ are not trivial, there are $m$ and $n$ such that $\Pi (\tup t).A_m=\langle a,c\rangle$ and $\Pi (\tup t).A_n=\langle b,c\rangle$.
Hence, $\Pi (\tup t) = \Pi (\tup t')$ implies that $\Pi (\tup t).A_l = \Pi (\tup t').A_l$, $\Pi (\tup t).A_p = \Pi (\tup t').A_p$, $\Pi (\tup t).A_m = \Pi (\tup t').A_m$ and also $\Pi (\tup t).A_n= \Pi (\tup t').A_n$. We obtain that
$a=a'$, $b=b'$ and $c=c'$, which implies $\tup t=\tup t'$.

\partitle{$\mathbf{\Pi}$ preserves consistency}
Let $\tup t=(a,b,c)$ and $\tup t' = (a',b',c')$ be two distinct tuples.
We contend that  the set $\{\tup t,\tup t'\}$ is consistent w.r.t. $\stfd$ if and only if the set $\{\Pi(\tup t),\Pi(\tup t')\}$ is consistent w.r.t. $\depset$.
\paragraph*{The ``if'' direction}
Assume that $\{\tup t,\tup t'\}$ is consistent w.r.t $\stfd$. We prove that  $\{\Pi(\tup t),\Pi(\tup t')\}$ is consistent w.r.t $\depset$. First, note that each FD that contains an attribute $A_k\not\in (\closure_{\depset}(X_1)\cup \closure_{\depset}(X_2))$ on its left-hand side is satisfied by $\{\Pi(\tup t),\Pi(\tup t')\}$, since $\tup t$ and $\tup t'$ cannot agree on both $A$ and $B$ (otherwise, the FD $A\rightarrow C$ implies that $\tup t=\tup t'$). Thus, from now on we will only consider FDs that do not contain an attribute $A_k\not\in (\closure_{\depset}(X_1)\cup \closure_{\depset}(X_2))$ on their left-hand side. The FDs in $\stfd$ imply that if $\tup t$ and $\tup t'$ agree on one of $\set{A,B}$ then they also agree on $C$, thus one of the following holds:
\begin{itemize}
\item $a\neq a'$, $b= b'$ and $c= c'$. In this case, $\Pi(\tup t)$ and $\Pi(\tup t')$ only agree on the attributes $A_k$ such that $A_k\in X_1\cap X_2$ or $A_k\in X_2\setminus X_1$ or $A_k\in \closure_{\depset}(X_2)\setminus X_2$. That is, they only agree on the attributes $A_k$ such that $A_k\in \closure_{\depset}(X_2)$. Thus, each FD that contains an attribute $A_k\not\in \closure_{\depset}(X_2)$ on its left-hand side is satisfied. Moreover, any FD that contains only attributes $A_k\in \closure_{\depset}(X_2)$ on its left-hand side, also contains only attributes $A_k\in \closure_{\depset}(X_2)$ on its right-hand side (by definition of a closure), thus $\Pi(\tup t)$ and $\Pi(\tup t')$ agree on both the left-hand side and the right-hand side of such FDs and $\{\Pi(\tup t),\Pi(\tup t')\}$ satisfies all the FDs in $\depset$.
\item $a= a'$, $b\neq b'$ and $c= c'$. This case is symmetric to the previous one, thus a similar proof applies for this case as well.
\item $a\neq a'$, $b\neq b'$. In this case, $\Pi(\tup t)$ and $\Pi(\tup t')$ only agree on the attributes $A_k$ such that $A_k\in X_1\cap X_2$. Since $X_1\rightarrow Y_1$ and $X_2\rightarrow Y_2$ are local minima, there is no FD in $\depset$ that contains only attributes $A_k$ such that $A_k\in X_1\cap X_2$ on its left-hand side (as if there is an FD $Z\rightarrow W$ in $\depset$, such that $Z\subseteq X_1\cap X_2$, then $Z\subset X_1$ in contradiction to the fact that $X_1$ is a local minimum). Thus, $\Pi(\tup t)$ and $\Pi(\tup t')$ do not agree on the left-hand side of any FD in $\depset$ and $\{\Pi(\tup t),\Pi(\tup t')\}$ is consistent w.r.t. $\depset$.
\end{itemize}
This concludes our proof of the ``if'' direction.

\paragraph*{The ``only if'' direction}
Assume $\set{\tup t,\tup t'}$ is inconsistent w.r.t. $\stfd$. We prove that $\{\Pi(\tup t),\Pi(\tup t')\}$ is inconsistent w.r.t. $\depset$.
Since $\set{\tup t,\tup t'}$ is inconsistent w.r.t. $\stfd$ it either holds that $a=a'$ and $c\neq c'$ or $b=b'$ and $c\neq c'$ (or both). In the first case, $\Pi(\tup t)$ and $\Pi(\tup t')$ agree on the attributes on the left-hand side of the FD $X_1\rightarrow Y_1$, but do not agree on at least one attribute on its right-hand side (since the FD is not trivial). Similarly, in the second case, $\Pi(\tup t)$ and $\Pi(\tup t')$ agree on the attributes on the left-hand side of the FD $X_2\rightarrow Y_2$, but do not agree on at least one attribute on its right-hand side. Thus, $\{\Pi(\tup t),\Pi(\tup t')\}$ does not satisfy at least one of these FDs and $\{\Pi(\tup t),\Pi(\tup t')\}$ is inconsistent w.r.t. $\depset$.
\end{proof}

\begin{lemma}\label{lemma:disjoint-left-reduction}
Let $R$ be a schema and let $\depset$ be an FD set over $R$ that does not contain trivial FDs. Suppose that $\depset$ contains two distinct local minima $X_1\rightarrow Y_1$ and $X_2\rightarrow Y_2$, and one of the following holds:
\begin{itemize}
\item $((\closure_{\depset}(X_1)\setminus X_1) \cap ((\closure_{\depset}(X_2)\setminus X_2)\neq \emptyset$, $((\closure_{\depset}(X_1)\setminus X_1)\cap X_2 = \emptyset$ and $((\closure_{\depset}(X_2) \setminus X_2)\cap X_1 = \emptyset$,
\item $((\closure_{\depset}(X_1)\setminus X_1)\cap X_2 \neq \emptyset$ and $((\closure_{\depset}(X_2) \setminus X_2)\cap X_1 = \emptyset$.
\end{itemize} 
Then, there is a fact-wise reduction from $(R(A,B,C),\sabc)$ to $(R,\depset)$.
\end{lemma}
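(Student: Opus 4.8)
The plan is to construct an explicit fact-wise reduction $\Pi$ from $(R(A,B,C),\sabc)$ to $(R,\depset)$; by Lemma~\ref{lemma:fact-wise-apx} and the APX-completeness of $\sabc$ established in Lemma~\ref{lemma:ab-bc-hard}, this is all that is needed. The guiding idea follows Lemma~\ref{lemma:disjoint-reduction}, but adapted to the \emph{chain} $A\to B\to C$: the source attribute $A$ should drive the lhs $X_1$, the source attribute $B$ should drive $X_2$, the source attribute $C$ should be visible on the closures, and the constant $\odot\in\consts$ absorbs the attributes shared by $X_1$ and $X_2$. Writing $\widehat{X_i}\eqdef\closure_\depset(X_i)\setminus X_i$ for $i=1,2$, for a source tuple $\tup t=(a,b,c)$ and an attribute $A_k$ of $R$ I would set
\[
\Pi(\tup t).A_k \eqdef
\begin{cases}
\odot & \mbox{$A_k\in X_1\cap X_2$} \\
a & \mbox{$A_k\in X_1\setminus X_2$} \\
b & \mbox{$A_k\in (X_2\setminus X_1)\cup(\widehat{X_1}\setminus\widehat{X_2})$} \\
\langle b,c\rangle & \mbox{$A_k\in\widehat{X_1}\cap\widehat{X_2}$} \\
c & \mbox{$A_k\in\widehat{X_2}\setminus\widehat{X_1}$} \\
\langle a,b\rangle & \mbox{otherwise}
\end{cases}
\]
The two bullet hypotheses are exactly what is needed for these cases to be pairwise disjoint: $\widehat{X_2}\cap X_1=\emptyset$ keeps the $a$-case apart from the $c$- and $\langle b,c\rangle$-cases, while the trivial identities $\widehat{X_1}\cap X_1=\emptyset$ and $\widehat{X_2}\cap X_2=\emptyset$ handle the remaining overlaps. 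The five non-``otherwise'' cases together cover $\closure_\depset(X_1)\cup\closure_\depset(X_2)$, so $\Pi$ is total.

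Next I would check the easy properties. Computability in polynomial time is immediate, since $\closure_\depset(X_1)$ and $\closure_\depset(X_2)$ are computable in polynomial time and $R,\depset$ are fixed. For injectivity I would use that $X_1\setminus X_2\neq\emptyset$ and $X_2\setminus X_1\neq\emptyset$ — both hold because $X_1\to Y_1$ and $X_2\to Y_2$ are distinct set-minimal lhs, so neither of $X_1,X_2$ contains the other — which lets one recover $a$ and $b$ from $\Pi(\tup t)$, and that $\widehat{X_2}\neq\emptyset$ (because $X_2\to Y_2$ is nontrivial and $\depset$ has no trivial FDs), which lets one recover $c$.

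The core of the argument is consistency preservation, proved, as in Lemma~\ref{lemma:disjoint-reduction}, by a case analysis on which of $a=a'$ and $b=b'$ holds for a pair $\tup t=(a,b,c)$, $\tup t'=(a',b',c')$. If $\set{\tup t,\tup t'}$ violates $\sabc$, then either $a=a'$ and $b\neq b'$, in which case $\Pi(\tup t),\Pi(\tup t')$ agree on $X_1$ but disagree on some attribute of $Y_1\subseteq\widehat{X_1}$ (mapped to $b$), violating $X_1\to Y_1$; or $b=b'$ and $c\neq c'$, in which case they agree on $X_2$ but disagree on some attribute of $Y_2\subseteq\widehat{X_2}$ (mapped to $c$ or $\langle b,c\rangle$), violating $X_2\to Y_2$. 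For the converse, assuming $\set{\tup t,\tup t'}$ is consistent one determines the exact set of attributes on which $\Pi(\tup t)$ and $\Pi(\tup t')$ agree and shows that no FD of $\depset$ is violated. The main obstacle — and the part that will need genuine care — is this last point for FDs $Z\to W$ of $\depset$ \emph{other than} $X_1\to Y_1$ and $X_2\to Y_2$: one must argue that if the two images agree on $Z$ and the source pair is consistent then they also agree on $W$. I expect this to go through by combining $W\subseteq\closure_\depset(Z)$ with transitivity of the closure (so that when $Z$ lands inside $\closure_\depset(X_1)$, respectively $\closure_\depset(X_2)$, so does $W$, and $W$ then falls in a region forced to agree under the assumed consistency), using the bullet hypotheses to eliminate the residual ``mixed'' placements of $Z$, and verifying that the $\langle a,b\rangle$-``otherwise'' region never produces a spurious violation — exactly the analysis carried out for the analogous step of Lemma~\ref{lemma:disjoint-reduction}, now done uniformly across bullet~1 (class~2, where $\widehat{X_1}\cap\widehat{X_2}$ is nonempty and carries the intermediate value) and bullet~2 (class~3, where part of $X_2$ sits inside $\widehat{X_1}$).
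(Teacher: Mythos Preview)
Your proposed $\Pi$ is not a fact-wise reduction: it fails to preserve consistency once $\depset$ contains FDs beyond $X_1\to Y_1$ and $X_2\to Y_2$ whose lhs falls inside $\widehat{X_1}$ or $\widehat{X_2}$. Take $\depset=\{A\to CE,\,B\to CD,\,D\to B\}$ over $R(A,B,C,D,E)$ with $X_1=\{A\}$, $X_2=\{B\}$; then $\widehat{X_1}=\{C,E\}$, $\widehat{X_2}=\{C,D\}$, and bullet~1 holds. Your map sends $D\in\widehat{X_2}\setminus\widehat{X_1}$ to the bare value $c$, so the $\sabc$-consistent source pair $(0,0,0),(1,1,0)$ is sent to images that agree on $D$ (both $0$) yet disagree on $B$, violating $D\to B$. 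A symmetric failure hits $\widehat{X_1}\setminus\widehat{X_2}$, which you map to $b$: with $\depset=\{A\to CE,\,B\to CD,\,E\to A\}$ (again bullet~1) the consistent pair $(0,0,0),(1,0,0)$ yields images agreeing on $E$ but not on $A$, violating $E\to A$. The step you flag as needing ``genuine care'' therefore does not go through: $Z\subseteq\closure_\depset(X_2)$ indeed forces $W\subseteq\closure_\depset(X_2)$, but your images do \emph{not} agree on all of $\closure_\depset(X_2)$ in the case $a\neq a'$, $b\neq b'$, $c=c'$ (they disagree on $X_2\setminus X_1$), nor is the agreement set in the case $b=b'$, $c=c'$ closed under $\depset$.

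The paper's construction fixes exactly this by packing more information into the closure regions: it maps all of $\widehat{X_2}$ to $\langle b,c\rangle$ (not just $c$), maps $\widehat{X_1}\setminus\closure_\depset(X_2)$ to $\langle a,c\rangle$ (not $b$), and assigns $a$ (not $\langle a,b\rangle$) to the ``otherwise'' region. With that choice, in the consistent case $a\neq a'$, $b=b'$, $c=c'$ the images agree \emph{exactly} on $\closure_\depset(X_2)$, which is closed under every FD of $\depset$; and when $a\neq a'$, $b\neq b'$ they agree only on $X_1\cap X_2$, which cannot contain any FD's lhs by local minimality. That is the idea your encoding is missing: the set on which the two images agree must itself be $\depset$-closed (or too small to host an lhs), and assigning bare $b$ and $c$ to the closure regions destroys this invariant.
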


\begin{proof}
We define a fact-wise reduction $\Pi:(R(A,B,C),\sabc) \rightarrow (R,\depset)$, using $X_1\rightarrow Y_1$ and $X_2\rightarrow Y_2$ and the constant $\odot \in \consts$.
Let $\tup t = (a,b,c)$ be a tuple over $R(A,B,C)$ and let $\set{A_1,\dots,A_n}$ be the set of attributes in $R$.
We define $\Pi $ as follows:
\[
\Pi (\tup t).A_k \eqdef
\begin{cases}
\odot & \mbox{$A_k\in X_1\cap X_2$} \\ 
 a& \mbox{$A_k\in X_1 \setminus X_2$}\\ 
 b& \mbox{$A_k\in X_2 \setminus X_1$} \\
 \langle a,c\rangle& \mbox{$A_k\in \closure_{\depset}(X_1)\setminus X_1\setminus \closure_{\depset}(X_2)$} \\
 \langle b,c\rangle& \mbox{$A_k\in \closure_{\depset}(X_2)\setminus X_2$} \\
 a& \mbox{otherwise}
\end{cases}
\]
It is left to show that $\Pi$ is a fact-wise reduction.
To do so, we prove that $\Pi$ is well defined, injective and preserves consistency and inconsistency.

\partitle{$\mathbf{\Pi}$ is well defined}
This is straightforward from the definition and the fact that $(\closure_{\depset}(X_2) \setminus X_2)\cap X_1 = \emptyset$ in both cases.

\partitle{$\mathbf{\Pi}$ is injective}
Let $\tup t,\tup t'$ be two tuples, such that $\tup t=(a,b,c)$ and $\tup t' = (a',b',c')$.
Assume that $\Pi (\tup t) = \Pi (\tup t')$.  Let us denote
$\Pi (\tup t)=(x_1,\dots, x_n)$ and $\Pi (\tup t')=(x'_1,\dots, x'_n)$.
Note that $X_1 \setminus X_2$ and $X_2 \setminus X_1$ are not empty since $X_1\neq X_2$. Moreover, since both FDs are minimal, $X_1\not\subset X_2$ and $X_2\not\subset X_1$.
Therefore, there are $l$ and $p$ such that $\Pi (\tup t).A_l= a$, $\Pi (\tup t).A_p = b$. Furthermore, since $X_2\rightarrow Y_2$ is not trivial, there is at least one $m$ such that $\Pi (\tup t).A_m=\langle b,c\rangle$.
Hence, $\Pi (\tup t) = \Pi (\tup t')$ implies that $\Pi (\tup t).A_l = \Pi (\tup t').A_l$, $\Pi (\tup t).A_p = \Pi (\tup t').A_p$ and $\Pi (\tup t).A_m = \Pi (\tup t').A_m$. We obtain that
$a=a'$, $b=b'$ and $c=c'$, which implies $\tup t=\tup t'$.

\partitle{$\mathbf{\Pi}$ preserves consistency}
Let $\tup t=(a,b,c)$ and $\tup t'=(a',b',c')$ be two distinct tuples.
We contend that  the set $\{t,t'\}$ is consistent w.r.t. $\sabc$ if and only if the set $\{\Pi(\tup t),\Pi(\tup t')\}$ is consistent w.r.t. $\depset$.
\paragraph*{The ``if'' direction}
Assume that $\set{\tup t,\tup t'}$ is consistent w.r.t $\sabc$. We prove that  $\{\Pi(\tup t),\Pi(\tup t')\}$ is consistent w.r.t $\depset$. First, note that each FD that contains an attribute $A_k\not\in (\closure_{\depset}(X_1)\cup \closure_{\depset}(X_2))$ on its left-hand side is satisfied by $\{\Pi(\tup t),\Pi(\tup t')\}$, since $\tup t$ and $\tup t'$ cannot agree on $A$ (otherwise, the FDs $A\rightarrow B$ and $B\rightarrow C$ imply that $\tup t=\tup t'$). Thus, from now on we will only consider FDs that do not contain an attribute $A_k\not\in (\closure_{\depset}(X_1)\cup \closure_{\depset}(X_2))$ on their left-hand side. One of the following holds:
\begin{itemize}
\item $a\neq a'$, $b= b'$ and $c= c'$. In this case, $\Pi(\tup t)$ and $\Pi(\tup t')$ only agree on the attributes $A_k$ such that $A_k\in X_1\cap X_2$ or $A_k\in X_2\setminus X_1$ or $A_k\in \closure_{\depset}(X_2)\setminus X_2$. That is, they only agree on the attributes $A_k$ such that $A_k\in \closure_{\depset}(X_2)$. Thus, each FD that contains an attribute $A_k\not\in \closure_{\depset}(X_2)$ on its left-hand side is satisfied. Moreover, any FD that contains only attributes $A_k\in \closure_{\depset}(X_2)$ on its left-hand side, also contains only attributes $A_k\in \closure_{\depset}(X_2)$ on its right-hand side (by definition of a closure), thus $\Pi(\tup t)$ and $\Pi(\tup t')$ agree on both the left-hand side and the right-hand side of such FDs and $\{\Pi(\tup t),\Pi(\tup t')\}$ satisfies all the FDs in $\depset$.
\item $a\neq a'$, $b\neq b'$. In this case, $\Pi(\tup t)$ and $\Pi(\tup t')$ only agree on the attributes $A_k$ such that $A_k\in X_1\cap X_2$. Since $X_1\rightarrow Y_1$ and $X_2\rightarrow Y_2$ are minimal, there is no FD in $\depset$ that contains only attributes $A_k$ such that $A_k\in X_1\cap X_2$ on its left-hand side. Thus, $\Pi(\tup t)$ and $\Pi(\tup t')$ do not agree on the left-hand side of any FD in $\depset$ and $\{\Pi(\tup t),\Pi(\tup t')\}$ is consistent w.r.t. $\depset$.
\end{itemize}
This concludes our proof of the ``if'' direction.

\paragraph*{The ``only if'' direction}
Assume $\set{\tup t,\tup t'}$ is inconsistent w.r.t. $\sabc$. We prove that $\{\Pi(\tup t),\Pi(\tup t')\}$ is inconsistent w.r.t. $\depset$.
Since $\set{\tup t,\tup t'}$ is inconsistent w.r.t. $\sabc$, one of the following holds:
\begin{itemize}
\item $a=a'$ and $b\neq b'$. For the first case of this lemma, since $(\closure_{\depset}(X_1)\setminus X_1)\cap (\closure_{\depset}(X_2)\setminus X_2) \neq \emptyset$, at least one attribute $A_k\in (\closure_{\depset}(X_1)\setminus X_1)$ also belongs to $\closure_{\depset}(X_2)\setminus X_2$ and it holds that $\Pi(\tup t).A_k=\langle b,c\rangle$. For the second case of this lemma, since $(\closure_{\depset}(X_1)\setminus X_1)\cap X_2 \neq \emptyset$, at least one attribute $A_k\in (\closure_{\depset}(X_1)\setminus X_1)$ also belongs to $X_2$ and it holds that $\Pi(\tup t).A_k=b$. Moreover, by definition of a closure, the FD $X_1\rightarrow A_k$ is implied by $\depset$. In both cases, the tuples $\Pi(\tup t)$ and $\Pi(\tup t')$ agree on the attributes on the left-hand side of the FD $X_1\rightarrow A_k$, but do not agree on the right-hand side of this FD. If two tuples do not satisfy an FD that is implied by a set $\depset$ of FDs, they also do not satisfy $\depset$, thus $\{\Pi(\tup t),\Pi(\tup t')\}$ is inconsistent w.r.t. $\depset$.
\item $a=a'$, $b=b'$ or $c\neq c'$. For the first case of this lemma, as mentioned above, there is an attribute $A_k\in (\closure_{\depset}(X_1)\setminus X_1)$ such that $\Pi(\tup t).A_k=\langle b,c\rangle$. Moreover, by definition of a closure, the FD $X_1\rightarrow A_k$ is implied by $\depset$. The tuples $\Pi(\tup t)$ and $\Pi(\tup t')$ agree on the attributes on the left-hand side of the FD $X_1\rightarrow A_k$, but do not agree on the right-hand side of this FD. For the second case of this lemma, since $(\closure_{\depset}(X_2)\setminus X_2)\cap X_1 = \emptyset$ and since the FD $X_2\rightarrow Y_2$ is not trivial, there is at least one attribute $A_k\in (\closure_{\depset}(X_2)\setminus X_2)$ such that $\Pi(\tup t).A_k=\langle b,c \rangle$. Furthermore, the FDs $X_2\rightarrow A_k$ is implied by $\depset$. The tuples $\Pi(\tup t)$ and $\Pi(\tup t')$ again agree on the attributes on the left-hand side of the FD $X_2\rightarrow A_k$, but do not agree on the right-hand side of this FD. If two tuples do not satisfy an FD that is implied by a set $\depset$ of FDs, they also do not satisfy $\depset$, thus $\{\Pi(\tup t),\Pi(\tup t')\}$ is inconsistent w.r.t. $\depset$.
\item $a\neq a'$, $b=b'$ and $c\neq c'$. In this case, $\Pi(\tup t)$ and $\Pi(\tup t')$ agree on the attributes on the left-hand side of the FD $X_2\rightarrow Y_2$, but do not agree on the right-hand side of this FD (since the FD is not trivial and contains at least one attribute $A_k$ such that $\Pi(\tup t).A_k=\langle b,c\rangle$ on its right-hand side). Therefore, $\{\Pi(\tup t),\Pi(\tup t')\}$ is inconsistent w.r.t. $\depset$.
\end{itemize}
This concludes our proof of the ``only if'' direction.
\end{proof}

\begin{lemma}\label{lemma:triangle-reduction}
Let $R$ be a schema and let $\depset$ be an FD set over $R$ that does not contain trivial FDs. Suppose that $\depset$ contains three distinct local minima $X_1\rightarrow Y_1$, $X_2\rightarrow Y_2$ and $X_3\rightarrow Y_3$.
Then, there is a fact-wise reduction from $(R(A,B,C),\str)$ to $(R,\depset)$.
\end{lemma}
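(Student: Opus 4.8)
The plan is to construct a fact-wise reduction $\Pi$ from $(R(A,B,C),\str)$ to $(R,\depset)$; by Lemma~\ref{lemma:fact-wise-apx} this is exactly what is needed (and, combined with Lemma~\ref{lemma:abc-acb-bca-hard}, yields APX-hardness for $\depset$). The idea is to let the three local minima $X_1\to Y_1$, $X_2\to Y_2$, $X_3\to Y_3$ play the roles of the three left-hand sides $\{A,B\}$, $\{A,C\}$, $\{B,C\}$ of $\str$. For a tuple $\tup t=(a,b,c)$ over $R(A,B,C)$, set $S(X_1)\eqdef\{a,b\}$, $S(X_2)\eqdef\{a,c\}$, $S(X_3)\eqdef\{b,c\}$; each image attribute will ``record'' a subset of $\{a,b,c\}$, and two images will agree on an attribute iff $\tup t,\tup t'$ agree on the recorded coordinates.

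First I would collect elementary facts about three pairwise distinct local minima. They are pairwise incomparable ($X_i\not\subseteq X_j$ for $i\neq j$): if $X_i\subseteq X_j$ then, since $X_i\to Y_i\in\depset$ and $X_j$ is set-minimal, $X_i=X_j$. Hence each intersection $X_i\cap X_j$ and $X_1\cap X_2\cap X_3$ is a \emph{strict} subset of each $X_\ell$ it lies in. Since $\depset$ has no trivial FDs, each $X_i$ is nonempty and $\depset$ has no FD with empty left-hand side (a consensus FD $\emptyset\to W$ would be a local minimum strictly below each $X_i$), and each $X_i\to Y_i$ is nontrivial, so $\closure_{\depset}(X_i)\setminus X_i\neq\emptyset$. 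Finally, a short argument using incomparability shows that for every coordinate $x\in S(X_i)$ the set $X_i$ contains an attribute whose recorded set (defined below) contains $x$ (e.g.\ if no attribute of $X_1$ records $a$ then $X_1\cap X_2\subseteq X_3$ and $X_1\subseteq X_2\cup X_3$, forcing $X_1\subseteq X_3$).

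Now I would define $\Pi$ attribute by attribute. For an attribute $A_k$ of $R$ and $\tup t=(a,b,c)$,
\[
\Pi(\tup t).A_k \eqdef
\begin{cases}
\odot & A_k\in X_1\cap X_2\cap X_3\\
a & A_k\in (X_1\cap X_2)\setminus X_3\\
b & A_k\in (X_1\cap X_3)\setminus X_2\\
c & A_k\in (X_2\cap X_3)\setminus X_1\\
\langle a,b\rangle & A_k\in X_1\setminus(X_2\cup X_3)\\
\langle a,c\rangle & A_k\in X_2\setminus(X_1\cup X_3)\\
\langle b,c\rangle & A_k\in X_3\setminus(X_1\cup X_2)\\
\langle a,b,c\rangle & A_k\notin X_1\cup X_2\cup X_3
\end{cases}
\]
where $\odot\in\consts$ is fixed. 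Thus $A_k$ records the set $\bigcap\{S(X_i):A_k\in X_i\}$ when $A_k$ lies in some $X_i$, and $\{a,b,c\}$ otherwise. Two key consequences of the case table (to be checked by inspection): (i) if $A_k\in X_i$ then its recorded set is contained in $S(X_i)$, so $\Pi(\tup t),\Pi(\tup t')$ agree on all of $X_i$ iff $\tup t,\tup t'$ agree on $S(X_i)$; and (ii) if $A_k\notin X_i$ then its recorded set contains the unique coordinate outside $S(X_i)$ (for $X_1$ this is $c$, for $X_2$ it is $b$, for $X_3$ it is $a$), since every case with a value not involving that coordinate requires $A_k\in X_i$.

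It remains to verify the three properties of a fact-wise reduction. Polynomial-time computability is clear (only the closures $\closure_{\depset}(X_i)$ are needed). Injectivity: by the last fact of the second paragraph one recovers $a$ from an attribute recording it (in $(X_1\cap X_2)\setminus X_3$ or $X_1\setminus(X_2\cup X_3)$), and symmetrically $b$ and $c$. For preservation of (in)consistency it suffices, since a table satisfies a set of FDs iff every pair of its tuples does, to show for distinct $\tup t,\tup t'$ that $\{\Pi(\tup t),\Pi(\tup t')\}$ violates $\depset$ iff $\{\tup t,\tup t'\}$ violates $\str$, i.e.\ iff $\tup t,\tup t'$ agree on exactly two coordinates. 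If they agree on, say, $a,b$ but not $c$, then by (i) the images agree on all of $X_1$, and by (ii) every attribute of the nonempty set $\closure_{\depset}(X_1)\setminus X_1$ has a recorded set containing $c$, so the images disagree on the right-hand side of the entailed FD $X_1\to A_k$ for some such $A_k$; the other two ``exactly two'' cases are symmetric. Conversely, if $\tup t,\tup t'$ agree on at most one coordinate, I would show $\Pi(\tup t),\Pi(\tup t')$ disagree on the left-hand side of \emph{every} FD $Z\to W$ of $\depset$: pick a local minimum $X^*\subseteq Z$; if $X^*\in\{X_1,X_2,X_3\}$, agreeing on $X^*$ would by (i) force agreement on the two coordinates of $S(X^*)$, which is impossible.

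The step I expect to be the main obstacle is the last case — $X^*$ a local minimum other than $X_1,X_2,X_3$ — since $\depset$ may contain such FDs. The plan there is: if the images agreed on $X^*$, then the union of the recorded sets over $A_k\in X^*$ would fit inside the (size $\le 1$) set of agreed coordinates, so by the case table every attribute of $X^*$ would have a recorded set of size $\le 1$, hence would lie in at least two of $X_1,X_2,X_3$; a further size-$1$ count forces all of them into a single one of $X_1\cap X_2$, $X_1\cap X_3$, $X_2\cap X_3$ (together with $X_1\cap X_2\cap X_3$), which puts $X^*$ \emph{strictly} inside one of $X_1,X_2,X_3$ and contradicts that $X_i$'s minimality (as $X^*\to Y^*\in\depset$). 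Dually one checks that the ``otherwise'' attributes never create a spurious inconsistency: any FD whose left-hand side uses such an attribute has a recorded set of size $3$ on its left-hand side, so its images already disagree whenever two coordinates differ, matching $\str$. Assembling these cases shows $\Pi$ preserves consistency and inconsistency, completing the fact-wise reduction.
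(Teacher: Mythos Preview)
Your proposal is correct and is essentially the paper's proof: the tuple mapping $\Pi$ you define is identical to the paper's (same eight-way case split over the Venn regions of $X_1,X_2,X_3$), and your verification of injectivity and of consistency preservation via the ``recorded set'' abstraction matches the paper's direct case analysis.

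One remark on organization: what you flag as ``the main obstacle'' --- the case of a local minimum $X^*\notin\{X_1,X_2,X_3\}$ --- is not singled out in the paper, because the paper argues slightly more directly. When $\tup t,\tup t'$ agree on at most one coordinate, the set of attributes on which the images can agree is contained in $X_i\cap X_j$ for some $i\neq j$, which is a \emph{strict} subset of $X_i$ by incomparability; hence \emph{no} FD of $\depset$ (local minimum or not) can have its entire lhs inside the agreement set, since that would contradict the minimality of $X_i$. Your detour through ``pick a local minimum $X^*\subseteq Z$'' reaches the same conclusion but adds an unnecessary case split. Also, the definition of $\Pi$ uses only $X_1,X_2,X_3$, not the closures; closures enter only to witness a nontrivial rhs attribute in the inconsistency direction, and there the paper uses $Y_i$ directly rather than $\closure_\depset(X_i)\setminus X_i$.
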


\begin{proof}
We define a fact-wise reduction $\Pi:(R(A,B,C),\str) \rightarrow (R,\depset)$, using $X_1\rightarrow Y_1$, $X_2\rightarrow Y_2$ and $X_3\rightarrow Y_3$ and the constant $\odot \in \consts$.
Let $\tup t=(a,b,c)$ be a tuple over $R(A,B,C)$ and let $\set{A_1,\dots,A_n}$ be the set of attributes in $R$.
We define $\Pi $ as follows:
\[
\Pi (\tup t).A_k \eqdef
\begin{cases}
 \odot& \mbox{$A_k\in X_1\cap X_2\cap X_3$}\\
 a& \mbox{$A_k\in (X_1\cap X_2)\setminus X_3$}\\
 b& \mbox{$A_k\in (X_1\cap X_3)\setminus X_2$}\\ 
 c& \mbox{$A_k\in (X_2\cap X_3)\setminus X_1$}\\ 
 \langle a,b\rangle& \mbox{$A_k\in X_1\setminus X_2\setminus X_3$} \\
 \langle a,c\rangle& \mbox{$A_k\in X_2\setminus X_1\setminus X_3$} \\
 \langle b,c\rangle& \mbox{$A_k\in X_3\setminus X_1\setminus X_2$} \\
 \langle a,b,c\rangle& \mbox{otherwise}
\end{cases}
\]
It is left to show that $\Pi$ is a fact-wise reduction.
To do so, we prove that $\Pi$ is well defined, injective and preserves consistency and inconsistency.

\partitle{$\mathbf{\Pi}$ is well defined}
This is straightforward from the definition.

\partitle{$\mathbf{\Pi}$ is injective}
Let $\tup t,\tup t'$ be two tuples, such that $\tup t=(a,b,c)$ and $\tup t'=(a',b',c')$.
Assume that $\Pi (\tup t) = \Pi (\tup t')$.  Let us denote
$\Pi (\tup t)=(x_1,\dots, x_n)$ and $\Pi (\tup t')=(x'_1,\dots, x'_n)$.
Note that $X_1$ contains at least one attribute that does not belong to $X_3$ (otherwise, it holds that $X_1\subseteq X_3$, which is a contradiction to the fact that $X_3$ is minimal). Thus, there exists an attribute $A_l$ such that either $\Pi (\tup t).A_l=a$ or $\Pi (\tup t).A_l=\langle a,b\rangle$. Similarly, $X_3$ contains at least one attribute that does not belong to $X_2$. Thus, there exists an attribute $A_p$ such that either $\Pi (\tup t).A_p=b$ or $\Pi (\tup t).A_p=\langle b,c\rangle$. Finally, $X_2$ contains at least one attribute that does not belong to $X_1$. Thus, there exists an attribute $A_r$ such that either $\Pi (\tup t).A_r=c$ or $\Pi (\tup t).A_r=\langle a,c\rangle$.
Hence, $\Pi (\tup t) = \Pi (\tup t')$ implies that $\Pi (\tup t).A_l = \Pi (\tup t').A_l$, $\Pi (\tup t).A_p = \Pi (\tup t').A_p$ and $\Pi (\tup t).A_r = \Pi (\tup t').A_r$. We obtain that
$a=a'$, $b=b'$ and $c=c'$, which implies $\tup t=\tup t'$.

\partitle{$\mathbf{\Pi}$ preserves consistency}
Let $\tup t=(a,b,c)$ and $\tup t'=(a',b',c')$ be two distinct tuples.
We contend that  the set $\{t,t'\}$ is consistent w.r.t. $\str$ if and only if the set $\{\Pi(\tup t),\Pi(\tup t')\}$ is consistent w.r.t. $\depset$.
\paragraph*{The ``if'' direction}
Assume that $\{t,t'\}$ is consistent w.r.t $\str$. We prove that  $\{\Pi(\tup t),\Pi(\tup t')\}$ is consistent w.r.t $\depset$. Note that $\tup t$ and $\tup t'$ cannot agree on more than one attribute (otherwise, they will violate at least one FD in $\str$). Thus, $\Pi(\tup t)$ and $\Pi(\tup t')$ may only agree on attributes that appear in $X_1\cap X_2\cap X_3$ and in one of $(X_1\cap X_2)\setminus X_3$, $(X_1\cap X_3)\setminus X_2$ or $(X_2\cap X_3)\setminus X_1$. As mentioned above, $X_1$ contains at least one attribute that does not belong to $X_3$, thus no FD in $\depset$ contains only attributes from $X_1\cap X_2\cap X_3$ and $(X_1\cap X_3)\setminus X_2$ on its left-hand side (otherwise, $X_1$ will not be minimal). Similarly, no FD in $\depset$ contains only attributes from $X_1\cap X_2\cap X_3$ and $(X_2\cap X_3)\setminus X_1$ on its left-hand side and no FD in $\depset$ contains only attributes from $X_1\cap X_2\cap X_3$ and $(X_1\cap X_2)\setminus X_3$ on its left-hand side. Therefore $\Pi(\tup t)$ and $\Pi(\tup t')$ do not agree on the left-hand side of any FD in $\depset$, and $\{\Pi(\tup t),\Pi(\tup t')\}$ is consistent w.r.t. $\depset$.

\paragraph*{The ``only if'' direction}
Assume $\set{\tup t,\tup t'}$ is inconsistent w.r.t. $\str$. We prove that $\{\Pi(\tup t),\Pi(\tup t')\}$ is inconsistent w.r.t. $\depset$.
Since $\set{\tup t,\tup t'}$ is inconsistent w.r.t. $\str$, $\tup t$ and $\tup t'$ agree on two attributes, but do not agree on the third one. Thus, one of the following holds:
\begin{itemize}
\item $a=a'$, $b=b'$ and $c\neq c'$. In this case, $\Pi(\tup t)$ and $\Pi(\tup t')$ agree on all of the attributes that appear on the left-hand side of $X_1\rightarrow Y_1$. Since this FD is not trivial, it must contain on its right-hand side an attribute $A_k$ such that $A_k\not\in X_1$. That is, there is at least one attribute $A_k$ that appears on the right-hand side of $X_1\rightarrow Y_1$ such that one of the following holds: \e{(a)} $\Pi(\tup t).A_k=c, $\e{(b)} $\Pi(\tup t).A_k=\langle a,c\rangle$, \e{(c)} $\Pi(\tup t).A_k=\langle b,c\rangle$ or \e{(d)} $\Pi(\tup t).A_k=\langle a,b,c\rangle$. Hence, $\Pi(\tup t)$ and $\Pi(\tup t')$ do not satisfy the FD $X_1\rightarrow Y_1$ and $\{\Pi(\tup t),\Pi(\tup t')\}$ is inconsistent w.r.t. $\depset$.
\item $a=a'$, $b\neq b'$ and $c= c'$. This case is symmetric to the first one.  $\Pi(\tup t)$ and $\Pi(\tup t')$ agree on all of the attributes that appear on the left-hand side of $X_2\rightarrow Y_2$, but do not agree on at least one attribute that appears on the right-hand side of the FD.
\item $a\neq a'$, $b=b'$ and $c= c'$. This case is also symmetric to the first one. Here, $\Pi(\tup t)$ and $\Pi(\tup t')$ agree on the left-hand side, but not on the right-hand side of the FD $X_3\rightarrow Y_3$.
\end{itemize}
\end{proof}

\begin{lemma}\label{lemma:last-reduction}
Let $R$ be a schema and let $\depset$ be an FD set over $R$ that does not contain trivial FDs. Suppose that $\depset$ contains two distinct local minima $X_1\rightarrow Y_1$ and $X_2\rightarrow Y_2$, and the following hold.
\begin{itemize}
\item $(\closure_{\depset}(X_1)\setminus X_1)\cap X_2\neq\emptyset$ and $(\closure_{\depset}(X_2)\setminus X_2)\cap X_1\neq\emptyset$,
\item $(X_2\setminus X_1)\not\subseteq (\closure_{\depset}(X_1)\setminus X_1)$.
\end{itemize}
Then, there is a fact-wise reduction from $(R(A,B,C),\stk)$ to $(R,\depset)$.
\end{lemma}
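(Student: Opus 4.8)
The plan is to produce an explicit fact-wise reduction $\Pi$ from $(R(A,B,C),\stk)$ to $(R,\depset)$; together with Lemma~\ref{lemma:fact-wise-apx} and the APX-completeness of $\stk$ (Lemma~\ref{lemma:abc-cb-hard}) this supplies what the hardness side of Theorem~\ref{thm:dichotomy} needs for this class. Recall that $\stk$ consists of the FDs $AB\ra C$ and $C\ra B$; the intended correspondence is $X_2\ra Y_2\leftrightarrow AB\ra C$ and $X_1\ra Y_1\leftrightarrow C\ra B$. The three hypotheses single out three attributes of $R$: from $(\closure_{\depset}(X_2)\setminus X_2)\cap X_1\neq\emptyset$ a ``$C$-attribute'' $\gamma\in(X_1\setminus X_2)\cap\closure_{\depset}(X_2)$; from $(\closure_{\depset}(X_1)\setminus X_1)\cap X_2\neq\emptyset$ a ``$B$-attribute'' $\beta\in(X_2\setminus X_1)\cap\closure_{\depset}(X_1)$; and from $(X_2\setminus X_1)\not\subseteq\closure_{\depset}(X_1)$ an ``$A$-attribute'' $\alpha\in(X_2\setminus X_1)\setminus\closure_{\depset}(X_1)$. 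Since $X_1\neq X_2$ are both local minima, neither contains the other, so $X_1\setminus X_2$ and $X_2\setminus X_1$ are nonempty; moreover $\depset$ has no nontrivial consensus FD (otherwise $\emptyset$ would be an lhs, contradicting minimality of $X_1$), so every lhs of $\depset$ is nonempty.

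For a tuple $\tup t=(a,b,c)$ over $R(A,B,C)$ and an attribute $A_k$ of $R$, I would set
\[
\Pi(\tup t).A_k \eqdef \begin{cases}
\odot & \mbox{$A_k\in X_1\cap X_2$}\\
c & \mbox{$A_k\in X_1\setminus X_2$}\\
b & \mbox{$A_k\in(X_2\setminus X_1)\cap\closure_{\depset}(X_1)$}\\
a & \mbox{$A_k\in(X_2\setminus X_1)\setminus\closure_{\depset}(X_1)$}\\
\langle b,c\rangle & \mbox{$A_k\in\closure_{\depset}(X_1)\setminus X_1\setminus X_2$}\\
\langle a,b,c\rangle & \mbox{otherwise}
\end{cases}
\]
where $\odot\in\consts$ is fixed. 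The six cases partition the attributes of $R$ (the last being $A_k\notin X_2\cup\closure_{\depset}(X_1)$), so $\Pi$ is well defined and polynomial-time computable; and since $\alpha$, $\beta$, $\gamma$ fall in the ``$a$'', ``$b$'', ``$c$'' rows respectively, $\Pi(\tup t)$ determines $a,b,c$, so $\Pi$ is injective.

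The heart of the argument is that $\Pi$ preserves consistency and inconsistency of two-tuple sets. I would first record the agreement pattern for $\tup t=(a,b,c)$, $\tup t'=(a',b',c')$: the tuples $\Pi(\tup t)$ and $\Pi(\tup t')$ agree on all of $X_1$ iff $c=c'$, on all of $X_2$ iff $a=a'$ and $b=b'$, on all of $\closure_{\depset}(X_1)$ iff $b=b'$ and $c=c'$, and on any attribute outside $X_2\cup\closure_{\depset}(X_1)$ only when $\tup t=\tup t'$ (each ``iff'' using that the relevant regions are nonempty). If $\set{\tup t,\tup t'}$ violates $\stk$ then either $a=a'$, $b=b'$, $c\neq c'$, and then $\Pi(\tup t),\Pi(\tup t')$ agree on $X_2$ but disagree on $\gamma\in\closure_{\depset}(X_2)$, violating the entailed FD $X_2\ra\gamma$; or $c=c'$, $b\neq b'$, and then they agree on $X_1$ but disagree on $\beta\in\closure_{\depset}(X_1)$, violating $X_1\ra\beta$. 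Conversely, assume $\tup t\neq\tup t'$ with $\set{\tup t,\tup t'}$ consistent w.r.t.\ $\stk$, and suppose some $Z\ra W$ in $\depset$ were violated, so $\Pi(\tup t),\Pi(\tup t')$ agree on $Z$. Splitting on which of $a=a'$, $b=b'$, $c=c'$ hold (the consistent distinct cases are exactly: $a=a'$, $b\neq b'$, $c\neq c'$; or $a\neq a'$, $b=b'$, $c=c'$; or $a\neq a'$, $b=b'$, $c\neq c'$; or $a\neq a'$, $b\neq b'$, $c\neq c'$), the set of attributes on which the two target tuples agree is contained respectively in $X_2$, in $\closure_{\depset}(X_1)$, in $X_2$, and in $X_1\cap X_2$. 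Since $Z$ is a nonempty lhs and $X_1,X_2$ are local minima: $Z\subseteq X_1\cap X_2\subsetneq X_1$ is impossible; $Z\subseteq X_2$ forces $Z=X_2$, but in the first and third cases $X_2$ is not contained in the agreement set (it misses the nonempty region containing $\beta$, resp.\ $\alpha$); and $Z\subseteq\closure_{\depset}(X_1)$ forces $W\subseteq\closure_{\depset}(Z)\subseteq\closure_{\depset}(X_1)$, on all of which $\Pi(\tup t),\Pi(\tup t')$ agree. Each case is a contradiction, so $\set{\Pi(\tup t),\Pi(\tup t')}$ is consistent.

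The main obstacle is precisely this last step: $\depset$ may carry FDs other than $X_1\ra Y_1$ and $X_2\ra Y_2$, and the assignment must never let one of them fire in a consistent case. The leverage is local-minimality (no lhs lies properly below $X_1$ or $X_2$, so an agreement set sitting inside $X_2$ must be all of $X_2$) combined with the hypothesis $(X_2\setminus X_1)\not\subseteq\closure_{\depset}(X_1)$: keeping the ``$a$''-region $(X_2\setminus X_1)\setminus\closure_{\depset}(X_1)$ nonempty is exactly what blocks $X_2$ from being covered by the agreement set in the delicate case $a\neq a'$, $b=b'$, $c\neq c'$; dually, the hypothesis $(\closure_{\depset}(X_2)\setminus X_2)\cap X_1\neq\emptyset$ (which puts $\gamma$ into $\closure_{\depset}(X_2)$) is what lets $X_2\ra\gamma$ fire exactly when $\set{\tup t,\tup t'}$ violates $AB\ra C$. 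Carefully working through the four consistent and the two inconsistent sub-cases, together with the verification that the six regions partition the attributes, is the remaining routine work.
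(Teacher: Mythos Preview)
Your proposal is correct and follows essentially the same approach as the paper: the same six-way partition of the attributes of $R$, the same role assignment ($X_1\setminus X_2\leftrightarrow c$, $(X_2\setminus X_1)\cap\closure_\depset(X_1)\leftrightarrow b$, etc.), and the same case split for consistency preservation. The only cosmetic difference is that the paper puts $\langle a,b\rangle$ on the region $(X_2\setminus X_1)\setminus\closure_\depset(X_1)$ where you put just $a$; your simpler choice still works because in the one case it affects ($a=a'$, $b\neq b'$, $c\neq c'$) the agreement set remains a proper subset of $X_2$, so local minimality of $X_2$ still blocks every lhs.
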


\begin{proof}
We define a fact-wise reduction $\Pi:(R(A,B,C),\stk) \rightarrow(R,\depset)$, using $X_1\rightarrow Y_1$, $X_2\rightarrow Y_2$ and the constant $\odot \in \consts$.
Let $\tup t=(a,b,c)$ be a tuple over $R(A,B,C)$ and let $\set{A_1,\dots,A_n}$ be the set of attributes in $R$.
We define $\Pi $ as follows:
\[
\Pi (\tup t).A_k \eqdef
\begin{cases}
\odot & \mbox{$A_k\in X_1\cap X_2$} \\ 
 c& \mbox{$A_k\in X_1\setminus X_2$}\\ 
 b& \mbox{$A_k\in (X_2\setminus X_1) \cap (\closure_{\depset}(X_1)\setminus X_1)$}\\ 
 \langle a,b \rangle& \mbox{$A_k\in (X_2\setminus X_1) \setminus (\closure_{\depset}(X_1)\setminus X_1)$}\\ 
 \langle b,c\rangle& \mbox{$A_k\in (\closure_{\depset}(X_1)\setminus X_1)\setminus (X_2\setminus X_1)$}\\
 \langle a,b,c\rangle& \mbox{otherwise}
\end{cases}
\]
It is left to show that $\Pi$ is a fact-wise reduction.
To do so, we prove that $\Pi$ is well defined, injective and preserves consistency and inconsistency.

\partitle{$\mathbf{\Pi}$ is well defined}
This is straightforward from the definition.

\partitle{$\mathbf{\Pi}$ is injective}
Let $\tup t,\tup t'$ be two tuples, such that $\tup t=(a,b,c)$ and $\tup t'=(a',b',c')$.
Assume that $\Pi (\tup t) = \Pi (\tup t')$.  Let us denote
$\Pi (\tup t)=(x_1,\dots, x_n)$ and $\Pi (\tup t')=(x'_1,\dots, x'_n)$.
Since the FD $X_2\rightarrow Y_2$ is a local minimum, it holds that $X_1\not\subseteq X_2$. Thus, there is an attribute that appears in $X_1$, but does not appear in $X_2$. Moreover, it holds that $(X_2\setminus X_1)\not\subseteq (\closure_{\depset}(X_1)\setminus X_1)$, thus $X_2\setminus X_1$ contains at least one attribute that does not appear in $\closure_{\depset}(X_1)\setminus X_1$.
Therefore, there are $l$ and $p$ such that $\Pi (\tup t).A_l = c$, $\Pi (\tup t).A_p = \langle a,b\rangle$. 
Hence, $\Pi (\tup t) = \Pi (\tup t')$ implies that $\Pi (\tup t).A_l = \Pi (\tup t').A_l$ and $\Pi (\tup t).A_p = \Pi (\tup t').A_p$. We obtain that
$a=a'$, $b=b'$ and $c=c'$, which implies $\tup t=\tup t'$.

\partitle{$\mathbf{\Pi}$ preserves consistency}
Let $\tup t=(a,b,c)$ and $\tup t'=(a',b',c')$ be two distinct tuples.
We contend that the set $\{t,t'\}$ is consistent w.r.t. $\stk$ if and only if the set $\{\Pi(\tup t),\Pi(\tup t')\}$ is consistent w.r.t. $\depset$.
\paragraph*{The ``if'' direction}
Assume that $\{t,t'\}$ is consistent w.r.t $\stk$. We prove that  $\{\Pi(\tup t),\Pi(\tup t')\}$ is consistent w.r.t $\depset$. One of the following holds:
\begin{itemize}
\item $b\neq b'$ and $c\neq c'$. In this case, $\Pi(\tup t)$ and $\Pi(\tup t')$ only agree on the attributes $A_k$ such that $A_k\in X_1\cap X_2$. Since $X_1\rightarrow Y_1$ and $X_2\rightarrow Y_2$ are local minima, there is no FD in $\depset$ that contains on its left-hand side only attributes $A_k$ such that $A_k\in X_1\cap X_2$. Thus, $\Pi(\tup t)$ and $\Pi(\tup t')$ do not agree on the left-hand side of any FD in $\depset$ and $\{\Pi(\tup t),\Pi(\tup t')\}$ is consistent w.r.t. $\depset$.
\item $a\neq a'$, $b= b'$ and $c=c'$. Note that in this case $\Pi(\tup t)$ and $\Pi(\tup t')$ agree on all of the attributes that belong to $\closure_{\depset}(X_1)$, and only on these attributes. Any FD in $\depset$ that contains only attributes from $\closure_{\depset}(X_1)$ on its left-hand side, also contains only attributes from $\closure_{\depset}(X_1)$ on its right-hand side (by definition of closure). Thus, $\Pi(\tup t)$ and $\Pi(\tup t')$ satisfy all the FDs in $\depset$.
\item $a\neq a'$, $b= b'$ and $c\neq c'$. In this case, $\Pi(\tup t)$ and $\Pi(\tup t')$ only agree on the attributes $A_k$ such that $A_k\in X_1\cap X_2$ or $A_k\in (X_2\setminus X_1) \cap (\closure_{\depset}(X_1)\setminus X_1)$. Since the FD $X_2\rightarrow Y_2$ is a local minimum, and since $X_2$ contains an attributes that does not belong to $\closure_{\depset}(X_1)$, no FD in $\depset$ contains on its left-hand side only attributes $A_k$ such that $A_k\in X_1\cap X_2$ or $A_k\in (X_2\setminus X_1) \cap (\closure_{\depset}(X_1)\setminus X_1)$. Thus, $\Pi(\tup t)$ and $\Pi(\tup t')$ do not agree on the left-hand side of any FD in $\depset$ and $\{\Pi(\tup t),\Pi(\tup t')\}$ is consistent w.r.t. $\depset$.
\end{itemize}
This concludes our proof of the ``if'' direction.

\paragraph*{The ``only if'' direction}
Assume $\set{\tup t,\tup t'}$ is inconsistent w.r.t. $\stk$. We prove that $\{\Pi(\tup t),\Pi(\tup t')\}$ is inconsistent w.r.t. $\depset$.
Since $\set{\tup t,\tup t'}$ is inconsistent w.r.t. $\stk$, one of the following holds:
\begin{itemize}
\item $a=a$, $b= b'$ and $c\neq c'$. In this case, $\Pi(\tup t)$ and $\Pi(\tup t')$ agree on all of the attributes that appear in $X_2$. Since $X_2\rightarrow Y_2$ is not trivial, there is an attribute $A_k$ in $Y_2$ that does not belong to $X_2$. That is, one of the following holds: \e{(a)} $\Pi(\tup t).A_k=c$, \e{(b)} $\Pi(\tup t).A_k=\langle b,c \rangle$ or \e{(c)} $\Pi(\tup t).A_k=\langle a,b,c \rangle$. Thus, $\{\Pi(\tup t),\Pi(\tup t')\}$ violates the FD $X_2\rightarrow Y_2$ and it is inconsistent w.r.t. $\depset$.
\item $b\neq b'$ and $c= c'$. In this case, $\Pi(\tup t)$ and $\Pi(\tup t')$ agree on all of the attributes that appear in $X_1$. Since $X_1\rightarrow Y_1$ is not trivial, there is an attribute $A_k$ in $Y_1$ that does not belong to $X_1$. That is, one of the following holds: \e{(a)} $\Pi(\tup t).A_k=b$, \e{(b)} $\Pi(\tup t).A_k=\langle a,b \rangle$, \e{(c)} $\Pi(\tup t).A_k=\langle b,c \rangle$ or \e{(d)} $\Pi(\tup t).A_k=\langle a,b,c \rangle$. Thus, $\{\Pi(\tup t),\Pi(\tup t')\}$ violates the FD $X_1\rightarrow Y_1$ and it is again inconsistent w.r.t. $\depset$.
\end{itemize}
This concludes our proof of the ``only if'' direction.
\end{proof}

Next, we show that whenever
$\algname{\OSR}$ simplifies an FD set $\depset$ into an FD set
$\depset'$, there is a fact-wise reduction from $(R,\depset')$ to
$(R,\depset)$, where $R$ is the underlying relation schema.
Note that for each one of the simplifications, we just remove a set of attributes from the schema and the FDs in $\depset$. Thus, we will show that there is a fact-wise reduction from $(R,\depset-X)$ to $(R,\depset)$, where $X$ is a set of attributes. 

\begin{lemma}\label{lemma:fact-wise-simplifications}
Let $R(A_1,\dots,A_m)$ be a relation schema and let $\depset$ be an FD set over $R$. Let $X\subseteq \set{A_1,\dots,A_m}$ be a set of attributes. Then, there is a fact-wise reduction from $(R,\depset-X)$ to $(R,\depset)$.
\end{lemma}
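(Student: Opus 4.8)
The plan is to exhibit the simplest possible fact-wise reduction: collapse every attribute of $X$ to one fixed constant and leave the remaining attributes untouched. Fix a constant $\odot\in\consts$. We regard $\depset-X$ as a set of FDs over the schema $R'$ obtained from $R(A_1,\dots,A_m)$ by deleting the attributes in $X$ (this is the natural schema for $\depset-X$, since by definition no attribute of $X$ occurs in it). For a tuple $\tup t$ over $R'$, I would define $\Pi(\tup t)$ to be the tuple over $R$ with $\Pi(\tup t).A_k = \tup t.A_k$ when $A_k\notin X$ and $\Pi(\tup t).A_k = \odot$ when $A_k\in X$. This $\Pi$ is clearly computable in linear time, and it is injective, because the restriction of $\Pi(\tup t)$ to the attributes outside $X$ is exactly $\tup t$; hence $\Pi(\tup t)=\Pi(\tup t')$ forces $\tup t=\tup t'$.

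It then remains to show that $\Pi$ preserves consistency and inconsistency, i.e., that $\Pi(S)\models\depset$ iff $S\models\depset-X$ for every table $S$ over $R'$. The engine of the argument is a single observation: for any two tuples $\tup t,\tup t'$ over $R'$ and any attribute set $W$ of $R$, the images $\Pi(\tup t)$ and $\Pi(\tup t')$ agree on $W$ if and only if $\tup t$ and $\tup t'$ agree on $W\setminus X$ — on attributes in $W\cap X$ both images carry $\odot$, and on attributes in $W\setminus X$ the map $\Pi$ is the identity. Recall also that, by the definition of $\depset-X$, the FDs of $\depset-X$ are exactly the FDs $(Z\setminus X)\ra(A\setminus X)$ for $(Z\ra A)\in\depset$ (trivial ones included, but they are harmless).

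Using the observation I would argue FD by FD. Fix an FD $Z\ra A$ of $\depset$ (we may assume $A$ is a single attribute) and a pair $\tup t,\tup t'$ of tuples of $S$. The images $\Pi(\tup t),\Pi(\tup t')$ violate $Z\ra A$ iff they agree on $Z$ and disagree on $A$; by the observation, agreement on $Z$ is equivalent to agreement of $\tup t,\tup t'$ on $Z\setminus X$, while disagreement on $A$ is equivalent to disagreement on some attribute of $A\setminus X$ (attributes of $A\cap X$ carry $\odot$ in both images and so never disagree). Hence the images violate $Z\ra A$ exactly when $\tup t,\tup t'$ violate $(Z\setminus X)\ra(A\setminus X)$ — and this stays correct in the degenerate case where the latter FD is trivial, since then agreement on $Z\setminus X$ already forces agreement on $A\setminus X$, so neither side can fail. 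Quantifying over all FDs of $\depset$ and all pairs of tuples of $S$ yields $\Pi(S)\models\depset \iff S\models\depset-X$, and together with injectivity and polynomial-time computability this makes $\Pi$ a fact-wise reduction.

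I do not expect a genuine obstacle: the statement is essentially a bookkeeping exercise once the reduction map is chosen correctly. The one point that needs care is the injectivity, which is precisely why the source schema is taken to be $R$ with the attributes of $X$ removed rather than $R$ itself — if one kept all of $R$ on the source side, the collapsing map would identify tuples differing only on $X$-coordinates and fail condition (a) of a fact-wise reduction. Everything else is an immediate consequence of the agreement observation above.
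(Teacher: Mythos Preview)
Your proof is essentially the paper's: the same map $\Pi$ that sets every $X$-attribute to a fixed constant and copies the rest, and the same two-direction consistency check via the correspondence between $Z\ra A$ in $\depset$ and $(Z\setminus X)\ra(A\setminus X)$ in $\depset-X$.

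The one genuine difference is the source schema. The paper keeps the full $R$ on both sides (as the lemma is stated) and argues injectivity by ``since $t\neq t'$, there exists $A_j\in\{A_1,\dots,A_m\}\setminus X$ with $t.A_j\neq t'.A_j$'' --- which, as you correctly flag, does not follow for tuples over the full $R$ that differ only on $X$-coordinates. Your move to the reduced schema $R'$ is a clean way to make injectivity go through, at the price of proving a slightly different statement than the one literally written. For the paper's downstream use (composing with the base-case reductions and applying Lemma~\ref{lemma:fact-wise-apx}) either formulation suffices, but your version is formally tidier on this point.
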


\begin{proof}
We define a fact-wise reduction $\Pi:(R,\depset-X) \rightarrow (R,\depset)$, using the constant $\odot \in \consts$.
Let $\tup t$ be a tuple over $R$.
We define $\Pi $ as follows:
\[
\Pi (\tup t).A_k \eqdef
\begin{cases}
\odot & \mbox{$A_k\in X$} \\ 
t.A_k& \mbox{otherwise}
\end{cases}
\]
It is left to show that $\Pi$ is a fact-wise reduction.
To do so, we prove that $\Pi$ is well defined, injective and preserves consistency and inconsistency.

\partitle{$\mathbf{\Pi}$ is well defined}
This is straightforward from the definition.

\partitle{$\mathbf{\Pi}$ is injective}
Let $\tup t,\tup t'$ be two distinct tuples over $R$. Since $t\neq t'$, there exists an attribute $A_j$ in $\set{A_1,\dots,A_k}\setminus X$, such that $\tup t.A_j\neq \tup t'.A_j$. Thus, it also holds that $\Pi(\tup t).A_j\neq \Pi(\tup t').A_j$ and $\Pi(\tup t)\neq \Pi(\tup t')$.

\partitle{$\mathbf{\Pi}$ preserves consistency}
Let $\tup t,\tup t'$ be two distinct tuples over $R$.
We contend that the set $\{\tup t,\tup t'\}$ is consistent w.r.t. $\depset-X$ if and only if the set $\{\Pi(\tup t),\Pi(\tup t')\}$ is consistent w.r.t. $\depset$.
\paragraph*{The ``if'' direction}
Assume that $\{t,t'\}$ is consistent w.r.t $\depset-X$. We prove that  $\{\Pi(\tup t),\Pi(\tup t')\}$ is consistent w.r.t $\depset$.  Let us assume, by way of contradiction, that $\{\Pi(\tup t),\Pi(\tup t')\}$ is inconsistent w.r.t $\depset$. That is, there exists an FD $Z\rightarrow W$ in $\depset$, such that $\Pi(\tup t)$ and $\Pi(\tup t')$ agree on all the attributes in $Z$, but do not agree on at least one attribute $B$ in $W$. Clearly, it holds that $B\not\in X$ (since $\Pi(\tup t).A_i=\Pi(\tup t').A_i=\odot$ for each attribute $A_i\in X$). Note that the FD $(Z\setminus X)\rightarrow (W\setminus X)$ belongs to $\depset-X$. Since $\Pi(\tup t).A_k=t.A_k$ and $\Pi(\tup t').A_k=t'.A_k$ for each attribute $A_k\not\in X$, the tuples $\tup t$ and $\tup t'$ also agree on all the attributes on the left-hand side of the FD $(Z\setminus X)\rightarrow (W\setminus X)$, but do not agree on the attribute $B$ that belongs to $W\setminus X$. Thus, $\tup t$ and $\tup t'$ violate an FD in $\depset$, which is a contradiction to the fact that the set $\{\tup t,\tup t'\}$ is consistent w.r.t. $\depset-X$.

\paragraph*{The ``only if'' direction}
Assume $\set{\tup t,\tup t'}$ is inconsistent w.r.t. $\depset-X$. We prove that $\{\Pi(\tup t),\Pi(\tup t')\}$ is inconsistent w.r.t. $\depset$.
Since $\set{\tup t,\tup t'}$ is inconsistent w.r.t. $\depset-X$, there exists an FD $Z\rightarrow W$ in $\depset-X$, such that $\tup t$ and $\tup t'$ agree on all the attributes on the left-hand side of the FD, but do not agree on at least one attribute $B$ on its right-hand side. The FD $(Z\cup X)\rightarrow (W\cup X')$ (for some $X'\subseteq X$) belongs to $\depset$, and since $\Pi(\tup t).A_k=t.A_k$ and $\Pi(\tup t').A_k=t'.A_k$ for each attribute $A_k\not\in X$, and $\Pi(\tup t).A_k=\Pi(\tup t').A_k=\odot$ for each attribute $A_k\in X$, it holds that $\Pi(\tup t)$ and $\Pi(\tup t')$ agree on all the attributes in $Z\cup X$, but do not agree on the attribute $B\in (W\cup X')$, thus $\{\Pi(\tup t),\Pi(\tup t')\}$ is inconsistent w.r.t. $\depset$.
\end{proof}

The following lemmas are straightforward based on Lemma~\ref{lemma:fact-wise-simplifications}.

\begin{lemma}\label{lemma:s1-hard}
Let $R$ be a relation schema and let $\depset$ be an FD set over $R$. If $\depset$ has a common lhs $A$, then there is a fact-wise reduction from $(R,\depset-A)$ to $(R,\depset)$.
\end{lemma}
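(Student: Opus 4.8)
The plan is to obtain the statement as the single special case $X=\{A\}$ of Lemma~\ref{lemma:fact-wise-simplifications}. Recall the notational convention fixed in Section~\ref{sec:subset-repairs}: $\depset-A$ abbreviates $\depset-\{A\}$. Hence the instance $(R,\depset-A)$ is literally the instance $(R,\depset-X)$ for the attribute set $X=\{A\}$, and Lemma~\ref{lemma:fact-wise-simplifications}, which imposes no condition on $X$ other than $X\subseteq\set{A_1,\dots,A_m}$, applies verbatim. It produces the tuple mapping $\Pi$ that replaces the $A$-coordinate of each input tuple by a fixed constant $\odot\in\consts$ and leaves every other coordinate unchanged; Lemma~\ref{lemma:fact-wise-simplifications} already establishes that this $\Pi$ is well defined, injective, and consistency-preserving in both directions, i.e.\ a fact-wise reduction from $(R,\depset-A)$ to $(R,\depset)$. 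That is exactly the claim, so the only work in the write-up is to unwind the abbreviation and cite the lemma. (A self-contained alternative would be to repeat the proof of Lemma~\ref{lemma:fact-wise-simplifications} with $X=\{A\}$ plugged in, but that is strictly redundant.)

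Two remarks are worth including for clarity. First, the hypothesis that $A$ is a common lhs of $\depset$ is not actually used in constructing $\Pi$; it merely records the situation in which $\algname{\OSR}$ applies the common-lhs simplification, which is why this lemma sits on the ``common lhs'' arrow of Figure~\ref{fig:hardness-srepair}. Second, in the overall hardness argument this reduction is chained with the fact-wise reductions produced by the remaining simplification steps and, at the bottom of the chain, with one of the fact-wise reductions of Lemmas~\ref{lemma:disjoint-reduction}, \ref{lemma:disjoint-left-reduction}, \ref{lemma:triangle-reduction} or \ref{lemma:last-reduction} from a hard schema over $R(A,B,C)$; since the composition of fact-wise reductions is again a fact-wise reduction (both injectivity and consistency-preservation are preserved under composition), Lemma~\ref{lemma:fact-wise-apx} then carries APX-hardness all the way up to $\depset$.

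I do not expect any real obstacle here: the statement is a direct instantiation of an already-proven lemma, and the only place where a small amount of care is genuinely needed --- verifying that $\Pi$ is a fact-wise reduction --- has been discharged inside Lemma~\ref{lemma:fact-wise-simplifications} itself. The ``hard part'', such as it is, amounts to nothing more than matching the notation $\depset-A$ to the general notation $\depset-X$.
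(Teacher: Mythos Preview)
Your proposal is correct and matches the paper's approach exactly: the paper states that Lemma~\ref{lemma:s1-hard} is ``straightforward based on Lemma~\ref{lemma:fact-wise-simplifications}'', i.e., it is the special case $X=\{A\}$ of that lemma, just as you argue. Your observation that the common-lhs hypothesis is not actually used in the reduction itself is also accurate.
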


\begin{lemma}\label{lemma:s2-hard}
Let $R$ be a relation schema and let $\depset$ be an FD set over $R$. If $\depset$ has a consensus FD, $\emptyset\rightarrow X$, then there is a fact-wise reduction from $(R,\depset-X)$ to $(R,\depset)$.
\end{lemma}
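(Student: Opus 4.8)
The plan is to obtain this lemma as an immediate instance of Lemma~\ref{lemma:fact-wise-simplifications}. Recall that the consensus simplification, as performed both by $\algname{ConsensusRep}$ inside $\algname{\OSR}$ and by $\algname{OSRSucceeds}$, replaces an FD set $\depset$ that contains a consensus FD $\emptyset\rightarrow X$ with the FD set $\depset-X$ obtained by deleting every attribute of $X$ from every lhs and rhs. Since $X$ is nothing more than a subset of the attributes of $R$, the hypotheses of Lemma~\ref{lemma:fact-wise-simplifications} are satisfied with this particular $X$, and the conclusion is exactly the statement to be proved: there is a fact-wise reduction $\Pi:(R,\depset-X)\rightarrow(R,\depset)$.

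Concretely, I would recall the mapping witnessing Lemma~\ref{lemma:fact-wise-simplifications}: fix a constant $\odot\in\consts$ and set $\Pi(\tup t).A_k\eqdef\odot$ for $A_k\in X$ and $\Pi(\tup t).A_k\eqdef \tup t.A_k$ otherwise. It was already shown there that this $\Pi$ is well defined, injective (two distinct tuples over $R$ must disagree on some attribute outside $X$, and $\Pi$ preserves that coordinate), preserves consistency (if $\set{\tup t,\tup t'}$ satisfies $\depset-X$, then any violation of an FD $Z\rightarrow W$ of $\depset$ by $\set{\Pi(\tup t),\Pi(\tup t')}$ would involve a right-hand-side attribute outside $X$, hence would descend to a violation of $(Z\setminus X)\rightarrow(W\setminus X)\in\depset-X$ by $\set{\tup t,\tup t'}$), and preserves inconsistency (a violation of $Z\rightarrow W\in\depset-X$ lifts to a violation of $(Z\cup X)\rightarrow(W\cup X')\in\depset$, since the $X$-coordinates are the constant $\odot$ in both images). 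Therefore no further work is needed.

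There is essentially no obstacle particular to this lemma; all of the substance lives in Lemma~\ref{lemma:fact-wise-simplifications}, and in the observation that the consensus simplification is precisely an application of ``delete a set of attributes.'' The only thing worth stating explicitly is the downstream consequence: combining this fact-wise reduction with Lemma~\ref{lemma:fact-wise-apx} yields a strict reduction from computing an optimal S-repair under $\depset-X$ to computing an optimal S-repair under $\depset$, so APX-hardness transfers upward across the consensus step. This is exactly the black arrow labelled Lemma~\ref{lemma:s2-hard} in Figure~\ref{fig:hardness-srepair}, and it (together with Lemmas~\ref{lemma:s1-hard} and~\ref{lemma:s3-hard}) lets the hardness of the five non-simplifiable classes propagate back to the original FD set $\depset_0$.
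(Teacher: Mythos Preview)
Your proposal is correct and takes essentially the same approach as the paper: the paper states that Lemma~\ref{lemma:s2-hard} (together with Lemmas~\ref{lemma:s1-hard} and~\ref{lemma:s3-hard}) is ``straightforward based on Lemma~\ref{lemma:fact-wise-simplifications}'' and gives no further argument, which is precisely what you do by instantiating that lemma with the attribute set $X$ coming from the consensus FD.
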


\begin{lemma}\label{lemma:s3-hard}
Let $R$ be a relation schema and let $\depset$ be an FD set over $R$. If $\depset$ has an lhs marriage, $(X_1,X_2)$, then there is a fact-wise reduction from $(R,\depset-X_1X_2)$ to $(R,\depset)$.
\end{lemma}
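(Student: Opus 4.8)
The plan is to obtain this lemma as an immediate instantiation of Lemma~\ref{lemma:fact-wise-simplifications}. Recall that an lhs marriage $(X_1,X_2)$ of $\depset$ is a pair of attribute sets, and the corresponding simplification (performed by $\algname{MarriageRep}$ and mirrored in $\algname{OSRSucceeds}$) replaces $\depset$ by $\depset-X_1X_2$, that is, by the FD set obtained from $\depset$ by deleting every attribute of $X_1\cup X_2$ from the lhs and rhs of every FD. In particular, $\depset-X_1X_2$ is literally $\depset-X$ for the attribute set $X=X_1\cup X_2$, which is a subset of $\set{A_1,\dots,A_m}$.

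First I would set $X=X_1\cup X_2$ and apply Lemma~\ref{lemma:fact-wise-simplifications} to this $X$. That lemma directly provides a fact-wise reduction $\Pi$ from $(R,\depset-X)$ to $(R,\depset)$, namely the map that sends every attribute in $X$ to the fixed constant $\odot$ and leaves every other attribute unchanged. Since $\depset-X=\depset-X_1X_2$, the same $\Pi$ is a fact-wise reduction from $(R,\depset-X_1X_2)$ to $(R,\depset)$, which is exactly the claim.

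There is essentially no obstacle here: the only point worth noting is that the generic construction used in the proof of Lemma~\ref{lemma:fact-wise-simplifications}, together with its verifications that $\Pi$ is well defined, injective, and preserves consistency and inconsistency, uses nothing about the structure of the removed attribute set (whether it is a single common-lhs attribute, a consensus rhs, or the union of the two lhs of an lhs marriage). Hence the argument goes through verbatim for $X=X_1\cup X_2$, and no separate proof is required. Combined with Lemmas~\ref{lemma:s1-hard} and~\ref{lemma:s2-hard}, this establishes that every simplification step applied by $\algname{OSRSucceeds}$ is accompanied by a fact-wise reduction in the backward direction, as needed for Step~2 of the hardness side of Theorem~\ref{thm:dichotomy}.
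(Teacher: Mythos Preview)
Your proposal is correct and matches the paper's approach exactly: the paper states that Lemmas~\ref{lemma:s1-hard}--\ref{lemma:s3-hard} are ``straightforward based on Lemma~\ref{lemma:fact-wise-simplifications},'' and your argument simply instantiates that lemma with $X=X_1\cup X_2$, which is precisely what is intended.
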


\begin{lemma}\label{lemma:ind-basis}
Let $R$ be a relation schema and let $\depset$ be an FD set over $R$. If no simplification can be applied to $\depset$ and $\algname{OSRSucceeds}(\depset)$ returns false, then computing an optimal S-repair for $\depset$ is APX-complete.
\end{lemma}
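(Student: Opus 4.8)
The plan is to assemble the pieces already in place. Membership in APX is immediate from Proposition~\ref{prop:subset-approx}, which gives a polynomial-time $2$-approximation for every FD set, so the entire task is to prove APX-hardness. First I would unpack the hypotheses: since $\algname{OSRSucceeds}(\depset)$ returns false, $\depset$ is nontrivial (a trivial $\depset$ is accepted at the top of the loop), and since no simplification applies, $\depset$ has no common lhs, no consensus FD and no lhs marriage (after the trivial FDs have been stripped, as in line~2 of $\algname{OSRSucceeds}$). A chain FD set always has a common lhs or a consensus FD, so $\depset$ is not a chain, and therefore it contains two distinct local minima $X_1\rightarrow Y_1$ and $X_2\rightarrow Y_2$ (with $X_1\neq X_2$). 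Pick any such pair, set $\xobm=\closure_{\depset}(X_1)\setminus X_1$ and $\xtbm=\closure_{\depset}(X_2)\setminus X_2$, and carry out the bookkeeping step: split on whether $\xobm\cap\closure_{\depset}(X_2)$ and $\xtbm\cap\closure_{\depset}(X_1)$ are empty, then on whether $\xobm\cap X_2$ and $\xtbm\cap X_1$ are empty, and finally, when both intersections with the opposite lhs are nonempty, on whether $(X_1\setminus X_2)\subseteq\xtbm$ and $(X_2\setminus X_1)\subseteq\xobm$. This shows that one of the five cases listed just before Lemma~\ref{lemma:disjoint-reduction} must hold.

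For each case I would invoke the matching fact-wise reduction and then transfer hardness through Lemma~\ref{lemma:fact-wise-apx} (a fact-wise reduction yields a strict reduction between the two optimal-S-repair problems, and a strict reduction is in particular a PTAS reduction, so APX-hardness of the source carries over). Case~1 (disjoint closures) uses Lemma~\ref{lemma:disjoint-reduction} with source $\stfd$, APX-hard by Lemma~\ref{lemma:ac-bc-hard}. Cases~2 and~3 both use Lemma~\ref{lemma:disjoint-left-reduction} with source $\sabc$, APX-hard by Lemma~\ref{lemma:ab-bc-hard}; here one just matches the two bullets of that lemma to Case~2 and Case~3 respectively. Case~5 uses Lemma~\ref{lemma:last-reduction} with source $\stk$, APX-hard by Lemma~\ref{lemma:abc-cb-hard}. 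Case~4 needs an extra observation: when $\xobm\cap X_2\neq\emptyset$, $\xtbm\cap X_1\neq\emptyset$, $(X_1\setminus X_2)\subseteq\xtbm$ and $(X_2\setminus X_1)\subseteq\xobm$, these containments force $\closure_{\depset}(X_1)=\closure_{\depset}(X_2)$, so if $X_1$ and $X_2$ were the only local minima then either $X_1\cap X_2\neq\emptyset$ (a common lhs) or $X_1\cap X_2=\emptyset$ (an lhs marriage), contradicting that no simplification applies; hence $\depset$ has a third local minimum $X_3\rightarrow Y_3$, and Lemma~\ref{lemma:triangle-reduction} with source $\str$ (APX-hard by Lemma~\ref{lemma:abc-acb-bca-hard}) applies. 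Combined with APX membership, this proves APX-completeness in every case.

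The main obstacle I anticipate is the exhaustiveness of the five-case split in the first paragraph, together with the Case~4 subtlety: one must check that the enumeration genuinely covers every nontrivial, non-simplifiable $\depset$ once a pair of local minima has been fixed, and that the conclusion does not depend on which pair of local minima is chosen. Everything else is routine: the fact-wise reductions and the APX-hardness of the four special schemas over $R(A,B,C)$ have already been established, and the only remaining work is matching each case to the right lemma and checking that the case's hypotheses coincide with that lemma's hypotheses.
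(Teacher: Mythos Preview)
Your proposal is correct and follows essentially the same approach as the paper's proof: identify two distinct local minima, perform the five-way case split on the relationships between $X_1$, $X_2$, $\xobm$ and $\xtbm$, and in each case invoke the matching fact-wise reduction lemma (together with Lemma~\ref{lemma:fact-wise-apx}) from one of the four APX-hard base schemas, with the third-local-minimum argument in Case~4. Your anticipated obstacle---the exhaustiveness of the five cases up to swapping $X_1$ and $X_2$---is exactly the only delicate point, and the paper handles it the same way you indicate.
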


\begin{proof}
If no simplification can be applied to $\depset$ and $\algname{OSRSucceeds}(\depset)$ returns false, then $\depset$ is not trivial and also not empty. Note that in this case, $\depset$ cannot be a chain. Otherwise, $\depset$ contains a global minimum, which is either an FD of the form $\emptyset\rightarrow X$, in which case $\depset$ has a consensus FD, or an FD of the form $X\rightarrow Y$, where $X\neq \emptyset$, in which case it holds that $X\subseteq Z$ for each FD $Z\rightarrow W$ in $\depset$ and $\depset$ has a common lhs. Thus, $\depset$ contains at least two local minima $X_1\rightarrow Y_1$ and $X_2\rightarrow Y_2$ (that is, no FD $Z\rightarrow W$ in $\depset$ is such that $Z\subset X_1$ or $Z\subset X_2$). Note that we always remove trivial FDs from $\depset$ before applying a simplification, thus we can assume that $\depset$ does not contain trivial FDs. One of the following holds:
\begin{enumerate}
\item $(\closure_\depset(X_2)\setminus X_2)\cap X_1 = \emptyset$. We divide this case into three subcases:
\begin{itemize}
\item $(\closure_\depset(X_1)\setminus X_1)\cap \closure_\depset(X_2)= \emptyset$. In this case, Lemma~\ref{lemma:ac-bc-hard} and Lemma~\ref{lemma:disjoint-reduction} imply that computing an optimal S-repair is APX-hard.
\item $(\closure_\depset(X_1)\setminus X_1) \cap (\closure_\depset(X_2)\setminus X_2)\neq \emptyset$ and $(\closure_\depset(X_1)\setminus X_1)\cap X_2 = \emptyset$. In this case, Lemma~\ref{lemma:ab-bc-hard} and Lemma~\ref{lemma:disjoint-left-reduction} imply that computing an optimal S-repair is APX-hard.
\item $(\closure_\depset(X_1)\setminus X_1)\cap X_2 \neq \emptyset$. In this case, Lemma~\ref{lemma:ab-bc-hard} and Lemma~\ref{lemma:disjoint-left-reduction} imply that computing an optimal S-repair is APX-hard.
\end{itemize}
\item $(\closure_\depset(X_2)\setminus X_2)\cap X_1 \neq \emptyset$. We divide this case into three subcases:
\begin{itemize}
\item $(\closure_\depset(X_1)\setminus X_1)\cap X_2 \neq \emptyset$ and it holds that $(X_1\setminus X_2)\subseteq (\closure_\depset(X_2)\setminus X_2)$ and $(X_2\setminus X_1)\subseteq (\closure_\depset(X_1)\setminus X_1)$. In this case, $\depset$ contains at least one more local minimum. Otherwise, for every FD $Z\rightarrow W$ in $\depset$ it holds that either $X_1\subseteq Z$ or $X_2\subseteq Z$. If $X_1\cap X_2\neq\emptyset$, then $\depset$ has a common lhs, which is an attribute from $X_1\cap X_2$. If $X_1\cap X_2=\emptyset$, then $\depset$ does not have a common lhs, but has an lhs marriage. In both cases, we get a contradiction to the fact that no simplifications can be applied to the $\depset$. Thus, Lemma~\ref{lemma:abc-acb-bca-hard} and Lemma~\ref{lemma:triangle-reduction} imply that computing an optimal S-repair is APX-hard.
\item $(\closure_\depset(X_1)\setminus X_1)\cap X_2 \neq \emptyset$ and it holds that $(X_2\setminus X_1)\not\subseteq (\closure_\depset(X_1)\setminus X_1)$. In this case, Lemma~\ref{lemma:abc-cb-hard} and Lemma~\ref{lemma:last-reduction} imply that computing an optimal S-repair is APX-hard.
\end{itemize}
\end{enumerate}
Proposition~\ref{prop:subset-approx} implies that computing an optimal S-repair is always in APX, thus the problem is actually APX-complete in each one of these cases. This concludes our proof of the lemma.
\end{proof}

Finally, we prove the following.

\begin{lemma}
Let $R$ be a relation schema and let $\depset$ be an FD set over $R$. If $\algname{OSRSucceeds}(\depset)$ returns false, then computing an optimal S-repair for $\depset$ is APX-complete.
\end{lemma}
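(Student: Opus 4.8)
The plan is to prove the lemma by induction on $n$, the number of simplification steps that $\algname{OSRSucceeds}(\depset)$ performs before returning false. Membership in APX holds in every case by Proposition~\ref{prop:subset-approx}, so throughout it suffices to establish APX-hardness.

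For the base case $n=0$: if $\algname{OSRSucceeds}(\depset)$ returns false without performing any simplification, then $\depset$ is nontrivial and admits no common lhs, no consensus FD, and no lhs marriage (note that stripping the trivial FDs at the top of the loop body is not counted as a simplification). This is precisely the hypothesis of Lemma~\ref{lemma:ind-basis}, which yields APX-hardness of computing an optimal S-repair under $\depset$.

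For the inductive step, suppose $n\geq 1$ and the claim holds for every FD set on which $\algname{OSRSucceeds}$ returns false after fewer than $n$ steps. The first simplification applied by $\algname{OSRSucceeds}(\depset)$ replaces $\depset$ with an FD set $\depset'$, where $\depset'$ equals $\depset-A$ (common lhs $A$), $\depset-X$ (consensus FD $\emptyset\ra X$), or $\depset-X_1X_2$ (lhs marriage $(X_1,X_2)$); in all three cases $\depset'$ is an FD set over the same schema $R$. The remainder of the execution of $\algname{OSRSucceeds}(\depset)$ is exactly the execution of $\algname{OSRSucceeds}(\depset')$, which therefore returns false after $n-1$ steps. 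By the induction hypothesis, computing an optimal S-repair under $R$ and $\depset'$ is APX-hard. By Lemma~\ref{lemma:s1-hard}, Lemma~\ref{lemma:s2-hard}, or Lemma~\ref{lemma:s3-hard} (according to which simplification was applied), there is a fact-wise reduction from $(R,\depset')$ to $(R,\depset)$, and hence, by Lemma~\ref{lemma:fact-wise-apx}, a strict reduction from computing an optimal S-repair under $R$ and $\depset'$ to computing an optimal S-repair under $R$ and $\depset$. Since strict reductions are in particular PTAS reductions, APX-hardness transfers to $(R,\depset)$; combined with membership in APX, this shows that computing an optimal S-repair under $\depset$ is APX-complete, completing the induction.

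The only delicate points are bookkeeping: one must check that the removal of trivial FDs is not counted among the $n$ simplification steps (so that the base case really is the "no simplification applies" case covered by Lemma~\ref{lemma:ind-basis}), and one must observe that each simplification merely deletes attributes from the FDs while keeping the ambient schema $R$ fixed, which is what makes the fact-wise reductions of Lemmas~\ref{lemma:s1-hard}--\ref{lemma:s3-hard} (all of the form $(R,\depset-X)\to(R,\depset)$) directly applicable. The genuine difficulty has already been absorbed into Lemma~\ref{lemma:ind-basis} (the case analysis over the five classes, via Lemmas~\ref{lemma:disjoint-reduction}, \ref{lemma:disjoint-left-reduction}, \ref{lemma:triangle-reduction}, and \ref{lemma:last-reduction}, together with the APX-hardness of the four hard schemas) and into the fact-wise reductions for the simplifications; the present lemma is just their inductive assembly.
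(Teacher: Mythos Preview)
Your proof is correct and follows essentially the same approach as the paper: induction on the number of simplification steps, with Lemma~\ref{lemma:ind-basis} handling the base case and Lemmas~\ref{lemma:s1-hard}--\ref{lemma:s3-hard} (via Lemma~\ref{lemma:fact-wise-apx}) handling the inductive step. Your version is slightly more explicit than the paper's in invoking Lemma~\ref{lemma:fact-wise-apx} and noting that strict reductions are PTAS reductions, but the argument is the same.
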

\begin{proof}
We will prove the lemma by induction on $n$, the number of simplifications that will be applied to $\depset$ by $\algname{OSRSucceeds}$. The basis of the induction is $n=0$. In this case, Lemma~\ref{lemma:ind-basis} implies that computing an optimal S-repair is indeed APX-complete. For the inductive step, we need to prove that if the claim is true for all $n=1,\dots,k-1$, it is also true for $n=k$. In this case, $\algname{OSRSucceeds}(\depset)$ will start by applying a simplification to the problem. One of the following holds:

\begin{itemize}
\item $\depset$ has a common lhs $A$. Note that $\algname{OSRSucceeds}(\depset)$ will return false only if $\algname{OSRSucceeds}(\depset-A)$ returns false. From the inductive step we know that if $\algname{OSRSucceeds}(\depset-A)$ returns false, then computing an optimal S-repair for $\depset-A$ is APX-complete. Thus, Lemma~\ref{lemma:s1-hard} implies that computing an optimal S-repair for $\depset$ is APX-hard.
\item $\depset$ has a consensus FD $\emptyset\rightarrow X$. The algorithm $\algname{OSRSucceeds}(\depset)$ will return false only if $\algname{OSRSucceeds}(\depset-X)$ returns false. From the inductive step we know that if $\algname{OSRSucceeds}(\depset-X)$ returns false, then computing an optimal S-repair for $\depset-X$ is APX-complete. Thus, Lemma~\ref{lemma:s2-hard} implies that computing an optimal S-repair for $\depset$ is APX-hard.
\item $\depset$ has an lhs marriage $(X_1,X_2)$. Again, $\algname{OSRSucceeds}(\depset)$ will return false only if $\algname{OSRSucceeds}(\depset-X_1X_2)$ returns false. From the inductive step we know that if $\algname{OSRSucceeds}(\depset-X_1X_2)$ returns false, then computing an optimal S-repair for $\depset-X_1X_2$ is APX-complete. Thus, Lemma~\ref{lemma:s1-hard} implies that computing an optimal S-repair for $\depset$ is APX-hard.
\end{itemize}
Proposition~\ref{prop:subset-approx} implies that computing an optimal S-repair is always in APX, thus the problem is APX-complete in each one of these cases. This concludes our proof of the lemma.
\end{proof}

\def\refsecupdate{\ref{sec:update-repairs}}
\section{Details from Section~\refsecupdate}\label{app:proofs-value-repair}

\subsection{Proof of Theorem~\ref{thm:disjoint}}\label{sec:thm:disjoint:proof}
In this section we prove Theorem~\ref{thm:disjoint}.

\begin{reptheorem}{\ref{thm:disjoint}}
\thmdisjoint
\end{reptheorem}

First we show the following proposition:
\begin{proposition}\label{prop:optimal-disjoint}
For a schema $R$ and a table $T$ over $R$, if $U^*, U_1^*, U_2^*$ denote optimal U-repairs for $\depset, \depset_1, \depset_2$ respectively, where $\depset = \depset_1 \cup \depset_2$ and $\depset_1, \depset_2$ are attribute disjoint, then
$$\distu(U^*, T) = \distu(U_1^*, T) + \distu(U_2^*, T)$$ 
\end{proposition}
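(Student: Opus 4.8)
The plan is to establish the two inequalities $\distu(U^*,T)\le \distu(U_1^*,T)+\distu(U_2^*,T)$ and $\distu(U^*,T)\ge \distu(U_1^*,T)+\distu(U_2^*,T)$ separately; together they yield the stated equality. Write $\attr_1=\attr(\depset_1)$ and $\attr_2=\attr(\depset_2)$, which are disjoint by hypothesis. Two elementary facts will be used throughout. First, for a fixed identifier $i$, the set of attributes on which an update differs from $T$ at row $i$ decomposes into its intersections with $\attr_1$, with $\attr_2$, and with the remaining attributes; hence the Hamming distance $H(T[i],\cdot)$ is additive across these three disjoint attribute blocks. Second, whether a table satisfies an FD $X\ra Y$ of $\depset_\ell$ depends only on the columns in $X\cup Y\subseteq\attr_\ell$, so consistency with $\depset_\ell$ is insensitive to the values of columns outside $\attr_\ell$.

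For the upper bound, I would first argue that an optimal U-repair $U_\ell^*$ of $T$ under $\depset_\ell$ may be taken to leave every cell outside $\attr_\ell$ untouched: reverting such a cell to its value in $T$ cannot affect satisfaction of $\depset_\ell$ and cannot increase the cost. I would then paste $U_1^*$ and $U_2^*$ together, defining an update $U$ of $T$ that takes the $U_1^*$-value on columns in $\attr_1$, the $U_2^*$-value on columns in $\attr_2$, and the original $T$-value on all other columns (well defined since $\attr_1\cap\attr_2=\emptyset$). Because $U$ coincides with $U_\ell^*$ on all columns of $\attr_\ell$, it satisfies every FD of $\depset_\ell$, hence $U\models\depset$. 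By additivity of the Hamming distance, and since $U_1^*$ and $U_2^*$ modify cells only in $\attr_1$ and $\attr_2$ respectively, $\distu(U,T)=\distu(U_1^*,T)+\distu(U_2^*,T)$; optimality of $U^*$ under $\depset$ then gives $\distu(U^*,T)\le\distu(U_1^*,T)+\distu(U_2^*,T)$.

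For the lower bound, I would go in the reverse direction and project an optimal U-repair $U^*$ of $T$ under $\depset$ onto each attribute block: for $\ell\in\{1,2\}$ let $V_\ell$ be the update of $T$ that agrees with $U^*$ on the columns in $\attr_\ell$ and with $T$ on all other columns. As before, $V_\ell$ agrees with $U^*$ on every column relevant to $\depset_\ell$, so $V_\ell\models\depset_\ell$. The attributes changed by $V_1$ and by $V_2$ at any row $i$ are disjoint subsets of the attributes changed by $U^*$ at row $i$, hence $H(T[i],V_1[i])+H(T[i],V_2[i])\le H(T[i],U^*[i])$ for every $i$; summing with weights $w_T(i)$ yields $\distu(V_1,T)+\distu(V_2,T)\le\distu(U^*,T)$. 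Finally, optimality of $U_\ell^*$ under $\depset_\ell$ gives $\distu(U_\ell^*,T)\le\distu(V_\ell,T)$, and combining the two facts gives $\distu(U_1^*,T)+\distu(U_2^*,T)\le\distu(U^*,T)$, as needed.

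The argument is essentially bookkeeping, and I do not expect a genuine obstacle. The only points that require a little care are the normalization step in the upper bound — justifying that we may restrict attention to optimal $\depset_\ell$-repairs that touch only their own attributes — and carefully maintaining the three-way split of modified attributes, so that pasting updates across the disjoint blocks provably preserves consistency while the cost splits exactly, and restricting an update provably preserves consistency while the cost can only drop.
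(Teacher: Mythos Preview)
Your proof is correct and uses essentially the same ideas as the paper: projecting an optimal $\depset$-repair onto each attribute block to get $\depset_\ell$-repairs, pasting optimal $\depset_\ell$-repairs to get a $\depset$-repair, and exploiting additivity of the Hamming cost across disjoint attribute blocks. The paper's presentation is slightly more compact---it projects $U^*$ to obtain candidate $U_1^*,U_2^*$ and then argues their optimality by contradiction (implicitly invoking your pasting step)---but the underlying argument is the same.
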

\begin{proof}
Let us denote the Hamming distance of a tuple id $i \in \ids(T)$ in $U^*$ and $T$ with respect to a subset of attributes $P \subseteq \attr(\depset)$ as $H_P(T[i], U^*[i])$ (\ie, the the number of attributes in $P$ where $T[i]$ and $U^*[i]$ differ). Since $U^*$ is an optimal U-repair for $\depset$, any attribute $\notin \attr(\depset)$ is not updated in $U^*$ (otherwise we can change $U^*$ in poly-time such that all attributes $\notin \attr(\depset)$ which will reduce the distance, contradicting that $U^*$ is optimal). Clearly, for any $i \in \ids(T)$, $H(T[i], U^*[i]) = H_{\attr(\depset)}(T[i], U^*[i]) = H_{\attr(\depset_1)}(T[i], U^*[i]) + H_{\attr(\depset_2)}(T[i], U^*[i])$.
Hence from $U^*$ we can create two repairs $U_1^*, U_2^*$, one for $\depset_1$ and one for $\depset_2$, by taking the attribute values in $\attr(\depset_1)$ (resp. $\attr(\depset_2)$ ) from $U$, and keeping the remaining unchanged as in $T$, leading to $\distu(U^*, T) = \distu(U_1^*, T) + \distu(U_2^*, T)$.  Note that $U_1^*$ is an optimal U-repair for $\depset_1$, otherwise combining with $U_2^*$ we get a U-repair $U_0^*$ for $\depset$ of smaller distance than $U^*$ contradicting the optimality of $U^*$. Similarly, $U_2^*$ is an optimal U-repair of $\depset_2$.
\end{proof}

\begin{proof}[of Theorem~\ref{thm:disjoint}]
We now prove the theorem.

\paragraph*{$(1)\rightarrow (2)$}
Suppose an $\alpha$-optimal U-repair can be computed in polynomial time under $\depset$ for any schema $R$ and any table $T$ on $R$ by an algorithm $P_{\depset}(T)$. We create another table $T_0$ from $T$ by keeping the values of attributes of $\attr(\depset_1)$ the same for all tuples, and changing the values of all other attributes of all tuples to 0.  Let $U_0^*, U_{01}^*, U_{02}^*$ denote optimal solutions for $\depset, \depset_1, \depset_2$ for $T_0$.
  From Proposition~\ref{prop:optimal-disjoint}, for $T'$, 
\begin{equation}
\distu(U_0^*, T_0) = \distu(U_{01}^*, T_0) + \distu(U_{02}^*, T_0)
\end{equation}
 Since no FDs are violated in $\depset_2$ in $T_0$, $T_0 = U_{02}^*$, and $\distu(U_{02}^*, T_0) = 0$. Therefore,
 \begin{equation}
 \distu(U_0^*, T_0) = \distu(U_{01}^*, T_0)
\end{equation}
Then we run the algorithm $P_{\depset}(T_0)$ on $T_0$ to obtain an $\alpha$-optimal U-repair $U_0$ for $T_0$ in polynomial time with respect to $\depset$,
\ie, 
 \begin{equation}
 \distu(U_0, T_0) \leq \alpha \distu(U_{0}^*, T_0)
\end{equation}
We obtain a U-repair $U_{01}$ for $\depset_1$ and $T_0$ from $U_0$ by copying the attribute values in $\attr(\depset_1)$ from $U_0$ and keeping the other attributes unchanged as in $T_0$.. 
Hence
  \begin{equation}
 \distu(U_{0_1}, T_0) \leq  \distu(U_{0}, T_0) \leq \alpha \distu(U_{0}^*, T_0) = \alpha \distu(U_{01}^*, T_0)
\end{equation}
Since the attributes $\notin \attr(\depset_1)$ are unchanged in $U_{01}$ and $U_{01}^*$, and the attribute values of $\attr(\depset_1)$ are the same in $T_0$ and $T$, $U_{01}$ is also a U-repair of $T$ with $ \distu(U_{0_1}, T_0) =  \distu(U_{0_1}, T)$, and    $U_{01}^*$ is also an optimal U-repair of $T$ with $ \distu(U^*_{0_1}, T_0) =  \distu(U^*_{0_1}, T)$, and hence,
  \begin{equation}
 \distu(U_{01}, T) \leq \alpha \distu(U_{01}^*, T)
\end{equation}
This gives an $\alpha$-optimal U-repair of $T$ for $\depset_1$ in polynomial time (similar argument holds for $\depset_2$). 

\paragraph*{$(2)\rightarrow (1)$}
This direction is simpler, and we argue that an $\alpha$-optimal U-repair of $\depset$ can be obtained by composing $\alpha$-optimal U-repairs $U_1, U_2$ of $\depset_1, \depset_2$  for a table $T$ over schema $R$. Suppose $U^*, U_1^*, U_2^*$ are optimal U-repairs of $\depset, \depset_1, \depset_2$ respectively. Since $U_1, U_2$ are $\alpha$-optimal U-repairs,
\begin{equation}\label{equn:2-1-disjoint-1}
\distu(U_1, T) \leq \alpha \distu(U_1^*, T) \textrm{ and } \distu(U_2, T) \leq \alpha \distu(U_2^*, T)
\end{equation}
We construct a U-repair $U$ of $\depset$ by choosing attribute values in $\attr(\depset_1)$ from $U_1$ values in $\attr(\depset_2)$ from $U_2$, and choosing all other attribute values from $T$.
Hence 
\begin{eqnarray*}
\distu(U, T) & = & \distu(U_1, T) + \distu(U_2, T)\\
 &  \leq & \alpha (\distu(U_1^*, T) + \distu(U_2^*, T))~~~~~ \textrm{ from (\ref{equn:2-1-disjoint-1})}\\
 & = & \alpha \distu(U^*, T)~~~~~ \textrm{ from Proposition~\ref{prop:optimal-disjoint})}
\end{eqnarray*}
Hence $U$ is an $\alpha$-optimal U-repair for $\depset$. 
\end{proof}

\subsection{Proof of Theorem~\ref{thm:u-consensus}}
In this section we prove Theorem~\ref{thm:u-consensus}.
\begin{reptheorem}{\ref{thm:u-consensus}}
\thmuconsensus
\end{reptheorem}

  To prove this theorem, first we show the following proposition:
 
 \begin{proposition}\label{prop:FD-fixed-single}
For a consensus FD 
  $\emptyset \rightarrow A$, an optimal U-repair can be computed in polynomial time. 
 \end{proposition}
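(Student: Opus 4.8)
The plan is to exploit the fact that a single consensus FD $\emptyset\to A$ decouples column $A$ from all other columns. First I would observe that an optimal U-repair need only modify cells in column $A$: if $U$ is any consistent update, then restoring every attribute other than $A$ to its original value in $T$ yields a table that still satisfies $\emptyset\to A$ (since the $A$-column is untouched) and whose distance to $T$ is no larger. Hence we may restrict attention to updates $U$ that change only column $A$, and consistency forces all of that column to carry one common value $v$. For a fixed $v$, the resulting cost is exactly $\sum_{i\in\ids(T):\,T[i].A\neq v} w_T(i)=w_T(T)-W_v$, where $w_T(T)=\sum_{i\in\ids(T)}w_T(i)$ and $W_v:=\sum_{i\in\ids(T):\,T[i].A=v} w_T(i)$. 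Minimizing over $v$ is the same as maximizing $W_v$, and since any $v$ not occurring in column $A$ has $W_v=0$, it suffices to range over the (at most $|T|$) values actually appearing there.

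The algorithm is then immediate: compute $W_v$ for each value $v$ occurring in column $A$, let $v^\ast$ attain the maximum, and return the update $U$ obtained from $T$ by setting $T[i].A:=v^\ast$ for every $i$ with $T[i].A\neq v^\ast$ and leaving all other cells intact. This runs in time linear in $|T|$ (treating the fixed schema and FD as constants), so certainly in polynomial time. For correctness: $U$ satisfies $\{\emptyset\to A\}$ because every tuple now carries $A=v^\ast$; by construction $\distu(U,T)=w_T(T)-W_{v^\ast}$; and for any consistent update $U'$, all tuples of $U'$ agree on $A$, say on some value $v'$, so $H(T[i],U'[i])\geq 1$ whenever $T[i].A\neq v'$, giving $\distu(U',T)\geq w_T(T)-W_{v'}\geq w_T(T)-W_{v^\ast}=\distu(U,T)$ by maximality of $W_{v^\ast}$. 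Thus $U$ is an optimal U-repair.

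There is essentially no hard step here: the only point that needs care is that update values may be drawn from the infinite domain $\dom$ rather than the active domain of $T$, but this only helps us, since a value outside the active domain contributes $W_v=0$ and forces the entire column to be rewritten, which is never better than the weighted-majority choice $v^\ast$. (This proposition is the base case that, combined with Theorem~\ref{thm:disjoint} and a treatment of the all-consensus case, will yield Theorem~\ref{thm:u-consensus}.)
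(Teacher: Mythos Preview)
Your proposal is correct and follows essentially the same weighted-majority argument as the paper: both pick the value $v^\ast$ in column $A$ that maximizes the total weight of agreeing tuples and show the resulting cost $w_T(T)-W_{v^\ast}$ is optimal. If anything, your optimality argument is slightly tidier than the paper's, since you bound $\distu(U',T)$ for an \emph{arbitrary} consistent update $U'$ via $H(T[i],U'[i])\geq 1$ whenever $T[i].A\neq v'$, rather than first restricting attention to competitors that touch only column~$A$.
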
 
 \begin{proof}
Consider a table $T$ over a schema $R$ that violates the FD $\emptyset \rightarrow A$, \ie, at least two tuples in $T$ have two distinct values of $A$. For every distinct value $a$ of $A$ in $T$, let $M_a$ denote the set of tuples $\tup t$ such that $\tup t.A = a$, \ie, $M_a = \sigma_{A=a}T$. We obtain a repair of $T$ as follows: for every distinct value $a$ of $A$, compute the total weight of the tuples in $M_a$, \ie, compute $W_a = \sum_{\tup t \in M_a} w_{\tup t}$ (slightly abusing the notation for weights). Choose the value of $a$, say $a_0$, having the maximum value of $W_a$ among all such $a$'s. Keep the tuples in $M_{a_0}$ unchanged, and update every other tuple $\tup t \in M_a$, $a \neq a_0$, such that $\tup t.A = a_0$. Clearly, this new table, say $U$, is a consistent update since now every tuple $\tup t$ in $T$ will have $\tup t.A = a_0$.
 \par
 To see that $U$ is an optimal U-repair, first note that a repair with a better distance cannot be obtained by setting the $A$ values to a fresh constant from the infinite domain, since instead choosing a value from the active domain saves the cost of the repair for at least one tuple in $T$. Now assume some other value $a_1 \neq a_0$ has been chosen for all tuples in $T$ in a repair $U_1$ such that $\distu(U_1,T) < \distu(U,T)$. Then $\distu(U_1,T) = \sum_{i\in\ids(T)}w_T(i)\cdot H(T[i],U_1[i]) = \sum_{a \neq a_1} \sum_{\tup t \in M_a} w_{\tup t}$ (since the Hamming distance is 1 for all $a \neq a_1$, and is 0 for $a_1$)
 $  =  \sum_{a \neq a_1} W_a = (\sum_{\tup t \in T} w_{\tup t} ) - W_{a_1}$, whereas $\distu(U,T) = (\sum_{\tup t \in T} w_{\tup t} ) - W_{a_0}$. Since $W_{a_0} \geq W_{a_1}$, $\distu(U_1,T) \geq \distu(U,T)$ contradicting the assumption that $\distu(U_1,T) < \distu(U,T)$. Hence $U$ is an optimal U-repair.
 \end{proof}
 
 From Theorem~\ref{thm:disjoint} and ~\ref{prop:FD-fixed-single}, the corollary below follows:
 \begin{corollary}\label{cor:FD-fixed-multiple}
For the consensus FD $\emptyset \rightarrow \closure_{\depset}(\emptyset)$, where $\closure_{\depset}(\emptyset)$ is the set of all consensus attributes, an optimal U-repair can be computed in polynomial time. 
 \end{corollary}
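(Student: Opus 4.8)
The plan is to reduce the statement to Proposition~\ref{prop:FD-fixed-single} by unfolding the multi-attribute right-hand side. First I would observe that the single consensus FD $\emptyset\ra\closure_\depset(\emptyset)$ is equivalent to the set $\depset^\emptyset\eqdef\set{\emptyset\ra A\mid A\in\closure_\depset(\emptyset)}$ of single-attribute consensus FDs, using the standard right-hand-side decomposition already invoked at the start of Section~\ref{sec:subset-repairs} (replacing $X\ra YZ$ by $X\ra Y$ and $X\ra Z$). Since equivalent FD sets are satisfied by exactly the same tables, they have the same consistent updates, hence the same optimal U-repairs and the same optimization problem; so it suffices to establish the claim for $\depset^\emptyset$.

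The key point is that the members of $\depset^\emptyset$ are pairwise attribute disjoint: the FD $\emptyset\ra A$ mentions only the attribute $A$, so $\attr(\set{\emptyset\ra A})\cap\attr(\set{\emptyset\ra A'})=\emptyset$ for distinct $A,A'\in\closure_\depset(\emptyset)$. I would then argue by induction on $|\closure_\depset(\emptyset)|$. If this set is empty, the FD is trivial and $T$ is its own optimal U-repair; if it is a singleton, the claim is exactly Proposition~\ref{prop:FD-fixed-single}. For the inductive step, pick any $A\in\closure_\depset(\emptyset)$ and write $\depset^\emptyset=\depset_1\cup\depset_2$ with $\depset_1=\set{\emptyset\ra A}$ and $\depset_2=\set{\emptyset\ra A'\mid A'\in\closure_\depset(\emptyset)\setminus\set{A}}$, which are attribute disjoint. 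By Proposition~\ref{prop:FD-fixed-single} an optimal U-repair can be computed in polynomial time under $\depset_1$, and by the induction hypothesis the same holds under $\depset_2$; Theorem~\ref{thm:disjoint}, instantiated at $\alpha=1$, then yields that an optimal U-repair can be computed in polynomial time under $\depset_1\cup\depset_2=\depset^\emptyset$, closing the induction.

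There is no real difficulty here: the argument is a routine bootstrap of Theorem~\ref{thm:disjoint}. The one step that deserves care is the initial reduction, where one must check that passing between the single FD $\emptyset\ra\closure_\depset(\emptyset)$ and the decomposed set $\depset^\emptyset$ preserves the optimization problem itself, not merely logical equivalence; this is immediate because the set of consistent updates of a table depends only on the closure of the FD set. Alternatively, one could bypass the explicit induction by first strengthening Theorem~\ref{thm:disjoint} to a finite union of pairwise attribute-disjoint FD sets (which its proof already supports), and then apply that version directly to $\depset^\emptyset$.
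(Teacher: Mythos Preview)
Your proposal is correct and takes essentially the same approach as the paper, which simply states that the corollary follows from Theorem~\ref{thm:disjoint} and Proposition~\ref{prop:FD-fixed-single} without spelling out the details. You have merely made explicit the decomposition into single-attribute consensus FDs and the inductive application of Theorem~\ref{thm:disjoint} that the paper leaves implicit.
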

 
 In addition, the following proposition is straightforward.
 \begin{proposition}\label{prop:closure-emptyset-equiv}
The two sets of FDs $\depset$ and $\depset' = \set{\emptyset \ra \closure_{\depset}(\emptyset)} \cup (\depset - \closure_{\depset}(\emptyset))$ are equivalent, and for any table $T$, and $\alpha$-optimal solution for $\depset$ is also an $\alpha$-optimal
 solution for $\depset'$, and vice versa.
  \end{proposition}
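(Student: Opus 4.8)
The plan is to prove the equivalence of $\depset$ and $\depset'$ first, and then observe that the claim about $\alpha$-optimal repairs is an immediate consequence of it. Write $Z \eqdef \closure_\depset(\emptyset)$ for the set of consensus attributes, so that $\depset' = \set{\emptyset\ra Z}\cup(\depset-Z)$; recall that by the definition of the operator $\depset-Z$, every FD in $\depset-Z$ has the form $(X\setminus Z)\ra(Y\setminus Z)$ for some $X\ra Y$ in $\depset$ (possibly a trivial FD, which is harmless since it lies in the closure of every FD set).

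For the direction $\depset'\models\depset$, I would take an arbitrary FD $X\ra Y$ of $\depset$ and a table $T$ with $T\models\depset'$, and show $T\models X\ra Y$. Since $T\models\emptyset\ra Z$, all tuples of $T$ agree on every attribute of $Z$; and since $T\models(X\setminus Z)\ra(Y\setminus Z)$, any two tuples that agree on $X$ (hence on $X\setminus Z$) agree on $Y\setminus Z$, and they agree on $Y\cap Z$ because all tuples agree on $Z$, so they agree on all of $Y$. For the reverse direction $\depset\models\depset'$, the FD $\emptyset\ra Z$ belongs to $\closure(\depset)$ by the very definition of $Z$, so it suffices to show $\depset\models(X\setminus Z)\ra(Y\setminus Z)$ for each $X\ra Y$ in $\depset$: if $T\models\depset$ then all tuples of $T$ agree on $Z$ (since $\depset\models\emptyset\ra Z$), so two tuples that agree on $X\setminus Z$ also agree on $X\cap Z$, hence on $X$, hence on $Y$, hence on $Y\setminus Z$. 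Combining the two directions gives $\closure(\depset)=\closure(\depset')$.

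Finally, since equivalent FD sets are satisfied by exactly the same tables, a table $U$ is a consistent update of $T$ with respect to $\depset$ if and only if it is a consistent update of $T$ with respect to $\depset'$; consequently $T$ has precisely the same set of U-repairs under $\depset$ and under $\depset'$, and the cost function $\distu(\cdot,T)$ does not depend on the FD set at all. Hence the set of $\alpha$-optimal U-repairs of $T$ is literally the same object for the two FD sets, which yields both the claim and its ``vice versa'' (and the identical argument applies to S-repairs, if that is the intended reading of ``solution''). I do not expect a genuine obstacle here; the only point requiring a little care is the bookkeeping around the attribute-removal operator $\depset-Z$ and the observation that any trivial FDs it introduces do not affect the closure.
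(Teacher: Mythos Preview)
Your proof is correct and complete. The paper does not actually give a proof of this proposition; it simply states that it ``is straightforward,'' and your argument is exactly the kind of routine verification the authors had in mind.
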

Using the above results, we prove Theorem~\ref{thm:u-consensus}.


\begin{proof}
To show strict reductions, we show the following two directions:

\paragraph*{$\alpha$-optimal U-repair of $\depset$ to an $\alpha$-optimal U-repair of  $\depset - \closure_{\depset}(\emptyset)$} 
Using (i) Proposition~\ref{prop:closure-emptyset-equiv} and the fact that (ii) $\set{\emptyset \ra \closure_{\depset}(\emptyset)}$ and  $(\depset - \closure_{\depset}(\emptyset))$ are attribute disjoint, from Theorem~\ref{thm:disjoint}, an $\alpha$-optimal U-repair of $\depset - \closure_{\depset}(\emptyset)$ can be computed in polynomial time if an $\alpha$-optimal U-repair of $\depset$ can be computed in polynomial time. 

\paragraph*{$\alpha$-optimal U-repair of $\depset - \closure_{\depset}(\emptyset)$ to an $\alpha$-optimal U-repair of  $\depset$} 
 Note that an optimal U-repair of $\depset - \closure_{\depset}(\emptyset)$ (which is also an $\alpha$-optimal U-repair for any $\alpha \geq 1$) can be computed in polynomial time (by Corollary~\ref{cor:FD-fixed-multiple}). Therefore, using Proposition~\ref{prop:closure-emptyset-equiv} and the fact that  $\set{\emptyset \ra \closure_{\depset}(\emptyset)}$ and  $(\depset - \closure_{\depset}(\emptyset))$ are attribute disjoint, this direction also follows from Theorem~\ref{thm:disjoint}.
\end{proof}

\cut{
Consider an $\alpha$-optimal U-repair $U$ for $\depset$ and a table $T$, \ie, \begin{equation}
\distu(U, T) \leq \alpha \distu(U^*, T)
\end{equation}
where $U^*$ is an optimal repair for $\depset$ and $T$. 
Let $U_0^*$ be an optimal U-repair of $\closure_{\depset}(\emptyset)$ that can be computed in polynomial time from Corollary~\ref{cor:FD-fixed-multiple}. Let $U_1^*$ be an optimal repair of $\depset - \closure_{\depset}(\emptyset)$. Note that $\depset$ and $\set{\emptyset \ra \closure_{\depset}(\emptyset)} \cup (\depset - \closure_{\depset}(\emptyset))$ are equivalent, hence $U^*$ is also an optimal repair of $\set{\emptyset \ra \closure_{\depset}(\emptyset)} \cup \depset - \closure_{\depset}(\emptyset)$. Since $\set{\emptyset \ra \closure_{\depset}(\emptyset)}$ and $\depset - \closure_{\depset}(\emptyset)$ are attribute disjoint, by Proposition~\ref{prop:optimal-disjoint},
\begin{equation}
\distu(U^*, T) = \distu(U_0^*, T) + \distu(U_1^*, T)
\end{equation}
Now we compute a repair $U_1$ from $U$ for $\depset - \closure_{\depset}(\emptyset)$. We copy the values of the attributes $\notin \closure_{\depset}(\emptyset)$ from $U$ to $U_1$, and copy the values of the attributes in $\closure_{\depset}(\emptyset)$ from $T$. 
Compute $U'$ from $U^*$ by copying values of the consensus attributes in $\closure_{\depset}(\emptyset)$ from $T$, and  copying the values of the other attributes from $U^*$. Clearly this additional operation takes polynomial time in the size of $T$ (the same as the size of $U^*, U'$), and $U'$ is a consistent update of $\depset - \closure_{\depset}(\emptyset)$. 
Next we argue that $U'$ is an optimal U-repair for $\depset - \closure_{\depset}(\emptyset)$, otherwise,
by combining with the optimal U-repair of the consensus attributes $\closure_{\depset}(\emptyset)$, we get a U-repair of $\depset$ of smaller distance contradicting the optimality of $U^*$.  
}
%

\subsection{Proof of Proposition~\ref{prop:AB-BA}}

\begin{repproposition}{\ref{prop:AB-BA}}
\propABBA
\end{repproposition}

  For the FD set $\depset  = \set{A\ra B,B\ra A}$, the goal is to show that an optimal U-repair can be computed in polynomial time. Note that the FDs $\set{A\ra B,B\ra A}$ imply that in a consistent update, one value of $A$ cannot be associated with multiple values of $B$ and vice versa.
  
\begin{proof}
For $\depset = \set{A\ra B,B\ra A}$, although $\mc(\depset) = 2$, we argue below that still  $\distu(U^*, T) = \dists(S^*, T)$ for an optimal U-repair $U^*$ and an optimal S-repair $S^*$  for a table $T$ over $R$.
From Corollary~\ref{cor:U-S-opt-val-compare}, 
\begin{equation}\label{equn:AB-BA-1}
\dists(S^*, T) \leq \distu(U^*, T)
\end{equation}
Given $S^*$, we can create a consistent update $U$ by keeping the tuples in $S^*$ unchanged, and updating the $A$ or $B$ values of all deleted tuples in $S^*$ as follows.
Consider any tuple $\tup t \in T \setminus S^*$. There must exist a tuple $\tup s \in S^*$ with either $\tup t.A = \tup s.A$ or $\tup t.B = \tup s.B$, otherwise, $\tup t$ could be included in $S^*$ violating the optimality of $S^*$. Suppose there is a tuple $\tup s \in S^*$ with  $\tup t.A = \tup s.A$ without loss of generality. Then we change the value of $\tup t.B$ to $\tup s.B$ with the Hamming distance = 1. Hence we get a consistent update  $U$ with 
\begin{equation}\label{equn:AB-BA-2}
\dists(S^*, T) \geq \distu(U, T)
\end{equation}
Combining (\ref{equn:AB-BA-1}) and (\ref{equn:AB-BA-2}), $U$ is an optimal U-repair. 
\par 
 Since $\depset$ passes the
  test of $\algname{OSRSucceds}$ (by applying lhs marriage), from Theorem~\ref{thm:dichotomy} an optimal S-repair $S^*$ can be computed in polynomial time, from which 
  an optimal U-repair of $\set{A\ra B,B\ra A}$ can be computed in polynomial time as described above.
  \end{proof}

\subsection{Proof of Theorem~\ref{thm:marriage-u-hard}}
In this section we prove Theorem~\ref{thm:marriage-u-hard}.
\begin{reptheorem}{\ref{thm:marriage-u-hard}}
\thmmarriageuhard
\end{reptheorem}


  As discussed in Section~\ref{sec:incomparable-u-s}, we give a reduction from the vertex cover problem in bounded-degree graphs (known to be APX-complete) showing that the size of the minimum vertex cover is $k$ if and only if the distance of the optimal U-repair for the constructed instance is $2|E|+k$ (where $E$ is the set of edges in the original graph). This proof is inspired by the NP-hardness proof for  $\depset=\set{A\rightarrow B, B\rightarrow C}$ in~\cite{DBLP:conf/icdt/KolahiL09}, but needs some additional machinery to extend the reduction to $\marriageplus$.

\cut{
In this section we prove the following lemma.

\begin{lemma}\label{lemma:updates-ab-ba-bc}
  Let $(\signature, \depset)$ be an FD schema, such that $\signature$ consists of a single relation $R(A,B,C)$ and $\depset=\set{A\rightarrow B, B\rightarrow A, B\rightarrow C}$. Then, the problem $\maxvrepfd{\signature}{\depset}$ is NP-hard.
\end{lemma}

The idea of our proof of Lemma~\ref{lemma:updates-ab-ba-bc} is similar to the idea of the proof of NP-hardness for the set of FDs $\depset=\set{A\rightarrow B, B\rightarrow C}$ in~\cite{DBLP:conf/icdt/KolahiL09}. We construct a reduction from the minimum vertex cover problem and prove that the size of the minimum vertex cover is $k$ if and only if the distance of the optimal U-repair for the constructed instance is $2|E|+k$ (where $E$ is the set of edges in the original graph). However, in this case, we have to prove some nontrivial results before we can use a similar proof to theirs for the reduction.
}

\paragraph*{Construction}
The input to minimum vertex cover problem is a graph $G(V, E)$. The goal is to find a vertex cover $C$ for $G$ (a set of vertices that touches every edge in $G$), such that no other vertex cover $C'$ of $G$ contains less vertices than $C$ (that is, $|C'|<|C|$). This problem is known to be NP-hard even for bounded degree graphs. 
 Given such an input, 
 we will construct the input 
 $I$ for our problem as follows.
 Let $V=\set{v_1,\dots,v_n}$ be the set of vertices in $g$ and let $E=\set{e_1,\dots,e_m}$ be the set of edges in $g$. For each edge $e_r=(v_i,v_j)$ in $E$, the instance $I$ will contain the following tuples:
 \begin{itemize}
 \item
 $(v_i,v_j,\val{0})$,
 \item
$(v_j,v_i,\val{0})$.
 \end{itemize}
 In addition, for each vertex $v_i\in V$, the instance $I$ will contain the following tuple:
  \begin{itemize}
 \item
 $(v_i,v_i,\val{1})$,
 \end{itemize}
 
Let $C_{min}$ be a minimum vertex cover of $G$. We will now prove the following results:
 \begin{enumerate}
     \item For each vertex cover $C$ of $G$ of size $k$, there is a U-repair of $I$ with a distance $2|E|+k$.
     \item No U-repair of $I$ has a distance $2|E|+k$ for some $k<|C_{min}|$.
 \end{enumerate}
  Then, we can conclude that the size of the minimum vertex cover of $G$ is $m$ if and only if the distance of the optimal U-repair of $I$ is $2|E|+m$.
 

\paragraph*{Proof of (1)}
We will now prove that for each vertex cover $C$ of $g$ of size $k$, there is a U-repair of $I$ with distance $2|E|+k$. Assume that $C$ is a vertex cover of $g$ that contain $k$ vertices. We can build a U-repair $J$ of $I$ as follows: for each edge $e_r=(v_i,v_j)$, if $v_i\in C$, then we will change the tuples $(v_i,v_j,\val{0})$ and $(v_j,v_i,\val{0})$ to $(v_i,v_i,\val{0})$. Otherwise, we will change these two tuples to $(v_j,v_j,\val{0})$ (note that in this case, since $v_i$ does not belong to the vertex cover, $v_j$ does). Moreover, for each vertex $v_i$ that belongs to $C$, we will change the tuple $(v_i,v_i,\val{1})$ to $R(v_i,v_i,\val{0})$. Note that if a vertex $v_i$ does not belong to $C$, the instance $J$ will not contain any tuple of the form $(v_i,v_i,\val{0})$, thus the tuple $(v_i,v_i,\val{1})$ does not agree on the value of $A$ or $B$ with any other tuple in $J$, and is not in conflict with any other tuple. On the other hand, if a node $v_i$ does belong to $C$, the instance $J$ will only contain tuples of the form $(v_i,v_i,\val{0})$ for $v_i$, since in this case we change the tuple $(v_i,v_i,\val{1})$ to $(v_i,v_i,\val{0})$. Thus, $J$ is indeed a U-repair. Finally, for each edge $(v_i,v_j)\in E$ we update one cell in each of the tuples $(v_i,v_j,\val{0})$ and $(v_j,v_i,\val{0})$ ($2|E|$ cell updates), and for each vertex $V_i\in C$ we update one cell in the tuple $(v_i,v_i,\val{1})$ ($k$ cell updates), thus the distance of $J$ is $2|E|+k$.

\paragraph*{Proof of (2)}
\newcounter{counter}

In order to prove this part of the lemma (that is, that no U-repair of $I$ is of distance $2|E|+k$ for some $k<|C_{min}|$), we first have to prove the following lemma, which is a nontrivial part of the proof.

\begin{lemma}\label{lemma:comp-proof}
Let $J$ be a U-repair of $I$ that updates $t$ tuples of the form $(v_i,v_j,\val{0})$ for some $t<2|E|$. Then, there is a U-repair $J'$ of $I$, that updates at least $t+1$ tuples of the form $(v_i,v_j,\val{0})$, such that the distance of $J'$ is lower or equal to the distance of $J$.
\end{lemma}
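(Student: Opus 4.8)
The plan is to locate an edge tuple that $J$ leaves untouched and then perform a bounded local surgery on $J$ that pushes this tuple, together with the few tuples entangled with it, into a canonical ``oriented'' form, arguing along the way that the total distance cannot increase. Since $t<2|E|$, at least one edge tuple is unchanged in $J$; fix one, $\tup e=(v_i,v_j,\val0)$, coming from the edge $e=\set{v_i,v_j}$.

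The starting point is a structural fact about consistent updates for $\depset=\set{A\ra B,B\ra A,B\ra C}$: since $A$ and $B$ determine each other and both determine $C$, the tuples of $J$ split into blocks, a block being pinned down equally well by its common $A$-value or by its common $B$-value (distinct blocks have distinct $A$-values and distinct $B$-values), with a single $C$-value per block. Because $\tup e$ survives in $J$, the set of tuples of $J$ with $A$-value $v_i$ \emph{equals} the set with $B$-value $v_j$, and every tuple in it is literally $(v_i,v_j,\val0)$; call this block $M$. The tuples of $I$ that can lie in $M$ cheaply are $\tup e$ itself (distance $0$), the partner tuple $(v_j,v_i,\val0)$ and the vertex tuples $(v_i,v_i,\val1)$, $(v_j,v_j,\val1)$ (distance $2$ each), and incident edge tuples of the shape $(v_i,v_k,\val0)$ or $(v_k,v_j,\val0)$ (distance $1$ each).

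Next I would fix the surgery. Consider the consistent updates that agree with $J$ on every tuple of $I$ \emph{not} mentioning $v_i$ or $v_j$ and that do not leave $\tup e$ unchanged, and take $J'$ to be a minimum-distance one among them. Two candidate shapes for $J'$ are canonical: ``orient $e$ toward $v_i$'', in which the two edge tuples of $e$ and the vertex tuple $(v_i,v_i,\val1)$ are set to $(v_i,v_i,\val0)$ while $(v_j,v_j,\val1)$ and the other tuples that happened to sit in $M$ are freed to revert toward their original (or a cheaper oriented) value, and the mirror ``orient $e$ toward $v_j$''; when the label $v_i$ or $v_j$ is already in use elsewhere in $J$, the canonical form instead uses a fresh constant for the merged block and the accounting is redone. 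Being a minimum, $J'$ is at least as cheap as either candidate, it is a consistent update, it updates $\tup e$, and a routine extra step keeps every edge tuple already updated by $J$ still updated, so $J'$ updates at least $t+1$ edge tuples.

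The hard part is the inequality $\distu(J',T)\leq\distu(J,T)$, and it is genuinely not local, because $M$ (and the mirror blocks around $v_j$) may also contain images of edge tuples of \emph{other} edges incident to $v_i$ or $v_j$, so rewriting them interacts with those edges' gadgets. The crux is a counting inequality confined to the tuples of $I$ that mention $v_i$ or $v_j$: the distance $J$ incurs on this set is at least that of the cheaper canonical orientation. The surviving tuple $\tup e$ is exactly what drives this — it forces \emph{both} vertex tuples $(v_i,v_i,\val1)$ and $(v_j,v_j,\val1)$ to be updated in $J$ (each conflicts with $\tup e$, one via $A\ra B$ and one via $B\ra A$), and a case analysis on where these two tuples and the partner tuple land in $J$ (inside $M$, in fresh singleton blocks, or in another existing block) shows the updates $J$ already pays around $e$ cover the at most three updates the canonical orientation adds; in particular, whenever $(v_j,v_j,\val1)$ sits in $M$ it costs $2$ there and can be ``refunded'' by reverting it, which is the delicate accounting point. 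Iterating this lemma then drives any U-repair down to one that updates all $2|E|$ edge tuples without increasing its distance, which is exactly what part~(2) of the reduction needs.
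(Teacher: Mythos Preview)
Your high-level plan matches the paper's: pick an unchanged edge tuple $\tup e=(v_1,v_2,\val0)$, note that the vertex tuples $(v_1,v_1,\val1)$, $(v_2,v_2,\val1)$ and the partner $(v_2,v_1,\val0)$ are forced to absorb a certain minimum cost in $J$, and locally rewrite these four tuples (plus any others that collide) into a canonical consistent shape of no greater cost. The gap is in the specific canonical shape you choose and in the accounting you defer. Your ``orient $e$ toward $v_i$'' form sends all four core tuples to $(v_i,v_i,\val0)$ except that $(v_j,v_j,\val1)$ is \emph{reverted}. But reverting $(v_j,v_j,\val1)$ reinstates a block with $A=v_j$, $B=v_j$, $C=\val1$, and this can immediately conflict with tuples of $J$ you have not touched: for instance, if some other edge tuple $(v_j,v_k,\val0)$ is itself unchanged in $J$ (nothing forbids this), then $J$ already has a block at $A=v_j$ with $B=v_k$, $C=\val0$, and your reverted vertex tuple clashes with it. Your escape hatch (``use a fresh constant for the merged block and redo the accounting'') does not say which block gets the fresh constant or how the budget then balances; in particular a fresh constant on the vertex tuple costs at least one extra update you have not accounted for. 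The paper avoids exactly this trap by using a \emph{symmetric} canonical form: $(v_1,v_2,\val0)\mapsto(v_1,v_1,\val0)$, $(v_1,v_1,\val1)\mapsto(v_1,v_1,\val0)$, $(v_2,v_2,\val1)\mapsto(v_2,v_2,\val0)$, $(v_2,v_1,\val0)\mapsto(v_2,v_2,\val0)$, costing $4$, and then proves that $J$ already pays at least $4$ on $\{\tup t_2,\tup t_3,\tup t_4\}$ alone. Having both target blocks carry $C=\val0$ is what lets every stray tuple be redirected cheaply in the subsequent case analysis.

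Second, the case analysis you label ``routine'' is the actual content of the lemma and is not short. The paper walks through roughly ten cases (tuples of $J$ equal to $(v_1,v_2,\val0)$; tuples with $A=v_2$ or $B=v_1$; originals of various shapes) and in two of them must go back and sharpen the lower bound on $J$'s cost from $4$ to $5$ to afford one extra cell change. This sharpening relies on the observation that if $J$ contains an unchanged $(v_2,v_k,\val0)$ then $J$ must have altered the $A$-value of \emph{both} $\tup t_3$ and $\tup t_4$, which is precisely the kind of non-local interaction you flagged but did not resolve. Your ``refund'' intuition for $(v_j,v_j,\val1)$ sitting in $M$ is on the right track, but as written it does not cover these cases, and your claim that updated edge tuples stay updated after the surgery is exactly what the paper's case split is designed to verify tuple by tuple.
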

\begin{proof}
Since it holds that $t<2|E|$, there is at least one tuple $(v_1,v_2,\val{0})$, for some $v_1,v_2\in V$ that is not updated by $J$. We will now show that we can build another U-repair $J'$, that updates the tuple $(v_1,v_2,\val{0})$ (as well as every tuple that is updated by $J$), such that the distance of $J'$ is lower or equal to the distance of $J$. That is, $J'$ will update at least $t+1$ tuples of the form $(v_i,v_j,\val{0})$.

The tuple $\tup t_1=(v_1,v_2,\val{0})$ is in conflict with the tuples $\tup t_2=(v_1,v_1,\val{1})$ and $\tup t_3=(v_2,v_2,\val{1})$. Moreover, these two tuples are in conflict with the tuple $\tup t_4=(v_2,v_1,\val{0})$. We will now show that since the tuple $\tup t_1$ is not updated by $J$ and $J$ is consistent, $J$ must make at least four changes to the tuples in $\set{\tup t_2,\tup t_3,\tup t_4}$. The following holds: 
\begin{enumerate}
    \item To resolve the conflict between $\tup t_1$ and $\tup t_2$ (that agree only on the value of attribute $A$), we have to either change the value of attribute $A$ in $\tup t_2$ to another value, or change the values of both attributes $B$ and $C$ to $v_2$ and $\val{0}$, respectively. In the first case, since $\tup t_2$ is still in conflict with $\tup t_4$ (as they only agree on the value of attribute $B$), we also have to change the value of attribute $B$ in one of $\tup t_2$ or $\tup t_4$. In the second case, $\tup t_2$ and $\tup t_4$ are no longer in conflict, thus we do not have to make any more changes. In both cases, the distance of the update is $2$.
    \item To resolve the conflict between $\tup t_1$ and $\tup t_3$ (that agree only on the value of attribute $B$), we have to either change the value of attribute $B$ in $\tup t_3$ to another value, or change the values of both attributes $A$ and $C$ to $v_1$ and $\val{0}$, respectively. In the first case, since $\tup t_3$ is still in conflict with $\tup t_4$ (as they only agree on the value of attribute $A$), we also have to change the value of attribute $A$ in one of $\tup t_3$ or $\tup t_4$. In the second case, $\tup t_3$ and $\tup t_4$ are no longer in conflict, thus we do not have to make any more changes. In both cases, the cost of the update is $2$.
\end{enumerate} 
Note that the updates presented in $(1)$ and $(2)$ above are independent (as they change different attributes in different tuples), thus we have to combine them to resolve all of the conflicts among the tuples in $\set{\tup t_1,\tup t_2,\tup t_3,\tup t_4}$, and the total cost of updating the tuples $\tup t_2$, $\tup t_3$ and $\tup t_4$ in $J$ is at least $4$. Moreover, note that it is not necessary to change the value of attribute $A$ in both $\tup t_3$ and $\tup t_4$ or change the value of attribute $B$ in both $\tup t_2$ and $\tup t_4$, since changing just one of them resolves the conflict between the tuples. If $J$ updates both of these values, then the cost of updating the tuples $\tup t_2$, $\tup t_3$ and $\tup t_4$ is increased by one for each pair of tuples ($\tup t_2$ and $\tup t_4$, or $\tup t_3$ and $\tup t_4$) updated together. We will use this observation later in the proof.

Now, we will build an instance $J'$ that updates each one of the tuples updated by $J$, as well as the tuple $\tup t_1$. We will prove that the distance of $J'$ is not higher than the distance of $J$. We will start by adding these four tuples to $J'$:
\begin{enumerate}
    \item Instead of the tuple $\tup t_1$ ($(v_1,v_2,\val{0})$), we will insert the tuple $(v_1,v_1,\val{0})$.
    \item Instead of the tuple $\tup t_2$ ($(v_1,v_1,\val{1})$), we will insert the tuple $(v_1,v_1,\val{0})$.
    \item Instead of the tuple $\tup t_3$ ($(v_2,v_2,\val{1})$), we will insert the tuple $(v_2,v_2,\val{0})$.
    \item Instead of the tuple $\tup t_4$ ($(v_2,v_1,\val{0})$), we will insert the tuple $(v_2,v_2,\val{0})$.
    \setcounter{counter}{\value{enumi}}
\end{enumerate}
So far we only updated one cell in each of the four tuples discussed above. Thus, the current cost is $4$. As mentioned above, the cost of changing these tuples in $J$ is at least $4$, thus so far we did not exceed the cost of $J$. From now on, we will look at the rest of the tuples in $J$ and insert each one of them to $J'$ (in some cases, only after updating them to resolve some conflict). 
\begin{enumerate}
    \setcounter{enumi}{\value{counter}}
    \item For each tuple $\tup t$ that appears in $J$ that is not in conflict with one of $(v_1,v_1,\val{0})$ or $(v_2,v_2,\val{0})$, we insert $\tup t$ to $J'$.
    \setcounter{counter}{\value{enumi}}
\end{enumerate}
Clearly, $J'$ is consistent at this point, and since those tuples are updated by $J'$ in exactly the same way they were updated by $J$, we did not exceed the distance of $J$. Each one of the remaining tuples is in conflict with one of $(v_1,v_1,\val{0})$ or $(v_2,v_2,\val{0})$, thus we have to update it before inserting it to $J'$.
\begin{enumerate}
    \setcounter{enumi}{\value{counter}}
    \item Each tuple $\tup t$ such that $\tup t.A=v_1$ is in conflict with the tuple $(v_1,v_1,\val{0})$. Since the tuple $(v_1,v_2,\val{0})$ belongs to $J$ and $J$ is consistent, it holds that $\tup t=(v_1,v_2,\val{0})$. 
    There are a few possible cases:
        \begin{itemize}
            \item If the original tuple $\tup t$ (the tuple in the instance $I$) was of the form $(v_i,v_j,\val{0})$ for some $v_i\neq v_1$ and $v_j\neq v_2$, then $J$ updated two cells in this tuple. We will replace this tuple in $J'$ with the tuple $(v_1,v_1,\val{0})$. Note that the cost is the same (since we again update two cells).
            \item If the original tuple $\tup t$ was of the form $(v_1,v_j,\val{0})$ for some $v_j\neq v_2$, then $J$ updated one cell in this tuple. We will replace this tuple in $J'$ with the tuple $(v_1,v_1,\val{0})$. Again, the cost is the same.
            \item If the original tuple was of the form $(v_i,v_2,\val{0})$ for some $v_i\neq v_1$, we will replace it with the tuple $(v_2,v_2,\val{0})$ with no additional cost.
            \item If the original tuple was of the form $(v_i,v_i,\val{1})$ for some $v_i\not\in\set{v_1,v_2}$ (we already handled the case where $v_i$ is one of $v_1$ or $v_2$), then $J$ updated three cells in this tuple (the values in all of the attributes are different from the values in the tuple $(v_1,v_2,\val{0})$), and we will replace it with the tuple $(v_2,v_2,\val{0})$ with no additional cost.
        \end{itemize}
    \item Each tuple $\tup t$ such that $\tup t.B=v_2$ is in conflict with the tuple $(v_2,v_2,\val{0})$. Again, since the tuple $(v_1,v_2,\val{0})$ belongs to $J$ and $J$ is consistent, it holds that $\tup t=(v_1,v_2,\val{0})$. This case is symmetric to the previous one, thus we will not elaborate on all of the possible cases.
    \setcounter{counter}{\value{enumi}}
\end{enumerate}
So far, we inserted to $J'$ one tuple (either $(v_1,v_1,\val{0})$ or $(v_2,v_2,\val{0})$) for each of the four tuples in $\set{\tup t_1,\tup t_2,\tup t_3,\tup t_4}$, one tuple for each tuple in $J$ that is not in conflict with neither $(v_1,v_1,\val{0})$ nor $(v_2,v_2,\val{0})$, and one tuple (again either $(v_1,v_1,\val{0})$ or $(v_2,v_2,\val{0})$) for each tuple in $J$ that agrees with $(v_1,v_1,\val{0})$ only on the value of attribute $A$ or agrees with $(v_2,v_2,\val{0})$ only on the value of attribute $B$. It is left to insert each tuple that agrees with $(v_1,v_1,\val{0})$ on the value of attribute $B$ or agrees with $(v_2,v_2,\val{0})$) on the value of attribute $A$.
\begin{enumerate}
    \setcounter{enumi}{\value{counter}}
    \item Each tuple $\tup t$ such that $\tup t.A=v_2$ is in conflict with the tuple $(v_2,v_2,\val{0})$. Note that in this case it holds that $\tup t.B\neq v_2$, since $J$ is consistent and also contains the tuple $(v_1,v_2,\val{0})$. Assume that $\tup t$ is the tuple $(v_2,v_i,x)$ for some $v_i\in V$. Again, there are a few possible cases:
        \begin{itemize}
            \item  If the original tuple $\tup t$ (the tuple in the instance $I$) was of the form $(v_j,v_i',y)$ for some $v_j\neq v_2$ and $v_i'\neq v_i$, then $J$ updated at least two cells in this tuple in $J$. We will replace this tuple in $J'$ with the tuple $(v_2,v_2,\val{0})$, with no additional cost.
            \item If the original tuple $\tup t$ was of the form $(v_2,v_i',\val{0})$ for some $v_i'\neq v_i$, then $J$ updated at least one cell in this tuple (the value of attribute $B$). We will replace this tuple in $J'$ with the tuple $(v_2,v_2,\val{0})$, with no additional cost.
            \item If the original tuple $\tup t$ was of the form $(v_j,v_i,x)$ for some $v_j\neq v_2$, then $J$ updated at least one cell in this tuple (the value of attribute $A$). We will replace this tuple in $J'$ with the tuple $(y_i,v_i,\val{0})$, for some new value $y_i$ that depends on $i$, with no additional cost.
        \end{itemize}
    \item Each tuple $\tup t$ such that $\tup t.B=v_1$ is in conflict with the tuple $(v_1,v_1,\val{0})$. This case is symmetric to the previous one, thus we will not elaborate on all of the possible cases.
    \setcounter{counter}{\value{enumi}}
\end{enumerate}
At this point, $J'$ contains tuples of the form $(v_1,v_1,\val{0})$ and $(v_2,v_2,\val{0})$, tuples from $J$ that are not in conflict with these tuples, and tuples of the form $(y_i,v_i,\val{0})$. Note that we insert a tuple $(y_i,v_i,\val{0})$ to $J'$ only if there is a tuple $(v_2,v_i,x)$ in $J$ for some $v_i\neq v_2$. Since the tuple $(v_2,v_i,x)$ belongs to $J$, no other tuple $\tup t'$ such that $\tup t'.B=v_i$ also belongs to $J$, unless the tuples are identical, and all of these identical tuples are replaced by the same tuple. Thus, there is no tuple $\tup t''$ in $J$, that is not in conflict with one of $(v_1,v_1,\val{0})$ or $(v_2,v_2,\val{0})$, such that $\tup t''.B=v_i$, and the tuple $(y_i,v_i,\val{0})$ agrees on the value of attribute $B$ only with other tuples that look exactly the same. Moreover, the tuple $(y_i,v_i,\val{0})$ agrees on the value of attribute $A$ only with other tuples that look exactly the same, since we use a new value $y_i$ that does not appear in any other tuple in the instance. Thus, no tuple $(y_i,v_i,\val{0})$ is in conflict with the rest of the tuples in $J'$, and $J'$ is consistent. Moreover, since we only replaced tuples with no additional cost, so far we did not exceed the distance of $J$.

Finally, there are just two more cases that we have not covered yet. Assume that $\tup t$ is the tuple $(v_2,v_i,x)$ for some $v_i\neq v_2$, and assume that the original tuple $\tup t$ was the tuple $(v_2,v_i,\val{0})$ and $v_i\neq v_1$ (if $v_i=v_1$ then we have already covered this case). Then, it may be the case that the tuple was not updated by $J$ at all. Since the tuple $(v_2,v_i,x)$ belongs to $J$ and $J$ is consistent, no other tuple $\tup t'\in J$ is such that $\tup t'.A=v_2$ (unless they are identical). Thus, $J$ updated the value of attribute $A$ in both $\tup t_3$ (the tuple $(v_2,v_2,\val{1})$) and $\tup t_4$ (the tuple $(v_2,v_1,\val{0})$). As mentioned in the observation at the beginning of this proof, if $J$ updated both these values, then the cost of updating the tuples $\tup t_2$, $\tup t_3$ and $\tup t_4$ is at least $5$ (and not at least $4$ as we assumed so far). Thus, in this case, we can use the additional cost that we have, to change one more cell in the tuple $(v_2,v_i,x)$. 

Similarly, if $J$ contains a tuple $\tup t$ of the form $(v_i,v_1,x)$ for some $v_i\neq v_1$, and the original tuple $\tup t$ was the tuple $(v_i,v_1,\val{0})$ and $v_i\neq v_2$, then $J$ updated the value of attribute $B$ in in both $\tup t_2$ (the tuple $(v_1,v_1,\val{1})$) and $\tup t_4$ (the tuple $(v_2,v_1,\val{0})$). This again gives us an additional cost to update one more cell in the tuple $(v_i,v_1,x)$. Thus, we will do the following:
\begin{enumerate}
    \setcounter{enumi}{\value{counter}}
    \item If the tuple $(v_2,v_i,x)$ for some $v_i\not\in\set{v_1,v_2}$ belongs to $J$, and the original tuple was $(v_2,v_i,\val{0})$, then we replace this tuple in $J'$ with the tuple $(v_2,v_2,\val{0})$ with an additional cost of at most one.
    \item If the tuple $(v_j,v_1,x)$ for some $v_j\not\in\set{v_1,v_2}$ belongs to $J$, and the original tuple was $(v_j,v_1,\val{0})$, then we replace this tuple in $J'$ with the tuple $(v_1,v_1,\val{0})$ with an additional cost of at most one.
    \setcounter{counter}{\value{enumi}}
\end{enumerate}
At this point, we have inserted to $J'$ one tuple for each tuple in $J$, and as explained above, $J'$ is now consistent. Moreover, we never exceeded to the of $J$, thus we found a new U-repair $J'$ that updates each tuple updated by $J$, as well as the tuple $(v_1,v_2,\val{0})$ and that concludes our proof of the lemma.
\end{proof}

Lemma~\ref{lemma:comp-proof} implies that for each U-repair $J$ of $I$ that updates $t$ tuples of the form $(v_i,v_j,\val{0})$ for some $t<2|E|$, there is a U-repair $J'$ of $I$, that updates all of the tuples of the form $(v_i,v_j,\val{0})$, such that the distance of $J'$ is lower or equal to the distance of $J$. Thus, each U-repair of $I$ is of distance at least $2|E|$.

Finally, we can prove that no U-repair of $I$ is of distance $2|E|+k$ for some $k<|C_{min}|$. The proof of this part of the lemma is very similar to the proof of~\cite{DBLP:conf/icdt/KolahiL09}. Each U-repair of $I$ has a distance $2|E|+k$ for some $k$. Let us assume, by way of contradiction, that there is a U-repair of $I$ that updates $2|E|+k$ cells, for some $k<|C_{min}|$ (where $C_{min}$ is a minimum vertex cover of $g$). Let $J$ be the optimal U-repair of $I$ (clearly, in this case, $J$ updates $x=2|E|+k'$ cells, for some $k'<|C_{min}|$). Lemma~\ref{lemma:comp-proof} implies that we can assume that $J$ updates all of the tuples of the form $(v_i,v_j,\val{0})$.
Let $\set{f_1,\dots,f_l}$ be the set of tuples of the form $(v_i,v_i,\val{1})$ that are updated by $J$, and let $C'=\set{v_{i_1},\dots,v_{i_l}}$ be the set of the vertices that appear in those facts. Since $J$ updates all of the $2|E|$ tuples of the form $(v_i,v_j,\val{0})$, the distance of $J$ is greater or equal to $2|E|+|C'|$, thus we get that $2|E|+|C_{min}|>2|E|+k'>=2|E|+|C'|$, and $|C'|<|C_{min}|$. Therefore, $C'$ cannot be a cover of $g$ (since the size of the minimum vertex cover is $|C_{min}|$. Thus, there are at least $|C_{min}|-|C'|$ edges $(v_i,v_j)$ in $g$, such that neither $v_i\in C'$ nor $v_j\in C'$ (that is, $J$ doesn't update the tuple $(v_i,v_i,\val{1})$ and also doesn't update the tuple $(v_j,v_j,\val{1})$). For each one of these $|C_{min}|-|C'|$ edges, we have to change at least $4$ cells in the tuples $(v_i,v_j,\val{0})$ and $(v_j,v_i,\val{0})$ (the values of attributes $A$ and $B$), thus the distance of $J$ is greater or equal to $4(|C_{min}|-|C'|)+2(|E|-(|C_{min}|-|C'|))+|C'|=2|C_{min}|+2|E|-|C'|$. Finally, we get that $2|E|+|C_{min}|>=2|C_{min}|+2|E|-|C'|$, that is $|C'|>=|C_{min}|$, which is a contradiction to the tuple that $|C'|<|C_{min}|$.

The last step is to show that our reduction is a PTAS reduction. Since our reduction is from the problem of finding a minimum vertex cover for a graph of a bounded degree $B$, the size of the minimum vertex cover is at least $\frac{|E|}{B}$ (as each vertex covers at most $B$ edges). Now, let us assume, by way of contradiction, that finding an optimal U-repair under $\set{A\rightarrow B, B\rightarrow A, A\rightarrow C}$ is not APX-hard. That is, for every $\epsilon>0$, there exists a $(1+\epsilon)$-optimal solution to that problem. Let $S_U$ be the such a solution and let $OPT_U$ be an optimal solution. Then,
\begin{equation}
    \begin{aligned} 
        dist(S_U)-dist(OPT_U)\le \epsilon\cdot dist(OPT_U)
    \end{aligned}
\end{equation}

We will now show that in this case, $S_C$ (which is the vertex cover corresponding to the the U-repair $S_U$) is a $(1+2B)\epsilon$-optimal solution to the minimum vertex cover problem. Let $OPT_C$ be an optimal solution to the problem. We recall that $|OPT_C|\ge \frac{|E|}{B}$. Then,

\begin{equation}
    \begin{aligned} 
        |S_{C}|-|OPT_{C}|=(dist(S_{U})-2|E|)-(dist(OPT_U)-2|E|)=dist(S_U)-dist(OPT_U)\le \\
        \epsilon\cdot dist(OPT_U)=\epsilon\cdot (|OPT_C|+2|E|)\le \epsilon\cdot (|OPT_C|+2B|OPT_C|)=(1+2B)\epsilon\cdot |OPT_C|
    \end{aligned}
\end{equation}

It follows that for each $\epsilon>0$, we can get an $\epsilon$-optimal solution for the minimum vertex cover problem by finding a $(1+\frac{1}{1+2B})\epsilon$-optimal solution to the problem of computing an optimal U-repair, which is a contradiction to the fact that the minimum vertex cover problem is APX-hard.

\subsection{Proof of Theorem~\ref{theorem:u-KL-quad}}
In this section we prove Theorem~\ref{theorem:u-KL-quad}.
\begin{reptheorem}{\ref{theorem:u-KL-quad}}
\theoremuKLquad
\end{reptheorem}

We start by proving the first part of the theorem, that is, we prove the following.

\begin{lemma}
Let $k\geq 1$ be fixed. For the schema $R(A_0, \dots, A_k,
  B_0,\dots,B_k,C)$ and FD set $\depset_k=\set{A_0\cdots A_k \ra B_0, B_0 \rightarrow C, B_1 \rightarrow A_0,\, \ldots, \, B_k \rightarrow A_0}$, computing an optimal
  U-repair is APX-complete.
\end{lemma}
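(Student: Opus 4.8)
The plan is to reduce from the APX-hard problem of computing an optimal U-repair under $\set{A\ra B,B\ra C}$ (established in Example~\ref{example:disjoint-u} via Kolahi and Lakshmanan~\cite{DBLP:conf/icdt/KolahiL09} combined with the bounded-degree vertex cover hardness of~\cite{DBLP:journals/tcs/AlimontiK00}), to the problem under $R(A_0,\dots,A_k,B_0,\dots,B_k,C)$ and $\depset_k$. Membership in APX is immediate from Theorem~\ref{thm:approx-u}, so the whole task is to exhibit a PTAS (in fact strict) reduction from the source problem. The key observation driving the reduction is that in $\depset_k$ the FD $A_0\cdots A_k\ra B_0$ has a large left-hand side, but we can force the ``extra'' lhs attributes $A_1,\dots,A_k$ to behave as a single attribute by means of the satellite FDs $B_j\ra A_0$; and the FD $B_0\ra C$ plays exactly the role of $B\ra C$ in the source. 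So morally $\depset_k$ should collapse, on the relevant instances, to $\set{A\ra B,B\ra C}$ with $A$ in the role of (a proxy for) $A_0$ and $B$ in the role of $B_0$.

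\textbf{Construction.} Given a table $T$ over the schema of $\set{A\ra B, B\ra C}$, build a table $T'$ over $R(A_0,\dots,A_k,B_0,\dots,B_k,C)$ by mapping each tuple $(a,b,c)$ to the tuple whose $A_0$-value is $a$, whose $C$-value is $c$, whose $B_0$-value is $b$, whose $B_1,\dots,B_k$-values are all $a$ (so that the FDs $B_j\ra A_0$ are trivially consistent on $T'$ and will remain so under any reasonable update), and whose $A_1,\dots,A_k$-values are a fixed constant $\odot$. Weights are copied. Under this encoding the $A_1,\dots,A_k$ columns are constant and thus never an obstacle to satisfying $A_0\cdots A_k\ra B_0$; likewise the $B_1,\dots,B_k$ columns merely echo $A_0$. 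One would then argue: (i) from any consistent update $U'$ of $T'$ one can, without increasing cost, assume the $A_1,\dots,A_k$ columns are left untouched and the $B_j$ columns ($j\ge1$) continue to agree with $A_0$ (replace any deviating value by a fresh constant or by the $A_0$-value, using the freedom of an infinite domain), after which $U'$ induces a consistent update of $T$ under $\set{A\ra B,B\ra C}$ of no greater cost; and (ii) conversely, any consistent update $U$ of $T$ lifts coordinate-wise to a consistent update of $T'$ of the same cost. Together these give $\distu(U^{\prime *},T')=\distu(U^*,T)$ and a polynomial-time, cost-preserving translation in both directions, i.e.\ a strict reduction, which is in particular a PTAS reduction; combined with APX-hardness of the source and APX membership from Theorem~\ref{thm:approx-u}, this yields APX-completeness.

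\textbf{Main obstacle.} The delicate point is direction (i): showing that an optimal U-repair of $T'$ can be normalized, at no extra cost, so that it only ever updates $A_0$, $B_0$ and $C$ and keeps the satellite columns ``in sync.'' One must rule out that an optimal repair exploits the satellite attributes $A_1,\dots,A_k$ or $B_1,\dots,B_k$ to satisfy $A_0\cdots A_k\ra B_0$ more cheaply than would be possible with $A_0$ alone --- for instance by splitting a conflicting group on some $A_j$ rather than on $A_0$. The argument should observe that any value change to a satellite lhs attribute $A_j$ costs the same as a change to $A_0$ but does not help with the FDs $B_0\ra C$ or $B_j\ra A_0$, so it can be swapped for a change to $A_0$ (or avoided entirely by a fresh-constant trick on $B_j$) without increasing the weighted Hamming distance; a careful bookkeeping of which tuples agree on which columns, as in the proofs of Kolahi and Lakshmanan~\cite{DBLP:conf/icdt/KolahiL09} and in the proof of Theorem~\ref{thm:marriage-u-hard}, closes the gap. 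Once this normalization lemma is in place, the equivalence of optimal costs and the strictness of the reduction are routine.
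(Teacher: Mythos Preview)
Your high-level plan---reducing from optimal U-repair under $\{A\ra B,B\ra C\}$ and invoking Theorem~\ref{thm:approx-u} for APX membership---is exactly what the paper does. The gap is in your encoding: you place the source attribute $A$ into $A_0$, but $A_0$ is the \emph{right-hand side} of every satellite FD $B_j\ra A_0$, and this breaks both directions you claim. For direction (ii), if the source repair changes $A$ in one tuple (from $a$ to $a'$) while leaving a second tuple with the same original value $a$ untouched, your coordinate-wise lift yields two tuples in $T'$ that still agree on every $B_j$ (both equal to $a$) while disagreeing on $A_0$, violating $B_j\ra A_0$; restoring consistency by also updating the $B_j$'s costs $k$ extra per such tuple. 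For direction (i), take $k=1$ and $T'=\{(a,\odot,b_1,a,c_1),(a,\odot,b_2,a,c_2)\}$ with $b_1\neq b_2$ and $c_1\neq c_2$: the optimal repair of $T'$ changes $A_1$ of one tuple to a fresh constant at cost $1$, but any repair that keeps $A_1=\odot$ and maintains $B_1=A_0$ must either change both $A_0$ and $B_1$ of some tuple or change both $B_0$ and $C$, so its cost is at least $2$. In short, your proposed swap (``replace a change in a satellite $A_j$ by a change in $A_0$'') points the wrong way: changing $A_j$ for $j\ge 1$ is harmless for the satellite FDs, while changing $A_0$ is precisely what can violate them.

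The paper's construction avoids this by encoding $A$ into $A_1$ rather than $A_0$: map $(a,b,c)$ to the tuple with $A_1=a$, $B_0=b$, $C=c$, and a single constant $0$ in all of $A_0,A_2,\dots,A_k,B_1,\dots,B_k$. Since $A_1$ never appears on a right-hand side in $\depset_k$, assigning it a fresh constant cannot create a new violation, and the satellite FDs $B_j\ra A_0$ remain trivially satisfied throughout because both sides are identically $0$. The normalization step then reads: for any tuple where $U'$ touches an attribute outside $\{A_1,B_0,C\}$, reset those attributes to $0$ and give $A_1$ a fresh constant, keeping $B_0$ and $C$ as in $U'$; this never increases cost, and the resulting update projects onto $(A_1,B_0,C)$ to give a repair of $T$ under $\{A\ra B,B\ra C\}$ of the same cost.
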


\begin{proof}
We reduce it from the problem of finding an optimal U-repair for $\set{A \rightarrow B, B \rightarrow C}$ on a schema $S(A, B, C)$ that has been shown to be NP-hard by Kolahi and Lakshmanan even if all tuples have the same weight and there are no duplicate tuples \cite{DBLP:conf/icdt/KolahiL09} by a reduction from vertex cover. Since there exists a constant factor approximation for optimal U-repair and degree-bounded vertex cover is known to be APX-hard, this reduction suffices to prove that $\depset_k$ is APX-complete. 
\par  
\paragraph*{Construction} 
Given a table $T_S$ on $S(A, B, C)$, we construct a table $T_R$ on $R(A_0, \cdots, A_k,
  B_0,\dots,B_k, C)$, where for  every tuple $\tup s = (a, b, c)$ in $T_S$, we create a tuple $\tup r = (0, a, 0, 0, \cdots 0, b, 0, 0, \cdots, 0, c)$ in $T_R$, i.e., $\tup r.A_1 = \tup s.A$, $\tup  r.B_0 = \tup s.B, \tup r.C = \tup s.C$, and $\tup t$ has value 0 in the remaining columns $A_0, A_2, \cdots, A_k$ and $B_1, \cdots, B_k$. We claim that $T_S$ has a consistent update of distance $\leq M$  if and only if $T_R$ has a consistent update of distance $\leq M$.
\par
\paragraph*{The ``if'' direction} 
Suppose $S$ has a consistent update $U_S$ of distance $M$. We can obtain a consistent update $U_R$ of $R$ of the same distance by updating the $A_1, B_0, C$ values of $T_R$ as in $T_S$, and leave $A_0, A_2, \cdots, A_k$ and $B_1, \cdots, B_k$ unchanged in $T_R$. If $B_0 \rightarrow C$ is violated in $U_R$ for two tuples $\tup r_1, \tup r_2$, $B \ra C$ will also be violated for the corresponding tuples $s_1, s_2$ in $U_S$. If $A_0 \cdots A_k \rightarrow B_0$ is violated for two tuples $\tup r_1, \tup r_2$ in $U_R$, they must have different values of  $B_0$ but the same value of $A_0\cdots A_k$, \ie, the same value of $A_1 = A$  in $U_S$ (since $A_0, A_2, \cdots, A_k$ are all 0 in $U_R$), violating the FD $A \rightarrow B_0$ in $U_S$ and contradicting that it is a consistent update of $T_S$. The FDs $B_1 \rightarrow A_0\, \, \ldots, \, B_k \rightarrow A_0$ are satisfied in $U_R$ since all of $A_0$ and $B_1, \cdots, B_k$ have values 0.
\par
\paragraph*{The ``only if'' direction} 
Suppose we have a consistent update $U_R$ of $T_R$ of distance $M$. We will transform $U_R$ to another consistent update $U_R'$ such that only $A_1, B_0, C$ attributes are updated, and all of $B_1, \cdots, B_k$ and $A_0, A_2, \cdots, A_k$ are unchanged from $T_R$ (all tuples have values 0 in all these attributes in $U_R'$).
\par
To achieve this, consider the subset of tuple identifiers $M_R \subseteq \ids(T_R)$ such that for any tuple $\tup t = T_R[i]$, $i \in M_R$, at least one attribute from $B_1, \cdots, B_k$ or $A_0, A_2, \cdots, A_k$ has been updated in $U_R$ (not equal to 0, and therefore $H(T_R[i], U_R[i]) \geq 1$).  For all tuples with identifiers in $M_R$, we update $U_R$ to $U_R'$ by (i) keeping all attributes in      $B_1, \cdots, B_k$ and $A_0, A_2, \cdots, A_k$ as 0 (unchanged from $T_R$), and (ii) assigning a fresh constant from the infinite domain $\dom$ to attribute $A_1$ in each tuple (changed from $T_R$); therefore for all $i \in M_R$, $H(T_R[i], U_R'[i]) = 1$. Since tuples with identifiers $\notin M_R$ remain the same in $U_R$ and $U_R'$, and the tuples with identifiers $i \in M_R$ have
$H(T_R[i], U_R[i]) \geq 1$ and $H(T_R[i], U_R'[i]) = 1$, therefore, the $\distu(U_R', T_R) \leq \distu(U_R, T_R)$, \ie, this transformation does not increase the distance. 
\par
Next we argue that $U_R'$ is a consistent update for $T_R$ where all tuples have values 0 in all all attributes in $B_1, \cdots, B_k$ and $A_0, A_2, \cdots, A_k$. (a) Since $U_R$ was a consistent update, it satisfied the FD $B_0 \ra C$, and since neither of $B_0$ and $C$ is changed in $U_R'$ from $U_R$, this FD is still satisfied in $U_R'$. (b) For the FDs  $B_1 \rightarrow A_0\, \, \cdots, \, B_k \rightarrow A_0$, since all tuples in $U_R'$ have values 0 in all these attributes, so all of these FDs are satisfied in $U_R'$ as well. (c) For the final FD $A_0\cdots A_k \ra B_0$, for tuples with identifiers $\notin M_R$, this FD is satisfied since they have the same value of the attributes $A_0\cdots A_k, B_0$ in $U_R$ and $U_R'$. For tuples with identifiers $\notin M_R$, they received a fresh constant in $U_R'$ for attribute $A_1$, so they do not interfere with each other and with the tuples with identifiers $\notin M_R$, and therefore the FD  $A_0\cdots A_k \ra B_0$ is satisfied in $U_R'$. Hence $U_R'$ is a consistent update for $T_R$.

Hence, we have a new consistent update $U_R'$ of $R$ with distance $\leq M$ where only $A_1, B_0, C$ are updated. Suppose $U_S = \rho_{(A, B, C)} [\pi_{A_1 B_0 C} U_R]$ (project $U_R'$ to $A_1B_0C$ and rename $A_1$ to $A$ and $B_0$ to $B$). It  can be seen that $U_S$ satisfies the FD $B \rightarrow C$, since $B_0 \ra C$ is maintained in $U_R'$. It also satisfies the FD $A \rightarrow B$, since otherwise there are two tuples $\tup s_1, \tup s_2 \in U_S$ with the same value of $A$ and different values of $B$. This implies that the corresponding tuples $\tup r_1, \tup r_2$ in $U_R$ have the same value of $A_1$ and different values of $B_0$. Since both $\tup r_1, \tup r_2$ have the same values of all attributes in $A_0, A_2, \cdots A_k$ (all 0), together with $A_1$, they have the same values of $A_0 \cdots A_k$ but different values of $B_0$, violating the FD  $A_0 \cdots A_k \rightarrow B$, and contradicting the tuple that $U_R'$ is a consistent update of $R$. Hence we get a consistent update$U_S$ of $S$ of at most distance $M$.
\end{proof}

Next, we prove the second part of the theorem. That is, we prove the following.

\begin{lemma}
Let $k\geq 1$ be fixed. For the schema $R(A_0, \dots, A_{k+1},
  B_0,\dots,B_k)$ and FD set $\depset'_k =\set{A_0A_1\ra B_0,\,A_1A_2\ra B_1,\,\dots,\,A_{k}A_{k+1}\ra B_k}$, computing an optimal
  U-repair is APX-complete.
\end{lemma}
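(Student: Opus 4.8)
The plan is to dispatch the base case $k=1$ using results already in hand, and then give a simple distance-preserving reduction from the $k=1$ instance to every $k>1$ instance; membership in APX comes for free.

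First, membership in APX: for every fixed $k$ the quantity $\mc(\depset'_k)=\lceil (k+1)/2\rceil$ is a constant, so Theorem~\ref{thm:approx-u} already gives a polynomial-time constant-factor approximation. It thus remains to prove APX-hardness. For $k=1$ we have $\depset'_1=\set{A_0A_1\ra B_0,\,A_1A_2\ra B_1}$, in which $A_1$ is a common lhs. When $\algname{OSRSucceeds}(\depset'_1)$ applies the common-lhs step it obtains $\set{A_0\ra B_0,\,A_2\ra B_1}$, which (up to renaming) is $\set{A\ra B,\,C\ra D}$ and admits no further simplification, so $\algname{OSRSucceeds}$ returns false. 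By Theorem~\ref{thm:dichotomy}, computing an optimal S-repair under $\depset'_1$ is APX-complete. Since $\depset'_1$ has a common lhs, Corollary~\ref{cor:U-S-same-mc-1} gives a strict reduction from the S-repair problem to the U-repair problem under $\depset'_1$; together with membership in APX this shows that computing an optimal U-repair under $\depset'_1$ is APX-complete.

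For $k>1$, the plan is to reduce from the U-repair problem under $\depset'_1$ over $R(A_0,A_1,A_2,B_0,B_1)$. Given a table $T$ over that schema, I would build a table $T'$ over $R(A_0,\dots,A_{k+1},B_0,\dots,B_k)$ with the same tuple identifiers and weights in which every tuple of $T$ keeps its values on $A_0,A_1,A_2,B_0,B_1$ and gets the value $0$ in each \emph{padding} column $A_3,\dots,A_{k+1},B_2,\dots,B_k$. The crucial feature is that for every $i\ge 2$ the right-hand side $B_i$ of the FD $A_iA_{i+1}\ra B_i$ is itself a padding column, so any table in which $B_2,\dots,B_k$ are constant satisfies all of these padded FDs automatically. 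Two observations then finish the reduction: \emph{(i)} a consistent update of $T$ under $\depset'_1$ extends to a consistent update of $T'$ under $\depset'_k$ of the same distance by leaving the padding columns at $0$; and \emph{(ii)} from any consistent update $U'$ of $T'$ under $\depset'_k$, overwriting the padding columns of $U'$ with $0$ yields a consistent update of $T'$ of no larger distance, whose projection to $A_0,A_1,A_2,B_0,B_1$ is a consistent update of $T$ under $\depset'_1$ of that same distance. From \emph{(i)} and \emph{(ii)} the optimal U-repair distances of $T$ and $T'$ coincide and an $\alpha$-optimal U-repair of $T'$ is turned in polynomial time into an $\alpha$-optimal U-repair of $T$; hence $T\mapsto T'$ is a strict reduction, and composing it with the APX-hardness of the $k=1$ case (a strict reduction composed with the PTAS reduction behind Theorem~\ref{thm:dichotomy} is again a PTAS reduction) proves APX-hardness for $\depset'_k$.

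The only step that needs genuine care is \emph{(ii)}: one has to argue that an (approximately) optimal U-repair of the padded instance never profits from touching the padding columns, and that ``resetting'' them to $0$ neither destroys consistency --- which is precisely where the fact that each padded FD has its right-hand side among the padding attributes is used --- nor increases any tuple's Hamming distance to $T'$. The rest (checking that the construction is well-defined, runs in polynomial time, and that the distance bookkeeping in \emph{(i)} and \emph{(ii)} is correct) is routine.
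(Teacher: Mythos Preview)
Your proposal is correct and follows essentially the same approach as the paper: the base case $k=1$ is handled via the common-lhs attribute $A_1$ together with Corollary~\ref{cor:U-S-same-mc-1} and Theorem~\ref{thm:dichotomy}, and the general case is handled by padding the extra columns with a constant and showing that the optimal U-repair distance is preserved. Your treatment of step \emph{(ii)} (explicitly resetting the padding columns before projecting) is slightly more detailed than the paper's direct projection argument, but the two are equivalent.
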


\begin{proof}
We have previously established membership in APX for the problem of computing an optimal U-repair (see Section~\ref{sec:update-repairs}), thus it is only left to show that the problem is APX-hard.
We start by proving that computing an optimal U-repair under $\depset'_k$ for $k=1$ is APX-hard. In this case, the set of FDs contains two FDs $A_0A_1\rightarrow B_0$ and $A_1A_2\rightarrow B_1$. This FD set has a common lhs $A_1$, thus Corollary~\ref{cor:U-S-same-mc-1}, combined with the fact that computing an optimal S-repair for an FD set of the form $\set{A\rightarrow B, C\rightarrow D}$ is APX-hard, imply that computing an optimal U-repair is APX-hard as well.

Next, we construct a reduction from computing an optimal U-repair under $\depset'_k$ for $k=1$ to computing an optimal U-repair under $\depset'_k$ for $k>1$. Given a table $T$ over the schema $R(A_0,A_1,A_2,B_0,B_1)$, we construct a table $T'$ over the schema $R(A_0, \dots, A_{k+1},
  B_0,\dots,B_k)$, where for every tuple $\tup t = (a_0, a_1, a_2, b_0, b_1)$ in $T$, we create a tuple $\tup t' = (a_0,a_1,a_2,\odot,\dots,\odot,b_0,b_1,\odot,\dots,\odot)$ in $T'$. That is, $\tup t'.A=\tup t.A$ for every $A\in\set{A_0,A_1,A_2,B_0,B_1}$, and $\tup t'=\odot$ for the rest of the attributes. We claim that $T$ has a consistent update of distance $\leq m$  if and only if $T'$ has a consistent update of distance $\leq m$.
\par
\paragraph*{The ``if'' direction} 
Suppose that $T$ has a consistent update $U$ of distance $m$. We can obtain a consistent update $U'$ of $T'$ that has the same distance by updating the values of $A_0, A_1,A_2,B_0,B_1$ in $T'$ in exactly the same way we update these values in $T$, and leave the values in the rest of the attributes in $T'$ unchanged. Clearly, each FD that is not one of $A_0A_1\rightarrow B_0$ or $A_1A_2\rightarrow B_1$ is satisfied by $U'$ (since we did not change the values of the attributes in $\set{B_2,\dots,B_k}$, thus all the tuples in $U'$ have the same value $\odot$ in these attributes, and they agree on the rhs of each of these FDs). Moreover, if two tuples $\tup t$ and $\tup t'$ violate one of $A_0A_1\rightarrow B_0$ or $A_1A_2\rightarrow B_1$, then the corresponding two tuples in $U$ also violate these FDs, which is a contradiction to the fact that $U$ is a consistent update of $T$. Clearly, the distance of both updates is the same.
\par
\paragraph*{The ``only if'' direction} 
Suppose we have a consistent update $U'$ of $T'$ of distance $m$. We can obtain a consistent update $U$ of $T$ that has a lower or equal distance by updating the values of $A_0, A_1,A_2,B_0,B_1$ in $T$ in exactly the same way we update these values in $T'$. Let us assume, by way of contradiction, that $U$ is inconsistent. In this case, there are two tuples $\tup t_1$ and $\tup t_2$ in $U$ that violate one of $A_0A_1\rightarrow B_0$ or $A_1A_2\rightarrow B_1$. There is a tuple $\tup t_1'$ in $U'$ that agrees with $\tup t_1$ on the value of each one of the attributes in $\set{A_0, A_1,A_2,B_0,B_1}$. Similarly, there is a tuple $\tup t_2'$ in $U'$ that agrees with $\tup t_2$ on the value of each one of the attributes in $\set{A_0, A_1,A_2,B_0,B_1}$. Clearly, these two tuples also violate the FDs $A_0A_1\rightarrow B_0$ or $A_1A_2\rightarrow B_1$, which is a contradiction to the fact that $U'$ is a consistent update of $T'$. Clearly, the distance of $U$ is at most $m$ (it can be lower than $M$ if $U'$ also updates values in the attributes not in  $\set{A_0, A_1,A_2,B_0,B_1}$).
\end{proof}


\cut{

\begin{itemize}
\item First we update $U_R$ without increasing the distance such that none of the $B_1, \cdots, B_k$ are updated for any tuple in $S$. Suppose there exists such a $B_i$, say $B_1$, and a tuple $\tup r$ in $U_R$ such that $\tup r.B_1 \neq 0$ in $U_R$. 
\begin{itemize}
\item If for all other tuples $\tup r' \in U_R$, $\tup r' \neq \tup r$, if $\tup r'.B_1 = 0 \Rightarrow \tup r'.A_0 = 0$, then 
\begin{itemize}
\item if $\tup r.A_0 = 0$ (unchanged from $T_R$), we keep $\tup r.B_1$ unchanged to 0 by reducing the Hamming distance of $\tup r$ by 1, and get another consistent update  without increasing the distance. Any tuple $\tup r'$ in the new $U_R$ with $\tup r'.B_1 = 0$ has $\tup r'.A_0 = 0$ (including $\tup r$), therefore this transformation for $\tup r$ does not violate the FD $B_1 \rightarrow A_0$.  Since the attributes in the FD $A_0\cdots A_k \ra B_0$ (and all other FDs) remain unaffected, the new $U_R$ is also a consistent update.
\item if $\tup r.A_0 \neq 0$ (changed in $U_R$ from $T_R$), we can keep $\tup r.B_1 = 0$  (unchanged from $T_R$), keep $\tup r.A_0 = 0$  (unchanged from $T_R$), instead  update $\tup r.A_1$ to a fresh constant from the infinite domain $\dom$, by reducing the overall Hamming distance of $\tup r$ by at least 1 ($\tup r.A_1$ might have been updated in the original $U_R$), and get another consistent update without increasing the distance (both $B_1 \ra A_0$ and $A_0\cdots A_k \ra B_0$, along with all other FDs are satisfied, since $\tup r.A_1$ gets a fresh constant).
\end{itemize}
\item Otherwise, consider the set of tuples $L$ in $U_R$ such that for any $\tup r' \in L$, $\tup r'.B_1 = 0$ and $\tup r'.A_0 \neq 0$ ($A_0$ is changed for $\tup r'$ in $U_R$ from $T_R$). 
\par
For each tuple $\tup r' \in L$, keep $\tup r'.A_0 = 0$  (unchanged from $T_R$), instead update $\tup r'.A_1$ to a fresh constant from the infinite domain $\dom$ with the same cost (each tuple $\tup r' \in L$ gets a fresh constant). 
\begin{itemize}
\item if $\tup r.A_0 = 0$ (unchanged from $T_R$), we keep $\tup r.B_1$ unchanged to 0 by reducing the Hamming distance of $\tup r$ by 1. Now for any tuple $\tup r'$ in the new $U_R$ with $\tup r'.B_1 = 0$ has $\tup r'.A_0 = 0$ (including $\tup r$), therefore this transformation for $\tup r$ does not violate the FD $B_1 \rightarrow A_0$.  Since the attributes in the FD $A_0\cdots A_k \ra B_0$ (and all other FDs) remain unaffected for $\tup r$, and get fresh constants for $A_1$ for all tuples in $L$, the new $U_R$ is also a consistent update.
\item if $\tup r.A_0 \neq 0$ (changed in $U_R$ from $T_R$), we keep $\tup r.B_1 = 0$  (unchanged from $T_R$), keep $\tup r.A_0 = 0$  (unchanged from $T_R$), instead  update $\tup r.A_1$ to a fresh constant from the infinite domain $\dom$, by reducing the overall Hamming distance of $\tup r$ by at least 1 ($\tup r.A_1$ might have been updated in the original $U_R$), and get another consistent update without increasing the distance (both $B_1 \ra A_0$ and $A_0\cdots A_k \ra B_0$, along with all other FDs are satisfied, since $\tup r.A_1$, and $\tup r'.A_1$ for all $\tup r' \in L$, get a fresh constant).
\end{itemize}
\end{itemize}
We repeat this process until the values of all  $B_1, \cdots, B_k$ for all tuples in $U_R$ are set to 0 (unchanged from $R$).
\item Now we have a $U_R$ where all $B_i$, $i \geq 1$, of all tuples are unchanged and have value 0. 
\par
Next, we argue that $U_R$ can be further updated to another consistent update where none of the $A_0, A_2, \cdots, A_k$ are changed (\ie, all are 0). 
\begin{itemize}
\item Suppose for $A_0$, there is a tuple $\tup r$, such that $\tup r.A_0 = a  \neq 0$ (changed from $T_R$). Since for all tuples in $U_R$, $B_1 = \cdots = B_k = 0$, it must hold that for all tuples $\tup r'$ in $U_R$, $\tup r'.A_0 = a$. For all such tuples, keep the value of $A_0 = 0$ (unchanged from $T_R$), and instead change $A_1$ to assign a fresh constant from the infinite domain. Note that this does not violate any FD: $A_0$ is involved in $B_i \rightarrow A_0$, $k \geq i \geq 1$, which are still satisfied, and $A_0\cdots A_2 \rightarrow B$ is also satisfied since $A_1$ gets a fresh constant not used anywhere else. This also does not increase the distance of $U_R$ from $R$. 
\item 
If any other $A_i$, $k \geq i \geq 2$, is updated for any tuple, we keep it unchanged to 0, and instead update $A_1$ to a fresh constant (if $A_1$ is not assigned a fresh constant while updating the value of $A_0$) that still satisfies all the FDs without increasing the distance.
\end{itemize}
\end{itemize}

}

\cut{
\ subsection{Proof of NP-hardness from Proposition~\ref{prop:quadratic-ratio-U-approx}}

\begin{proposition}\label{prop:NP-hard-quadratic}
Finding an optimal U-repair for $\depset = \set{A_1A_2 \cdots A_k \rightarrow B, B \rightarrow C}$ is NP-hard for all $k \geq 2$.
\end{proposition}

\begin{proof}
We reduce it from the problem of finding an optimal U-repair for $\set{A \rightarrow B, B \rightarrow C}$ on a database $R(A, B, C)$ that has shown to be NP-hard by Kolahi and Lakshmanan \cite{DBLP:conf/icdt/KolahiL09}.\\

Given $R(A, B, C)$, we construct $S(A_1, A_2, \cdots, A_k, B, C)$, where for  every tuple $r(a, b, c)$ in $R$, we create a tuple $s(a, 0, 0, \cdots 0, , b, c)$ in $S$, i.e., $A_1 = A$, $B = B, C = C$, and the remaining columns $A_2, \cdots, A_k$ take 0 values in $S$. We claim that $R$ has a consistent update of cost $\leq M$  if and only if $S$ has a consistent update of cost $\leq M$.\\

\paragraph{The ``if'' direction} Suppose $R$ has a consistent update $\hat{R}$ of cost $M$. We can obtain a consistent update $\hat{S}$ of $S$ of the same cost by updating the $A_1, B, C$ values of $S$ as in $R$, and leave $A_2, \cdots, A_k$ unchanged. If $B \rightarrow C$ is violated in $\hat{S}$ for two tuples $s_1, s_2$, they will also be violated for the corresponding tuples $r_1, r_2$ in $\hat{R}$. If $A_1 \cdots A_k \rightarrow B$ is violated for two tuples $s_1, s_2$ in $\hat{S}$, they must have different values of  $B$ but the same value of $A_1\cdots A_k$, i.e. the same value of $A_1 = A$ in $\hat{R}$, violating the FD $A \rightarrow B$ in $\hat{R}$ and contradicting that it is a consistent update of $R$.\\

\paragraph{The ``only if'' direction} Suppose we have a consistent update $\hat{S}$ of $S$ of cost $M$. \\

(1) First we claim that any tuple $s \in \hat{S}$ updates at most one value of $A_1\cdots A_k$. Otherwise, if there is a tuple $s$ where more than one of these $k$ attributes have been updated, we can always update one of them, say $A_1$, to a new value not used in $\hat{S}$ so far, and satisfy the FD  $A_1A_2 \cdots A_k \rightarrow B$ with less cost (one change instead of $\geq 2$ changes for $s$). Note that this change does not affect the FD $B \rightarrow C$. \\

(2) Next we will update $\hat{S}$ without increasing its cost such that only $A_1, B, C$ attributes are updated, and the other attributes are unchanged.\\

Consider any tuple $s \in \hat{S}$ such that (exactly) one of $A_2 \cdots A_k$ attributes has been changed (assuming (1) holds). Without loss of generality, suppose this attribute is $A_2$, which has received a value $a_2$ after being changed from 0. Hence the cost of changing $a_2$ is 1 for $s$. Instead, we update $A_1$ for this tuple assigning to a fresh new value $a_1$ (not used anywhere else), and keep $A_2$ unchanged to 0. Note that although $a_2$ may have been used in other tuples, $A_1 = a_1$ is a fresh new value, and therefore would not interfere with any other tuple. Moreover, assigning a new value to $A_1$ ensures the FD $A_1A_2 \cdots A_k \rightarrow B$. Like (1), this change does not affect the FD $B \rightarrow C$, and $B, C$ attributes are left unchanged in th repair $\hat{S}$. Further, we do not increase the cost.\\

After applying (1) and (2) to all tuples, we have a solution $\hat{S}$ of cost $\leq M$ where only $A_1, B, C$ are updated. Suppose $\hat{R} = \rho_{A_1 \rightarrow A} [\pi_{A_1 BC} \hat{S}]$ (project to $A_1BC$ and rename $A_1$ to $A$). It  can be seen that this solution satisfies the FDs $B \rightarrow C$ since this FD is maintained in $\hat{S}$. It also satisfies the FD $A \rightarrow B$, since otherwise there are two tuples $r_1, r_2 \in \hat{R}$ with the same value of $A$ and different values of $B$. This implies that the corresponding tuples $s_1, s_2$ in $\widehat{S}$ have the same value of $A_1$ and different values of $B$. Since due to (1), (2), both $s_1, s_2$ have the same values of all attributes in $A_2 \cdots A_k$ (all 0), together with $A_1$, they have the same values of $A_1A_2 \cdots A_k$ but different values of $B$, violating FD  $A_1A_2 \cdots A_k \rightarrow B$, and contradicting the fact that $\hat{S}$ is a consistent update of $S$. Hence we get a consistent update$\hat{R}$ of $R$ of at most cost $M$.
\end{proof}

}


\end{document}